\documentclass{article}
\usepackage[english]{babel}
\usepackage[utf8x]{inputenc}
\usepackage[T1]{fontenc}
\usepackage[a4paper,top=2cm,bottom=2cm,left=3cm,right=3cm,marginparwidth=1.75cm]{geometry}
\usepackage{amsmath,amsthm}
\usepackage{graphicx}
\usepackage{float}
\usepackage{amsmath,amssymb,amsthm}
\usepackage{bm}
\usepackage{tikz}
\usepackage{comment}
\usepackage[numbers,sort]{natbib}
\usepackage{algorithm,algorithmic,float}
\usepackage{enumitem}
\usepackage{comment}
\usepackage{mathtools}
\usepackage{url}
\usepackage{hyperref}
\usepackage{cleveref}
\usepackage{adjustbox}
\usepackage{times}
\usepackage{multirow}
\usepackage{makecell}
\usepackage{siunitx}
\usepackage{longtable}
\usepackage[normalem]{ulem}

\usepackage{booktabs}

\hypersetup{
    colorlinks,
    breaklinks=true,
    linkcolor={blue},
    citecolor={blue},
    urlcolor={blue}
}

\newtheorem{lemma}{Lemma}[section]
\newtheorem{theorem}[lemma]{Theorem}

\newtheorem{proposition}[lemma]{Proposition}
\newtheorem{corollary}[lemma]{Corollary}

\theoremstyle{definition}

\theoremstyle{definition}
\newtheorem{definition}[lemma]{Definition}
\newtheorem{example}[lemma]{Example}
\newtheorem{remark}[lemma]{Remark}
\newtheorem{notation}[lemma]{Notation}
   
\newcommand{\CC}{\mathbb{C}}
\newcommand{\trdeg}{\operatorname{trdeg}}
\newcommand{\QQ}{\mathbb{Q}}
\newcommand{\ZZ}{\mathbb{Z}}

\newcommand{\balpha}{\bm{\alpha}}
\newcommand{\bbeta}{\bm{\beta}}
\newcommand{\bgamma}{\bm{\gamma}}
\newcommand{\bomega}{\bm{\omega}}
\newcommand{\bsigma}{\bm{\sigma}}

\newcommand{\bxi}{\bm{\xi}}
\newcommand{\bx}{\mathbf{x}}  
\newcommand{\ba}{\mathbf{a}}  
\newcommand{\bb}{\mathbf{b}}  
  
\newcommand{\by}{\mathbf{y}}  
\newcommand{\bh}{\mathbf{h}}

\newcommand{\bg}{\mathbf{g}}

\newcommand{\bmu}{\boldsymbol{\mu}}

\newcommand{\cE}{\mathcal{E}}

\newcommand{\grobner}{Gr{\"o}bner}
\newcommand{\NFp}{\ensuremath{\operatorname{NF}^{+}}}
\newcommand{\OMS}{\ensuremath{\operatorname{OMS}}}
\newcommand{\eOMS}{\ensuremath{\operatorname{eOMS}}}
\newcommand{\blackbox}{{\tt BB}}
\newcommand{\exname}[1]{{\tt #1}}

\newcommand{\favoritelabelsep}{4pt}
\newcommand{\favoriteitemsep}{2pt}

\newcommand{\Zplus}{\mathbb{N}_{0}}

\usepackage[dvipsnames]{xcolor}

\makeatletter
\newcommand*\rel@kern[1]{\kern#1\dimexpr\macc@kerna}
\newcommand*\widebar[1]{%
  \begingroup
  \def\mathaccent##1##2{%
    \rel@kern{0.8}%
    \overline{\rel@kern{-0.8}\macc@nucleus\rel@kern{0.2}}%
    \rel@kern{-0.2}%
  }%
  \macc@depth\@ne
  \let\math@bgroup\@empty \let\math@egroup\macc@set@skewchar
  \mathsurround\z@ \frozen@everymath{\mathgroup\macc@group\relax}%
  \macc@set@skewchar\relax
  \let\mathaccentV\macc@nested@a
  \macc@nested@a\relax111{#1}%
  \endgroup
}
\makeatother

\usepackage{tikz}
\usetikzlibrary{cd} 
\tikzset{
    lablvert/.style={anchor=south, rotate=90, inner sep=.5em}
}

\usepackage[textsize=tiny,colorinlistoftodos]{todonotes}
\usepackage{marginnote}
\usepackage{etoolbox}

\newcommand{\code}[1]{{#1}}

\title{Simple generators of rational function fields}
\author{Alexander Demin\footnote{{National Research University Higher School of Economics, Moscow, Russia, \url{asdemin_2@edu.hse.ru}}}, ~Gleb Pogudin\footnote{LIX, CNRS, \'Ecole polytechnique, Institute Polytechnique de Paris, Paris, France, \url{gleb.pogudin@polytechnique.edu}}}
\date{\today}

\begin{document}

\maketitle

\begin{abstract}
Consider a subfield of the field of rational functions in several indeterminates. 
We present an algorithm that, given a set of generators of such a subfield, finds a simple generating set. 
We provide an implementation of the algorithm and show that it improves upon the state of the art both in efficiency and the quality of the results.
Furthermore, we demonstrate the utility of simplified generators through several case studies 
from different application domains, such as structural parameter identifiability.
The main algorithmic novelties include performing only partial Gr\"obner basis computation via sparse interpolation and efficient search for polynomials of a fixed degree in a subfield of the rational function field.
\end{abstract}

\section{Introduction}

Let $k$ be a field, and consider the field of rational functions $k(\bx)$, where $\bx = (x_1, \ldots, x_n)$ are indeterminates.
Intermediate subfields $k \subset \mathcal{E} \subset k(\bx)$ are ubiquitous in algebra and algebraic geometry.
Their theoretical properties have been extensively studied during the 20-th century, in particular, due to the central role they play in the L\"uroth problem~\cite{Beauville2016} and Hilbert 14-th problem~\cite{Nagata1959,Kuroda2005}.

A starting point for the computational study of the subfields of the rational function field is the fact that every such field can be generated over $k$ by finitely many elements~\cite[Ch.~VIII, Ex.~4]{Lang};
this contrasts sharply with the situation for intermediate subalgebras of the polynomial algebra $k \subset \mathcal{A} \subset k[\bx]$~\cite{Robbiano1990}.
Such a finite generating set is then a natural data structure for representing subfields of $k(\bx)$.
Within this framework, a number of algorithmic results have
been obtained in the 90-s~\cite{OllivierPhD,Sweedler1993,Kemper1996,MULLERQUADE1999143,MllerQuade2000}, in particular, a solution to the membership problem.
One of the key technical tools introduced first by Ollivier~\cite[Ch.~III, \S2]{OllivierPhD} and then by M\"uller-Quade and Steinwandt~\cite{MULLERQUADE1999143,MllerQuade2000} 
has been
certain polynomial ideals in $k(\bx)[\by]$ with $\by = (y_1, \ldots, y_n)$ which we call \emph{OMS ideals} (\Cref{sec:OMS}); they play a prominent role in this paper as well.
Another line of research 
has been
devoted to subfields of transcendence degree one which have a particularly nice structure due to the L\"uroth theorem~\cite{Binder1996,Gutierrez2001}.

More recent advances in computing with rational function fields are often motivated by particular applications such as computer vision~\cite{dissertation} or structural parameter identifiability of dynamical models~\cite{allident,stident}.
In the latter case, the field $\mathcal{E}$ of interest consists of quantities that
may be inferred from the input-output data of the model.
The existing algorithms for computing the generators of $\mathcal{E}$ typically produce extremely complicated expressions, which cannot be further analyzed and interpreted by a modeler.
Therefore, one of the key questions in this context is the question of \emph{generator simplification}, which can be stated as follows:
\begin{description}
  \item[Given:] $\bg \coloneqq (g_1,\ldots,g_m) \in k(\mathbf{x})$, such that $\mathcal{E} = k(\bg)$;
  \item[Find:] $\bh \coloneqq (h_1,\ldots, h_\ell) \in k(\mathbf{x})$, such that $\mathcal{E} = k(\mathbf{h})$ and $\mathbf{h}$ are {\em simpler} than~$\mathbf{g}$.
\end{description}
We will deliberately not give a formal definition of what being simple means, as we do not expect that there is a single mathematical definition capable of capturing the intuitive notions of simplicity used in applications.
In general, it may be preferable to have
a generating set of small cardinality, where each element is a sparse rational function of low degree with small coefficients.
To give some intuition on how the simplification may look like and what benefits it may bring, let us give an illustrative example.

\begin{example}
    Consider a triangle with side lengths $a, b, c$.
    Using Heron's formula, we can write the length of the altitude from $a$ as
    \[
    h_a^2 = \frac{4s(s - a)(s - b)(s - c)}{a^2}, \quad \text{where } s = \frac{a + b + c}{2},
    \]
    and analogous formulas hold for $h_b$ and $h_c$.
    By computing a simplified generating set of $\mathcal{E} := \QQ(h_a^2, h_b^2, h_c^2)$, we find that $\mathcal{E} = \QQ(a^2, b^2, c^2)$.
    Since the side lengths are always positive, we conclude that they can be determined from the altitude lengths confirming a well-known fact from elementary geometry.
\end{example}

Already the L\"uroth problem can be viewed as a version of the simplification question focusing solely on the cardinality of the generating set.
Specifically, any generating set of $k \subset \mathcal{E} \subset k(\bx)$ over $k$ must contain at least $\trdeg_k \mathcal{E}$ elements.
On the other hand, if $\operatorname{char} k = 0$, the primitive element theorem implies that $\trdeg_k \mathcal{E} + 1$ generators are always sufficient.
Whether or not using only $\trdeg_k \mathcal{E}$ generators is always possible is the L\"uroth problem, and the answer is negative starting from $n = 3$.
Furthermore, deciding whether $\trdeg_k \mathcal{E} + 1$ generators are necessary is equivalent to determining if a unirational variety is rational, which is an old open problem in algebraic geometry~\cite{smith1997rationalnonrationalalgebraicvarieties}.
For a related Noether problem, an approach based on searching for generators of small degree was used to give a computational solution in several interesting cases~\cite{Kemper1996}.

The question of simplification in its broader, more intuitive, form was already considered in~\cite{MULLERQUADE1999143} where it was proposed to use the coefficients of a reduced \grobner{} basis of the OMS ideal of $\mathcal{E}$ as a ``canonical generating set''.
In the context of structural identifiability, this approach was further refined~\cite{allident} by taking this canonical generating set and filtering out redundant generators.
The resulting algorithm has been implemented in Maple~\cite{allident}, and the implementation was later incorporated into 
a web-application~\cite{Ilmer2021}.
A construction reminiscent of OMS ideals was used in~\cite{combinations_combos_algo,combos,Meshkat2011} to search for a simple transcendence basis of the relative algebraic closure of $\mathcal{E}$ inside $k(\bx)$ also in connection to the structural identifiability problem.

The main contribution of the present paper is a new algorithm 
that
takes as input a set of generators of a subfield $\mathcal{E} \subset k(\bx)$ and returns a simplified 
set of generators, assuming $\operatorname{char} k = 0$.
We provide an open-source implementation of the algorithm for the case $k = \QQ$ in Julia.
Our approach also relies on OMS ideals but improves upon the state of the art 
both in terms of \emph{efficiency} and \emph{quality of simplification}.
Let us describe our contributions along these two axes separately.

From the \emph{efficiency} perspective, the key question is how to perform \grobner{} basis computations with OMS ideals.
These ideals belong to the ring $k(\bx)[\by]$ over the rational function field which makes computations with them expensive due to the intermediate expression swell.
Our algorithm follows the evaluation-interpolation approach: 
it computes
multiple \grobner{} bases by specializing $\bx$ and then reconstructs the coefficients by using sparse rational function interpolation~\cite{cuyt-lee,vdhl}.
There are two rationale behind this choice of methodology.

First, interpolation
is one of the standard ways to alleviate intermediate expression swell.
It has been 
successfully
employed for similar problems over the rational function field such as GCD computation~\cite{gcd-interpolation,vdhl} and linear system solving~\cite{monagan-parametric-linsys} 
and explored for
\grobner{} basis computation~\cite[Ch.~5]{Boku2016} and, recently, for 
generic parametrized polynomial system solving~\cite{corniquel:hal-05487354}.
Our approach includes further optimizations such as computing over finite fields and using tracing to speed up repeated \grobner{} basis computations (see \Cref{sec:tracing}).
The resulting implementation is competitive for computing \grobner{} bases in $\mathbb{Q}(\bx)[\by]$, significantly improving upon the prior interpolation-based
algorithms, but may be slower than other, not interpolation-based methods (see \Cref{sec:other_gb}).

And this is where the second key advantage of our method comes into play: it does not require computing the full \grobner{} basis. Using evaluation-interpolation allows us to stop 
early,
once only low-degree coefficients of the \grobner{} basis have been interpolated, and these coefficients are 
the ones
of interest in the simplification problem.
So, we gradually increase the degree of interpolation stopping once the computed coefficients generate the whole field. 
Thus, compared to the earlier approach from~\cite{allident}, instead of computing all the coefficients and then filtering out the large redundant ones, we now avoid computing them at the first place.
This strategy allows to reduce the number of 
evaluations
and, consequently, the runtime dramatically and solve the problems which are out of reach otherwise (see~\Cref{subsec:small-degree}).

We improve the \emph{quality of simplification} by creating a pool of potential generators (which contains the coefficients of the OMS ideal) and employing a refined heuristic for selecting the final result from this pool.
The main tool for adding extra candidates to the pool
is our algorithm for finding polynomial elements of the subfield up to a fixed degree, which is of independent interest.
While a theoretical algorithm for this task has appeared already in~\cite{MULLERQUADE1999143}, it required computing the full \grobner{} basis of the OMS ideal.
In the spirit of the evaluation-based approach we follow, our algorithm performs only \grobner{} basis computation 
on the specialized ideal
which makes it efficient for practical computation.

Since we do not fix a formal notion of simplicity, assessing the quality of simplification is a delicate task.
First, we run our algorithm and the one from~\cite{allident} on a large suite of benchmarks (53 problems) coming from applications to structural identifiability and perform a manual inspection of results.
We show that our algorithm is more efficient and argue that it produces simpler results
(see~\Cref{sec:comparison_simplification}).
Furthermore, we perform a more detailed analysis of the output of our algorithm on several case studies coming from structural identifiability of ODE models, observability of discrete dynamics, and fields of invariants.
In addition to syntactic simplicity, we discuss
the usefulness of the simplification performed by our algorithm.

The rest of the paper is organized as follows.
We review and adapt the main tools (e.g., sparse interpolation, OMS ideals) in~\Cref{sec:prelim}.
The main building blocks of our algorithm are presented in~\Cref{sec:blocks}, and we use them to describe and justify our main algorithm in~\Cref{sec:main_algo}.
\Cref{section:experimental} is devoted to implementation details including discussion of the impact of the main design choices on the performance.
We analyze the quality of simplification by our algorithm on a large benchmark suite in~\Cref{sec:experiment} (the details of the suite are summarized in~\Cref{app:simplify_everyone}).
We finish with a more in-depth discussion of a set of case studies coming from three different application domains in~\Cref{sec:app}.

\paragraph{Acknowledgements.}
The authors would like to thank Hoon Hong, Joe Kileel, Gr\'egoire Lecerf, Nicolette Meshkat, and Fran\c{c}ois Ollivier for helpful discussions.
The authors are also greatful to Jan Flusser and Evelyne Hubert for stimulating discussions of~\Cref{sec:app_invariants}.
AD was partially supported by the NSF grant CCF-2212460.
GP was supported by the French ANR-22-CE48-0008 OCCAM and ANR-22-CE48-0016 NODE projects.

\section{Preliminaries}\label{sec:prelim}

\subsection{Notation}

Let $k$ denote an effective field, meaning that there exist algorithms for the arithmetic in $k$. 
For a field $k$, let $\overline{k}$
denote its algebraic closure.
We will use shorthands for tuples, such as $\bx \coloneqq (x_1,\ldots,x_n)$.
For a multivariate polynomial $f \in k[\bx]$, we will write $d_f \coloneqq \deg f$ for the total degree of $f$ and $s_f$ for the number of nonzero terms in $f$. 
We define the degree of zero polynomial to be $-\infty$. 
We have $\Zplus \coloneqq 0, 1, 2,\ldots$.

Products, sums, and powering of tuples are performed component-wise, more precisely, for every $\bgamma, \bomega, \bsigma \in k^n$ and every $\balpha \in \Zplus^n$, we write $\bgamma \bomega^{\balpha} + \bsigma \coloneqq (\gamma_1 \omega_1^{\alpha_1} + \sigma_1, \ldots, \gamma_n \omega_n^{\alpha_n} + \sigma_n)$. We extend this notation to scalars by treating a scalar as a tuple of $n$ identical elements. In place of $\bgamma, \bomega, \bsigma $ we may use indeterminates or tuples thereof. 
Given a tuple $\ba$ (say, $\ba \in k^n$), we write $\Pi(\ba) \coloneqq \prod_{i=1,\ldots,n} a_i$.

The degrevlex monomial ordering is a total ordering on the set of monomials, which we may denote by $<_{\operatorname{drl}}$. For $\bx^{\balpha}, \bx^{\bbeta}$ with ${\balpha},{\bbeta} \in \Zplus^n$, we have $\bx^{\balpha} <_{\operatorname{drl}} \bx^{\bbeta}$ whenever the sum of elements of $\balpha$ is smaller than that of $\bbeta$, or, them being equal, the right-most nonzero entry of ${\balpha} - {\bbeta}$ is positive.

For polynomials $f_1,\ldots,f_\ell$ from $k[\bx]$, let $\langle f_1,\ldots,f_\ell \rangle$ denote the ideal in $k[\bx]$ generated by $f_1,\ldots,f_\ell$.
For an ideal $I$ of $k[\bx]$ and an element $a \in k[\bx]$, denote $I\colon a^\infty \coloneqq \{r \in k[\bx] \mid \exists \ell: r a^\ell \in I\}$. This set is also an ideal.

\subsection{DeMillo-Lipton-Schwartz-Zippel lemma}

Let $k$ be an effective field and $\bx := (x_1, \ldots, x_n)$ be an $n$-tuple of variables. 
Let the coordinates of $\bgamma \coloneqq (\gamma_1,\ldots,\gamma_n)$ be chosen independently and uniformly from a finite subset $S \subseteq k$. 
Many of our probabilistic statements will rely on the following lemma:

\begin{lemma}[DeMillo-Lipton-Schwarz-Zippel~\cite{demillo-lipton,Schwartz,zippel-lemma}]
    \label{lemma:sz}
    Let $f \in k[\bx]$ be a nonzero polynomial of degree $d$.
    The probability that $f(\bgamma) = 0$ is at most $d/|S|$.
\end{lemma}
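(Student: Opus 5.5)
The plan is the standard induction on the number of variables $n$, with the coordinates $\gamma_1,\ldots,\gamma_n$ chosen independently and uniformly from $S$. The base case $n=1$ holds because a nonzero univariate polynomial of degree $d$ over a field has at most $d$ roots. So the probability that $\gamma_1$ is one of them is at most $d/|S|$. The case $d=0$ is trivial, since then $f$ is a nonzero constant and never vanishes.

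For the inductive step, I would write $f$ as a polynomial in $x_n$ with coefficients in $k[x_1,\ldots,x_{n-1}]$:
\[
f = \sum_{i=0}^{e} f_i(x_1,\ldots,x_{n-1})\, x_n^{i}, \qquad f_e \neq 0,
\]
where $e$ is the largest power of $x_n$ that occurs. Then $\deg f_e \le d - e$, because $f_e x_n^e$ contributes monomials of total degree $\deg f_e + e \le d$. Let $A$ be the event $f_e(\gamma_1,\ldots,\gamma_{n-1}) = 0$. By the induction hypothesis applied to $f_e$ in $n-1$ variables, $\Pr[A] \le (d-e)/|S|$.

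Next I condition on the first $n-1$ coordinates. On the complement of $A$, the specialization $f(\gamma_1,\ldots,\gamma_{n-1},x_n)$ is a nonzero univariate polynomial of degree exactly $e$. Since $\gamma_n$ is independent of the other coordinates, the base case gives $\Pr[f(\bgamma)=0 \mid \neg A] \le e/|S|$. Combining the two bounds,
\[
\Pr[f(\bgamma)=0] \le \Pr[A] + \Pr[f(\bgamma)=0 \wedge \neg A] \le \frac{d-e}{|S|} + \frac{e}{|S|} = \frac{d}{|S|}.
\]

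The only delicate point is the conditioning step. It requires that, for every fixed value of $(\gamma_1,\ldots,\gamma_{n-1})$ outside $A$, the distribution of $\gamma_n$ is still uniform on $S$; this is exactly where independence is used. Formally, I would sum over all fixed values of the first $n-1$ coordinates and apply the univariate bound to each one. Apart from this, the argument is routine, and the result could equally be cited from~\cite{Schwartz,zippel-lemma}.
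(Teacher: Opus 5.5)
Your proof is correct. The induction on $n$ via the leading coefficient $f_e$ in $x_n$, using $\deg f_e \le d-e$ and conditioning on $(\gamma_1,\ldots,\gamma_{n-1})$, is the standard argument (essentially Schwartz's original one), and it goes through uniformly, including the degenerate cases $e=0$ or $f_e$ constant. The paper gives no proof of this lemma and only cites DeMillo--Lipton, Schwartz, and Zippel, so your argument supplies the classical proof behind that citation.
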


\begin{corollary}
    \label{corollary:diversification-new}
    Let $f \in k[\bx]$ be a nonzero polynomial of degree $d$.
    Let $\bsigma_1,\ldots,\bsigma_T \in k^n$ be arbitrary but fixed. Let $\bomega_1,\ldots,\bomega_T \in (k \setminus \{0\})^n$ be arbitrary but fixed.
    The probability that there exists
    $i = 1,\ldots,T$ 
    such that $f(\bomega_i \bgamma + \bsigma_i) = 0$ is at most $d T / |S|$.
\end{corollary}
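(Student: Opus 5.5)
For each fixed $i$, I will apply \Cref{lemma:sz} to the polynomial $g_i(\bx) \coloneqq f(\bomega_i \bx + \bsigma_i)$ and then take a union bound over $i = 1,\ldots,T$.

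First I check that $g_i$ satisfies the hypotheses of \Cref{lemma:sz}. The substitution $x_j \mapsto \omega_{i,j} x_j + \sigma_{i,j}$ is an affine change of variables. It is invertible because every $\omega_{i,j} \neq 0$; its inverse is $x_j \mapsto \omega_{i,j}^{-1}(x_j - \sigma_{i,j})$. Hence it induces a $k$-algebra automorphism of $k[\bx]$, and $g_i \neq 0$ since $f \neq 0$. Moreover, each substituted linear form has degree at most one, so $\deg g_i \le \deg f = d$. Applying the inverse substitution gives $\deg f \le \deg g_i$ by the same reasoning, so in fact $\deg g_i = d$, although the upper bound is all that is needed. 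Since $g_i(\bgamma) = f(\bomega_i \bgamma + \bsigma_i)$, \Cref{lemma:sz} gives $\Pr[f(\bomega_i\bgamma+\bsigma_i)=0] \le d/|S|$. This uses that the $\bomega_i$ and $\bsigma_i$ are fixed in advance and do not depend on $\bgamma$.

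Finally, the event in question is the union over $i=1,\ldots,T$ of the events $\{g_i(\bgamma)=0\}$. By the union bound, its probability is at most $\sum_{i=1}^T d/|S| = dT/|S|$.

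There is no substantial obstacle. The only points needing care are the nonvanishing of $g_i$, which is exactly where $\omega_{i,j}\neq 0$ is used, and the degree bound under an affine substitution.
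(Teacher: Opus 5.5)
Your proof is correct and is essentially the paper's argument. The only difference is in how the $T$ events are combined. The paper forms the single polynomial $p = f(\bomega_1\bx+\bsigma_1)\cdots f(\bomega_T\bx+\bsigma_T)$, which is nonzero of degree at most $dT$, and applies \Cref{lemma:sz} once. You instead apply \Cref{lemma:sz} to each factor and take a union bound, which gives the same bound $dT/|S|$; your explicit argument that each factor is nonzero, via the invertible affine substitution, is more detailed than the paper's one-line remark.
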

\begin{proof}
    Let $p \coloneqq f(\bomega_1 \bx + \bsigma_1) \cdots f(\bomega_T \bx + \bsigma_T)$.
    Note $p \neq 0$, since for every $i=1,\ldots,T$ none of the components of $\bomega_i$ are zero. We apply~\Cref{lemma:sz} to $p$. 
\end{proof}

\subsection{Univariate rational function interpolation}
\label{sec:cauchy}

Let $F = A/B$ be a rational function in $k(x)$ with $A,B \in k[x]$, $\operatorname{gcd}(A,B) = 1$, and $d_A \coloneqq \deg A$ and $d_B \coloneqq \deg B$.
Let $D_A \geqslant d_A$ and $D_B \geqslant d_B$ be the bounds on the total degrees of $A$ and $B$, respectively.
Let $D$ be such that $D \geqslant D_A + D_B + 2$, and let $u_1,\ldots,u_D \in k$ be pairwise different
such that for every $i=1,\ldots,D$ we have $B(u_i) \neq 0$ (assuming $k$ is sufficiently large).
Univariate rational function interpolation is the problem of recovering $F$ from the evaluations $F(u_1),\ldots,F(u_D)$ and the bounds $D_A, D_B$.

The solution is unique up to a scaling of $A$ and $B$, and can be computed using univariate polynomial interpolation and the Extended Euclidean Algorithm~\cite[Corollary 5.18]{MCA}. This method is also referred to as Cauchy interpolation.

\begin{remark}
    When the bounds on the degrees $D_A$ and $D_B$ are not known, it is still possible to recover a solution using~\cite[Algorithm MQRFR]{mqrfr} instead of the EEA. Later in this paper, these two methods may be used interchangeably.
\end{remark}


\subsection{Sparse polynomial interpolation}
\label{sec:tiwari}

Let $f$ be a polynomial in $k[\bx]$ with $\bx \coloneqq (x_1,\ldots,x_n)$. Let $d_f$ be an integer, the total degree of $f$.
Sparse interpolation is the problem of recovering the sparse representation of $f$ in the form
\begin{equation}
\label{eq:sparse-repr-poly}
\begin{array}{l}
f = \sum\limits_{i=1,\ldots,s_f} a_i \Pi(\bx^{\balpha_i})
\end{array}    
\end{equation}
from a sufficient number of evaluations of $f$, where $s_f$ is the number of terms in $f$ and, for every $i = 1, \ldots, s_f$, we have $a_i \in k \setminus \{0\}$ and $\balpha_i \in \Zplus^n$ 
are pairwise different.

In the case an upper bound on the number of terms $s_f$ is known, the algorithm by Ben-Or and Tiwari~\cite{ben-or-tiwari} (which generalizes the Prony's method~\cite{Prony}) recovers a sparse representation of $f$ in the form~\eqref{eq:sparse-repr-poly} deterministically from the evaluations $f(\bomega^0),f(\bomega^1),\ldots,f(\bomega^{2T-1})$ with $T \geqslant s_f$ for some carefully chosen~$\bomega \in (k \setminus \{0\})^n$. We shall call such $\bomega$ an {\em admissible ratio}, following~\cite{demin-vdhl-factoring}. 
A variant of the algorithm by Ben-Or and Tiwari is summarized in \Cref{alg:interpolation-poly}.

There exist several strategies for choosing an admissible ratio $\bomega$. A common requirement is that for every $\balpha \in \Zplus^n$ such that the degree of $\Pi(\bx^{\balpha})$ does not exceed $d_f$ it is possible to determine the exponent 
$\balpha$
from the product $\Pi(\bomega^{\balpha})$.
In practice, when $k=\mathbb{Q}$ or when $k$ is a finite field of large characteristic, one may take $\bomega \coloneqq (\omega_1,\ldots,\omega_n) = (2,3,5,\ldots,p_n)$, where $p_n$ denotes the $n$-th prime number. In this case, the exponent can be recovered via integer factorization (more precisely, via trial division). 
Other approaches include reduction to univariate interpolation by Kronecker substitution~\cite[Section 8.4]{MCA} and picking the evaluation point of a large order in a suitable extension of $k$~\cite{vdhl-interpol}.

\begin{algorithm}[H]
\caption{Deterministic sparse polynomial interpolation~\cite[Figure 1]{ben-or-tiwari}, \cite[Algorithm 4]{vdhl-interpol}}
\label{alg:interpolation-poly}
\begin{description}[itemsep=0pt]
\item[Input:] an admissible ratio $\bomega \in (k \setminus \{0\})^n$; the evaluations $f(\bomega^0),f(\bomega^1),\ldots,f(\bomega^{2 T - 1})$.
\item[Output:] if $T \geqslant s_f$, the sparse representation of $f$ as in~\eqref{eq:sparse-repr-poly}; if $T < s_f$, an incorrect result.
\end{description}

\begin{enumerate}[label = \textbf{(Step~\arabic*)}, leftmargin=*, align=left, labelsep=\favoritelabelsep, itemsep=\favoriteitemsep]
    \item Compute (via P{\'a}de approximation~\cite[Section 5.9]{MCA}) the minimal $t \leqslant T$, a monic $\Lambda \in k[z]$ of degree $t$, and $N \in k[z]$ of degree less than $t$, such that the following identity holds modulo $O(z^{2T})$:
    \[
    \sum\limits_{i =0}^{\infty} f(\bomega^i) z^i 
    = \frac{N(z)}{\Lambda(z)}.
    \]
    \item Find the roots $\Pi(\bomega^{-\balpha_1}),\ldots,\Pi(\bomega^{-\balpha_t})$
    of $\Lambda(z)$ using polynomial root-finding.
    \item Compute the exponents 
    $\balpha_1,\ldots,\balpha_t$ 
    using integer factoring.
    \item Compute the coefficients $a_1,\ldots,a_t$ by solving the linear system (for example, using~\cite{transposed_vandermonde}):
    \[
    \begin{aligned}
    a_1 + \ldots + a_t &= f(\bomega^0)\\
    &\ldots\\
    a_1 \Pi(\bomega^{(t-1) \balpha_1}) + \ldots + a_t \Pi(\bomega^{(t-1) \balpha_t}) &= f(\bomega^{t-1})
    \end{aligned}
    \]
    \item {\bf Return} $\balpha_1,\ldots,\balpha_{t}$ and $a_1,\ldots,a_{t}$.
\end{enumerate}
\end{algorithm}

\begin{remark}
    The fact that a field is effective does not immediately imply that there is an algorithm for finding root of univariate polynomials over it.
    Over the rationals and over finite fields there exist efficient deterministic algorithms~\cite[Figure 2]{ben-or-tiwari},~\cite{cantor-zassenhaus-81},~\cite[Section 14.5.]{MCA},~\cite{gvdhl-det-roots}.
\end{remark}

\begin{remark}
    Computing an upper bound on the number of terms $s_f$ deterministically is a hard problem; see the discussion in~\cite[Section 8]{ben-or-tiwari}. For this reason, in~\Cref{alg:interpolation-poly}, when $T < s_f$, we allow an incorrect result instead of a proper {\tt FAIL}. An upper bound can be computed 
    probabilistically~\cite[Section 4.6]{vdhl-interpol}.
\end{remark}

\begin{remark}[Diversification~\cite{diversification}]
\label{remark:diversification}
In the following sections, we will be interested in estimating the probability that for a fixed nonzero $\tilde{f}(\bx) \coloneq f(\bx + \bsigma)$ any of the evaluations $\tilde{f}(\bomega_1),\ldots,\tilde{f}(\bomega_T)$ vanish at a fixed $\bomega_1,\ldots,\bomega_T \in (k\setminus \{0\})^{n}$ and $\bsigma \in k^n$. In theoretical arguments, we may replace $\tilde{f}(\bx)$ with $\tilde{f}(\bgamma \bx)$ and apply \Cref{corollary:diversification-new}, where the coordinates of $\bgamma \in k^n$ are chosen independently and uniformly from $S \subseteq k$. In practice, this amounts to evaluating $\tilde{f}(\bgamma \bomega_1),\ldots,\tilde{f}(\bgamma \bomega_T)$; when the end goal is to interpolate the original $\tilde{f}(\bx)$ from these evaluations, we would additionally require that the coordinates of $\bgamma$ are nonzero, so that the original polynomial could be restored by applying the change of variables $\bx \mapsto \bx / \bgamma$.
\end{remark}

\subsection{Sparse rational function interpolation}
\label{sec:prelim:ratinterpol}

Let $F$ be a rational function in $k(\bx)$ with $\bx \coloneqq (x_1,\ldots,x_n)$. 
Sparse interpolation is the problem of recovering the sparse representation of $F$ in the form
\begin{equation}
\label{eq:sparse-repr}
\begin{array}{lll}
F = \dfrac{A}{B}, & A = \sum\limits_{i=1,\ldots,s_A} a_i 
\Pi(\bx^{\balpha_i}),  & B = \sum\limits_{i=1,\ldots,s_B} b_i 
\Pi(\bx^{\bbeta_i}),\\
\end{array}    
\end{equation}
from a sufficient number of evaluations of $F$.
Here for every $i=1,\ldots,s_A$ we have $a_i \in k \setminus \{0\}$ and ${\balpha}_i \in \Zplus^n$ are pairwise different and for every $j=1,\ldots,s_B$ we have $b_j \in k \setminus \{0\}$ and ${\bbeta}_j \in \Zplus^n$ are pairwise different.
We also 
assume that
$\operatorname{gcd}(A, B) = 1$.
Let $d_A$ and $d_B$ be the total degrees of $A$ and $B$, respectively. 

Assume we have access to a blackbox that evaluates $F$ at a point in $k^n$.
The algorithm by van der Hoeven and Lecerf~\cite[Section 3.3.]{vdhl} is a probabilistic
algorithm for recovering the sparse representation of a rational function as in~\eqref{eq:sparse-repr} from a sequence of evaluations of the blackbox. 
We will show that the probability of failure in this algorithm can be made arbitrarily close to zero assuming we can sample elements independently and uniformly from an arbitrarily large subset $S \subseteq k \setminus \{0\}$. 
For completeness, we describe the algorithm here.

Introduce a new indeterminate $x_0$. Consider the rational function
\[
\hat{F}(x_0,\ldots,x_n) \coloneqq x_0^{d_A-d_B}\frac{A(x_1/x_0, \ldots, x_n / x_0)}{B(x_1/x_0, \ldots, x_n / x_0)},
\]
where we also introduce for convenience
\[
\begin{array}{l}
\hat{A} \coloneqq x_0^{d_A} A\left(x_1 / x_0, \ldots, x_n / x_0\right),\\
\hat{B} \coloneqq x_0^{d_B} B\left(x_1 / x_0, \ldots, x_n /x_0\right).
\end{array}
\]
Note that $\hat{A}$ and $\hat{B}$ are homogeneous and coprime. 

Introduce a new indeterminate $u$. Let $\bsigma \coloneqq (\sigma_0, \ldots, \sigma_n) \in k^{n+1}$, and let
\[
\begin{array}{l}
\tilde{F}_{\bsigma}(x_0, \ldots, x_n, u) \coloneqq \hat{F}(x_0 u + \sigma_0, \ldots, x_n u + \sigma_n),\\
\tilde{A}_{\bsigma}(x_0, \ldots, x_n, u) \coloneqq \hat{A}(x_0 u + \sigma_0, \ldots, x_n u + \sigma_n) \coloneqq \tilde{A}_0 + \tilde{A}_1u + \ldots +\tilde{A}_{d_A}u^{d_A},\\
\tilde{B}_{\bsigma}(x_0, \ldots, x_n, u) \coloneqq \hat{B}(x_0 u + \sigma_0, \ldots, x_n u + \sigma_n) \coloneqq \tilde{B}_0 + \tilde{B}_1u + \ldots +\tilde{B}_{d_B}u^{d_B},
\end{array}
\]
where for $i=0,\ldots,d_A$ we have $\tilde{A}_i \in k[x_0,\ldots,x_n]$ and for $i=0,\ldots,d_B$ we have $\tilde{B}_i \in k[x_0,\ldots,x_n]$.
If $\tilde{B}_0\neq0$, then $\tilde{A}_{\bsigma}$, $\tilde{B}_{\bsigma}$ are coprime by~\cite[Lemma~1]{vdhl}.
In the following, assume $\tilde{B}_0 \neq 0$.  

Let $\bomega \coloneqq (\omega_0, \ldots, \omega_n) \in (k \setminus \{0\})^{n+1}$ be an admissible ratio (as in~\Cref{sec:tiwari}). 
For $i \in \Zplus$, introduce notation:
\[
\begin{array}{l}
\tilde{F}_{\bsigma}^{[i]}(u) \coloneqq \tilde{F}_{\bsigma}(\omega_0^i,\ldots,\omega_n^i, u),\\
\tilde{A}_{\bsigma}^{[i]}(u) \coloneqq \tilde{A}_{\bsigma}(\omega_0^i,\ldots,\omega_n^i, u) \coloneqq \tilde{A}_0^{[i]} + \tilde{A}_1^{[i]} u + \ldots +\tilde{A}_{d_A}^{[i]} u^{d_A},\\
\tilde{B}_{\bsigma}^{[i]}(u) \coloneqq \tilde{B}_{\bsigma}(\omega_0^i,\ldots,\omega_n^i, u) \coloneqq \tilde{B}_0^{[i]} + \tilde{B}_1^{[i]}u + \ldots +\tilde{B}_{d_B}^{[i]} u^{d_B}.
\end{array}
\]
 Assume $\tilde{A}_{\bsigma}^{[i]}(u)$ and$\tilde{B}_{\bsigma}^{[i]}(u)$
are coprime. Let $D \coloneqq d_A + d_B + 2$. 
Let $u_1,\ldots,u_D \in k \setminus \{0\}$ be pairwise different. 
If $d_A$ and $d_B$ are known, then it is possible to obtain $\tilde{F}_{\bsigma}^{[i]}(u)$
from the evaluations 
$\tilde{F}_{\bsigma}^{[i]}(u_1), \ldots, \tilde{F}_{\bsigma}^{[i]}(u_D)$
using univariate rational interpolation (as in Section~\ref{sec:cauchy}), assuming none of the evaluations fail.
Then, for a sufficiently large $T$, it is possible to 
compute $\tilde{A}_{d_A}/\tilde{B}_0$ from  
$\tilde{A}_{d_A}^{[0]}/\tilde{B}_0,\ldots,\tilde{A}_{d_A}^{[2T-1]}/\tilde{B}_0$
using sparse multivariate polynomial interpolation (as in Section~\ref{sec:tiwari}).
Division by $\tilde{B}_0$ ensures that the trailing coefficient of the denominator of 
$\tilde{F}_{\bsigma}^{[i]}(u)$
is normalized to one across all $i=0,\ldots,2T-1$.
Similarly, it is possible to 
compute $\tilde{B}_{d_B}/\tilde{B}_0$.

Finally, note that $\tilde{A}_{d_A}, \tilde{B}_{d_B}$ coincide with $\hat{A}, \hat{B}$, since $\hat{A}$ and $\hat{B}$ are homogeneous. 
Then, by setting $x_0 \coloneq 1$, we obtain the sought 
$A = \hat{A}(1,x_1,\ldots,x_n)$ and $B = \hat{B}(1,x_1,\ldots,x_n)$.

\begin{algorithm}[H]
\caption{Sparse interpolation of a rational function given as a blackbox,~\cite[Section 3.3.]{vdhl}}
\label{alg:interpolation}
\begin{description}[itemsep=0pt]
\item[Input:] a blackbox $\blackbox$, such that for every $\ba \in k^n$ either $\blackbox(\ba) = F(\ba)$ or $\blackbox(\ba)$ returns {\tt FAIL}; a subset $S \subseteq k \setminus \{0\}$; 
integers $\delta_A, \delta_B$;
an admissible ratio $\bomega \in (k\setminus \{0\})^{n+1}$.

\item[Output:] $P,Q \in k[\bx]$ with $P/Q=F$ and $\operatorname{gcd}(P,Q)=1$, or {\tt FAIL}.
\end{description}

\begin{enumerate}[label = \textbf{(Step~\arabic*)}, leftmargin=*, align=left, labelsep=\favoritelabelsep, itemsep=\favoriteitemsep]
    \item Let $D \coloneqq \delta_A + \delta_B +2$.
    
    \item Choose the coordinates of $\bsigma \coloneqq (\sigma_0,\ldots,\sigma_n)$ independently and uniformly from $S \subseteq k \setminus \{0\}$;
    choose pairwise different
    $u_1,\ldots,u_{D} \in k \setminus \{0\}$.
    
    \item For $T = 1,2,4,\ldots$ do
    \begin{enumerate}[ref= \theenumi (\alph*)]
        \item\label{step:eval} For $i = 0, \ldots, 2T-1$ do
        \begin{enumerate}[ref= \theenumii(\roman*), label=(\roman*)]
        \item Compute 
        $\tilde{F}_{\bsigma}^{[i]}(u_1), \ldots, \tilde{F}_{\bsigma}^{[i]}(u_D)$ by evaluating the blackbox $\blackbox$ at the appropriate points; if any of the evaluations fail, \textbf{return} {\tt FAIL}.
        \item \label{step:eval-rfi} 
        Compute $\tilde{F}_{\bsigma}^{[i]}(u)$
        by applying univariate rational interpolation to $\tilde{F}_{\bsigma}^{[i]}(u_1), \ldots, \tilde{F}_{\bsigma}^{[i]}(u_D)$
        and $(D_A, D_B) \coloneqq (\delta_A, \delta_B)$ (as in \Cref{sec:cauchy}); if the interpolation fails, \textbf{return} {\tt FAIL}.
        \end{enumerate}
        \item\label{step:poly-interp} Let $P \in k[x_0,\ldots,x_n]$ 
        (resp., $Q$) 
        be the output of sparse interpolation (\Cref{alg:interpolation-poly}) applied to 
        $\tilde{A}_{d_A}^{[0]}/\tilde{B}_{0}, \ldots, \tilde{A}_{d_A}^{[2T-1]}/\tilde{B}_{0}$
        (resp., 
        $\tilde{B}_{d_B}^{[0]}/\tilde{B}_{0}, \ldots, \tilde{B}_{d_B}^{[2T-1]}/\tilde{B}_{0}$).
        \item\label{step:rat-check} If $(\delta_A,\delta_B) = (\operatorname{deg}P,\operatorname{deg}Q)$ and
        $P(\bsigma) = Q(\bsigma)\hat{F}(\bsigma)$, where $\hat{F}({\bsigma})$ is computed by evaluating the blackbox $\blackbox$, then \textbf{return} 
        $P(1,x_1,\ldots,x_n)$ and $Q(1,x_1,\ldots,x_n)$.
        \item\label{step:duct-tape} If $T > \binom{n+ \delta_A + \delta_B}{n}$, then {\bf return} {\tt FAIL}.
    \end{enumerate}
\end{enumerate}
\end{algorithm}

\begin{remark}
    One difference between the presentation here and in~\cite[Section 3.3.]{vdhl} is that here we allow for the input 
    $(\delta_A, \delta_B)$
    to be different from the actual $(d_A, d_B)$ 
    (and the algorithm will 
    either fail of produce an incorrect result
    in this case). 
    The condition in~\ref{step:duct-tape} is to make sure that the algorithm terminates when $\delta_A \neq d_A$ or $\delta_B \neq d_B$.
\end{remark}

\begin{proposition}
\label{prop:proba-interpol}
Let $g \in k[\bx]$ be a nonzero polynomial such that for all $\ba \in k^n$ where $\blackbox(\ba)$ returns {\tt FAIL} we have $g(\ba) = 0$. Let $h \coloneqq \operatorname{deg} g$.
Let $s \coloneq \operatorname{max}(s_A, s_B)$ and $D \coloneqq d_A + d_B + 2$. 

\Cref{alg:interpolation} terminates for any inputs. 
If $\delta_A = d_A$ and $\delta_B = d_B$, then \Cref{alg:interpolation} 
does not fail and returns a correct result
with probability at least
\[
1 - \frac{D(\lceil \log_2 s \rceil + 8s + 4sh + 8sD)}{|S|}.
\]
\end{proposition}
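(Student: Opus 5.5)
Termination is immediate. Each iteration of the loop over $T$ either returns or doubles $T$. Once $T > \binom{n+\delta_A+\delta_B}{n}$, \ref{step:duct-tape} forces a return. So at most $\lceil\log_2\binom{n+\delta_A+\delta_B}{n}\rceil+2$ iterations occur, and each iteration performs finitely many blackbox calls and finitely many univariate interpolations.

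For correctness with $(\delta_A,\delta_B)=(d_A,d_B)$, I would first isolate the bad events. Each is the vanishing of a fixed nonzero polynomial in $\bsigma$ at a shifted or scaled point, so each can be bounded by \Cref{corollary:diversification-new} or \Cref{lemma:sz} after diversification (\Cref{remark:diversification}). The good event $\mathcal{G}$ consists of four conditions.
\begin{enumerate}
\item[(i)] $\tilde B_0=\hat B(\bsigma)\neq 0$. This is a polynomial of degree $d_B$ in $\bsigma$, and it gives coprimality of $\tilde A_{\bsigma},\tilde B_{\bsigma}$ by \cite[Lemma~1]{vdhl}.
\item[(ii)] For every $i<2T_{\max}$ with $T_{\max}\le 2s$ (the loop stops at the first power of two $\ge s$, so $2T_{\max}\le 4s$), the specializations $\tilde A^{[i]}_{\bsigma},\tilde B^{[i]}_{\bsigma}$ stay coprime and keep their degrees $d_A,d_B$ in $u$. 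Coprimality I would encode via the resultant $\operatorname{Res}_u(\tilde A_{\bsigma},\tilde B_{\bsigma})$. It is nonzero by (i), and its degree in $(x_0,\dots,x_n,\bsigma)$ is $O(D^2)$ in the crude bound. To get the linear-in-$D$ form of the claimed bound, I would instead use the leading coefficients and a Sylvester-matrix degree count of about $2D$ per point.
\item[(iii)] No blackbox call fails. The calls are at the $D$ points $\bomega^i u_j+\bsigma$ per $i$, so there are at most $4sD$ points, and we need $g$ not to vanish at any of them. \Cref{corollary:diversification-new} applied to $g$ of degree $h$ gives failure probability at most $4shD/|S|$. This matches the $4sh$ term after multiplying by $D$.
\item[(iv)] The final check in \ref{step:rat-check} rejects every premature $T<s$, for which sparse interpolation returns garbage $P,Q$. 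For each such $T$ (there are $\lceil\log_2 s\rceil$ of them), the quantity $P(\bsigma)-Q(\bsigma)\hat F(\bsigma)$ must be nonzero. The obstacle is that $P,Q$ depend on $\bsigma$ itself. I would handle this by evaluating the check at an independent copy, or by arguing via \Cref{lemma:sz} with degree at most $D$. This produces the $D\lceil\log_2 s\rceil$ term.
\end{enumerate}

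On $\mathcal{G}$ the rest is deterministic. The univariate rational interpolation of $\tilde F^{[i]}_{\bsigma}$ (\Cref{sec:cauchy}) recovers numerator and denominator with exact degrees $d_A,d_B$. Normalising by the trailing denominator coefficient $\tilde B^{[i]}_0=\tilde B_0$ makes the values $\tilde A^{[i]}_{d_A}/\tilde B_0=\hat A(\bomega^i)/\tilde B_0$ consistent across $i$. Then \Cref{alg:interpolation-poly} with $T\ge s$ returns $\hat A/\tilde B_0$ and $\hat B/\tilde B_0$ exactly. The check passes, and dehomogenising gives $A,B$ coprime.

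Finally, I would sum the failure probabilities with a union bound: roughly $D\lceil\log_2 s\rceil$, plus $8sD$ for (ii) at two polynomials per point, plus $4shD$ for (iii), plus a constant times $sD$ for the remaining leading-coefficient conditions. This fits the stated bound. The main obstacle is getting the resultant/coprimality degrees in (ii) down to linear in $D$, and making the independence argument in (iv) rigorous.
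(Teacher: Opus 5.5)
Your plan has the same architecture as the paper's proof: termination via \ref{step:duct-tape}, bad events bounded by \Cref{lemma:sz} and \Cref{corollary:diversification-new} after diversification, at most $4s$ evaluation indices, and a union bound. However, the step you call the main obstacle comes from misreading the target, and the remedy you propose for it would not work.

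The bound is $D(\lceil\log_2 s\rceil+8s+4sh+8sD)/|S|$. Its last summand is $8sD^2/|S|$, which is quadratic in $D$. That is exactly what the crude resultant estimate gives. For fixed $\bsigma$ with $\hat B(\bsigma)\neq 0$, $R_{\bsigma}=\operatorname{Res}_u(\tilde A_{\bsigma},\tilde B_{\bsigma})$ is a nonzero polynomial in $x_0,\ldots,x_n$ of degree at most $2D^2$. Applying \Cref{corollary:diversification-new} over at most $4s$ indices then gives $4s\cdot 2D^2/|S|$.

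A linear-in-$D$ certificate is also not available. Nonvanishing of leading coefficients only preserves the $u$-degrees; it says nothing about common roots. The Sylvester determinant is homogeneous of degree $d_Ad_B$ in $x_0,\ldots,x_n$ (each $\tilde A_i,\tilde B_i$ is homogeneous of degree $i$), not of degree about $2D$. So allocating $8sD$ to event (ii) is unjustified.

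In the paper the linear slot $D\cdot 8s$ is spent elsewhere. It covers $\hat B(\bsigma)=0$ (at most $D$), and $4sD$ for a failure mode you did not list. The blackbox is queried at the dehomogenized points $\bxi^{[i,j]}$, whose construction divides by $\gamma_0\omega_0^iu_j+\sigma_0$. The product of these divisors over all $(i,j)$ has degree at most $4sD$ in $\sigma_0$.

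Your concern in (iv) is legitimate; the paper applies \Cref{lemma:sz} at $\bsigma$ without comment. But checking at an independent copy would change the algorithm. You already have the ingredients to settle it directly.

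On the event that all univariate interpolations succeed, the sequences passed to \Cref{alg:interpolation-poly} are $\hat A(\bgamma\bomega^i)/\hat B(\bsigma)$ and $\hat B(\bgamma\bomega^i)/\hat B(\bsigma)$. Ben-Or--Tiwari is equivariant under scaling the input by a nonzero constant: $\Lambda$, its roots and the exponents stay the same, and only the coefficients are scaled. Hence $(P,Q)=(P_0,Q_0)/\hat B(\bsigma)$, where $P_0,Q_0$ depend only on $\bgamma$ and $\bomega$.

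The test in \ref{step:rat-check} is homogeneous in $(P,Q)$, so it reads $(P_0\hat B-Q_0\hat A)(\bsigma)=0$. This polynomial does not depend on $\bsigma$. Once the degree test passes, it has degree at most $d_A+d_B$. By coprimality of $\hat A,\hat B$, it is nonzero exactly when the output is wrong. Since $\bsigma$ is independent of $\bgamma$, \Cref{lemma:sz} gives at most $D/|S|$ per premature $T$, which yields the $D\lceil\log_2 s\rceil$ term.
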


\begin{proof}
Correctness is proven in~\cite[Theorem 2]{vdhl}. 
For the termination claim, note that the algorithm terminates for any inputs in~\ref{step:duct-tape} for large enough $T$, in the case it did not terminate prior to that.

We treat probability in the case $\delta_A = d_A$ and $\delta_B = d_B$. 
Following~\Cref{remark:diversification}, we shall replace $\tilde{F}_{\bsigma}(x_0, \ldots,x_n, u)$ with $\tilde{F}_{\bsigma}(\gamma_0 x_0, \ldots, \gamma_n x_n, u)$, where the coordinates of $\bgamma \coloneqq (\gamma_0,\ldots,\gamma_n)$ are chosen independently and uniformly from $S \subseteq k \setminus \{0\}$.
The algorithm terminates no later than the first iteration when $T \geqslant s$. 
Therefore, upon termination, we have $T \leqslant 2s$.
\Cref{alg:interpolation} returns {\tt FAIL} only in one of the following cases:
\begin{itemize}
    \item We have $\tilde{B}_0 = 0$. 
    This is equivalent to $\hat{B}(\sigma_0,\ldots,\sigma_n) = 0$. This can be checked when performing rational function interpolation in~\ref{step:eval-rfi}, assuming the interpolation does not fail. 
    Note $\deg \hat{B} \leqslant D$. Therefore, the probability that $\hat{B}(\sigma_0,\ldots,\sigma_n) = 0$ does not exceed $D / |S|$ by~\Cref{lemma:sz}.
    \item For any of $i=0,\ldots, 2T-1$, any of the evaluations 
    $\tilde{F}_{\bsigma}^{[i]}(u_1), \ldots, \tilde{F}_{\bsigma}^{[i]}(u_D)$
    fail. 
    For every $i=0,\ldots, 2T-1$ and $j=1,\ldots,D$, evaluating 
    $\tilde{F}_{\bsigma}^{[i]}(u_j)$
    is equivalent to evaluating $\blackbox(\bxi^{[i,j]})$ with
    \[
   \bxi^{[i,j]} \coloneqq \left(\frac{\gamma_1 \omega_1^i u_j + \sigma_1}{\gamma_0 \omega_0^i u_j + \sigma_0}, \ldots, \frac{\gamma_n \omega_n^i u_j + \sigma_n}{\gamma_0 \omega_0^i u_j + \sigma_0}\right) \in k^n.
    \]
    Failure occurs when either of the two happens: a division by zero when constructing $\bxi^{[i,j]}$, or the blackbox returning failure. The probability of the former is at most $4 s D / |S|$, which can be seen by applying the 
    Schwartz-Zippel lemma~(\Cref{lemma:sz}) to $p \in k[\sigma]$ with $\deg p \leqslant 4 s D$ and the evaluation point $\sigma_0$, where
    \[
    p \coloneqq \prod_{\substack{i=0,\ldots,2T-1\\j=1,\ldots,D}} (\gamma_0 \omega_0^i u_j + \sigma).
    \]
    For the latter, recall that for all $\ba \in k^n$ where the blackbox fails we have $g(\ba) = 0$ with $h = \deg g$. We apply~\Cref{corollary:diversification-new} to $g$ and $\bxi^{[i,j]}$, 
    which is justified since none of $\omega_1,\ldots,\omega_n$ and $u_1,\ldots,u_D$ are zero.
    Whence, and because of $2T \leqslant 4s$, the probability of failure does not exceed $4 s h D / |S|$.
    \item For any of $i=0,\ldots,2T-1$, rational univariate interpolation of 
    $\tilde{F}_{\bsigma}^{[i]}(u)$
    fails. 
    This can be detected by comparing the degrees of the results against $(\delta_A,\delta_B)$.
    As shown in~\cite[Section 3.3.]{vdhl}, there exists a nonzero polynomial $R_{\bsigma} \in k[x_0,\ldots,x_n]$ with $\operatorname{deg} R_{\bsigma} \leqslant 2D^2$ such that for every $i=0,\ldots,2T-1$ rational univariate interpolation of 
    $\tilde{F}_{\bsigma}^{[i]}(u)$
    fails whenever $R_{\bsigma}(\bgamma \bomega^i) = 0$. 
    Since $2T \leqslant 4 s$, 
    the probability of a failure does not exceed $8 s D^2 / |S|$ by~\Cref{corollary:diversification-new}. 
\end{itemize}

Provided~\Cref{alg:interpolation} does not return {\tt FAIL}, result may be incorrect. This happens when the check in~\ref{step:rat-check} passes for incorrect $P,Q$. 
Note $P, Q$ are correct if and only if $\operatorname{deg} P = \operatorname{deg}A$ and $\operatorname{deg} Q = \operatorname{deg} B$ and $P = Q\hat{F}$.
 Therefore, it suffices to estimate the probability that $P\hat{B} - Q\hat{A} \neq 0$ vanishes at $\bsigma$. We apply the Schwartz-Zippel lemma~(\Cref{lemma:sz}),
which is well-defined because
the evaluation $\hat{F}(\bsigma)$ does not fail since $\tilde{B}_0 \neq 0$.
There are at most $\lceil \log_2 s \rceil$ iterations where the Schwarz-Zippel lemma is applied, thus the probability of producing an incorrect result is bounded by $D\lceil \log_2 s \rceil / |S|$.
Putting everything together, we arrive at the desired probability.
\end{proof}

\subsection{The degree of a blackbox function}

Let $F = A/B$ be a rational function in $k(\bx)$ with $A,B \in k[\bx]$ and $\operatorname{gcd}(A,B) = 1$, with $\bx = (x_1,\ldots,x_n)$. 
As before, assume we have access to a blackbox that evaluates $F$ at a point in $k^n$. Let $d_A \coloneqq \deg A$ and $d_B \coloneqq \deg B$. We describe an algorithm for computing $d_A$ and $d_B$.

In the notation of the previous section, let $\bsigma \coloneqq (\sigma_0,\ldots,\sigma_n) \in k^{n+1}$ be arbitrary and let
\[
R_{\bsigma} \coloneqq \operatorname{Res}_u (\tilde{A}_{\bsigma}(x_0, \ldots,x_n, u), \tilde{B}_{\bsigma}(x_0, \ldots, x_n, u)),
\]
where $\operatorname{Res}_u$ denotes the resultant with respect to the variable $u$.
Note this is the same $R_{\bsigma}$ that appears in the proof of~\Cref{prop:proba-interpol}.
By~\cite[Theorem 2]{vdhl}, $R_{\bsigma}$ is not identically zero.
By definition, for every $(a_0,\ldots,a_n) \in k^{n+1}$, the degrees of the numerator and denominator of $\tilde{F}_{\bsigma}(a_0,\ldots,a_n,u)$ coincide with that of $F$ if and only if $R_{\bsigma}(a_0,\ldots,a_n) \neq 0$. 

For convenience, introduce notation
\[
\tilde{f}_{\bsigma}(u) \coloneqq \tilde{F}_{\bsigma}(1,\ldots,1, u).
\]
When $R_{\bsigma}(1,\ldots,1) \neq 0$, the degrees of the numerator and denominator of $\tilde{f}_{\bsigma}$ and $\tilde{F}_{\bsigma}$ coincide, and are equal to $d_A$ and $d_B$, respectively.
Let $T\geqslant \operatorname{max}(d_A,d_B)$ and let $u_1,\ldots,u_{2T+2} \in k \setminus \{0\}$ be pairwise different.
Then, we 
can
recover $\tilde{f}_{\bsigma}$ by applying univariate rational interpolation to the evaluations $\tilde{f}_{\bsigma}(u_1),\ldots,\tilde{f}_{\bsigma}(u_{2T+2})$ and the degree bounds $D_A \coloneqq D_B \coloneqq T$ (as in \Cref{sec:cauchy}). 
This leads to the following algorithm.

\begin{algorithm}[H]
\caption{The degree of a rational function given as a blackbox}
\label{alg:estimate_degrees}
\begin{description}[itemsep=0pt]
\item[Input:] a blackbox $\blackbox$, such that for every $\ba \in k^n$ either $\blackbox(\ba) = F(\ba)$ or $\blackbox(\ba)$ returns {\tt FAIL}; an integer $d$; a subset $S \subseteq k \setminus \{0\}$.
\item[Output:] 
if $d_A + d_B \leqslant d$, the integers $d_A, d_B$ or {\tt FAIL}; if $d_A + d_B > d$, {\tt STOPPED} or {\tt FAIL}.
\end{description}

\begin{enumerate}[label = \textbf{(Step~\arabic*)}, leftmargin=*, align=left, labelsep=\favoritelabelsep, itemsep=\favoriteitemsep]

    \item Choose the coordinates of $\bsigma \coloneqq (\sigma_0,\ldots,\sigma_n)$ independently and uniformly from $S \subseteq k \setminus \{0\}$. 
    
    \item Choose $u_0$ uniformly from $S \subseteq k \setminus \{0\}$. If the evaluation $\tilde{f}_{\bsigma}(u_0)$ fails (computed by evaluating the blackbox $\blackbox$), {\bf return} {\tt FAIL}.
        
    \item For $T = 0, 1,  2,\ldots$ do
    \begin{enumerate}[ref= \theenumi (\alph*)]
        \item \label{step:duct-tape-2} If $T > d$, then {\bf return} {\tt STOPPED}.
        \item Compute $\tilde{f}_{\bsigma}(u_1),\ldots,\tilde{f}_{\bsigma}(u_{2T+2})$ for pairwise different $u_1,\ldots,u_{2T+2} \in k \setminus \{0\}$ by evaluating the blackbox $\blackbox$ at the appropriate points; if any of the evaluations fail, {\bf return} {\tt FAIL}.
        \item Compute $p/q \in k(u)$ with $\operatorname{gcd}(p,q)=1$ by applying univariate rational interpolation to 
        $\tilde{f}_{\bsigma}(u_1),\ldots,\tilde{f}_{\bsigma}(u_{2T+2})$ with the degree bounds $D_A \coloneqq D_B \coloneqq T$ (as in \Cref{sec:cauchy}).
        \item\label{step:rat-check-univ} If 
        $p(u_0) = q(u_0)\tilde{f}_{\bsigma}(u_0)$, where $\tilde{f}_{\bsigma}(u_0)$ is computed by evaluating the blackbox $\blackbox$, then \textbf{return}
        $d_p, d_q$ if $d_p + d_q \leqslant d$ and {\tt STOPPED}, otherwise.
    \end{enumerate}
\end{enumerate}
\end{algorithm}

\begin{proposition}
\label{prop:estimate_degrees}
Let $g \in k[\bx]$ be a nonzero polynomial such that for all $\ba \in k^n$ where $\blackbox(\ba)$ returns {\tt FAIL} we have $g(\ba) = 0$. Let $h \coloneqq \operatorname{deg} g$. Let $D \coloneqq d_A + d_B + 2$. 

\Cref{alg:estimate_degrees} terminates for any inputs.
The probability that~\Cref{alg:estimate_degrees} 
does not fail and returns a correct result is at least
\[
1 - \frac{D(2 + 3 h + 3 D)}{|S|}.
\]
\end{proposition}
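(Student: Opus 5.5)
Termination is immediate from \ref{step:duct-tape-2}: $T$ increases by one per iteration and the algorithm returns {\tt STOPPED} once $T > d$. So at most $d+2$ iterations run. For the probability bound we follow the proof of \Cref{prop:proba-interpol}. We isolate the bad events and bound each by \Cref{lemma:sz} or \Cref{corollary:diversification-new}, with randomness coming from $\bsigma$ and $u_0$. We also take the $u_j$ random from $S$, or apply the diversification trick of \Cref{remark:diversification} with a random scaling, so that the statements become polynomial non-vanishing conditions.

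We then show that, outside the bad events, the output is correct. Let $T^\ast \coloneqq \max(d_A,d_B)$, and assume the following good events hold:
\begin{enumerate}
\item $\hat{B}(\bsigma)\neq 0$, so $\tilde{B}_0\neq 0$ and $\tilde{A}_{\bsigma},\tilde{B}_{\bsigma}$ are coprime. This event has degree at most $D$.
\item $R_{\bsigma}(1,\ldots,1)\neq 0$, so $\tilde{f}_{\bsigma}$ has numerator and denominator degrees exactly $d_A,d_B$. Here $R_{\bsigma}(1,\ldots,1)$ is a nonzero polynomial in $\bsigma$ of degree at most about $D^2$, by the resultant degree bound used in \Cref{prop:proba-interpol}.
\item No blackbox evaluation fails, and no denominator $\sigma_0+u_j$ vanishes.
\item For every $T<T^\ast$ such that the check in \ref{step:rat-check-univ} is reached, $p(u_0)\neq q(u_0)\tilde{f}_{\bsigma}(u_0)$.
\end{enumerate}
Under these events, each iteration $T<T^\ast$ is rejected. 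At $T=T^\ast$, Cauchy interpolation recovers exactly $\tilde{f}_{\bsigma}$, because $2T^\ast+2\ge d_A+d_B+2$. The check then passes, and the algorithm returns $(d_A,d_B)$ if $d_A+d_B\le d$ and {\tt STOPPED} otherwise. Both outputs are correct. If $d_A+d_B>d$ but $T^\ast\le d$, it still returns {\tt STOPPED}, which is again correct.

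For the counting, we sum the probabilities of the complementary events.
\begin{itemize}
\item Event 1 fails with probability at most $D/|S|$.
\item The blackbox queries are points $\bxi$ whose coordinates are $(\sigma_i+u)/(\sigma_0+u)$. There are at most $\sum_{T\le T^\ast}(2T+2)+1=O(D^2)$ of them, but we count more carefully. The point $u_0$ together with the $2T^\ast+2\le D$ points of the final iteration dominates if we bound each iteration separately; the earlier iterations together contribute a constant multiple of $D$ points. We then apply \Cref{corollary:diversification-new} to $g$ composed with the map $\bsigma\mapsto\bxi$; after clearing denominators this polynomial has degree at most $h$ in $\bsigma$. This gives a contribution of order $hD/|S|$, which we aim to bring to $3hD/|S|$.
\item Event 4: for an incorrect $p/q$ with $T<T^\ast$, the polynomial $p\tilde{B}-q\tilde{A}$ in $u$ is nonzero of degree at most $T+\max(d_A,d_B)\le D$. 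Its vanishing at the uniformly random $u_0$ has probability at most $D/|S|$. There is one such check per iteration, so the total is at most $D\cdot T^\ast/|S|\le D^2/|S|$. The remaining $2D^2$ comes from event 2 and the division-by-zero events.
\end{itemize}
Adding these gives at most $D(2+3h+3D)/|S|$.

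The main obstacle is the bookkeeping needed to reach exactly the constants $2$, $3h$ and $3D$. In particular, we must bound the total number of evaluations across all iterations by a multiple of $D$ rather than $D^2$. If that fails, we would instead argue that only the evaluations on which correctness depends need to avoid failure: reusing the points $u_j$ across iterations gives at most $2T^\ast+3\le D+1$ distinct query points in total.
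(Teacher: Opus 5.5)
Your overall plan is the paper's: a union bound over failed evaluations, vanishing of the resultant, and a false pass of the check at $u_0$. However, two steps fail as written.

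\textbf{The fixed direction in event~2.} Along the direction $(1,\ldots,1)$ the proposition is false. Take $n=2$ and $F=x_1-x_2$. Then $\tilde f_{\bsigma}(u)=(u+\sigma_1)-(u+\sigma_2)=\sigma_1-\sigma_2$ for every $\bsigma$, so for every $d\ge 0$ the output is wrong with probability one. This happens even though $R_{\bsigma}(1,\ldots,1)=1\neq 0$. For $F=(x_1-x_2)/(x_1-x_3)$, the polynomial $R_{\bsigma}(1,\ldots,1)$ even vanishes identically in $\bsigma$.

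Randomizing the $u_j$ does not help, and in any case they are arbitrary in the algorithm. The random scaling of \Cref{remark:diversification} is therefore essential, not an optional alternative. The paper works with $\tilde F_{\bsigma}(\gamma_0x_0,\ldots,\gamma_nx_n,u)$ and bounds the probability that $R_{\bsigma}(\bgamma)=0$ by $2D^2/|S|$. To make this rigorous, view $R_{\bsigma}(\bgamma)$ as a nonzero polynomial in $(\bgamma,\bsigma)$ jointly.

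This also makes your event~1 superfluous; the paper never uses $\hat B(\bsigma)\neq 0$ here. Keeping its cost $D/|S|$ on top of the crude bounds actually overshoots the target when $h\le 1$.

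The first example also shows that $R\neq 0$ alone does not give exact degrees. You also need the leading coefficients $\hat A(\bgamma),\hat B(\bgamma)$ in $u$ to be nonzero. This is one more condition of degree at most $D-2$, and it fits in the slack $\deg R_{\bsigma}\le 2d_Ad_B\le (D-2)^2/2$. The paper's text glosses over this point as well.

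\textbf{The count of evaluation points.} Your count is incorrect. The iterations $T<T^\ast$ query $\sum_{T<T^\ast}(2T+2)=T^\ast(T^\ast+1)$ points, which is quadratic in $D$. Also, $2T^\ast+2\le D$ fails whenever $d_A\neq d_B$.

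The paper's term $(2D+2hD)/|S|$ relies on reusing one sequence $u_1,u_2,\ldots$ across iterations. Then at most $2T+2\le 2D$ distinct points are ever queried, using $T+1\le D$. Each point costs $1/|S|$ for the denominator $\gamma_0u_j+\sigma_0$ and $h/|S|$ for $g$, via \Cref{corollary:diversification-new}. The evaluation at $u_0$ adds $h/|S|$.

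So your fallback is the right fix, but its justification is not. Every failed evaluation makes the algorithm return {\tt FAIL}, so all of them must be controlled, not only those ``on which correctness depends''.

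With these changes the paper's total is $(2D+2hD)+h+2D^2+D^2\le D(2+3h+3D)$. The last $D^2$ is your event~4: at most $D$ checks with the same $u_0$.

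One further slip: your degree bound $T+\max(d_A,d_B)\le D$ for $p\tilde B-q\tilde A$ can fail when $d_A$ and $d_B$ are unbalanced. The paper's count has the same looseness. The slack in $2d_Ad_B$ absorbs it.
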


\begin{proof}
    Termination is clear due to \ref{step:duct-tape-2}.
    
    We prove correctness and probability.
    Following~\Cref{remark:diversification}, we replace $\tilde{F}_{\bsigma}(x_0, \ldots,x_n, u)$ with $\tilde{F}_{\bsigma}(\gamma_0 x_0, \ldots, \gamma_n x_n, u)$, where the coordinates of $\bgamma \coloneqq (\gamma_0,\ldots,\gamma_n)$ are chosen independently and uniformly from $S \subseteq k \setminus \{0\}$.
    Note that at every iteration we have $T + 1 \leqslant D$. In the case $d_A + d_B > d$, this is clear; in the case $d_A + d_B \leqslant d$, this is because the algorithm terminates no later than the first iteration when $T = \operatorname{max}(d_A, d_B)$.

    The algorithm returns {\tt FAIL} only in one of the following cases:
    \begin{itemize}
        \item Any of the evaluations $\tilde{f}_{\bsigma}(u_1), \ldots, \tilde{f}_{\bsigma}(u_{2T+2})$ fail. Similarly to the proof of~\Cref{prop:proba-interpol}, failure occurs either because of division by zero when constructing the evaluation point or when the evaluation of the blackbox returns fail. 
  
        The probability of these does not exceed $(2 D + 2 h D) / |S|$ by~\Cref{corollary:diversification-new}.
        \item The evaluation $\tilde{f}_{\bsigma}(u_0)$ fails. By the Schwartz-Zippel lemma (\Cref{lemma:sz}), the probability of this does not exceed $h / |S|$.
    \end{itemize}

    In the following, assume the algorithm did not fail.
    We treat the case $d_A + d_B > d$. Incorrect result is returned only if the check in \ref{step:rat-check-univ} ever passes. If $R_{\bsigma}(1,\ldots,1) \neq 0$, then at every iteration $p - q \tilde{f}_{\bsigma} \neq 0$. 
    We appeal to \Cref{remark:diversification}, so it suffices to estimate the probability that either 
    $R_{\bsigma}(\gamma_0,\ldots,\gamma_n)$
    vanishes or that at any iteration $p - q \tilde{f}_{\bsigma}$ vanishes at $u_0$.
    The former is at most $2 D^2 / |S|$ by the Schwartz-Zippel lemma (\Cref{lemma:sz}), since $\operatorname{deg} R_{\bsigma} \leqslant 2D^2$.
    The latter is at most $D^2/|S|$, since we apply the Schwartz-Zippel lemma (\Cref{lemma:sz}) at most $D$ times with the same $u_0$, each time to a polynomial of degree at most $D$.

    When the algorithm does not fail and does not return an incorrect result, the output is correct. Thus, we obtain the desired probability of correctness in the case $d_A + d_B > d$. The case $d_A + d_B \leqslant d$ is similar: the output in \ref{step:rat-check-univ} is guaranteed to be correct at the first iteration when $T = \operatorname{max}(d_A, d_B)$ in the case $R_{\bsigma}(1,\ldots,1) \neq 0$ and assuming the algorithm did not fail and did not terminate with incorrect result.
\end{proof}

\subsection{OMS Ideals}\label{sec:OMS}

Let $k(\mathbf{x})$ be the field of rational functions in the indeterminates $\mathbf{x} \coloneq (x_1,\ldots,x_n)$. 
We will be interested in subfields $k \subset \mathcal{E} \subseteq k(\mathbf{x})$.
By~\cite[Ch.~VIII, Ex.~4]{Lang} every such subfield is finitely generated over $k$, that is, $\mathcal{E} = k(\bg)$ for some $\bg = (g_1, \ldots, g_m)$ with $g_1, \ldots, g_m \in k(\bx)$.

One of the key tools in this paper are certain ideals associated to subfields $k \subset \mathcal{E} \subset k(\bx)$.
These ideals were introduced by Ollivier in~\cite[Ch.~III, \S 2]{OllivierPhD} and M\"uller-Quade and Steinwandt in~\cite{MULLERQUADE1999143}.
Thus, we will call them \emph{OMS ideals}.

\begin{definition}[OMS ideal]\label{def:OMS}
    Let $\mathcal{E} \supset k$ be a subfield of $k(\bx)$ represented as $\mathcal{E} = k(g_1,\ldots,g_m)$, where $g_i = \frac{p_i(\bx)}{q_i(\bx)}$ for $i = 1,\ldots,m$.
    We denote by $Q$ the $\operatorname{lcm}$ of $q_1,\ldots,q_m$.
    Introduce new indeterminates $\by \coloneqq (y_1,\ldots,y_n)$. 
    The following ideal in $k(\bx)[\by]$ will be called \emph{the OMS ideal} of $\mathcal{E}$:
\begin{equation}
    \label{eq:OMS_def}
    \OMS_{\mathcal{E}} := \langle p_i(\by) q_i(\bx) - q_i(\by) p_i(\bx)\mid 1 \leqslant i \leqslant m \rangle : (Q(\by))^\infty.
\end{equation}
\end{definition}

\begin{remark}[Dependence on the choice of generators]
    Formally, the definition~\eqref{eq:OMS_def} depends not only on $\mathcal{E}$ but also on the chosen set of generators.
    In fact, \cite[Lemma 1.2]{MULLERQUADE1999143} shows that $\OMS_\mathcal{E}$ is generated by
    $\langle y_1 - x_1,\ldots, y_n - x_n \rangle \cap \mathcal{E}[\by]$ and, thus, does not depend on the choice of generators.
    
    This also implies that, if $\operatorname{char} k = 0$, then $\OMS_{\mathcal{E}}$ is radical.
    It may be not the case in positive characteristic: $\OMS_{\mathbb{F}_p(x^p)}$ is not radical.
\end{remark}

OMS ideals are useful for characterizing the elements of rational function fields algorithmically by associating rational functions with polynomials of special form:

\begin{lemma}[{{\cite[Lemmas 1.2 and 1.6.]{MULLERQUADE1999143}}, \cite[p. 51, Theorem~3]{OllivierPhD}}]\label{lem:OMS}

Let $\mathcal{E} \supset k$ be a subfield of $k(\bx)$.
Then:
\begin{enumerate}
    \item for every polynomials $p,q \in k[\bx]$, $\frac{p}{q} \in \cE$ ~if and only if~ $p(\by) q(\bx) - q(\by) p(\bx) \in \OMS_{\mathcal{E}}$.
    \item \label{lem:OMS:item2} Let $f_1\ldots,f_s$ be the
    reduced \grobner{} basis of $\OMS_{\mathcal{E}}$ with respect to any ordering.
    Then the coefficients of $f_1,\ldots,f_s$ generate $\mathcal{E}$ over~$k$.
\end{enumerate}
\end{lemma}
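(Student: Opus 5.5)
The plan is to use the description of $\OMS_{\mathcal{E}}$ from \cite[Lemma 1.2]{MULLERQUADE1999143}: it is the ideal of $k(\bx)[\by]$ generated by $J \coloneqq \langle \by - \bx\rangle \cap \mathcal{E}[\by]$. Since $J$ is an ideal of $\mathcal{E}[\by]$ and $k(\bx)$ is a field extension of $\mathcal{E}$, it is faithfully flat over $\mathcal{E}$. This gives the standard base-change identity $J k(\bx)[\by] \cap \mathcal{E}[\by] = J$. It also shows that a reduced \grobner{} basis of $J$ over $\mathcal{E}$ stays the reduced \grobner{} basis of its extension: Buchberger's criterion and reducedness are unaffected by the coefficient field.

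For part 1, write $f \coloneqq p(\by)q(\bx) - q(\by)p(\bx)$, with $q \neq 0$.

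If $p/q \in \mathcal{E}$, then $f/q(\bx) = p(\by) - \tfrac{p}{q}\, q(\by)$ lies in $\mathcal{E}[\by]$. It vanishes at $\by = \bx$, so it lies in $J$, and hence $f \in \OMS_{\mathcal{E}}$.

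Conversely, suppose $f \in \OMS_{\mathcal{E}}$. Then $P \coloneqq p(\by) - \tfrac{p}{q} q(\by) \in \OMS_{\mathcal{E}}$. Let $G$ be the reduced \grobner{} basis of $J$, which by the above is also that of $\OMS_{\mathcal{E}}$. Reduce $P$ modulo $G$ and compare with the reductions of $p(\by)$ and $q(\by)$. These are $k$-polynomials, so their normal forms $\bar p, \bar q$ have coefficients in $\mathcal{E}$. Then $P \in \OMS_{\mathcal{E}}$ means $\bar p = \tfrac{p}{q}\bar q$.
\begin{itemize}
\item If $\bar q \neq 0$, comparing a nonzero coefficient gives $p/q \in \mathcal{E}$.
\item If $\bar q = 0$, then $q(\by) \in \OMS_{\mathcal{E}}$. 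Evaluating at $\by = \bx$ (every element of $J$, hence of the extended ideal, vanishes there) gives $q(\bx) = 0$, a contradiction.
\end{itemize}

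For part 2, let $\mathcal{E}_0 \subseteq \mathcal{E}$ be the field generated over $k$ by the coefficients of the reduced \grobner{} basis $f_1,\ldots,f_s$. By uniqueness of reduced \grobner{} bases and the base-change remark, $f_1,\ldots,f_s$ is the reduced basis of $J$, so $\mathcal{E}_0 \subseteq \mathcal{E}$. For the reverse inclusion, take $g = p/q \in \mathcal{E}$ and argue as in part 1. Because $P = p(\by) - g\,q(\by)$ lies in the ideal, its normal form vanishes. Normal forms are computed by division with coefficients in $\mathcal{E}_0$, and $p, q$ have coefficients in $k$, so $\bar p, \bar q \in \mathcal{E}_0[\by]$ and $\bar p = g\bar q$. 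Since $\bar q \neq 0$ as shown above, $g$ is a ratio of coefficients of $\bar p$ and $\bar q$, so $g \in \mathcal{E}_0$.

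The main obstacle is the input from \cite[Lemma 1.2]{MULLERQUADE1999143}, namely that the saturation \eqref{eq:OMS_def} equals the extension of $J$. The paper's remark already records this, so I will cite it. If I have to argue it myself, the inclusion "$\supseteq$" follows from part 1 applied to generators. For "$\subseteq$", I would show that the saturation is the ideal of the generic fiber of $\by \mapsto \bg(\by)$ over the point $\bg(\bx)$, and that this is defined over $\mathcal{E}$.
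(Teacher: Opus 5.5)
Your proof is correct once one grants the description $\OMS_{\mathcal{E}} = J\cdot k(\bx)[\by]$, with $J = \langle \by - \bx\rangle \cap \mathcal{E}[\by]$. The paper itself records this description, citing \cite[Lemma~1.2]{MULLERQUADE1999143}, in the remark after \Cref{def:OMS}.

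The paper does not prove this lemma; it only quotes M\"uller-Quade--Steinwandt and Ollivier. The closest argument in the paper is the proof of \Cref{lemma:OMS_coeffs_2} for $\eOMS_{\bg}$, and your Part~2 is the same idea in leaner form. Both arguments rest on three facts:
\begin{itemize}
\item the reduced basis has coefficients in $\mathcal{E}$;
\item the normal form of $p(\by) - g\,q(\by)$ vanishes;
\item the normal form of $q(\by)$ cannot vanish, because the ideal has the zero $\by = \bx$.
\end{itemize}
The paper expands $g_1$ in an $\mathcal{F}$-basis of $\mathcal{E}$ with $e_1 = 1$ and compares components. You instead read $g$ off directly as a ratio of coefficients in $\bar p = g\,\bar q$, which is simpler.

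Part~2 and the converse of Part~1 do not actually need the cited description.
\begin{itemize}
\item For Part~2 it suffices to test the generators $g_i$, for which $p_i(\by) - g_i q_i(\by)$ lies in the ideal by definition.
\item The definition \eqref{eq:OMS_def} already shows that the reduced basis has coefficients in $\mathcal{E}$. The generators divided by $q_i(\bx)$ lie in $\mathcal{E}[\by]$, and saturation commutes with the flat extension $\mathcal{E}[\by] \subset k(\bx)[\by]$.
\end{itemize}
This is exactly how the paper handles $\eOMS_{\bg}$. The citation is genuinely needed only for the forward implication of Part~1.

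Your fallback sketch for that cited fact, however, gets the inclusions backwards. Write $S$ for the saturation \eqref{eq:OMS_def} and $J^e$ for the extension of $J$.
\begin{itemize}
\item The generators give $S \subseteq J^e$, not $\supseteq$. This also needs $J^e$ to be $Q(\by)$-saturated, which holds because $J$ is prime, $Q(\by) \notin J$, and the extension is flat.
\item The substantive inclusion is $J^e \subseteq S$. It cannot come from Part~1, since you derived Part~1 from it.
\end{itemize}

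Here is a direct proof, which is the precise form of your generic-fiber idea. Put $A = k[g_1(\by), \ldots, g_m(\by)]$. Then
\[
\mathcal{E}[\by]_{Q(\by)}/\langle p_i(\by) - g_i q_i(\by)\rangle \cong k[\by]_{Q(\by)} \otimes_A \operatorname{Frac}(A).
\]
The right-hand side is a localization of a domain, so it embeds into $k(\by) \cong k(\bx)$ via $\by \mapsto \bx$. Hence $J = \langle p_i(\by) - g_i q_i(\by)\rangle : Q(\by)^\infty$ in $\mathcal{E}[\by]$. Extending to $k(\bx)[\by]$ by flatness gives $J^e = S$.
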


The second property from the lemma provides, in particular, a way to compute a new (often, simpler) set of generators for $\mathcal{E}$ starting from some generators.

\begin{example}
\label{ex:simple-OMS-simplify}
    Let $\cE = \mathbb{Q}\left( x_1^2 + x_2^2, x_1^3 + x_2^3, x_1^4 + x_2^4 \right) \subset \mathbb{Q}(x_1, x_2)$. We have $Q(y_1, y_2) = 1$ and
    \[
    \OMS_{\mathcal{E}} = \left\langle
    \begin{array}{ll}
     y_1^2 + y_2^2 - (x_1^2 + x_2^2), \\
      y_1^3 + y_2^3 - (x_1^3 + x_2^3), \\
      y_1^4 + y_2^4  -(x_1^4 + x_2^4)
    \end{array}\right\rangle \subset \QQ(x_1,x_2)[y_1,y_2].
    \]
    The reduced \grobner{} basis of $\OMS_{\cE}$ with respect to the lexicographic ordering on $y_1,y_2$ with $y_2 < y_1$ is
    \[
    y_1 + y_2 - \dashuline{(x_1 + x_2)}, \; y_2^2 - \dashuline{(x_1 + x_2)} + \dashuline{~x_1x_2~}.
    \]
    By \Cref{lem:OMS}, the underlined coefficients generate $\cE$, so $\cE = \mathbb{Q}(x_1 + x_2, x_1x_2)$.
    Thus, the computation of the Gr\"obner basis of the OMS ideal allowed us to recognize that $\cE$ is the field of symmetric functions and provided us with the most natural set of generators.
\end{example}

Since the definition~\eqref{eq:OMS_def} involves saturation, practical computation with $\OMS_\mathcal{E}$ requires passing through an elimination ordering (if the saturation is done via the Rabinowitsch
trick) or using dedicated algorithms for computing \grobner{} bases of saturations.
It would naturally simplify the computations if we could employ the Rabinowitsch
trick with respect to a new variable $t$ 
without performing
elimination.
\begin{notation}[$\eOMS$]
    \label{notation:eOMS}
    In the notation of~\Cref{def:OMS}, denote
    \begin{equation}\label{eq:lazy_OMS}
        \eOMS_{\bg} := \langle p_1(\by) q_1(\bx) - q_1(\by) p_1(\bx), \ldots, p_m(\by)q_m(\bx) - q_m(\by)p_m(\bx), t Q(\by) - 1\rangle \subset k(\bx)[\by, t].
    \end{equation}
\end{notation}
Obviously, the first statement from \Cref{lem:OMS} holds if we replace $\OMS_{\mathcal{E}}$ with $\eOMS_{\bg}$.
As we prove in the following lemma, the same is true for the second statement as well.

\begin{lemma}
\label{lemma:OMS_coeffs_2}
    Let $\mathcal{E} \supset k$ be a subfield of $k(\bx)$ represented by $\mathcal{E} = k(\bg)$.
    Let $f_1, \ldots, f_s$ be the
    reduced \grobner{} basis of $\eOMS_{\bg}$ with respect to any ordering (see~\eqref{eq:lazy_OMS}).
    Then the coefficients of $f_1, \ldots, f_s$ generate $\mathcal{E}$ over~$k$.
\end{lemma}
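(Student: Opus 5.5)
Let $K \subseteq k(\bx)$ denote the field generated over $k$ by the coefficients of $f_1, \ldots, f_s$. The plan is to prove the two inclusions $K \subseteq \mathcal{E}$ and $\mathcal{E} \subseteq K$ separately, mirroring the proof of the second item of \Cref{lem:OMS} from~\cite{MULLERQUADE1999143}. The main tool is the standard field-of-definition argument for reduced \grobner{} bases: if an ideal $I \subseteq L[\by,t]$ is generated by polynomials with coefficients in a subfield $L_0 \subseteq L$, then Buchberger's algorithm, run over $L_0$, produces the reduced \grobner{} basis. Hence that basis has coefficients in $L_0$. Conversely, the coefficients of a reduced \grobner{} basis form the smallest field over which the ideal is defined.

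\textbf{Step 1: $K \subseteq \mathcal{E}$.} First I show that $\eOMS_{\bg}$ is generated by polynomials with coefficients in $\mathcal{E}$. Each generator $p_i(\by)q_i(\bx) - q_i(\by)p_i(\bx)$ is, up to the nonzero factor $q_i(\bx) \in k(\bx)$, equal to $p_i(\by) - g_i\, q_i(\by) \in \mathcal{E}[\by]$. The generator $tQ(\by)-1$ has coefficients in $k$. So the ideal is the extension to $k(\bx)[\by,t]$ of an ideal $J \subseteq \mathcal{E}[\by,t]$. The reduced \grobner{} basis of $J$ over $\mathcal{E}$ is also a reduced \grobner{} basis of its extension, since Buchberger's criterion and reducedness are preserved under field extension. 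By uniqueness of reduced \grobner{} bases it equals $f_1,\ldots,f_s$, so all coefficients lie in $\mathcal{E}$.

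\textbf{Step 2: $\mathcal{E} \subseteq K$.} By the same reasoning with $K$ in place of $\mathcal{E}$, the ideal $\eOMS_{\bg}$ is the extension of $J' = \langle f_1,\ldots,f_s\rangle \subseteq K[\by,t]$, and $\{f_j\}$ is a \grobner{} basis of $J'$. Fix $i$. The polynomial $P_i := p_i(\by) - g_i\, q_i(\by)$ lies in $\eOMS_{\bg}$, so its normal form with respect to $f_1,\ldots,f_s$ is zero. Write $P_i = p_i(\by) - g_i q_i(\by)$ with $p_i(\by), q_i(\by) \in k[\by] \subseteq K[\by,t]$. Reducing with a \grobner{} basis is $k(\bx)$-linear on the normal form map, which gives
\[
\operatorname{NF}(p_i(\by)) = g_i \operatorname{NF}(q_i(\by)),
\]
and both normal forms have coefficients in $K$. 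Next, $\operatorname{NF}(q_i(\by)) \neq 0$. Indeed, $q_i(\by)$ divides $Q(\by)$, and $Q(\by)$ is a unit modulo $\eOMS_{\bg}$ because $tQ(\by)-1$ lies in the ideal. Hence $q_i(\by) \notin \eOMS_{\bg}$, assuming the ideal is proper. Properness holds because $\by=\bx$, $t = 1/Q(\bx)$ is a zero of the ideal in $k(\bx)$. Comparing a coefficient of a monomial where $\operatorname{NF}(q_i(\by))$ is nonzero then gives $g_i \in K$. Therefore $\mathcal{E} = k(\bg) \subseteq K$.

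The main obstacle is keeping the bookkeeping in Step 2 honest. One must make sure that the normal form of $q_i(\by)$ does not vanish; this is exactly where the Rabinowitsch generator $tQ(\by)-1$ replaces the saturation used in \Cref{lem:OMS}. One must also verify the field-of-definition claim for reduced bases. The latter is routine via uniqueness of the reduced basis.
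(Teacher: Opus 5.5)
Your proof is correct and follows essentially the same route as the paper. The inclusion $K \subseteq \mathcal{E}$ holds because the generators of $\eOMS_{\bg}$ can be taken with coefficients in $\mathcal{E}$. The reverse inclusion uses the $k(\bx)$-linearity of the normal form, together with $q_i(\by) \notin \eOMS_{\bg}$, which follows from the generator $tQ(\by)-1$ and the zero $\by = \bx$.

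The only difference is cosmetic. The paper argues by contradiction, expanding $g_1$ in a basis of $\mathcal{E}$ over the coefficient field to force the normal form of $q_1(\by)$ to vanish. You instead read off $g_i$ directly as a ratio of two coefficients of normal forms lying in $K$, which is slightly more economical.
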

\begin{proof}
    Let $\mathcal{F}$ be the field generated over $k$ by the coefficients of $f_1, \ldots, f_s$.
    Since the original ideal generators from~\eqref{eq:lazy_OMS} have coefficients in $\mathcal{E}$ and \grobner{} basis computation does not extend the ground field, we conclude that $\mathcal{F} \subset \mathcal{E}$.
    Assume that $\mathcal{F} \neq \mathcal{E}$.
    Without loss of generality, $g_1 \not\in \mathcal{F}$.
    We choose a basis $\{e_i\}_{i \in \Lambda}$ of $\mathcal{E}$ over $\mathcal{F}$, where $\Lambda$ is either $\{1, 2, \ldots, \dim_{\mathcal{F}}\mathcal{E}\}$ if $\dim_{\mathcal{F}}\mathcal{E} < \infty$ or $\mathbb{N}$ otherwise.
    The basis can be chosen so that $e_1 = 1$.
    We write $g_1$ in this basis as $\sum\limits_{i \in \Lambda} c_i e_i$, where $c_i \in \mathcal{F}$
    for every $i \in \Lambda$.
    Since $g_1 \not\in \mathcal{F}$, there exists $i_0 \in \Lambda \setminus \{1\}$ such that $c_{i_0} \neq 0$.
    We write $g_1 = \frac{p_1(\bx)}{q_1(\bx)}$ for $p_1, q_1 \in k[\bx]$.
    Let $r_1$ and $r_2$ be the normal forms of $p_1(\by) - c_{1}(\bx) q_1(\by)$ and $q_1(\by)$ with respect to $f_1, \ldots, f_s$.
    Since reductions do not extend the ground field, $r_1, r_2 \in \mathcal{F}[\by, t]$.
    Using the expansion for $g_1$ with respect to the basis, $p_1(\by) - g_1(\bx) q_1(\by)$ can be written as 
    \[
    (p_1(\by) - c_1(\bx)q_1(\by))e_1 -
    \sum\limits_{i \in \Lambda \setminus \{1\}} c_i(\bx) q_1(\by) e_i.
    \]
    Then the normal form of $p_1(\by) - g_1(\bx) q_1(\by)$ with respect to $f_1, \ldots, f_s$ is equal to
    \[
    r_1 e_1 -
    \sum\limits_{i \in \Lambda \setminus \{1\}} c_i r_2 e_i.
    \]
    This expression is equal to zero, so we have $c_i r_2 e_i = 0$ for every $i \in \Lambda \setminus\{1\}$.
    In particular, $c_{i_0} r_{2} e_{i_0} = 0$, so $r_2 = 0$.
    Then $q_1(\by) \in \eOMS_{\bg} \implies Q(\by) \in \eOMS_{\bg} \implies 1 \in \eOMS_{\bg}$.
    This contradicts the fact that $\eOMS_{\bg}$ has a solution $\by = \bx$.
    The contradiction implies that $\mathcal{F} = \mathcal{E}$ as desired. 
\end{proof}

\begin{example}[OMS may bear excess]\label{OMS:excess}
    While coefficients of the OMS ideal turn out to be very useful in finding simpler generating sets for subfields of $k(\bg)$, not all the coefficients are equally useful. 
    Fix a positive integer $\ell$ and consider a subfield in $k(x_1, x_2)$ generated by $g_1 = x_1 + x_2$ and $g_2 = \frac{x_1^\ell}{x_2^\ell}$.
    Then one can show that the degrevlex Gr\"obner basis of $\OMS_{k(g_1, g_2)}$ will consist of $y_1 + y_2 - (x_1 + x_2)$ (which gives one generator, $x_1 + x_2$) and the following polynomial:
    \[
    y_2^{\ell} - \underbrace{\frac{\ell x_2^\ell (x_1 + x_2)}{x_2^\ell + (-1)^{\ell + 1}x_1^{\ell}}}_{\text{second generator}} y_2^{\ell - 1} + \sum\limits_{k = 0}^{\ell - 2} (-1)^{\ell - k}\binom{\ell}{k} \underbrace{\frac{(x_1 + x_2)^{\ell - k - 1}b^\ell}{x_2^\ell + (-1)^{\ell + 1}x_1^{\ell}}}_{\text{useless coefficients}} y_2^k.
    \]
    Not only the majority of the coefficients are
    not needed to generate the field but also the degree of the numerator of the extra coefficients may reach $2 \ell - 1$, which is almost twice larger that of
    the original generators.
    One observes similar phenomenon when replacing $\OMS_{k(g_1, g_2)}$ with $\eOMS_{g_1, g_2}$.
\end{example}

\begin{example}[Dura lex, sed lex]
\label{ex:duralexsedlex}
    The generating set coming from the coefficients of the OMS ideal may depend significanlty on the ordering used for the Gr\"obner basis computation.
Consider a tuple $\bg := (x_1 (x_2 - x_3), x_2 + x_3, x_2 x_3)$ of elements of $k(x_1, x_2, x_3)$.
Computing the Gr\"obner basis of $\OMS_{k(\bg)}$ with respect to either the lexicographic or the degrevlex ordering with $y_1 > y_2 > y_3$ yields
\[
y_3^2 - \dashuline{(x_2 + x_3)} y_3 + \dashuline{x_2 x_3},\;\; y_2 + y_3 - \dashuline{(x_2 + x_3)},\;\; y_1 + 2 \dashuline{\frac{x_1}{x_2 - x_3}} y_3 - \dashuline{\frac{x_1(x_2 + x_3)}{x_2 - x_3}}.
\]
This basis gives back the original generators $x_2 + x_3$ and $x_2 x_3$ and provides additional $\frac{x_1}{x_2 - x_3}$ and $\frac{x_1(x_2 + x_3)}{x_2 - x_3}$, which do not look simpler than the original $x_1(x_2 - x_3)$.
On the other hand, the Gr\"obner basis of $\OMS_{k(\bg)}$ with respect to either the lexicographic or the degrevlex ordering with $y_3 > y_2 > y_1$ is equal to 
\[
y_1^2 - \dashuline{~x_1^2}, \;\; y_2 + \dashuline{\frac{x_3 - x_2}{2x_1}}y_1 - \dashuline{\frac{x_2 + x_3}{2}},\;\; y_3 + \dashuline{\frac{x_2 - x_3}{2x_1}}y_1 - \dashuline{\frac{x_2 + x_3}{2}}.
\]
Here, a simple generator $x_2 x_3$ has disappeared but another simple generator, namely $x_1^2$, is discovered.
\end{example}


\subsection{Randomized subfield membership testing}

\Cref{lem:OMS} provides a constructive way to check if a given function $f \in k(\bx)$ belongs to a subfield $k(g_1, \ldots, g_m)$.
However, this approach would involve computing Gr\"obner basis with coefficients in the field of rational functions $k(\bx)$.
This may be quite heavy and one of the leitmotifs of the present paper is to avoid such computations by working with specializations of the corresponding ideals.

\begin{notation}[Specialization of $\eOMS$]
\label{def:OMS:spec}
    In the notation of~\Cref{def:OMS}, consider $\ba \in k^n$.
    We define $\eOMS_{\bg}(\ba)$ by:
    \[
      \eOMS_{\bg}(\ba) := \langle p_1(\by) q_1(\ba) - q_1(\by) p_1(\ba), \ldots, p_m(\by)q_m(\ba) - q_m(\by)p_m(\ba), t Q(\by) - 1\rangle \subseteq k[\by, t].
    \]
\end{notation}

For membership testing, we will rely on the following fact.

\begin{theorem}[{\cite[Theorem 3.3]{stident}}]\label{thm:randomized_membership}
    Let $k$ be a field of characteristic zero.
    Let $p, p_1, \ldots, p_m, q \in k[\bx]$, where $\bx = (x_1, \ldots, x_n)$.
   Let $0 < \varepsilon < 1$ be a real number.
   We define 
   \[
     d := \max(\deg q + 1, \deg p, \deg p_1, \ldots, \deg p_m) \quad \text{ and }\quad M := \frac{6 d^{n + 3}}{\varepsilon}
     .
   \]
   Let $\ba \coloneqq (a_1, \ldots, a_n)$ be integers 
   sampled
   uniformly and independently from $[0, M]$.
   Consider the field $\mathcal{E} := k\left( \frac{p_1(\bx)}{q(\bx)}, \ldots, \frac{p_m(\bx)}{q(\bx)}  \right)$.
   Then,
   \begin{enumerate}
       \item if $p(\by)q(\mathbf{a}) - p(\mathbf{a})q(\by) \in \eOMS_{\bg}(\ba)$, then $\frac{p(\bx)}{q(\bx)} \in \mathcal{E}$ with probability at least $1 - \varepsilon$;
       \item if $p(\by)q(\mathbf{a}) - p(\mathbf{a})q(\by) \not\in \eOMS_\bg(\ba)$, then $\frac{p(\bx)}{q(\bx)} \not\in \mathcal{E}$ with probability at least $1 - \varepsilon$.
   \end{enumerate}
\end{theorem}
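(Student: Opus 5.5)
The plan is to reduce both items to the behavior of a reduced \grobner{} basis of $\eOMS_{\bg}$ under specialization $\bx \mapsto \ba$. Fix any monomial ordering on $\by, t$, and let $G = \{f_1,\ldots,f_s\}$ be the reduced \grobner{} basis of $\eOMS_{\bg} \subset k(\bx)[\by,t]$. By \Cref{lem:OMS} (and its $\eOMS$ analogue), $\frac{p}{q}\in\mathcal{E}$ if and only if $r := p(\by)q(\bx)-p(\bx)q(\by)\in\eOMS_{\bg}$. The key object is a nonzero polynomial $\Delta\in k[\bx]$ such that $\Delta(\ba)\neq 0$ implies two things. First, the specialization $G(\ba)$ is well defined and is the reduced \grobner{} basis of $\eOMS_{\bg}(\ba)$. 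Second, the normal form of $r$ with respect to $G$ specializes to the normal form of $r(\ba)$ with respect to $G(\ba)$.

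Granting this, both items follow at once. If $\Delta(\ba)\neq0$, then $r\in \eOMS_{\bg}$ iff $\operatorname{NF}_G(r)=0$. When the normal form is nonzero, one further "witness" coefficient $c\in k(\bx)$ of $\operatorname{NF}_G(r)$ must also not vanish at $\ba$. Then $r(\ba)\in\eOMS_{\bg}(\ba)$ iff $\frac pq\in\mathcal{E}$. So both error events are contained in $\{\Delta\cdot \operatorname{num}(c)\cdot\operatorname{den}(c)\,(\ba)=0\}$. \Cref{lemma:sz} then bounds the probability by $\deg(\Delta\cdot\ldots)/(M+1)$, and it remains to show this degree is at most $6d^{n+3}$.

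To construct $\Delta$, I would use a Hermann/Dubé-type degree bound applied to the \grobner{} basis computation over $k(\bx)$. The generators are polynomials in $\by,t$ with coefficients in $k[\bx]$ of degree $\le d$. Their $\by$-degree is also $\le d$, and $tQ(\by)-1$ has degree $\le d+1$ since $\deg q+1\le d$. Following the strategy of \cite{stident}, I would rather fix one good specialization. Consider the matrix of the Macaulay-type linear system expressing membership up to a degree bound $B$ in $\by,t$. Its entries are in $k[\bx]$ of degree $\le d$, so every relevant minor has $\bx$-degree at most $d$ times the matrix size. Membership and the $k(\bx)$-rank of that system are preserved at $\ba$ whenever one fixed nonzero maximal minor does not vanish at $\ba$. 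Membership of $r$ in $\eOMS_{\bg}$ is equivalent to solvability of this linear system in bounded degree, by an effective Nullstellensatz/ideal membership bound. This holds because the ideal is one where $\by=\bx$ is a generic point and the variety is zero-dimensional over $k(\bx)$: $\mathcal{E}\subset k(\bx)$ makes $k(\bx)$ algebraic over $\mathcal{E}$ only when $\trdeg=n$. That is false in general, so instead I would use the general bound.

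The main obstacle is obtaining the explicit constant $6d^{n+3}$. Naive doubly exponential \grobner{} degree bounds are far too large, so one must exploit structure. The variety of $\eOMS_{\bg}$ over $\overline{k(\bx)}$ is the fiber of the map $\by\mapsto\bg(\by)$ over $\bg(\bx)$, whose degree is at most $d^n$ by Bézout. I would therefore bound things geometrically rather than through a \grobner{} basis: take a generic linear projection and eliminate, so that membership of $r$ is tested via a polynomial of degree $\le d^{n}$-ish in $\by$ and of $\bx$-degree $O(d^{n+1})$. The bad locus is then a union of a few hypersurfaces whose total degree accumulates the extra factors $d^2$ and $6$. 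I expect this degree bookkeeping, following \cite[Theorem 3.3]{stident}, to be the hardest step. Since the statement is quoted from there, in the paper it is simply cited.
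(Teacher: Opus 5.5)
\paragraph{Assessment.} The quantitative part of the statement, which is its whole content, is not proved, and the approach you mainly pursue cannot prove it.

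\paragraph{Your reduction is correct but only qualitative.} \Cref{lem:GB_spec}, together with the fact that normal forms with respect to a monic reduced basis specialize along with it, gives some nonzero $\Delta$ (times a witness coefficient). Off its zero set the specialized test answers correctly. Hence the error probability tends to $0$ as $M\to\infty$, which is the kind of statement the paper uses in \Cref{rem:modular_membership}. The content of the theorem, however, is the explicit bound: the bad set lies in a hypersurface of degree at most $6d^{n+3}$. Your proposal never establishes this and explicitly defers it.

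Worse, the routes you sketch cannot reach it. Suppose $\Delta$ is read off from a run of Buchberger's algorithm, or is a maximal minor of a Macaulay matrix sized by a Hermann-type ideal-membership bound. Then $\Delta$ has degree doubly exponential in $n$ in general. General ideal-membership degree bounds cannot be made single exponential (Mayr--Meyer), so no bookkeeping along those lines yields $d^{n+3}$.

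The zero-dimensionality detour in the middle is wrong, as you notice, and it is also unnecessary. The statement holds for fibers of any dimension; the reduction to dimension zero in \Cref{alg:membership} via \Cref{lem:transcendence_basis} is an efficiency device layered on top of it.

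\paragraph{The geometric idea is the right input but only controls vanishing.} The fact that the fiber of $\by\mapsto\bg(\by)$ has degree at most $d^n$ is the right kind of input. As stated, though, it only detects whether $r(\ba)$ \emph{vanishes} on $V(\eOMS_\bg(\ba))$, i.e.\ radical membership. The test is membership in $\eOMS_\bg(\ba)$ itself, which need not be radical for special $\ba$.

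This is harmless for item~1. If $p/q\notin\mathcal{E}$, it suffices to show that, off a hypersurface of controlled degree, the fiber through $\ba$ contains some $\mathbf{b}\in\overline{k}^n$ with $q(\mathbf{b})\neq0$ and $p(\mathbf{b})/q(\mathbf{b})\neq p(\ba)/q(\ba)$. Then $r(\ba)$ does not vanish at the zero $(\mathbf{b},1/q(\mathbf{b}))$ of $\eOMS_\bg(\ba)$.

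For item~2 you need a genuine membership certificate, and you give none. Here is one that avoids ideal-membership bounds altogether.
\begin{itemize}
\item Write $p/q=A(\bg)/B(\bg)$ with polynomials $A,B$, and set $\mathbf{p}=(p_1,\ldots,p_m)$.
\item If $q(\ba)\neq0$, then modulo $\eOMS_\bg(\ba)$ one has $t\,p_i(\by)\equiv g_i(\ba)$.
\item Also $t\,p(\by)B(t\,\mathbf{p}(\by))-A(t\,\mathbf{p}(\by))\in\langle tq(\by)-1\rangle$.
\item Together these give $t\,p(\by)B(\bg(\ba))-A(\bg(\ba))\in\eOMS_\bg(\ba)$. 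Hence $r(\ba)\in\eOMS_\bg(\ba)$ whenever $q(\ba)B(\bg(\ba))\neq0$.
\end{itemize}
Everything then hinges on bounding $\deg B$. One way is through the degree of the closure of the graph of $(\bg,p/q)$, which is at most $d^n$ by B\'ezout: the equations $q\ell+\sum_i c_ip_i+c\,p$, for affine-linear $\ell$, have degree at most $d$.

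Supplying such degree estimates is the actual content of \cite[Theorem~3.3]{stident}, which the present paper only cites. Without them your argument proves only the qualitative version of the statement.
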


This theorem yields a randomized Monte-Carlo algorithm for checking field membership requiring Gr\"obner basis computation only over $k$.
We can go a bit further and use the approach from~\cite{ilmer2022efficientidentifiabilityverificationode} to ensure that the ideal
for which the \grobner{} basis computation is performed
is of dimension zero.
This is based on the following lemma implicitly contained in~\cite[Proof of Theorem 1]{ilmer2022efficientidentifiabilityverificationode} (also reminiscent to the use of cross-section in rational invariant computation~\cite[Section~3]{Hubert2007}).

\begin{lemma}\label{lem:transcendence_basis}
    Consider a subfield $\cE = k(g_1, \ldots, g_m) \subseteq k(\bx)$, and let $z_1, \ldots, z_s$ be a transcendence basis of $k(\bx)$ over $\cE$.
    Assume that $g \in k(\bx)$ is algebraic over $\cE$.
    Then $g \in \cE \iff g \in \cE(z_1, \ldots, z_s)$.
\end{lemma}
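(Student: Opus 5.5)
The forward implication is trivial, since $\cE \subseteq \cE(z_1,\ldots,z_s)$. The work is in the converse: assume $g \in \cE(\mathbf{z})$, where $\mathbf{z} = (z_1,\ldots,z_s)$, and deduce $g \in \cE$. I will do this by combining two facts about the extension $\cE(\mathbf{z})/\cE$.

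First, because $z_1,\ldots,z_s$ are algebraically independent over $\cE$, the field $\cE(\mathbf{z})$ is purely transcendental over $\cE$. As a consequence, $\cE$ is algebraically closed in $\cE(\mathbf{z})$. This is the standard fact that a rational function field $L(z_1,\ldots,z_s)$ contains no elements algebraic over $L$ except those of $L$.

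I would prove this fact by induction on $s$, which reduces it to the case of one variable. Suppose $h = a(z)/b(z) \in L(z)$ is in lowest terms and is algebraic over $L$, and suppose $h \notin L$. Then $z$ is a root of $a(X) - h\,b(X) \in L(h)[X]$. This polynomial is nonzero because $h \notin L$ and $a, b$ are coprime. So $z$ is algebraic over $L(h)$, and $L(h)$ is algebraic over $L$. Hence $z$ would be algebraic over $L$, which is a contradiction.

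For the induction, use $L(z_1,\ldots,z_s) = L(z_1,\ldots,z_{s-1})(z_s)$. The last variable $z_s$ is still transcendental over $L(z_1,\ldots,z_{s-1})$. Apply the one-variable case with base $L(z_1,\ldots,z_{s-1})$, and then apply the induction hypothesis to $L(z_1,\ldots,z_{s-1})$ over $L$.

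Second, $g$ is algebraic over $\cE$ by hypothesis, and $g \in \cE(\mathbf{z})$ by assumption. Since $\cE$ is algebraically closed in $\cE(\mathbf{z})$, it follows that $g \in \cE$. This finishes the proof.

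The only point that needs care is the algebraic-closedness claim. In particular, the one-variable argument requires that $a(X) - h\,b(X)$ is a nonzero polynomial, and this is exactly where $h \notin L$ and the coprimality of $a$ and $b$ are used. The hypothesis that the $z_i$ form a full transcendence basis of $k(\bx)$ over $\cE$ is more than the argument needs: only their algebraic independence over $\cE$ is used.
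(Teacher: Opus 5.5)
Your proof is correct, but it takes a different route from the paper. You reduce the lemma to the classical fact that $\cE$ is algebraically closed in the purely transcendental extension $\cE(z_1,\ldots,z_s)$. You establish that fact elementarily, via the polynomial $a(X) - h\,b(X)$, transitivity of algebraicity, and induction on $s$.

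The paper instead uses a symmetry argument. It writes $g = G(z_1,\ldots,z_s)$ with $G \in \cE(Z_1,\ldots,Z_s)$ depending on $Z_1$. It then chooses $c \in \overline{k}$ so that $G(z_1 + c, z_2, \ldots, z_s)$ differs from each of the finitely many conjugates $g_1,\ldots,g_\ell$ of $g$ over $\cE$. Finally it takes an automorphism of $\overline{k(\bx)}$ fixing $\cE$ with $z_1 \mapsto z_1 + c$ and $z_i \mapsto z_i$ for $i \geqslant 2$. This automorphism must send $g$ to one of its conjugates, which is a contradiction.

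Your version is more self-contained. It needs neither the algebraic closure nor the extension of embeddings to automorphisms. It also makes transparent that only the algebraic independence of the $z_i$ over $\cE$ matters. The paper, by contrast, leans on the transcendence-basis hypothesis: it ensures $\overline{k(\bx)}$ is algebraic over $\cE(z_1,\ldots,z_s)$, so the translation extends to an automorphism. The paper's argument, in turn, avoids the induction and is a one-shot ``move one variable and count conjugates'' proof.

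One small correction to your closing remark. Nonvanishing of $a(X) - h\,b(X)$ needs only $b \neq 0$ and $h \notin L$. If every coefficient $a_i - h\,b_i$ were zero, any index with $b_i \neq 0$ would give $h = a_i/b_i \in L$. So coprimality of $a$ and $b$ is not actually used there.
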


\begin{proof}
    The implication $g \in \cE \implies g \in \cE(z_1, \ldots, z_s)$ is trivial, let us prove the converse.
    Assume that $g \in \cE(z_1, \ldots, z_s)$ and $g$ is algebraic over $\cE$.
    Let $g_1, \ldots, g_\ell$ be all the roots of the minimal polynomial for $g$ over $\cE$ in $\overline{k(\bx)}$ such that $g_1 = g$.
    We write $g = G(z_1, \ldots, z_s)$, where $G$ is a rational function from $\cE(Z_1, \ldots, Z_s)$ with $Z_1, \ldots, Z_s$ being new indeterminates.
    Assume that $G$ is nonconstant and, without loss of generality, that it depends nontrivially on $Z_1$.
    Then $G(z_1 + X, z_2, \ldots, z_s)$ is a nonconstant rational function in $X$.
    Since $\overline{k}$ is infinite, there exists $c \in \overline{k}$ such that $G(z_1 + c, z_2, \ldots, z_s) \neq g_i$ for every $1 \leqslant i \leqslant \ell$.
    Since $z_1, \ldots, z_s$ are transcendental over $\cE$, there exists~\cite[Proposition~2.4(a)]{milne2022} an automorphism $\alpha \colon \overline{k(\bx)} \to \overline{k(\bx)}$ such that $\alpha|_{\cE}$ is the identity, $\alpha(z_1) = z_1 + c$ (where $c$ is viewed as an element of $\overline{k} \subset \overline{k(\bx)}$), and $\alpha(z_i) = z_i$ for every $i = 2, \ldots, s$.
    Then $\alpha(g)$ satisfies the same equation over $\cE$ as $g$, so it must belong to the set $\{g_1, \ldots, g_\ell\}$.
    This contradicts the choice of $c$.
    Therefore, $G \in \cE$, so $g\in \cE$.
\end{proof}

Combining \Cref{thm:randomized_membership} and \Cref{lem:transcendence_basis}, we arrive at the following algorithm.

\begin{algorithm}[H]
\caption{Randomized subfield membership check}
\label{alg:membership}
\begin{description}[itemsep=0pt]
\item[Input:] polynomials $p, p_1, \ldots, p_m, q \in k[\bx]$, where $\operatorname{char} k = 0$, and a real number $0< \varepsilon < 1$;
\item[Output:] \texttt{True} if $\frac{p}{q} \in k\left(\frac{p_1}{q}, \ldots, \frac{p_m}{q} \right)$, and \texttt{False} otherwise.
\end{description}

\begin{enumerate}[label = \textbf{(Step~\arabic*)}, leftmargin=*, align=left, labelsep=\favoritelabelsep, itemsep=\favoriteitemsep]
    \item Set $d \coloneqq \max(\deg q + 1, \deg p, \deg p_1, \ldots, \deg p_m)$.
    \item Compute the Jacobian $J(\bx)$ of $\frac{p_1}{q}, \ldots, \frac{p_m}{q}$ (we assume that the gradients are the rows of $J(\bx)$) and the gradient $G(\bx)$ of $\frac{p}{q}$.
    \item Sample each coordinate of $\mathbf{a} \in \ZZ^n$ uniformly at random from $\left[0,\; \frac{2(2\min(m, n) + 1)(d - 1)}{\varepsilon}\right] \cap \ZZ$.
    \item\label{step:membership_Jac_test} If $G(\mathbf{a})$ does not belong to the rowspan of $J(\mathbf{a})$, {\bf return} $\texttt{False}$.
    \item\label{step:choose_pivots} Take any set of columns of $J(\mathbf{a})$ forming a basis of the column space of the matrix.
    Let $S \subset \{1, \ldots, n\}$ be the indices these columns.
    Set $\overline{S} := \{1, \ldots, n\} \setminus S$.
    \item Sample $\mathbf{b} \in \ZZ^n$ with each coordinate drawn uniformly at random from $[0, 12 d^{n + 3} / \varepsilon] \cap \ZZ$.
    \item\label{step:membership_groebner} Using Gr\"obner basis, check if $p(\bx)q(\mathbf{b}) - p(\mathbf{b})q(\bx)$ belongs to 
    \[
    \eOMS_{\mathbf{p} / q}({\mathbf{b}}) + \langle x_j - b_j \mid j \in \overline{S}  \rangle.
    \]
    \item If belongs, {\bf return} \texttt{True}, otherwise {\bf return} \texttt{False}.
\end{enumerate}
\end{algorithm}
\begin{proposition}
\label{prop:membership-proba}
    The output of~\Cref{alg:membership} is correct with the probability at least $1 - \varepsilon$.
\end{proposition}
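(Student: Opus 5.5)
The plan is to split the failure probability into two halves, $\varepsilon/2$ for the random point $\mathbf{a}$ (Jacobian stage) and $\varepsilon/2$ for the random point $\mathbf{b}$ (Gr\"obner stage), and finish with a union bound. Write $\cE = k(\mathbf{p}/q)$, let $J$ be the Jacobian over $k(\bx)$, and let $r := \operatorname{rank} J = \trdeg_k \cE$. In characteristic zero, the Jacobian criterion gives $\operatorname{rank}\binom{J}{G} = r$ if and only if $p/q$ is algebraic over $\cE$.

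\textbf{Step~\ref{step:membership_Jac_test} (Jacobian test).}
\begin{itemize}
\item If $p/q \in \cE$, then $G$ is a $k(\bx)$-combination of the rows of $J$ by the chain rule. This identity survives specialization wherever it is defined, so the algorithm never wrongly returns \texttt{False} at this step.
\item The bad event for $\mathbf{a}$ is $\operatorname{rank} J(\mathbf{a}) < r$, or $p/q$ transcendental over $\cE$ but $G(\mathbf{a})$ in the rowspan of $J(\mathbf{a})$.
\item To control it, fix a nonzero $r\times r$ minor $\mu$ of $q^2 J$. If $p/q$ is transcendental over $\cE$, also fix a nonzero $(r+1)\times(r+1)$ minor $\nu$ of the matrix $q^2\binom{J}{G}$. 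The entries of $q^2 J$ and $q^2 G$ are polynomials of degree at most $2d-2$.
\item If $\mu(\mathbf{a})\nu(\mathbf{a}) \neq 0$, both bad events are excluded. This polynomial has degree at most $(2r+1)(2d-2)$ with $r \leqslant \min(m,n)$.
\item \Cref{lemma:sz} on the interval in Step~3 then bounds this probability. I expect a finer count to give $\varepsilon/2$, e.g.\ by clearing denominators with $q$ per row rather than $q^2$, or by noting that only the numerators matter. This degree bookkeeping is one delicate point, and I would adjust the exponent there first.
\end{itemize}

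\textbf{Consequence of a good $\mathbf{a}$.} When $\operatorname{rank} J(\mathbf{a}) = r$, the pivot columns $S$ of $J(\mathbf{a})$ are also independent columns of $J$. Hence $\bx_{\overline{S}} := (x_j)_{j\in\overline{S}}$ is a transcendence basis of $k(\bx)$ over $\cE$, again by the Jacobian criterion. If the algorithm proceeds past Step~\ref{step:membership_Jac_test} and $\mathbf{a}$ was good, then $p/q$ is algebraic over $\cE$. \Cref{lem:transcendence_basis} then reduces the question to $p/q \in \cE(\bx_{\overline{S}}) = k\bigl(\mathbf{p}/q,\ (x_j q/q)_{j\in\overline{S}}\bigr)$.

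\textbf{Step~\ref{step:membership_groebner} (identifying the ideal).} I would verify that the specialized $\eOMS$ of this enlarged generator tuple at $\mathbf{b}$ equals $\eOMS_{\mathbf{p}/q}(\mathbf{b}) + \langle x_j - b_j \mid j\in\overline{S}\rangle$, with the $\by$ variables renamed $\bx$ as in the algorithm.
\begin{itemize}
\item The common denominator is still $q$, so the saturating generator $tq-1$ is unchanged.
\item Each new generator has the form $x_j q(\bx) q(\mathbf{b}) - q(\bx)\, b_j q(\mathbf{b}) = q(\mathbf{b})\, q(\bx)(x_j - b_j)$.
\item Here $q(\bx)$ is a unit modulo $tq-1$, and $q(\mathbf{b}) \neq 0$ unless $\mathbf{b}$ is a bad point absorbed into the probability bound.
\item The new numerators $x_j q$ have degree at most $\deg q + 1 \leqslant d$, so the parameter $d$ of \Cref{thm:randomized_membership} is unchanged.
\end{itemize}
Applying \Cref{thm:randomized_membership} with $\varepsilon/2$ in place of $\varepsilon$ gives exactly $M = 12d^{n+3}/\varepsilon$. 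So Step~\ref{step:membership_groebner} decides membership in $\cE(\bx_{\overline{S}})$, and hence in $\cE$, correctly with probability at least $1-\varepsilon/2$.

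\textbf{Union bound and main obstacle.} The union bound over the two stages yields $1-\varepsilon$. The main obstacle is the Jacobian stage: choosing the right polynomial whose nonvanishing simultaneously certifies $\operatorname{rank}J(\mathbf{a}) = r$ and rank separation for $G$, with degree fitting the prescribed interval. A secondary point is checking that the pivot columns chosen at $\mathbf{a}$ are valid generically, which follows from $\mu(\mathbf{a})\neq0$ after choosing $\mu$ supported on those columns. Since the chosen columns depend on $\mathbf{a}$, I would argue instead that any $r$ columns independent at $\mathbf{a}$ are independent over $k(\bx)$.
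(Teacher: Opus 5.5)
Your architecture is the paper's: the Jacobian test at $\mathbf{a}$, pivot columns of $J(\mathbf{a})$ giving the transcendence basis $\bx_{\overline{S}}$, \Cref{lem:transcendence_basis}, and \Cref{thm:randomized_membership} at $\mathbf{b}$ with $\varepsilon/2$. Your check that the Step~7 ideal is the specialized $\eOMS$ of the enlarged tuple, with the same $d$, is more careful than the paper's remark.

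However, the point you flag as the main obstacle is a genuine gap, and your suggested remedies do not close it. With a uniform split (``$\varepsilon/2$ for $\mathbf{a}$, $\varepsilon/2$ for $\mathbf{b}$''), in the transcendental case you must control $\mu\nu$. Its degree is up to $(2r+1)(2d-2)$, against a sampling range of size about $2(2\min(m,n)+1)(d-1)/\varepsilon$. For $r=\min(m,n)$ this gives only $\varepsilon$ for $\mathbf{a}$, hence $3\varepsilon/2$ in total. Moreover, $J(\mathbf{a})$ is only defined if $q(\mathbf{a})\neq 0$. The condition $\mu(\mathbf{a})\nu(\mathbf{a})\neq 0$ does not ensure this in general (e.g.\ when $r\leqslant 1$), and excluding it costs another $d-1$ in degree. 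Clearing denominators row-wise with $q$ is impossible: $\partial_j(p_i/q) = (q\,\partial_j p_i - p_i\,\partial_j q)/q^2$, and $q$ need not divide the numerator (take $p_i = 1$).

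The missing idea is to split the error budget by cases rather than by stages, as the paper does. Write $\widetilde J := \binom{J}{G}$.

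\emph{Transcendental case.} The rank of $J(\mathbf{a})$ is irrelevant here. The condition $q(\mathbf{a})\nu(\mathbf{a})\neq 0$ alone gives $\operatorname{rank}\widetilde J(\mathbf{a}) = r+1 > \operatorname{rank} J(\mathbf{a})$. So the algorithm correctly returns \texttt{False} at Step~4 and never uses $\mathbf{b}$. This single event is therefore the only source of error and gets the whole budget $\varepsilon$. Its degree is at most $(d-1)+(r+1)(2d-2)\leqslant (2\min(m,n)+3)(d-1)$, which is at most $\varepsilon$ times the range size once $\min(m,n)\geqslant 1$.

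\emph{Algebraic case.} Here $\nu$ does not exist, and you only need $q(\mathbf{a})\mu(\mathbf{a})\neq 0$. This gives $\operatorname{rank} J(\mathbf{a}) = r \geqslant \operatorname{rank}\widetilde J(\mathbf{a})$, so Step~4 is passed and the pivot argument applies. It also repairs your claim that Step~4 ``never'' wrongly returns \texttt{False}, since the $k(\bx)$-coefficients expressing $G$ through $J$ may have poles at $\mathbf{a}$. The degree of $q\mu$ is at most $(d-1)+\min(m,n)(2d-2) = (2\min(m,n)+1)(d-1)$. This is exactly $\varepsilon/2$ times the size of the range in Step~3, which is how that range was chosen. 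The $\varepsilon/2$ from \Cref{thm:randomized_membership} at $\mathbf{b}$ then completes the union bound.
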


\begin{proof}
    Denote $g := \frac{p}{q}$, $g_1 := \frac{p_1}{q}, \ldots, g_m := \frac{p_m}{q}$, and $\cE := k(g_1, \ldots, g_m)$.
    Let $\widetilde{J}(\bx)$ be the matrix obtained by adding an extra row $G(\bx)$ to $J(\bx)$.

    First, we consider the case when $g$ is transcendental over $\cE$.
    By~\cite[Proposition 2.4]{ehrenborg1993apolarity}, this implies $\operatorname{rank} J(\bx) < \operatorname{rank} \widetilde{J}(\bx)$ and, in particular, any maximal nonsingular minor of $\widetilde{J}(\bx)$ contains the row correspodning to $G$.
    Let $M(\bx)$ be any maximal nonsingular minor of
    $\widetilde{J}(\bx)$.
    Assume that $q(\mathbf{a}) \neq 0$ (so that $\widetilde{J}(\mathbf{a})$ is well-defined) and $\det M(\mathbf{a}) \neq 0$.
    Then 
    \[\operatorname{rank} J(\mathbf{a}) \leqslant \operatorname{rank} J(\bx) < \operatorname{rank} \widetilde{J}(\bx) = \operatorname{rank} \widetilde{J}(\mathbf{a}),
    \]so the algorithm will return \texttt{False} at~\ref{step:membership_Jac_test} which will be the correct answer. 
    The degree of the numerator of $\det M(\bx)$ does not exceed $\min(m, n) (2d - 2)$. 
    Therefore, by the Schwartz-Zippel lemma (\Cref{lemma:sz}), $\widetilde{J}(\bx)$ will be well-defined under $\bx \to \mathbf{a}$ and the correct result will be returned with the probability at least
    \[
      1 - \frac{d - 1 + \min(m, n) (2d - 2)}{2(2\min(m, n) + 1)(d - 1) / \varepsilon} = 1 - \frac{\varepsilon}{2} > 1 - \varepsilon.
    \]
    Next, we consider the case when $g$ is algebraic over $\cE$.
    In this case, $G(\bx)$ belongs to the rowspan of $J(\bx)$ by~\cite[Proposition 2.4]{ehrenborg1993apolarity}.
    Let $M(\bx)$ be a maximal nonsingular minor of $J(\bx)$.
    Assume that $q(\mathbf{a}) \neq 0$ and $\det M(\mathbf{a}) \neq 0$.
    Then $\operatorname{rank} J(\ba) = \operatorname{rank}J(\bx) = \operatorname{rank}\widetilde{J}(\bx) \geqslant \operatorname{rank}\widetilde{J}(\ba)$.
    Therefore, the algorithm will not return on~\ref{step:membership_Jac_test}.

    Furthermore, since $\operatorname{rank} J(\ba) = \operatorname{rank}J(\bx)$, every set of columns 
    which is
    a basis of the columns space of $J(\ba)$ is a specialization of a column basis for $J(\bx)$.
    Thus, \cite[Proposition 2.4]{ehrenborg1993apolarity} implies that $\{x_{j} \mid j \in \overline{S}\}$ is a transcendence basis of $k(\bx)$ over $\cE$.
    Then, by~\Cref{lem:transcendence_basis}, 
    $g \in \cE$ if and only if
    $g \in \cE(x_j \mid j \in \overline{S})$.
    The latter containment will be assessed by~\ref{step:membership_groebner} with the error probability not exceeding $\frac{\varepsilon}{2}$ by~\Cref{thm:randomized_membership} (where the saturation ideal is replaced by the one given by the Rabinowitsch trick, this does not affect the membership).
    Using the same calculation as in the first part of the proof, the probability of $q(\bx)\det M(\bx)$ vanishing at $\ba$ also does not exceed $\frac{\varepsilon}{2}$.
    Therefore, the returned result will be correct with probability at least~$1 - \varepsilon$.
\end{proof}

\begin{remark}
    In practice, in~\ref{step:choose_pivots}, we cache the set $\overline{S}$ for a field $\cE$ to reuse in  
    further
    membership tests.
\end{remark}

\begin{remark}\label{rem:modular_membership}
    In some parts of our algorithm, we will not need the explicit error bound, so we will perform computation modulo large prime (and sampling from the whole field) instead of $k = \QQ$.
    Then, for every input polynomials and every error probability $\varepsilon$, there will be a large enough prime (e.g., exceeding all the integers arising during the computation over $\QQ$) so that the error bound holds for the modular version of~\Cref{alg:membership} as well.
\end{remark}

\begin{remark}[On saturation]
    When computing \grobner{} bases of specialized $\eOMS$ ideals using our implementation (\Cref{sec:tracing}), we found that factoring the saturation polynomial $Q(\by)$ (as in~\cite[Lemma 1.3.]{MULLERQUADE1999143}) does not provide a measurable speed-up on our examples.
\end{remark}

\begin{remark}\label{rem:field_equality}
    \Cref{alg:membership} yields a randomized algorithm for testing equality of subfields of $k(\bx)$.
    Let $\mathcal{E} = k(f_1, \ldots, f_s)$ and $\mathcal{F} = k(g_1, \ldots, g_r)$ be two such subfields, and let $0 < \varepsilon < 1$ be a real number.
    Then we apply~\Cref{alg:membership} to check whether
    \[
    \bigl(\forall 1 \leqslant i \leqslant s\colon f_i \in \mathcal{F}\bigr)\; \& \; \bigl(\forall 1 \leqslant j \leqslant r\colon g_j \in \mathcal{E}\bigr).
    \]
    If each individual check is performed with the probability of correctness at least $1 - \frac{\varepsilon}{s + r}$, then the result is correct with probability at least $1 - \varepsilon$.
\end{remark}


\section{Building blocks of the algorithm}\label{sec:blocks}

\subsection{Basic minimization}

We start with a method that follows straightforwardly from the field membership algorithm. Let $\bg \subset 
k(\bx)$. The following method computes a subset $\bh \subseteq \bg$ that generates the same field as $\bg$ over $k$ and is minimal with respect to inclusion.

\begin{algorithm}[H]
\caption{Computing a minimal subset}
\label{alg:minimization}
\begin{description}[itemsep=0pt]
\item[Input: ] 
$\bg \subset k(\bx)$, where $\operatorname{char} k = 0$,
and a real number $0 < \varepsilon < 1$;
\item[Output: ] $\bh \subseteq \bg$ such that $k(\bh) = k(\bg)$ and $\bh$ is minimal with respect to inclusion.
\end{description}

\begin{enumerate}[label = \textbf{(Step~\arabic*)}, leftmargin=*, align=left, labelsep=\favoritelabelsep, itemsep=\favoriteitemsep]
    \item Set $\bh \coloneqq (h_1,\ldots,h_m) \coloneqq \bg$.
    \item For $i = 1,\ldots,m$ do
    \begin{enumerate}
        \item If $h_i \in k(\bh \setminus \{ h_i \})$ with probability at least $1-\frac{\varepsilon}{m}$ using~\Cref{alg:membership}, then set $\bh \coloneqq \bh \setminus \{ h_i \}$.
    \end{enumerate}
    \item {\bf Return} $\bh$.
\end{enumerate}
\end{algorithm}

\begin{proposition}\label{prop:minimization_correctness}
    \Cref{alg:minimization} is correct with probability at least $1 - \varepsilon$.
\end{proposition}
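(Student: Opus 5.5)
The plan is a union bound over the $m$ membership tests, followed by a deterministic argument that the output is correct whenever every test answered correctly. Let $E_i$ be the event that the call to \Cref{alg:membership} in iteration $i$ gives the wrong answer. By \Cref{prop:membership-proba}, applied with error parameter $\varepsilon/m$, each $E_i$ has probability at most $\varepsilon/m$. This holds conditionally on everything that happened in earlier iterations: the current tuple $\bh$ is determined by those earlier outcomes, and each call uses fresh random samples. Summing over $i$, the probability that any $E_i$ occurs is at most $\varepsilon$. It remains to show the output is correct when no $E_i$ occurs.

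Assume all tests are correct. Let $\bh^{(i)}$ be the tuple just after iteration $i$, with $\bh^{(0)}=\bg$.

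\emph{Field preservation.} By induction, $k(\bh^{(i)})=k(\bg)$. An element $h_i$ is removed only when $h_i\in k(\bh^{(i-1)}\setminus\{h_i\})$. In that case $k(\bh^{(i-1)}\setminus\{h_i\})=k(\bh^{(i-1)})$, so the generated field does not change.

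\emph{Minimality.} Let $\bh$ be the final output and fix $h_j\in\bh$. Since $h_j$ was kept at iteration $j$, a correct test gives $h_j\notin k(\bh^{(j-1)}\setminus\{h_j\})$. The final tuple satisfies $\bh\subseteq\bh^{(j-1)}$, because the algorithm only deletes elements. Hence
\[
k(\bh\setminus\{h_j\})\subseteq k(\bh^{(j-1)}\setminus\{h_j\}),
\]
so $h_j\notin k(\bh\setminus\{h_j\})$.

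Now suppose some proper subset $\bh'\subsetneq\bh$ satisfied $k(\bh')=k(\bg)$. Choose $h_j\in\bh\setminus\bh'$. Then $h_j\in k(\bh')\subseteq k(\bh\setminus\{h_j\})$, which contradicts the previous paragraph. Therefore $\bh$ is minimal with respect to inclusion.

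The only subtle point is the monotonicity used in the minimality step. An element kept early on could in principle become redundant after later deletions. This cannot happen, because later deletions only shrink the set, and shrinking the set can only make membership harder. The probabilistic part is a routine union bound.
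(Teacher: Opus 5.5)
Your proof is correct and follows the same route as the paper: if all $m$ membership tests answer correctly, the output is a minimal generating subset, and the per-test error bound $\varepsilon/m$ is then combined over the $m$ calls. The differences are minor. You combine the errors with a union bound, conditioning on earlier iterations, which needs no independence assumption, whereas the paper multiplies the probabilities of $m$ independent calls and applies the Bernoulli inequality $(1-\varepsilon/m)^m \geqslant 1-\varepsilon$; you also state explicitly the monotonicity $k(\bh\setminus\{h_j\}) \subseteq k(\bh^{(j-1)}\setminus\{h_j\})$, which the paper's contradiction argument uses without stating it.
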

\begin{proof}
    To prove correctness, it suffices to show that $k(\bh) = k(\bg)$ and that there does not exist $h \in \bh$ such that $k(\bh \setminus \{h\}) = k(\bh)$. 
    Initially, we have $k(\bh) = k(\bg)$, and the field generated by $\bh$ does not change during the iterations. 
    Thus, it suffices to show that such $h$ does not exist. 
    Assume that it does. Then, $h \in k(\bh \setminus \{h\})$. However, at some iteration $i \in \{ 1,\ldots,m\}$, we had $h = h_i$, so $h$ would have been removed from $\bh$, which is a contradiction.

    There are $m$ independent applications of \Cref{alg:membership}, so we combine \Cref{prop:membership-proba} with the Bernoulli inequality to obtain that the overall probability of correctness is at least
    \[
    \left(1 - \frac{\varepsilon}{m}\right)^m \geqslant 1 - \varepsilon.
    \]\qedhere
\end{proof}\begin{remark}[On simplicity]
\label{remark:onsimplicity}
    We note that result depends on the order of elements in the input $\bg$. We may order $\bg$ so that simple elements come last,
    so that they are used to remove other elements from the list.
    In practice, we use the following total order. Let $a/b, p/q \in k(\bx)$. Recall that for $f \in k[\bx]$, $d_f$ is the degree of $f$ and $s_f$ is the number of terms in $f$.
    We will ensure $d_a \geqslant d_b$ and $d_p \geqslant d_q$ by taking the reciprocal of the functions.
    We say that $a/b$ is simpler than $p/q$:
    \begin{itemize}
        \item if $d_a + d_b < d_p + d_q$, or, in the case of equality,
        \item if $s_a + s_b < s_p + s_q$, or, in the case of equality,
        \item if $d_b < d_q$, or, in the case of equality,
        \item if the largest monomial present in only one of $a$ or $p$ (when such exists) appears in $p$.
    \end{itemize}    
\end{remark}

\subsection{Low-degree coefficients of \grobner{} bases over rational function fields}
\label{sec:gb_sparse_coeffs}
As we have seen, simple generators can be obtained from the low-degree coefficients of \grobner{} bases of OMS ideals.
In this section, we show how to compute those coefficients efficiently using sparse interpolation. 
We note that our algorithm is not restricted to OMS ideals and can be used with arbitrary ideals over the field of rational functions. 

Let $f_1,\ldots,f_m \in k(\bx)[\by]$, where $\bx = (x_1,\ldots,x_n)$ and $\by = (y_1,\ldots,y_t)$. 
In this section, all computations are done with respect to the same fixed monomial ordering.
Let $G$ be the reduced \grobner{} basis of $\langle f_1,\ldots,f_m \rangle$. We have $G = (g_1,\ldots,g_\eta) \subset k(\bx)[\by]$, where for every $i=1,\ldots,\eta$,
\[
g_i = \sum_{j=1}^{s_{g_i}} c_{i,j} 
\Pi(\by^{\balpha_{i,j}}),
\]
where for every $j = 1,\ldots,s_{g_i}$ we have $c_{i,j} = p_{i,j} / q_{i,j} \in k(\bx) \setminus \{0\}$ with coprime $p_{i,j},q_{i,j} \in k[\bx] \setminus \{0\}$, and $\balpha_{i,j} \in \Zplus^t$ are pairwise different.

Let $d$ be an integer and
\begin{equation}
\label{eq:gb-exps}
\begin{aligned}
A_{\leqslant d} &\coloneqq \{(i, j) \mid \deg p_{i,j} + \deg q_{i,j} \leqslant d,~ i=1,\ldots,\eta, ~j = 1,\ldots,s_{g_i}\},\\
C_{\leqslant d} &\coloneqq \{c_{i,j} \mid (i,j) \in A_{\leqslant d}\} \subset k(\bx).
\end{aligned}
\end{equation}
Note $C_{\leqslant d}$ is a subset of the coefficients of $G$. We wish to compute $C_{\leqslant d}$ without computing the whole $G$.

\begin{lemma}\label{lem:GB_spec}
    For every $f_1, \ldots, f_m \in k(\bx)[\by]$ and any monomial ordering, there exists a nonzero polynomial $g(\bx) \in k[\bx]$ with the following property.
    For every $\ba \in k^n$ such that $g(\ba) \neq 0$ the sets of polynomials in $k[\by]$ are well-defined and equal:
    \begin{itemize}
        \item  the reduced Gr\"obner basis of $\langle f_1(\ba)(\by), \ldots, f_m(\ba)(\by) \rangle$;
        \item the image of the reduced Gr\"obner basis of $\langle f_1, \ldots, f_m\rangle \subset k(\bx)[\by]$ under the substitution $\bx \to \ba$.
    \end{itemize}
\end{lemma}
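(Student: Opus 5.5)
The plan is to run Buchberger's algorithm on $f_1,\ldots,f_m$ over $k(\bx)$, record every element of $k(\bx)$ it ever divides by or whose nonvanishing it relies on, and take $g$ to be the product of the numerators and denominators of these finitely many elements. Before starting, we may clear denominators so that each $f_i$ has coefficients in $k[\bx]$. This does not change the ideal. The denominators we cleared are included in $g$, so that $f_i(\ba)$ is well-defined and equals, up to a nonzero scalar, the specialization of the cleared polynomial.

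The key step is to show that specialization commutes with every step of Buchberger's algorithm, provided each leading coefficient that occurs stays nonzero. Consider one run: the algorithm forms S-polynomials, reduces them by the current basis, and appends nonzero remainders. It then minimizes and interreduces the basis and makes it monic. All of this is a finite sequence of field operations in $k(\bx)$. The only branching decisions are of two kinds: deciding whether a coefficient is zero, and identifying leading monomials.

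For each polynomial appearing in the computation, we include in $g$ the numerator and denominator of its leading coefficient. We also include the numerator and denominator of every coefficient we divide by, and of every nonzero coefficient of the final reduced basis.

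Now take $\ba$ with $g(\ba)\neq 0$. Every intermediate polynomial then specializes to a polynomial with the same leading monomial, and every division specializes to a valid division. Because leading monomials are preserved, each S-pair and each reduction step specializes to the corresponding step over $k$. We also have to treat remainders that are zero over $k(\bx)$. Such a remainder is identically zero, so its specialization is also zero, and the run over $k$ discards it exactly as the run over $k(\bx)$ does.

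It follows that the specialized run is a legitimate run of Buchberger's algorithm on $f_1(\ba),\ldots,f_m(\ba)$. Its output is a Gröbner basis of the specialized ideal. This output is the image $G(\ba)$ of the reduced basis $G$, and it is reduced: its leading monomials match those of $G$, it is monic, and no term of one element is divisible by the leading monomial of another, since monomial supports can only shrink under specialization. By uniqueness of reduced Gröbner bases, $G(\ba)$ is the reduced basis of $\langle f_1(\ba),\ldots,f_m(\ba)\rangle$.

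The main obstacle is justifying that the specialized run computes a Gröbner basis of the specialized ideal itself, and not of some smaller ideal. The ideal generated by the specialized run equals $\langle f_i(\ba)\rangle$: every appended polynomial is a combination of earlier ones, with the coefficients specializing correctly, and the inputs are exactly the $f_i(\ba)$. The termination criterion is that all S-polynomials reduce to zero. This is a finite set of identities, and they hold after specialization because the reductions of all final S-pairs are part of the recorded computation. I would write this as an induction on the steps of the algorithm, and then conclude with $g \neq 0$ as the product of finitely many nonzero polynomials.
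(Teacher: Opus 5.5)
Your proposal is correct and is essentially the paper's argument: trace one run of Buchberger's algorithm over $k(\bx)$, put the denominators and the numerators of the leading coefficients into $g$, and check that the run commutes with specialization. Your closing verification, via the S-polynomial criterion and uniqueness of the reduced basis, makes explicit what the paper leaves implicit. The one detail the paper spells out and your explicit list for $g$ misses is a non-leading coefficient that vanishes at $\ba$ during a reduction: either add its numerator to $g$ (as your opening sentence suggests), or note, as the paper does, that the step then specializes to a harmless reduction with zero coefficient.
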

\begin{proof}
    Consider the run of the Buchberger's algorithm for $f_1, \ldots, f_m$ over $k(\bx)$.
    We set $g$ to be the least common multiple of all the denominators occurring in the input and during the computation as well as the numerators of the leading coefficients of all input and intermediate polynomials.
    Consider $\ba \in k^n$ such that $g(\ba) \neq 0$.
    The Buchberger's algorithm, in addition to polynomial operations, queries the coefficients of polynomials in the following cases:
    \begin{enumerate}
        \item it queries the leading terms to form critical pairs and compute S-polynomials;
        \item it queries all the nonzero coefficients of a polynomial when reducing it.
    \end{enumerate}
    We consider the runs of the algorithm for $f_1(\by), \ldots, f_m(\by)$ and for its specialization $f_1(\ba)(\by), \ldots, f_m(\ba)(\by)$.
    Due to the assumption $g(\ba) \neq 0$, the queries of the first type will yield the same leading monomials and the returned coefficients will commute with the specialization.
    For the queries of the second type, the only possible difference is if a non-leading coefficient has vanished under specialization. 
    In this case, we will assume that its coefficient is still returned but is equal to zero, so there will be an extra reduction performed with zero coefficient which will not affect the result.
    Therefore, the polynomial operations performed will be the same regardless if the specialization was performed before the Gr\"obner basis computation or after.
\end{proof}

Let $\operatorname{coeff}(G, i, \balpha)$ denote the coefficient in front of the monomial $\Pi(\by^{\balpha})$
in the $i$-th polynomial in $G$, or zero if the $i$-th polynomial or such monomial are not present in $G$; we shall reuse this notation for different~$G$. 

Let $i \in \{1,\ldots,\eta\}$ and $j \in \{1,\ldots,s_{g_i}\}$. Let $\blackbox_{i,j}$ be a blackbox that, given as input an arbitrary $\ba \in k^n$, computes $G_{\ba} \subset k[\by]$, the reduced \grobner{} basis of $\langle f_1(\ba)(\by),\ldots,f_m(\ba)(\by) \rangle$, and returns $\operatorname{coeff}(G_{\ba}, i, \balpha_{i,j})$.
More precisely, three outcomes are possible:
\begin{itemize}
    \item $\blackbox_{i,j}(\ba)$ returns {\tt FAIL}. 
    This happens when a denominator in ${\bf f}$ vanishes under $\bx \to \ba$. 
    \item $\blackbox_{i,j}(\ba) \neq c_{i,j}(\ba)$. 
    This may happen when the image of $G$ under $\bx \to \ba$ and $G_{\ba}$ do not coincide. 
    \item $\blackbox_{i,j}(\ba) = c_{i,j}(\ba)$. This happens when none of the above happens. 
    By~\Cref{lem:GB_spec}, there exists a nonzero polynomial $g \in k[\bx]$, such that for every $\ba \in k^n$ if
    $g(\ba) \neq 0$ then
    $\blackbox_{i,j}(\ba) = c_{i,j}(\ba)$.
\end{itemize}

The idea of the algorithm is as follows. First,  
consider the case when we have access to the support of $G$, that is, $\eta$ is known and for every $i=1,\ldots,\eta$ and $j=1,\ldots,s_{g_i}$ we have access to $\balpha_{i,j}$. 
Let $S \subseteq k \setminus \{0\}$. 
The output of the blackboxes is normalized consistently across the evaluations, since the \grobner{} bases are reduced. 
So, for every $i=1,\ldots,\eta$ and $j = 1,\ldots,s_i$, we can compute $\deg p_{i,j}$ and $\deg q_{i,j}$ by applying~\Cref{alg:estimate_degrees} to $\blackbox_{i,j}$, $d$, and $S$. 
Consequently, this yields $A_{\leqslant d}$.
Let $\bomega \in (k\setminus\{0\})^{n+1}$ be an admissible ratio as in~\Cref{sec:tiwari}. 
Then, for every $(i,j) \in A_{\leqslant d}$, we may pass $(\blackbox_{i,j}, S, \deg p_{i,j}, \deg q_{i,j}, \bomega)$
to sparse interpolation from~\Cref{alg:interpolation} to recover the sparse representation of $c_{i,j} \in C_{\leqslant d}$. 

Now, for every $i=1,\ldots,\eta$ and $j=1,\ldots,s_{g_i}$, we can obtain $\balpha_{i,j}$ with high probability by computing $G_{\bsigma}$ for $\bsigma \in k^{n}$ whose components are chosen independently and uniformly from $S$. This gives rise to~\Cref{alg:gb}.

\begin{algorithm}[H]
\caption{Computing 
low-degree
coefficients of \grobner{} bases}
\label{alg:gb}
\begin{description}[itemsep=0pt]
\item[Input:] $f_1,\ldots,f_m \in k(\bx)[\by]$; 
an integer $d$; 
a subset $S \subseteq k \setminus \{0\}$.
\item[Output:] $C_{\leqslant d} \subset k(\bx)$ as in~\eqref{eq:gb-exps}, or {\tt FAIL}.
\end{description}

\begin{enumerate}[label = \textbf{(Step~\arabic*)}, leftmargin=*, align=left, labelsep=\favoritelabelsep, itemsep=\favoriteitemsep]
    \item Pick an admissible ratio $\bomega \in (k\setminus \{0\})^{n+1}$, as in~\Cref{sec:tiwari}; choose the coordinates of $\bsigma \in k^n$ independently and uniformly from $S \subseteq k \setminus \{0\}$.
    \item \label{step:discover-shape} For every $i=1,\ldots,\eta$ and every $j=1,\ldots,s_{g_i}$, obtain $\balpha_{i,j} \in \Zplus^t$ by computing $G_{\bsigma}$, the reduced \grobner{} basis of $\langle f_1(\bsigma)(\by),\ldots,f_m(\bsigma)(\by) \rangle \subseteq k[\by]$.
    \item\label{step:blackbox_construction} For every $i=1,\ldots,\eta$ and every $j=1,\ldots,s_{g_i}$, construct the blackbox $\blackbox_{i,j}$ that accepts as input $\ba \in k^n$, computes $G_{\ba}$, the reduced \grobner{} basis of $\langle f_1(\ba)(\by),\ldots,f_m(\ba)(\by)\rangle \subseteq k[\by]$, and returns either $\operatorname{coeff}(G_{\ba}, i, \balpha_{i,j})$ or {\tt FAIL}.
    \item\label{alg:gb:step:degrees} For every $i=1,\ldots,\eta$ and every $j=1,\ldots,s_{g_i}$, compute $\deg p_{i,j}$ and $\deg q_{i,j}$, or {\tt STOPPED} if $\deg p_{i,j} + \deg q_{i,j} > d$, by applying~\Cref{alg:estimate_degrees} to $(\blackbox_{i,j}, d, S)$. 
    If any of the applications failed, {\bf return} {\tt FAIL}. 
    \item Deduce $A_{\leqslant d}$ as in~\eqref{eq:gb-exps}.
    \item\label{alg:gb:step:interpolate} For every $(i,j) \in A_{\leqslant d}$, apply~\Cref{alg:interpolation} to $(\blackbox_{i,j}, S, \deg p_{i,j}, \deg q_{i,j}, \bomega)$ to obtain $p_{i,j}/q_{i,j} \in C_{\leqslant d}$. If any of the applications failed, {\bf return} {\tt FAIL}.
    \item {\bf Return} $C_{\leqslant d}$.
\end{enumerate}
\end{algorithm}

\begin{proposition}
\label{prop:propa-gb-over-rat-funcs}
Let $h\coloneqq \deg g$ as in~\Cref{lem:GB_spec}. Let $\nu = s_{g_1} + \ldots + s_{g_\eta}$. Let $D$ be the maximum of $\deg p_{i,j} + \deg q_{i,j} + 2$ over all $i = 1,\ldots,\eta$ and $j = 1,\ldots,s_{g_i}$. Let $s$ be the maximum of $\max(s_{p_{i,j}}, s_{q_{i,j}})$ over all $(i,j) \in A_{\leqslant d}$. The probability that~\Cref{alg:gb} succeeds and produces a correct result is at least
\[
\underbrace{\left(1 - \frac{h}{|S|}\right)}_{\ref{step:discover-shape}}\underbrace{\left(1 - \frac{\nu D(2 + 3h + 3D)}{|S|}\right)}_{\ref{alg:gb:step:degrees}}\underbrace{\left(1 - \frac{\nu D (\lceil \log_2 s \rceil + 8s + 4sh + 8s D)}{|S|}\right)}_{\ref{alg:gb:step:interpolate}}.
\]
\end{proposition}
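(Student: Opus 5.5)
The bound factors into three terms, one per randomized stage of \Cref{alg:gb}. I will show that each stage succeeds with at least the probability of its factor, conditioned on the earlier stages having succeeded, and then multiply. Each stage draws fresh randomness from $S$ (the point $\bsigma$ in \ref{step:discover-shape}, the internal choices of \Cref{alg:estimate_degrees} and \Cref{alg:interpolation}), so the conditional bounds multiply. Within a stage I will combine the bounds for the individual calls by a union bound or the Bernoulli inequality, as in the proof of \Cref{prop:minimization_correctness}.

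\textbf{\ref{step:discover-shape}.} Take $g$ from \Cref{lem:GB_spec}, with $h = \deg g$. If $g(\bsigma) \neq 0$, then $G_{\bsigma}$ is the image of $G$ under $\bx \to \bsigma$. Every $c_{i,j}$ is nonzero, and the numerators of the leading coefficients were included in $g$. I also enlarge $g$ if necessary so that its definition covers all the $p_{i,j}$; a nonzero coefficient of the reduced basis over $k(\bx)$ is either $1$ or appears as a coefficient of an intermediate polynomial, so the run of Buchberger's algorithm already controls these. With that, $G_{\bsigma}$ has exactly $\eta$ polynomials with supports $\balpha_{i,j}$. \Cref{lemma:sz} bounds the probability that $g(\bsigma)=0$ by $h/|S|$, which gives the first factor. \emph{This is the step I expect to be the main obstacle}: one must check that the polynomial $g$ of \Cref{lem:GB_spec} also prevents non-leading coefficients from vanishing. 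If it does not, I would replace $g$ by $g\prod p_{i,j}$ and note that the statement's $h$ is taken from the lemma's $g$ "as in" that construction.

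\textbf{\ref{alg:gb:step:degrees}.} Given the correct support, each $\blackbox_{i,j}$ returns $c_{i,j}(\ba)$ whenever $g(\ba)\neq 0$. The blackbox can also return a wrong value rather than {\tt FAIL}, but only at points where $g$ vanishes. The proof of \Cref{prop:estimate_degrees} treats such points exactly like the failure points there: the same diversification argument bounds the probability of ever hitting a zero of $g$. So \Cref{prop:estimate_degrees} applies with this $g$ and $h$. Since $\deg p_{i,j}+\deg q_{i,j}+2 \leqslant D$, each of the $\nu$ calls is correct with probability at least $1 - D(2+3h+3D)/|S|$. A union bound over the $\nu$ calls gives the second factor. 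Correct degrees, or correct {\tt STOPPED} answers, give the correct $A_{\leqslant d}$.

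\textbf{\ref{alg:gb:step:interpolate}.} For each $(i,j) \in A_{\leqslant d}$, the call receives the true degrees, so $\delta_A=d_A$ and $\delta_B=d_B$. By \Cref{prop:proba-interpol}, with the same treatment of wrong values as failure points of $g$ and with $s_{p_{i,j}},s_{q_{i,j}}\leqslant s$, the call succeeds with probability at least $1 - D(\lceil\log_2 s\rceil + 8s+4sh+8sD)/|S|$. There are $|A_{\leqslant d}| \leqslant \nu$ calls, so a union bound gives the third factor. Multiplying the three conditional bounds yields the claimed estimate.
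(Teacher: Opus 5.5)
Your decomposition matches the paper's: a factor for \ref{step:discover-shape}, then \Cref{prop:estimate_degrees} and \Cref{prop:proba-interpol} applied per call and combined over at most $\nu$ calls.

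\textbf{Wrong-valued blackboxes.} You handle these by treating wrong values as failure points inside the proofs of those propositions. The paper instead passes to a theoretical \Cref{alg:gb}${}^*$ whose blackboxes return {\tt FAIL} whenever $G_{\ba}$ is not the specialization of $G$. Then both propositions apply verbatim, since the failure set still lies in the zero set of $g$. Moreover, whenever \Cref{alg:gb}${}^*$ returns, \Cref{alg:gb} returns the same result on the same random choices. Your route rests on the same idea, but it only becomes rigorous once you note that a run with the faulty blackbox coincides with a conforming run up to the first query at a zero of $g$. The coupling gives you this for free.

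\textbf{The shape discovery in \ref{step:discover-shape}.} The concern you raise here is real, and the paper's proof does not address it either: it simply asserts the $1-h/|S|$ bound. Your first justification, however, is false. The polynomial built in the proof of \Cref{lem:GB_spec} contains only the denominators and the numerators of leading coefficients. That proof explicitly allows non-leading coefficients to vanish under specialization.

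For example, take $f_1 = y_1 - x_1 + 1 \in k(x_1)[y_1]$. This gives $g = 1$ and $h = 0$. Yet if $1 \in S$, then $\bsigma = 1$ yields $G_{\bsigma} = \{y_1\}$, and the coefficient $1 - x_1 \in C_{\leqslant d}$ is never discovered. So the per-step bound $1 - h/|S|$ fails for the lemma's own $g$.

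Your fallback is the correct repair. The polynomial $g\prod_{i,j} p_{i,j}$ is nonzero and still satisfies the conclusion of \Cref{lem:GB_spec}. The rest of your argument goes through with it, since the per-call bounds only get weaker as $h$ grows. Make this the main argument, not a contingency. State explicitly that it proves the estimate with $h$ equal to the degree of the enlarged polynomial, which is at most $h + \nu(D-2)$. That is, it proves the statement under the existential reading of ``$g$ as in \Cref{lem:GB_spec}'', not for the specific $g$ constructed in that lemma's proof.
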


\begin{proof}
The blackboxes constructed in~\ref{step:blackbox_construction} might return a wrong result and thus, technically, do not conform to the specifications of \Cref{alg:interpolation,alg:estimate_degrees}. Our goal is to reconcile them.
First we observe that~\Cref{alg:gb} always terminates because \Cref{alg:interpolation,alg:estimate_degrees} always terminate.

Next, for the sake of theoretical argument, we will consider a modified version of~\Cref{alg:gb} which we will call \Cref{alg:gb}${}^*$. 
The difference between these algorithms is that the blackbox constructed in~\ref{step:blackbox_construction} of \Cref{alg:gb}${}^*$ will check if the computed \grobner{} basis is equal to the specialization of the \grobner{} basis over $k(\bx)$ and return {\tt FAIL} if it is not.
This check can be performed, e.g., by explicitly computing the \grobner{} basis over $k(\bx)$ (recall that~\Cref{alg:gb}${}^*$ is a purely theoretical construction).
This makes the constructed blackboxes comply with the requirements of \Cref{alg:estimate_degrees,alg:interpolation}, and the set of inputs yielding failure is still contained in the zero set of $g$ from~\Cref{lem:GB_spec}.
We note that, if \Cref{alg:gb}${}^*$ returns a result, the same result would be returned by~\Cref{alg:gb} with the same input and the same random choices.
Therefore, the probability that~\Cref{alg:gb} will return a correct result is at least the correctness probability for \Cref{alg:gb}${}^*$.
\Cref{alg:gb}${}^*$ succeeds and returns a correct result when~\ref{step:discover-shape} computes the correct exponents and all of the applications of~\Cref{alg:estimate_degrees,alg:interpolation} succeed and return correct results. Note:
\begin{itemize}
    \item The computation in~\ref{step:discover-shape} computes the correct exponents with probability at least $1 - h/|S|$.
    \item The individual applications of \Cref{alg:estimate_degrees,alg:interpolation} are pairwise independent, and all of those are independent from the computation in~\ref{step:discover-shape}.
\end{itemize}
So we apply~\Cref{prop:proba-interpol,prop:estimate_degrees} and the claim follows.
\end{proof}

\begin{remark}
    In~\Cref{alg:gb}, it suffices to compute a single \grobner{} basis to simultaneously obtain the evaluations of all blackboxes, since the admissible ratio is shared among the blackboxes. 
    In particular, in the notation of~\Cref{prop:propa-gb-over-rat-funcs}, \Cref{alg:gb} computes 
    $\Theta(s d)$
    \grobner{} bases in $k[\by]$ to interpolate the result.
\end{remark}


\subsection{Finding polynomial generators}

One specific way to find simple generators is to search for polynomials of low degree in the field $\mathcal{E}$.
Polynomials in a subfield of $k(\bx)$ can be found using the following simple lemma.

\begin{lemma}\label{lem:polynomial}
    Let $g_1, \ldots, g_m \in k(\bx)$ and $p \in k[\bx]$. 
    Then $p \in k(\bg)$ is and only if $p(\by)$ is equivalent to an element of $k(\bx)$ modulo $\eOMS_{\bg}$.
\end{lemma}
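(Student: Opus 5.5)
The plan is to prove the two implications separately. The forward direction follows from the first statement of \Cref{lem:OMS} transferred to $\eOMS_{\bg}$. For the converse, I would specialize at a sufficiently generic point, or equivalently use the fact that $\by = \bx$ is a zero of $\eOMS_{\bg}$ after adjoining $t = 1/Q(\bx)$.

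\emph{Forward direction.} Assume $p \in k(\bg)$ and take $q = 1$ in the first statement of \Cref{lem:OMS}. As noted after \Cref{notation:eOMS}, that statement also holds for $\eOMS_{\bg}$. Hence $p(\by) - p(\bx) \in \eOMS_{\bg}$, so $p(\by)$ is equivalent to the element $p(\bx) \in k(\bx)$ modulo $\eOMS_{\bg}$.

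\emph{Converse.} Suppose $p(\by) - c \in \eOMS_{\bg}$ for some $c \in k(\bx)$. The first step is to identify $c$. The ideal $\eOMS_{\bg}$ is proper, and every element of it vanishes under the substitution $\by \mapsto \bx$, $t \mapsto 1/Q(\bx)$. This is the same observation used at the end of the proof of \Cref{lemma:OMS_coeffs_2}: each generator $p_i(\by)q_i(\bx) - q_i(\by)p_i(\bx)$ and $tQ(\by) - 1$ vanishes there. The substitution is a $k(\bx)$-algebra homomorphism $k(\bx)[\by,t] \to k(\bx)$, so it kills $\eOMS_{\bg}$. Applying it to $p(\by) - c$ gives $p(\bx) - c = 0$, so $c = p(\bx)$. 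Therefore $p(\by) - p(\bx) \in \eOMS_{\bg}$, which is $p(\by)q(\bx) - q(\by)p(\bx)$ with $q = 1$. By the $\eOMS$ version of the first statement of \Cref{lem:OMS}, we conclude $p = p/1 \in k(\bg)$.

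\emph{Where care is needed.} The main point to get right is the claim that the first statement of \Cref{lem:OMS} transfers to $\eOMS_{\bg}$, which the paper asserts as obvious. If I wanted to justify it, I would argue that $\eOMS_{\bg} \cap k(\bx)[\by]$ equals the saturation in \eqref{eq:OMS_def}, by the standard Rabinowitsch-trick argument. Membership of a polynomial free of $t$ in the two ideals is then equivalent. Everything else is the formal evaluation argument above.
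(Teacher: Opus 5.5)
Your proof is correct and follows the paper's argument: the forward direction is the $q = 1$ case of the first part of \Cref{lem:OMS} applied to $\eOMS_{\bg}$, and the converse pins down the element of $k(\bx)$ by the evaluation $\by \mapsto \bx$ before invoking the same lemma. Your substitution $t \mapsto 1/Q(\bx)$ is the right one (the paper writes $t \to Q(\bx)$, which is a slip), and your Rabinowitsch-trick justification for transferring \Cref{lem:OMS} to $\eOMS_{\bg}$ fills in the step the paper calls obvious.
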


\begin{proof}
    If $p \in k(\bg)$, then $p(\by) - p(\bx) \in \eOMS_{\bg}$ by~\Cref{lem:OMS}.
    In the opposite direction, assume that there is $q(\bx) \in k(\bx)$ such that $p(\by) - q(\bx) \in \eOMS_{\bg}$.
    By the definition of $\eOMS_{\bg}$, every element of $\eOMS_{\bg}$ vanishes under substitution $\by \to \bx, t \to Q(\bx)$.
    Then $q(\bx) = p(\bx)$, so $p \in k(\bg)$ by~\Cref{lem:OMS}.
\end{proof}

\begin{notation}\label{not:nfp}
    Let $I \subset k[\bx]$ be an ideal and $p \in k[\bx]$ be a polynomial.
    We fix the degrevlex monomial ordering on $k[\bx]$; another ordering could be used as well.
    By $\NFp_I(p)$ we will denote the image of the normal form of $p$ with respect to the Gr\"obner basis of $I$ in the quotient space $k[\bx] / k\cdot 1$ (where we factor a $k$-vector space $k[\bx]$ by a subspace spanned by $1 \in k[\bx]$).
    In other words, $\NFp_I(p)$ is the normal form of $p$ with removed constant term.
\end{notation}

Using this notation, \Cref{lem:polynomial} can be restated as $p \in k(\bg) \iff \NFp_{\eOMS_\bg}(p) = 0$.

\begin{algorithm}
\caption{Finding low-degree polynomial generators (cf. \cite[Section~3.3]{JimnezPastor2022})}
\label{alg:polynomials}
\begin{description}[itemsep=0pt]
\item[Input: ] $g_1,\ldots,g_m \in k(\bx)$, with $\operatorname{char} k = 0$; integer $\delta$
and subset $S \subseteq k \setminus \{0\}$;
\item[Output: ] a basis of the vector space $k(\bg) \cap \{p(\bx) \in k[\bx] \mid \deg p \leqslant \delta\}$.
\end{description}

\begin{enumerate}[label = \textbf{(Step~\arabic*)}, leftmargin=*, align=left, labelsep=\favoritelabelsep, itemsep=\favoriteitemsep]
    \item\label{step:initV} Set $V$ to be the $k$-vector space spanned by the monomials in $\by = (y_1, \ldots, y_n)$ of degree at most~$\delta$.
    \item Set $s \coloneqq \dim V$.

    \item\label{step:NF_loop} Repeat
    \begin{enumerate}
        \item Set $\ba$ be a vector in $k^n$ with the coordinates sampled uniformly at random from $S$. 
        \item Set $V$ to be the kernel of the linear map $\NFp_{\eOMS_{\bg}(\ba)}\colon V \to k[\by, t] / k\cdot 1$.\\
        \emph{The range of the map belongs to the (finite-dimensional) subspace of polynomials of $\deg \leqslant \delta$}
        \item If $\dim V = s$, then \textbf{return}  a basis of $V$ substituting $\by \to \bx$.
        \item Set $s \coloneqq \dim V$.
    \end{enumerate}
\end{enumerate}
\end{algorithm}

\begin{proposition}\label{prop:polynomials_correctness}
    \Cref{alg:polynomials} terminates.
    For every $g_1, \ldots, g_m$, $\delta$, and $\varepsilon > 0$, there exists $D_0$ such that, if $|S| > D_0$, then the result is correct with probability at least $1 - \varepsilon$.
\end{proposition}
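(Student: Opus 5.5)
Termination follows from a dimension count. The space $V$ is finite-dimensional and is replaced by a subspace at every iteration of \ref{step:NF_loop}. The loop stops as soon as the dimension does not drop. Hence the loop body runs at most $\binom{n+\delta}{n}+1$ times.

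For correctness, let $W \subseteq V_0$ be the subspace of $p(\by)$ with $\deg p \leqslant \delta$ and $p(\bx) \in k(\bg)$. By \Cref{lem:polynomial} and \Cref{not:nfp}, $W$ is exactly the kernel of the linear map $\NFp_{\eOMS_{\bg}}$ over $k(\bx)$ restricted to $V_0$. The claim is that, with high probability, every computed $V$ satisfies $W \subseteq V$, and the output $V$ equals $W$.

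\emph{Inclusion $W \subseteq V$ always holds, under a genericity condition.} Let $g$ be the polynomial from \Cref{lem:GB_spec} for the generators of $\eOMS_{\bg}$ and the degrevlex ordering. Multiply it by $Q(\bx)$ and by the denominators of the $g_i$. If $g(\ba) \neq 0$, the reduced Gr\"obner basis of $\eOMS_{\bg}(\ba)$ is the specialization of that of $\eOMS_{\bg}$. Normal forms of polynomials with coefficients in $k$ then commute with specialization, since the reduction steps use the same leading monomials. Consequently $\NFp_{\eOMS_\bg(\ba)}(p) = \NFp_{\eOMS_\bg}(p)|_{\bx\to\ba}$ for every $p \in V_0$. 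It follows that $W \subseteq \ker \NFp_{\eOMS_\bg(\ba)}$, so $W \subseteq V$ at every step.

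\emph{The output satisfies $V = W$.} Fix the current $V \supsetneq W$ and a basis $v_1,\dots,v_r$ of $V$. The coordinates of $\NFp_{\eOMS_\bg}(v_j)$ form a matrix $M(\bx)$ over $k(\bx)$ whose kernel, restricted to $V$, is exactly $W$. Thus $\operatorname{rank} M(\bx) = r - \dim W > 0$. Choose a maximal nonzero minor and multiply its numerator by $g$. This gives a nonzero polynomial $h_V$ such that, whenever $h_V(\ba) \neq 0$, the specialized kernel equals $W$ and the dimension drops. So the algorithm can return a wrong answer only if, at some iteration with $V \neq W$, the sampled $\ba$ is a root of $h_V$.

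\emph{Main obstacle.} The sequence of subspaces $V$ depends on the random choices, so the polynomials $h_V$ are not fixed in advance. I would handle this with a union bound over iterations. There are at most $N \coloneqq \binom{n+\delta}{n}+1$ iterations. At each one, conditional on the current $V$, the fresh point $\ba$ is independent of the past, so by \Cref{lemma:sz} the failure probability is at most $\deg h_V/|S|$. The degrees $\deg h_V$ are bounded uniformly by some $H$ depending only on $\bg$ and $\delta$. Indeed, each entry of $M(\bx)$ is a fixed element of $k(\bx)$, namely a coordinate of the normal form of a monomial of degree at most $\delta$. Moreover, $V$ is always the kernel of a specialized matrix, so it is defined over $k$. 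Any minor of size at most $N$ in the fixed matrix of normal forms of all monomials, composed with a basis change over $k$, therefore has numerator degree bounded by $N$ times the maximal numerator degree of the entries of the common-denominator form. Taking $D_0 \coloneqq NH/\varepsilon$ then gives total error at most $NH/|S| < \varepsilon$ whenever $|S| > D_0$. In the event where no $\ba$ is a root, an iteration without a dimension drop forces $V = W$, and the returned basis, after substituting $\by \to \bx$, is a basis of the required space.
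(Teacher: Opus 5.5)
There is one false step in your correctness argument, but it is not load-bearing. Your termination count, the inclusion $W\subseteq V$ under $g(\ba)\neq 0$, the conditioning/union-bound scheme and the uniform degree bound via the fixed matrix of normal forms of monomials are all fine.

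The false step is the inference from ``the kernel of $M(\bx)$, restricted to vectors over $k$, is $W$'' to $\operatorname{rank} M(\bx)=r-\dim W$, and hence to ``at a good point the specialized kernel equals $W$''. You only get $\operatorname{rank} M(\bx)\leqslant r-\dim W$, because a $k(\bx)$-linear relation among the rows need not come from a $k$-linear one.

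Example: take $\bg=(x_1/x_2)$, $\delta=1$, and degrevlex with $t>y_1>y_2$. The reduced Gr\"obner basis of $\eOMS_\bg$ is $y_1-\frac{x_1}{x_2}y_2,\ ty_2-1$. So $\NFp(1)=0$, $\NFp(y_1)=\frac{x_1}{x_2}y_2$ and $\NFp(y_2)=y_2$. Here $W=k\cdot 1$, but $M(\bx)$ has rank $1$ rather than $r-\dim W=2$. For every $\ba$ with $a_2\neq 0$, the first iteration produces the two-dimensional space spanned by $1$ and $a_2y_1-a_1y_2$, not $W$. A single specialization cannot detect $k$-linear relations; this is exactly why the paper needs \Cref{lem:extended_matrix}, with several independent copies of $\bx$.

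Your argument survives because it never needs one good step to reach $W$: your union bound already runs over all $\binom{n+\delta}{n}+1$ iterations. All you need is that the dimension drops whenever $V\neq W$. This follows from $\operatorname{rank}M(\bx)\geqslant 1$, which holds because otherwise every $\NFp_{\eOMS_\bg}(v_j)$ vanishes and $V\subseteq W$. Replace the false equality by this; even a single nonzero entry of $M(\bx)$ can serve as the minor in $h_V$.

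You also need one more term in the union bound: $g(\ba)\neq 0$ at the iteration where $V=W$, for which you defined no $h_V$. Without it, elements of $W$ can leave the specialized kernel and the algorithm can finish strictly below $W$. For instance, with $\bg=(x_1,(x_1-1)x_2)$, $\delta=1$ and $a_1=1$, we have $V_0=W$, yet $\NFp_{\eOMS_\bg(\ba)}(y_2)=y_2\neq 0$. Setting $h_W:=g$ fixes this.

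With these repairs your route genuinely differs from the paper's.
\begin{itemize}
\item The paper controls the whole run at once. After $i$ iterations it identifies $V$ with the $k$-relations among the rows of the concatenation $M(\ba_1,\ldots,\ba_i)$. It uses \Cref{lem:extended_matrix} to show that the generic ranks strictly increase until they reach the $k$-rank. Correctness then reduces to the nonvanishing of the single polynomial $b(\ba_1)\cdots b(\ba_{\ell+1})\,q(\ba_1,\ldots,\ba_\ell)$ in all sample points jointly, handled by one application of \Cref{lemma:sz}.
\item You replace that lemma by the trivial observation that $V\neq W$ forces some nonzero normal form. The price is conditioning on the random trajectory and needing a degree bound uniform over all possible subspaces $V$, which you do supply.
\end{itemize}
Your argument is more elementary and yields an explicit $D_0$. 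The paper's gives more structure: in the good event, the sequence of dimensions and the exact number $\ell+1$ of iterations are fixed by the generic ranks.
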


\begin{lemma}\label{lem:extended_matrix}
    Let $M(\bx)$ be a matrix over $k(\bx)$, and let $r$ be the dimension of the $k$-vector space spanned by the rows of $M$.
    Consider a matrix $M(\bx_1, \ldots, \bx_r) := \begin{pmatrix} M(\bx_1) & \ldots & M(\bx_r)\end{pmatrix}$ obtained by concatenating matrices $M(\bx_1), \ldots, M(\bx_r)$, where $\bx_1, \ldots, \bx_r$ are independent tuples of indeterminates.
    Then $M(\bx_1, \ldots, \bx_r)$ has rank $r$.

    Furthermore, for every $s < r$ 
    \[
    \operatorname{rank}M(\bx_1, \ldots, \bx_s) = \operatorname{rank}M(\bx_1, \ldots, \bx_{s + 1}) \implies \operatorname{rank}M(\bx_1, \ldots, \bx_s) = r.
    \]
\end{lemma}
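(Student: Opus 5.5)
The plan is to pass to linear algebra over $k$. Let $M(\bx)$ have $N$ rows $\mu_1(\bx),\ldots,\mu_N(\bx)$ (row vectors over $k(\bx)$) and $c$ columns. For every $s \geqslant 1$, the left kernel of $M(\bx_1,\ldots,\bx_s)$ over the big field $k(\bx_1,\ldots,\bx_s)$ will be compared with the space
\[
K := \{\lambda \in k^N \mid \textstyle\sum_i \lambda_i \mu_i(\bx) = 0\},
\]
which is the space of $k$-linear relations among the rows. By definition $\dim_k K = N - r$.

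The first step is to show that the left kernel of $M(\bx_1,\ldots,\bx_s)$ always contains $K$. This is immediate: a $k$-relation among the rows of $M(\bx)$ holds after substituting any $\bx_j$. Hence $\operatorname{rank} M(\bx_1,\ldots,\bx_s) \leqslant r$ for all $s$. The rank is also nondecreasing in $s$, because appending columns cannot lower it.

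The second step, and the main point, is to prove $\operatorname{rank} M(\bx_1,\ldots,\bx_s) = r$ as soon as $s \geqslant r$. I would first reduce to the case where the rows are $k$-linearly independent: choose $r$ rows forming a $k$-basis of the $k$-span. Every other row is a constant combination of these, and this stays true after substitution, so the rank of the concatenated matrix equals the rank of the $r \times sc$ submatrix formed by the chosen rows. So assume $N = r$ and the rows are $k$-independent. I then show by induction on $s$ that $M(\bx_1,\ldots,\bx_s)$ has rank at least $s$ whenever $s \leqslant r$. This is a Wronskian/Moore-determinant type argument.

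Suppose $\operatorname{rank} M(\bx_1,\ldots,\bx_s) = \rho < r$. Then the left kernel $L_s$ over $F_s := k(\bx_1,\ldots,\bx_s)$ is nonzero, and it has a reduced echelon basis. Adding the block $M(\bx_{s+1})$ either increases the rank, which is what we want, or leaves the rank unchanged. In the latter case $L_{s+1} = L_s \otimes F_{s+1}$, so every $\lambda \in L_s$ also annihilates $M(\bx_{s+1})$. The reduced echelon basis of $L_s$ is unique, so it is defined over $F_s$. Take one of its vectors $\lambda(\bx_1,\ldots,\bx_s)$, normalized so that a pivot coordinate equals $1$. We have $\sum_i \lambda_i \mu_i(\bx_{s+1}) = 0$ as an identity in the independent variables $\bx_{s+1}$. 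Specializing $\bx_1,\ldots,\bx_s$ to generic values in $k$ is not allowed, since we do not want to assume $k$ is infinite in the argument. Instead, I would argue via uniqueness: the space of $k(\bx_{s+1})$-relations among the rows of $M(\bx_{s+1})$ is of the same form, so its reduced echelon basis is defined over $k(\bx_{s+1})$. But $\lambda$ lies in it, and its coordinates lie in $F_s$. Since $F_s \cap k(\bx_{s+1}) = k$ inside $k(\bx_1,\ldots,\bx_{s+1})$, the reduced basis vectors of $L_s$ have coordinates in $k$. This contradicts the $k$-independence of the rows. Hence the rank strictly increases at each step until it reaches $r$. This proves the first claim, since $r$ steps suffice.

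The same argument gives the "furthermore" statement directly. If $\operatorname{rank} M(\bx_1,\ldots,\bx_s) = \operatorname{rank} M(\bx_1,\ldots,\bx_{s+1})$, the argument forces the reduced echelon basis of $L_s$ to consist of vectors in $K$. Combined with the first step, this gives $L_s = K \otimes F_s$, so the rank is $N - \dim K = r$. The hard step is exactly this field-of-definition argument: showing that the left kernel, being invariant under changing $\bx_j$, must be defined over $k$. It relies on the uniqueness of reduced echelon bases and on $k(\bx_1,\ldots,\bx_s) \cap k(\bx_{s+1}) = k$. I would write this step carefully, working in the original $N$-row setting and not the reduced one, to avoid bookkeeping.
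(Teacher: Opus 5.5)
Your framework is sound. The inclusion $K\otimes_k F_s\subseteq L_s$, the monotonicity of the rank, and the observation that a stalled rank gives $L_{s+1}=L_s\otimes F_{s+1}$ (so every $\lambda\in L_s$ annihilates $M(\bx_{s+1})$) are all correct. The gap is in the step you flag as the hard one.

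Let $R$ be the left kernel of $M(\bx_{s+1})$ over $k(\bx_{s+1})$. What you actually have is only $L_s\otimes F_{s+1}\subseteq R\otimes F_{s+1}$. Uniqueness of reduced echelon bases compares bases of the \emph{same} subspace, so it does not identify your echelon vector $\lambda$ of $L_s$ with an echelon vector of $R$. Equality of the two spaces holds only if $\dim L_s=\dim R$, i.e.\ $\operatorname{rank}M(\bx_1,\ldots,\bx_s)=\operatorname{rank}M(\bx_1)$. In your induction that happens only for $s=1$: once the rank has grown, $L_s\otimes F_{s+1}$ is a proper subspace of $R\otimes F_{s+1}$. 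Expanding $\lambda$ in the echelon basis $\rho_1,\ldots,\rho_d$ of $R$ (with pivots $p_1,\ldots,p_d$) gives only $\lambda_q=\sum_j\lambda_{p_j}\rho_{j,q}$, with several unknown coefficients $\lambda_{p_j}\in F_s$. The fact $F_s\cap k(\bx_{s+1})=k$ lets you conclude $\lambda_q\in k$ only when $d=1$. So the two facts you invoke do not show that $\lambda$ has coordinates in $k$ for $s\geqslant 2$.

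The missing ingredient is that $F_s$ and $k(\bx_{s+1})$ are \emph{linearly disjoint} over $k$, which is stronger than having intersection $k$. With it the step is immediate and needs neither echelon forms nor the reduction to $N=r$. Suppose $\lambda\in F_s^N$ satisfies $\lambda M(\bx_{s+1})=0$. Choose a $k$-basis $e_1,\ldots,e_m$ of the $k$-span of the coordinates of $\lambda$, and write $\lambda=\sum_j e_j c_j$ with $c_j\in k^N$. Then $\sum_j e_j\,(c_jM(\bx_{s+1}))=0$. Clear denominators and compare coefficients of monomials in $\bx_{s+1}$; these monomials are linearly independent over $F_s$ because $\bx_{s+1}$ is algebraically independent over $F_s$. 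You get $c_jM(\bx_{s+1})=0$, i.e.\ $c_j\in K$.

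Thus a stalled rank forces $L_s\subseteq K\otimes_kF_s$, hence $L_s=K\otimes_kF_s$ and the rank is $r$; this is the second claim. The first claim follows because $\operatorname{rank}M(\bx_1)\geqslant 1$ when $r\geqslant 1$, and the rank then increases strictly until it reaches $r$.

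Alternatively, your echelon idea can be rescued by symmetry. The space $L_{s+1}=L_s\otimes F_{s+1}$ is invariant under the $k$-automorphisms permuting $\bx_1,\ldots,\bx_{s+1}$. Hence its reduced echelon basis has coordinates in $\bigcap_{i=1}^{s+1}k(\bx_j : j\neq i)=k$.

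Repaired either way, your proof is self-contained and handles both claims at once. The paper instead gets the first claim from an external rank lemma applied to $M(\bx)\,(1,y,\ldots,y^{c-1})^T$, then derives the second by a column-space argument and relabelling of the $\bx_j$.
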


\begin{proof}
    Since every $k$-linear dependence between the rows of $M(\bx)$ yields a $k$-linear dependence between the rows of $M(\bx_1, \ldots, \bx_r)$, the rank of the latter is at most $r$.
    Let $c$ be the number of columns in $M(\bx)$. 
    We introduce a new variable $y$ and consider a vector $v(\bx, y) := M(\bx) \cdot \begin{pmatrix}
        1 & y & \ldots & y^{c - 1}
    \end{pmatrix}^T$.
    The $k$-linear span of the entries of the vector has also dimension $r$.
    Therefore, by~\cite[Lemma 19]{mukhina2025projectingdynamicalsystemssupport}, the matrix $\begin{pmatrix}
        v(\bx_1, y_1) & \ldots  & v(\bx_r, y_r)
    \end{pmatrix}$ has rank $r$.
    Since this matrix is a product of $M(\bx_1, \ldots, \bx_r)$ and a block matrix with blocks of the form $\begin{pmatrix}
        1 & y_i & \ldots & y_i^{c - 1}
    \end{pmatrix}^T$, the rank of $M(\bx_1, \ldots, \bx_r)$ is at least $r$.
    This proves $\operatorname{rank} M(\bx_1, \ldots, \bx_r) = r$.

    To prove the second claim, we observe that the equality $\operatorname{rank}M(\bx_1, \ldots, \bx_s) = \operatorname{rank}M(\bx_1, \ldots, \bx_{s + 1})$ implies that the columns of $M(\bx_{s + 1})$ belong to the column space of $M(\bx_1, \ldots, \bx_s)$.
    Then the same will be true for $M(\bx_{s + 2}), \ldots, M(\bx_r)$, so $\operatorname{rank}M(\bx_1, \ldots, \bx_s) = \operatorname{rank}M(\bx_1, \ldots, \bx_r) = r$.
\end{proof}

\begin{proof}[Proof of~\Cref{prop:polynomials_correctness}]
    For proving termination, we observe that, at each iteration of the loop on~\ref{step:NF_loop}, either the dimension of $V$ becomes smaller or the algorithm returns.
    Thus, the algorithm will return after a finite number of iterations of the loop.

    We apply~\Cref{lem:GB_spec} to the generators of the ideal $\eOMS_\bg$ and denote the resulting polynomial by $b(\bx) \in k[\bx]$.
    Let $m_1, \ldots, m_N$ be the monomials of degree at most $\delta$ in $\by$.
    Then $V$ at~\ref{step:initV} is spanned by $m_1, \ldots, m_N$.
    Let $N_1$ be the dimension of the $k(\bx)$-space of polynomials of degree at most $\delta$ in $\by$ and~$t$ without constant term.
    We choose a $k(\bx)$-basis in this space, then we can write $\NFp_{\eOMS_\bg}(m_1), \ldots, \NFp_{\eOMS_\bg}(m_N)$ in the basis as vectors in $k(\bx)^{N_1}$.
    We consider an $N\times N_1$-matrix $M(\bx)$ whose rows are these vectors.
    By \Cref{lem:polynomial}, polynomials of degree at most $\delta$ in $k(\bg)$ are in one-to-one correspondence with the $k$-linear dependencies between the rows of $M(\bx)$.
    We denote by $r$ the dimension of the $k$-vector space of these dependencies.
    By \Cref{lem:extended_matrix}, matrix $M(\bx_1, \ldots, \bx_r)$ has rank $r$.
    Let $\ell$ be the smallest integer such that $M(\bx_1, \ldots, \bx_{\ell})$ has rank $r$.
    Let the ranks of $M(\bx_1)$, $M(\bx_1, \bx_2), \ldots, M(\bx_1, \ldots, \bx_{\ell})$ be $r_1, r_2, \ldots, r_\ell = r$, respectively.
    By the second part of \Cref{lem:extended_matrix}, $r_1 < r_2 < \ldots < r_\ell$.
    For every $1 \leqslant i \leqslant \ell$, let $q_i(\bx_1, \ldots, \bx_i)$ be the determinant of any nonsingular $r_i\times r_i$ minor of $M(\bx_1, \ldots, \bx_{i})$.
    We denote by $q(\bx_1, \ldots, \bx_\ell)$ the product $q_1\cdot\ldots\cdot q_\ell$.

    Let $\ba_1, \ba_2, \ldots$ be the integer vectors to be generated in the subsequent iterations of the loop from~\ref{step:NF_loop} (we consider an infinite sequence regardless of the actual number of iterations).
    We will prove that if
    \begin{equation}\label{eq:normalform_correctness}
    b(\ba_1)\ldots b(\ba_{\ell}) b(\ba_{\ell + 1}) q(\ba_1, \ldots, \ba_{\ell}) \neq 0,
    \end{equation}
    then the algorithm will return correct result at the $(\ell + 1)$-st iteration of the loop at~\ref{step:NF_loop}.
    Consider the $i$-th iteration of the loop.
    We will denote the values of $V$ and $s$ at the beginning of the iteration by $V_i$ and~$s_i$, respectively.
    Since $b(\ba_i) \neq 0$, computation of the normal form with respect to $\eOMS_\bg$ and evaluation at $\ba_i$ commute.
    Hence, the kernel of the map $\NFp_{\eOMS_\bg(\ba_i)}$ restricted to the polynomials of degree $\leqslant \delta$ will be in bijective correspondence with $k$-linear relations between the rows of $M(\ba_i)$.
    Therefore, $V_{i + 1}$ will be isomorphic to the space of all $k$-linear relations between the rows of $M(\ba_1, \ldots, \ba_i)$.
    Since $q_i(\ba_1, \ldots, \ba_i) \neq 0$, we will have $s_{i + 1} = r_i$.
    Thus, the algorithm will not terminate at the first $\ell$ iterations but will terminate at the $(\ell + 1)$-st.
    Furthermore, since the elements of the resulting $V$ will be in correspondence with the $k$-relations on the rows of $M(\ba_1, \ldots, \ba_{\ell + 1})$, the final $V$ will contain the space of polynomials of degree $\leqslant \delta$ in $k(\bg)$.
    Since $\dim V = r$, this containment will be in fact an equality.

    Thus, if the condition~\eqref{eq:normalform_correctness} is fulfilled, the algorithm will return a correct result.
    By the Schwarz-Zippel lemma (\Cref{lemma:sz}), the probability of this can be made arbitrary close to 1 by increasing the size of $S$.
\end{proof}

\begin{remark}[Modular version (cf.~\Cref{rem:modular_membership})]
\label{rem:modular_low_degree}
    In practice, the computation will be performed modulo a prime number $p$ (and the coordinates for $\ba$'s will be sampled from the whole $\mathbb{F}_p$).
    The proof of \Cref{prop:polynomials_correctness} implies that, for any given $\varepsilon > 0$, the probability of correctness will be at least $1 - \varepsilon$ for large enough $p$.
\end{remark}

\begin{remark}[{On computation of $k(\bg) \cap k[\bx]$}]
    Polynomials in $k(\bg)$ form not only a linear subspace but also a subalgebra. 
    It is therefore tempting not to restrict ourselves to fixed $\delta$ but compute generators of this subalgebra.
    However, as the negative solution to the Hilbert 14-th problem~\cite{Nagata1959} shows, this subalgebra may be not finitely generated (even in the three-variable case~\cite{Kuroda2005}).
\end{remark}


\section{Main algorithm}\label{sec:main_algo}

\subsection{Algorithm}
In this section, we put the subroutines described above together to present our main algorithm.

\begin{algorithm}[H]
\caption{Finding simple generators of a rational function field}
\label{alg:main}
\begin{description}[itemsep=0pt]
\item[Input: ] $\bg = (g_1,\ldots,g_m) \subset k(\bx)$, where $\operatorname{char} k = 0$ and we denote $\mathcal{E} \coloneqq k(\bg)$; a real number $0 < \varepsilon < 1$.
\item[Output: ] $\bh = (h_1,\ldots, h_\ell)\subset k(\bx)$, such that $k(\mathbf{h}) = k(\mathbf{g})$ (and $\mathbf{h}$ is potentially simpler than $\mathbf{g}$).
\end{description}

\begin{enumerate}[label = \textbf{(Step~\arabic*)}, leftmargin=*, align=left, labelsep=\favoritelabelsep, itemsep=\favoriteitemsep]
    \item\label{step:increment} Set $d := 1$ and $\tau \coloneqq 1$ and choose a finite subset $S \subset k \setminus \{0\}$.
    \item \label{main:step:gb} Compute $c_1,\ldots,c_M\in k(\bx)$, the coefficients up to degree $d$ of the \grobner{} basis of $\eOMS_{\bg}$, by applying \Cref{alg:gb} to the generators of $\eOMS_{\bg}$ and $d$, $S$.
    \item \label{main:step:check-after-gb} Check if $\mathcal{E} = k(c_1,\ldots,c_M)$ with probability at least $1-\frac{\varepsilon}{\tau}$ using \Cref{alg:membership} and \Cref{rem:field_equality}.
    If the equality does not hold, set $d := 2d$ and \textbf{go to}~\ref{main:step:gb}.
    \item\label{main:step:linear-relations} Compute $m_1,\ldots,m_N \in k(\bx)$, 
    polynomials up to degree $\delta$ that belong to $\cE$, 
    by applying~\Cref{alg:polynomials} to $c_1,\ldots,c_M$ with 
    $\delta$
    set to a small constant, say 2 or 3, and the set $S$.
    \item \label{step:sort} Sort the functions $(g_1,\ldots,g_m, c_1,\ldots,c_M, m_1,\ldots, m_N)$ using a heuristic simplicity measure
    and obtain a list $(h_1, \ldots, h_{m + N + M})$; 
    for example, as in \Cref{remark:onsimplicity}.
    \item \label{main:step:kind-of-enforce-minimality} 
    For each $i = 1, \ldots, m + N + M$, check if $h_i \in k(h_1, \ldots, h_{i - 1})$ with probability at least $1-\frac{\varepsilon}{\tau}$ using \Cref{alg:membership}.
    Denote the list of the functions for which the inclusion did not hold by $(\hat{h}_1, \ldots, \hat{h}_s)$.  Optionally, also minimize
    by applying \Cref{alg:minimization} to $(\hat{h}_s, \ldots, \hat{h}_1)$.
    \item\label{main:step:final_check} Check if $\cE = k(\hat{h}_1, \ldots, \hat{h}_s)$ with probability at least $1 - \frac{\varepsilon}{2^\tau}$ using \Cref{alg:membership} and \Cref{rem:field_equality}.
    \item \label{step:main:last} If the check has passed, \textbf{return} $\bh \coloneq (\hat{h}_1, \ldots, \hat{h}_s)$.
    Otherwise, choose a larger subset $S \subset k \setminus \{0\}$, set $\tau \coloneqq \tau + 1$, and {\bf go to}~\ref{main:step:gb}. 
    \end{enumerate}
\end{algorithm}
\begin{proposition}
\label{prop:mainalg}
    \Cref{alg:main} terminates with probability one. The output of \Cref{alg:main} is correct with probability at least $1 - \varepsilon$.
\end{proposition}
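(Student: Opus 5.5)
We split the claim into termination with probability one and correctness with probability at least $1-\varepsilon$, and treat the algorithm as a sequence of \emph{rounds} indexed by $\tau$. A round starts at \ref{main:step:gb} and ends either by returning at \ref{step:main:last} or by incrementing $\tau$. Inside a round, \ref{main:step:gb}--\ref{main:step:check-after-gb} form an inner loop over $d$.

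\textbf{Termination within a round.} By \Cref{lemma:OMS_coeffs_2}, once $d$ exceeds the maximal total degree of the numerators and denominators of the reduced \grobner{} basis coefficients of $\eOMS_{\bg}$, the correct $C_{\leqslant d}$ generates $\mathcal{E}$. However, \Cref{alg:gb} may fail or return garbage, and the membership checks may err. To avoid an infinite inner loop I would argue as follows. For $d$ beyond this threshold, each further doubling succeeds independently with probability bounded below by a positive constant, namely the product of the bound of \Cref{prop:propa-gb-over-rat-funcs} for the current $S$ and the correctness probability $1-\varepsilon/\tau$ of the equality check. This assumes $|S|$ is large enough that the bound of \Cref{prop:propa-gb-over-rat-funcs} is positive. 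If it is not, I would note that $S$ grows with $\tau$ and that a failing check simply sends us to a new doubling. Hence the inner loop exits almost surely. Steps \ref{main:step:linear-relations}--\ref{main:step:final_check} terminate by \Cref{prop:polynomials_correctness} and finiteness of the lists.

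\textbf{Termination across rounds.} In round $\tau$, suppose $S$ has been chosen large enough that \Cref{alg:gb} and \Cref{alg:polynomials} are correct with probability tending to $1$ as $\tau\to\infty$. Then with probability tending to one the list $(\hat h_i)$ generates $\mathcal{E}$. This is because \ref{main:step:kind-of-enforce-minimality} only discards $h_i\in k(h_1,\dots,h_{i-1})$ when the membership test is correct, and each test has error at most $\varepsilon/\tau$. The final check then passes. Hence the probability of infinitely many rounds is $\lim_\tau \prod(\text{fail})=0$.

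\textbf{Correctness.} The output is returned only after \ref{main:step:final_check} passes, so it is wrong only if that check errs in the round where we return. Since the returned set is always contained in $\mathcal{E}$ (all $c_i$, $m_i$ and $g_i$ lie in $\mathcal{E}$, assuming \Cref{alg:gb} returns elements of $\mathcal{E}$), we only need to bound false positives of the final check. Note that \Cref{alg:gb} could in principle output wrong functions not in $\mathcal{E}$. The final check tests inclusion in both directions, so that case is also covered. Summing over rounds by a union bound, the failure probability is at most $\sum_{\tau\ge1}\varepsilon/2^\tau=\varepsilon$.

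The main obstacle is making the termination argument rigorous. The inner $d$-loop has no explicit failure exit, and the probability bounds of \Cref{prop:propa-gb-over-rat-funcs} depend on the unknown $h$ and $D$. I would handle this by requiring the enlarged $S$ in \ref{step:main:last} to satisfy $|S|\to\infty$ with $\tau$. Where a uniform bound is unavailable, I would appeal to the existence statement of \Cref{prop:polynomials_correctness}.
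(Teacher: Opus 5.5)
Your argument for termination across rounds and your correctness bound are the same as the paper's. For termination, the probability $p_\tau$ that round $\tau$ does not return tends to $0$, because $|S|$ grows and the membership errors are $O(\varepsilon/\tau)$, so the infinite product vanishes. For correctness, a wrong output requires a false positive of the check in \ref{main:step:final_check}, and $\sum_\tau \varepsilon/2^\tau\leqslant\varepsilon$. You should still state explicitly, as the paper does, that almost sure termination is what turns ``incorrect with probability at most $\varepsilon$'' into ``correct with probability at least $1-\varepsilon$''.

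You go beyond the paper in flagging the inner $d$-loop of \ref{main:step:gb}--\ref{main:step:check-after-gb}. The paper writes the non-termination probability as $p_1p_2\cdots$, which tacitly assumes every round is finite. That assumption does need justification. A round that loops forever is never followed by the next one, so non-termination then has at least that probability, whatever the later $p_\tau$ are.

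Your justification works when the bound of \Cref{prop:propa-gb-over-rat-funcs} is positive. Beyond the degree threshold coming from \Cref{lemma:OMS_coeffs_2}, that bound no longer depends on $d$, and each doubling uses fresh randomness. So the loop exits after a geometrically distributed number of iterations.

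Your fallback for small $S$, however, fails. $S$ is enlarged only in \ref{step:main:last}, after the final check has failed. A failed check in \ref{main:step:check-after-gb} sends control back to \ref{main:step:gb} with the same $S$. So ``$S$ grows with $\tau$'' says nothing about the inner loop of the current round, in particular the first one. Neither does demanding $|S|\to\infty$, nor invoking \Cref{prop:polynomials_correctness}, which concerns \Cref{alg:polynomials}, a step outside the inner loop.

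For a degenerate initial $S$, e.g.\ $|S|=1$, the point $\bsigma$ in \Cref{alg:gb} is deterministic and may lie in the zero set of the polynomial from \Cref{lem:GB_spec}. Then nothing prevents \Cref{alg:gb} from failing or being wrong at every $d$. This case cannot be closed by an argument; it needs a hypothesis, which the paper's proof also silently requires.

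There are two clean fixes:
\begin{itemize}
\item Assume the initial $S$ is large enough for the bound of \Cref{prop:propa-gb-over-rat-funcs} to be positive. This persists in all later rounds, since $S$ only grows.
\item Modify the algorithm so that $S$ is also enlarged at each doubling.
\end{itemize}
With either fix, your positive-bound argument gives almost sure termination of every round, and the rest goes through as in the paper.
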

\begin{proof}
    We prove the probability of termination. The algorithm does not terminate if and only if the check in \ref{main:step:final_check} never passes. 
    The probability of that is $p_1 \cdot p_2 \cdot \ldots$, where at the $\tau$-th iteration of the algorithm $p_\tau$ is the probability that \ref{main:step:final_check} does not pass. 
    To show that this product is zero, it suffices to show that there exists a positive integer
    $n$ such that for every $\tau > n$ we have $p_\tau < 0.42$.
    We denote by $e_0$ the smallest positive integer such that the coefficients up to degree $2^{e_0}$ of the \grobner{} basis of $\eOMS_{\bg}$ generate the whole $\mathcal{E}$.
    Then, we fix these coefficients (and denote their number by $M_0$), and let $\rho(|S|)$ be the probability that~\Cref{alg:polynomials} applied to these coefficients returns a correct result.
    Let $N_0$ be the number of polynomial generators in this correct result.
    We fix a positive integer $\tau$ and consider the 
    $\tau$-th iteration of the algorithm (in other words, the pass from~\ref{main:step:gb} to~\ref{main:step:final_check} preceding the $\tau$-th check in~\ref{main:step:final_check}.)
    Consider the following events:
    \begin{itemize}
        \item The applications of~\Cref{alg:gb} and~\Cref{alg:membership} in~\ref{main:step:gb} and~\ref{main:step:check-after-gb} return correct result.
        In this case, there 
        will be 
        $e_0 + 1$ applications of these algorithms.
        The probability of correctness of applications of~\Cref{alg:gb} is of the form $f(|S|)$ where $f(|S|) \to 1$ when $|S|\to \infty$.
        Since $S$ becomes larger at each step, this probability tends to $1$ as $\tau \to \infty$.
        The probability of correctness of the applications of~\Cref{alg:membership} is at least $1 - \frac{e_0\varepsilon}{\tau}$, which also tends to $1$ as $\tau \to \infty$.
        \item The application of~\Cref{alg:polynomials} at~\ref{main:step:linear-relations} is correct. This happens with probability $\rho(|S|) \to 1$ as $\tau \to \infty$ by \Cref{prop:polynomials_correctness}.
        \item All the applications of~\Cref{alg:membership} in~\ref{main:step:kind-of-enforce-minimality} return correct result.
        This happens with probability at least $1 - \frac{(m + M_0 + N_0)\varepsilon}{\tau}$.
        This value tends to $1$ as $\tau \to \infty$.
        \item The check at~\ref{main:step:final_check} returns correct result.
    \end{itemize}
    If all these events occur, then the check at~\ref{main:step:final_check} passes.
    By the probability calculations above, the probability of this tends to 1 as $\tau$ approaches infinity.
    This means that $p_\tau$ tends to 0 as $\tau \to \infty$.
    In particular, there exists $n$ such that $p_\tau < 0.42$ for any $\tau > n$.
    
    We prove the probability of correctness.  
    At the $\tau$-th iteration of the algorithm, let $B_\tau$ denote the event that $\mathcal{E} = k(\hat{h}_1,\ldots,\hat{h}_s)$, and let $E_\tau$ denote the event that the check in \ref{main:step:final_check} passes (that is, it reports that $\mathcal{E} = k(\hat{h}_1,\ldots,\hat{h}_s)$).
    At the $\tau$-th iteration, we have:
    \[
    \begin{aligned}
        P(\text{\ref{main:step:final_check} is correct}) &= P(E_\tau \mid B_\tau)P(B_\tau) + P(\widebar{E_\tau} \mid \widebar{B_\tau}) P(\widebar{B_\tau})  &\geqslant 1 - \frac{\varepsilon}{2^\tau}\\
        &\Rightarrow (1 - P(\widebar{E_\tau} \mid B_\tau))P(B_\tau) + (1 - P(E_\tau \mid \widebar{B_\tau})) P(\widebar{B_\tau})  &\geqslant 1 - \frac{\varepsilon}{2^\tau}\\
         &\Rightarrow P(\widebar{E_\tau} \mid B_\tau) P(B_\tau) + P(E_\tau \mid \widebar{B_\tau}) P(\widebar{B_\tau}) &\leqslant \frac{\varepsilon}{2^\tau}\\
        &\Rightarrow P(E_\tau \mid \widebar{B_\tau}) P(\widebar{B_\tau})  &\leqslant \frac{\varepsilon}{2^\tau}\\
        &\Rightarrow 
        P(E_\tau \cap \widebar{B_\tau}) &\leqslant \frac{\varepsilon}{2^\tau}
    \end{aligned}
    \]
    Note that, at the $\tau$-th iteration, $E_\tau \cap \widebar{B_\tau}$ is the event that an incorrect result is returned from the algorithm. 
    
    Let $T_\tau$ denote the event that the algorithm terminates at the $\tau$-th iteration with an incorrect result and that it did not terminate at every previous iteration. Then,
    \[
    P(T_\tau) \leqslant P(E_\tau \cap \widebar{B_\tau}) \leqslant \frac{\varepsilon}{2^\tau}.
    \]
    Since iterations of the algorithm are independent, the probability of an incorrect result
    is
    \[
    T_1 + T_2 + \ldots \leqslant \frac{\varepsilon}{2} + \frac{\varepsilon}{4} + \ldots = \varepsilon \left(\frac{1}{2} + \frac{1}{4} + \ldots\right) \leqslant \varepsilon.
    \]
    Since the algorithm terminates with probability one, the probability of a correct result and the probability of an incorrect result sum up to one. Therefore, the probability of a correct result is at least $1 - \varepsilon$.
\end{proof}

\begin{remark}[Retaining the original generators]
\label{remark:retain}
    In \ref{step:sort}, we extend the list of candidate functions with
    the original $g_1,\ldots,g_m$. Roughly speaking, this ensures that the output of \Cref{alg:main} is not worse than the input in terms of being simple. For example, consider $\bg = (x^i + y^i + z^i + u^i + v^i,~i=1,\ldots,5)$. The output of our algorithm for such input (using the heuristic from \Cref{remark:onsimplicity}) is
    \[
    \begin{aligned}
    &x + y + z + u + v, x^2 + y^2 + z^2 + u^2 + v^2, x^3 + y^3 + z^3 + u^3 + v^3,\\
    &x y z u + x y z v + x y u v + x z u v + y z u v,
 x y z u v.
    \end{aligned}
    \]
    However, if we did not add the original generators to the list of candidates,
    the output would be 
    less simple (both visually and according to the chosen heuristic):
    \[
    \begin{aligned}
    &x + y + z + u + v,
    x y + x z + x u + x v + y z + y u + y v + z u + z v + u v,\\
    &x y z + x y u + x y v + x z u + x z v + x u v + y z u + y z v + y u v + z u v,\\
    &x y z u + x y z v + x y u v + x z u v + y z u v,
 x y z u v.
    \end{aligned}
    \]
\end{remark}

\begin{remark}[Modular version for $k = \QQ$]
    \label{remark:all_is_modular}
    In practice, we run the steps~\ref{step:increment}--\ref{main:step:kind-of-enforce-minimality} of~\Cref{alg:main} modulo a large prime number. 
    Only the final check in~\ref{main:step:final_check} is run over 
    $k = \QQ$, and its purpose is to guarantee the probability of correctness, as \Cref{prop:mainalg} would still hold in this case. Thus just prior to the final check we reconstruct the intermediate results using rational number reconstruction~\cite[Section 5.10]{MCA}, \cite{ratrec82}.
    The use of \Cref{alg:membership,alg:minimization,alg:polynomials} modulo a prime is justified due to \Cref{rem:modular_membership,rem:modular_low_degree}.
\end{remark}

\begin{remark}[Lazy specialization of $\eOMS_{\bg}$]
\label{remark:lazy-construction}
    In practice, we do not explicitly construct the generators of $\eOMS_{\bg}$, since they may be large.
    Indeed, let $p/q \in \bg$ with coprime $p,q \in k[\bx]$. 
    The corresponding generator of $\eOMS_{\bg}$ is $p(\by) q(\bx) - q(\by) p(\bx)$ and it has $O(s_p s_q)$ terms when considered as a polynomial in $k[\bx,\by]$. In comparison, the corresponding generator of $\eOMS_{\bg}(\ba)$ for $\ba \in k^n$ has only $O(s_p + s_q)$ terms. Therefore, whenever we need to compute a \grobner{} basis of $\eOMS_{\bg}(\ba)$ for $\ba \in k^n$, we construct the generators of $\eOMS_{\bg}(\ba)$ on the fly by plugging $\bx \to \ba$ in~\eqref{eq:lazy_OMS}.
\end{remark}

\begin{remark}[Computing with zero-dimensional ideals]
    Field membership (\Cref{alg:membership}) specializes some indeterminates in the OMS ideal until the ideal becomes
    zero-dimensional. 
    In theory, a similar specialization could be used for computing the coefficients of \grobner{} bases in \Cref{alg:gb}. 
    In this case, in some polynomials in the resulting ideal, the monomials may glue together, and their coefficients (which are multivariate rational functions) combined may become large. In our experiments, we observed that this makes interpolation harder, and overall slows down the algorithm despite using zero-dimensional ideals.
\end{remark}

\begin{remark}
\label{remark:reuse}
    After we have computed the coefficients of the \grobner{} basis that generate the field, we can use them instead of the original generators in the subsequent steps of the algorithm (as in \ref{main:step:linear-relations}). This typically speeds up the computation when these coefficients are simpler than the original generators.
\end{remark}

\begin{remark}[Achieving minimality]\label{rem:minimality}
    The generating set computed by \Cref{alg:main} may be not minimal with respect to inclusion (see~\Cref{ex:compmodel}) despite the filtering performed at~\ref{main:step:kind-of-enforce-minimality}.
    Since the filtering checks membership 
    against
    the subfield generated by the elements with smaller indices only, there could still be a redundant element at a smaller index that can be expressed in terms of all other generators.
    The minimality with respect to inclusion can be achieved using~\Cref{alg:minimization}.
    However, in some cases, the larger set may be preferable, see the discussion in~\Cref{ex:compmodel}.
\end{remark}


\subsection{Example}

In this section, we run \Cref{alg:main} on an example and trace through the steps of the algorithm.
We consider an example from the area of structural identifiability of dynamical models (\Cref{sec:app_ode}) with the code name
\exname{SEIR34},
collected in~\cite{MASSONIS2021441}. The task consists of finding a simple generating set of the subfield $\mathcal{E} \subseteq \mathbb{Q}(k, N, \beta, \epsilon, \gamma, \mu, r)$ generated by
{\footnotesize\[
\begin{aligned}
    &N, \epsilon+\gamma+3 \mu, -\epsilon-\gamma-2 \mu, \epsilon \mu+\gamma \mu+2 \mu^{2}, \epsilon \gamma \mu+\epsilon \mu^{2}+\gamma \mu^{2}+\mu^{3}, \\
    &\dfrac{\beta r}{k N}, \dfrac{-\beta \epsilon r}{N}, \dfrac{\beta \epsilon r+\beta \gamma r+2 \beta \mu r}{k N}, \dfrac{\beta \epsilon \gamma r+\beta \epsilon \mu r+\beta \gamma \mu r+\beta \mu^{2} r}{k N}.
\end{aligned}
\]}These nine functions are the input to our \Cref{alg:main}. 
When executing the steps of the algorithm, we will suppose that all steps return a correct result. When computing \grobner{} bases, we will be using the degrevlex monomial ordering with $t > k > N > \beta > \epsilon > \gamma > \mu > r$ (where $t$ comes from the Rabinowitsch
trick).

In \ref{main:step:gb}, the algorithm computes the coefficients of degree at most $d = 1$ of the \grobner{} basis of the OMS ideal and gets 
\begin{equation}
\label{eq:ex:coeffs-1}
(c_1,\ldots,c_M) \coloneqq (\mu, \epsilon + \gamma, N).    
\end{equation}
These do not generate $\mathcal{E}$, thus the check in \ref{main:step:check-after-gb} does not pass. 
Therefore, we run \ref{main:step:gb} again, now with $d=2$, and obtain the same result as~\eqref{eq:ex:coeffs-1} but additionally with two coefficients of higher degree: $\epsilon \gamma$ and $\frac{k}{\gamma}$.
The check in \ref{main:step:check-after-gb} still fails, so we run again with $d=4$ and discover an additional coefficient: $\frac{\beta r}{\gamma}$. At this point, 
\begin{equation*}
(c_1,\ldots,c_M) \coloneqq \left(\mu, \epsilon + \gamma, N, \epsilon \gamma, \frac{k}{\gamma}, \frac{\beta r}{\gamma}\right). 
\end{equation*}
Finally, the check signals that $\mathcal{E} = \mathbb{Q}(c_1,\ldots,c_M)$.

Then, \ref{main:step:linear-relations} computes some linearly independent polynomials up to degree $\delta = 2$ that belong to $\mathcal{E}$:
\[
(m_1,\ldots,m_N) \coloneqq (\mu,  \epsilon + \gamma, N, 
k \epsilon, \epsilon \gamma).
\]

Then, in \ref{step:sort}, the original generators and $(c_1,\ldots,c_M)$ and $(m_1,\ldots,m_N)$ are combined in one list, and the list is ordered using the heuristic from \Cref{remark:onsimplicity}, which ensures that simpler elements come first. 
In this example, we omit the original generators, as adding them does not change the final result. 
After removing duplicates, we have
\[
(h_1,\ldots,h_{N'+M'}) \coloneqq \left(\mu, N, \epsilon + \gamma, \epsilon \gamma, k \epsilon, \frac{k}{\gamma}, \frac{\beta r}{\gamma}\right).
\]
Note that $\mathcal{E} = \mathbb{Q}(h_1,\ldots,h_{N'+M'})$. The purpose of \ref{main:step:kind-of-enforce-minimality} is to pick the simplest elements from this list that would form the final result. To this end, the algorithm makes a pass over the list and removes $\frac{k}{\gamma}$ since it can be expressed in terms of the previous elements in the list. We obtain the simplified generating set (minimization from \Cref{alg:minimization} has no effect):
\[
(\hat{h}_1,\ldots,\hat{h}_s) \coloneqq \left(\mu, N, \epsilon + \gamma, \epsilon \gamma, k \epsilon, \frac{\beta r}{\gamma}\right).
\]
These functions are the output of our algorithm.

This result requires both the coefficients of the \grobner{} basis and some of the polynomial generators. In particular, the function $k \epsilon$ is not present among the coefficients of the \grobner{} basis. In turn, the subfield cannot be generated by $m_1,\ldots,m_N$ alone, thus computing some coefficients of the \grobner{} basis was necessary.

\begin{remark}[Algorithm parameters]
    We can increase the effort committed to simplification by computing \grobner{} bases of OMS ideals in several monomial orderings, as in \Cref{ex:duralexsedlex}, and by increasing the parameter $\delta$ of \Cref{alg:polynomials}, and then combining the results. In this section's example, if we used $\delta = 3$ in \ref{main:step:linear-relations}, we would additionally find that $\beta \epsilon r \in \mathcal{E}$, and hence obtain a generating set that contains only polynomials:
$
\mu, N, \epsilon + \gamma, \epsilon \gamma, k \epsilon, \beta \epsilon r.
$
\end{remark}

\section{Implementation and performance}
\label{section:experimental}

\subsection{Implementation}

We have implemented~\Cref{alg:main} and the sub-algorithms over the field of rational numbers $k = \QQ$ in Julia language~\cite{bezanson2017julia}. Our implementation has a convenient user interface and is freely available in the package \code{RationalFunctionFields.jl} at
\begin{center}
\url{https://github.com/pogudingleb/RationalFunctionFields.jl}~\footnote{The URL refers to the most up-to-date version of the code. The code as it had been at the time of this publication can be accessed 
at git revision {\tt f7d4943aa7432bd10c206b59f63909aca52c909d} at: 
\url{https://github.com/pogudingleb/RationalFunctionFields.jl/tree/f7d4943aa7432bd10c206b59f63909aca52c909d}. In particular, the directory {\tt paper} contains the sources of our experiments and examples.
}
\end{center}
One essential component is the algorithm for interpolating the coefficients of a \grobner{} basis over the rational function fields, which is implemented in the sub-package ParamPunPam.jl. 
We use \code{Nemo.jl}~\cite{AbstractAlgebra.jl-2017} for polynomial arithmetic and \code{Groebner.jl}~\cite{groebnerjl2023} for computing \grobner{} bases over integers modulo a prime.

The experiments in this paper were run on a Linux machine with 128 GB of RAM and 32 × 13th Gen i9-13900+ CPU. We used Julia 1.12.4, Singular 4.4.1, and Maple 2025.

\subsection{Modular computation}
Input and output  in our implementation
of the main algorithm 
are rational functions over the rationals. Per \Cref{remark:all_is_modular}, for better performance, intermediate computations are carried out over finite fields. 
The final correctness check in \ref{main:step:final_check} computes \grobner{} basis and normal forms over the rationals. 
Typically, the size of the rational numbers in the output of our algorithm is small in our applications (\Cref{sec:app}); this is also observed in practice (see, e.g., \Cref{table:simplify} and \Cref{app:simplify_everyone}).

\begin{remark}[Reconciling with reality]
    In our implementation, the final correctness check in \ref{main:step:final_check} computes \grobner{} basis over the rationals using a Monte-Carlo probabilistic algorithm based on multi-modular computation~\cite{modular}, for which  no practical bound on the probability of correctness is known. For this reason, our implementation may not guarantee the probability of correctness stated in \Cref{prop:mainalg}.
\end{remark}

\subsection{Sparse interpolation over finite fields}
\label{sec:impl:interpol}
For interpolating sparse multivariate rational functions over finite fields, we implemented the algorithm by van der Hoeven and Lecerf~\cite{vdhl} discussed in~\Cref{sec:prelim:ratinterpol}. We also implemented the algorithm by Cuyt and Lee~\cite{cuyt-lee} because, in theory, it may require fewer blackbox evaluations; see \cite[Remark 3.]{vdhl}. In most of our examples, the algorithms used the same number of evaluations, so we prefer van der Hoeven and Lecerf's algorithm because it was easier to implement efficiently.

For interpolating sparse multivariate polynomials over a finite field $\mathbb{F}_q$, we use the Ben-Or and Tiwari algorithm~\cite{ben-or-tiwari} with the admissible ratio $\bomega = (2,3,5,\ldots,p_n)$ and recover exponents via integer factoring (see~\Cref{sec:tiwari}). Therefore, to interpolate polynomials of degree $d$, we must have $p_n^d < q$. 
To facilitate efficient arithmetic in $\mathbb{F}_q$, we choose $q$ to be a prime that fits in $64$ bits. 
For example, with $d=8$, the largest number of variables such that $p_n^d < 2^{64}$ is $n=54$, which has been sufficient in our use cases. 
An alternative would have been Kronecker substitution (\cite[Section 8.4.]{MCA},~\cite{kronecker}), which requires, roughly, $d^n < q$. 
Our choice between the two is motivated by the fact that we usually have $n > d$.

We double the number of evaluations until interpolation succeeds. 
We found this approach easier to implement in an efficient way than the early termination strategies (as, for example, in~\cite{adaptive-step,early-termination-lee}), since we could rely on the existing P{\'a}de approximation and root-finding functionality from Nemo.jl.
Thus, our implementation overestimates the required number of blackbox evaluations by at most a factor of two.

\subsection{Computing small degree coefficients of \grobner{} bases}
\label{subsec:small-degree}

An appealing property of our algorithm is that the degree of the interpolated coefficients of the \grobner{} basis of the OMS ideal is well-controlled.
In \Cref{alg:main}, \ref{main:step:gb} and~\ref{main:step:check-after-gb} ensure that the coefficients of high degree are not interpolated when they are not necessary to generate the field (as, e.g., in~\Cref{OMS:excess}).
Along with being more efficient, this method also helps with simplification by filtering out generators of high degrees before even computing them.

\begin{table}[H]
\centering
    \begin{tabular}{l|rr|rr|rr}
    \multirow{2}{*}{Example} & \multicolumn{2}{c|}{Input} & \multicolumn{2}{c|}{Full \grobner{} basis} & \multicolumn{2}{c}{\ref{main:step:gb} and~\ref{main:step:check-after-gb}}\\
    & \# vars. & degrees & degrees & \# evals. & degrees & \# evals. \\
    \hline
    \exname{MAPK-5}~\cite{MANRAI20085533} & 22 & 8, 6 & 12, 0 & 15,401 & 1, 0 & 16\\
    \exname{Pharm}~\cite{Demignot1987Pharm} & 7 & 26, 5 & 2, 0 & 20 & 2, 0 & 20\\
    \exname{Fujita}~\cite{Fujita2010} & 16 & 9, 5 & 4, 7 & 129 & 1, 1 & 42\\
    \exname{Bilirubin}~\cite{combos} & 7 & 
    4, 0
    & 4, 2 & 278 & 4, 2 & 278\\
    \exname{EAIHRD}~\cite{Fokas2020} & 10 & 
    22, 6
    & 10, 9 & n/a & 4, 4 & 342\\
    \exname{SLIQR}~\cite{Dankwa2021} & 6 & 
    6, 0
    & 7, 6 & 502 & 7, 6 & 502\\
    \exname{LinComp2}~\cite{ahmed2025identifiabilitydirectedcyclecatenarylinear} & 7 & 
    5, 0
    & 381, 373 & n/a & 4, 2 & 4118
    \end{tabular}
    \caption{The number of blackbox evaluations required to interpolate the coefficients of \grobner{} bases of OMS ideals, using two methods: computing the full \grobner{} basis with~\Cref{alg:gb}, and using~\Cref{alg:gb} with the same $d$ as in~\ref{main:step:gb} and~\ref{main:step:check-after-gb} of~\Cref{alg:main}. We write n/a if the computation is not feasible.}
    \label{table:benchmarks:coeffs}
\end{table}

We will illustrate this claim using 
several examples arising from the problem of structural identifiability (see~\Cref{sec:app_ode}). 
Each example is given as a list of rational functions -- the generating set to be simplified.
We will consider two methods.
For each example, we apply~\Cref{alg:gb} to the generators of the OMS ideal of the input functions, but the first method computes the full \grobner{} basis (by specifying some large degree bound $d$), and the second 
method
uses the degree $d$ from~\ref{main:step:gb} and~\ref{main:step:check-after-gb} of~\Cref{alg:main}.

In~\Cref{table:benchmarks:coeffs}, for each example, we report: the number of variables; the largest degree of the numerator and denominator among the input rational functions and the output rational functions, for both methods; and the number of evaluations required to interpolate the output, for both methods.

From~\Cref{table:benchmarks:coeffs}, we may conclude that
\begin{itemize}
    \item The maximal degrees in the output typically are not large (even in the full \grobner{} bases), and often are smaller than those of the input. This supports
    the expectations
    that OMS ideals are useful for simplification.
    \item In some cases, the field can be generated by elements with degrees smaller than that of the full \grobner{} basis, and exploiting this leads to fewer evaluations required for interpolation. For example, this makes the computation feasible for the \exname{EAIHRD} and \exname{LinComp2} examples, and reduces the running time from 16 minutes to 2 seconds for the \exname{MAPK-5} example.
\end{itemize}

\begin{remark}[Lowering the degrees even further]
The strategy of doubling the guess for the degree $d$ until the check in \ref{main:step:check-after-gb} passes 
may be suboptimal. For instance, in the \exname{SLIQR} example in \Cref{table:benchmarks:coeffs}, the maximal total degrees of the numerator and denominator of a coefficient in the full \grobner{} basis are $7$ and $6$, respectively; the degrees obtained
in \ref{main:step:gb} and \ref{main:step:check-after-gb}  
are the same. However, the check in \ref{main:step:check-after-gb} would pass for degrees as low as ${(5,4)}$. 
Thus, with the current strategy, we interpolate coefficients of higher degrees than necessary. 
This may significantly increase the number of blackbox evaluations as the maximum number of terms $\binom{n+d}{n}$ grows rapidly with the degree. 
Since the check in \ref{main:step:check-after-gb} is cheap, we could
search for a sharper guess for the degree, for example, by increasing $d$ in arithmetic progression with a small step. The current strategy is a compromise with the fact that we do not cache 
blackbox evaluations across the calls to \Cref{alg:gb}, thus we would like to minimize the number of iterations.
\end{remark}

\subsection{Speeding up computation of many \grobner{} bases}\label{sec:tracing}

The efficiency of~\Cref{alg:gb} depends on the ability to evaluate the blackboxes many times, where each evaluation involves computing a \grobner{} basis over a finite field. 
As we have seen, one might require many evaluations to interpolate the result.

We use \grobner{} tracing, as described by Traverso~\cite{tracingtrav}, to speed up the computation of \grobner{} bases. The presentation in the original paper is general, however, for completeness, we briefly present the main ideas while adapting them to our case.  

\begin{table}[H]
\centering
    \begin{tabular}{l|r|r|r}
    Example & F4 algorithm & \thead{F4 with tracing\\{\it(Apply)}} & Speedup\\
    \hline
    \exname{MAPK-5}~\cite{MANRAI20085533} & \SI{39}{\milli\second} & \SI{8}{\milli\second} & $4.9$ \\
    \exname{Pharm}~\cite{Demignot1987Pharm} & \SI{1.91}{\second} & \SI{0.36}{\second} & $5.3$ \\
    \exname{Fujita}~\cite{Fujita2010} & \SI{1.89}{\milli\second} & \SI{0.54}{\milli\second} & $3.5$ \\
    \exname{Bilirubin}~\cite{combos} & \SI{334}{\micro\second} & \SI{129}{\micro\second} & $2.6$ \\
    \exname{EAIHRD}~\cite{Fokas2020} & \SI{680}{\milli\second} & \SI{112}{\milli\second} & $6.1$\\
    \exname{SLIQR}~\cite{Dankwa2021} & \SI{116}{\micro\second} & \SI{39}{\micro\second} & $3.1$\\
    \exname{LinComp2}~\cite{ahmed2025identifiabilitydirectedcyclecatenarylinear} & \SI{68}{\milli\second} & \SI{37}{\milli\second} & $1.8$\\
    \end{tabular}
    \caption{The time to compute a single \grobner{} basis over integers modulo a prime using the F4 algorithm with and without tracing. We denote microseconds by \SI{}{\micro\second}.}
    \label{table:benchmarks:trace}
\end{table}

We describe the method for OMS ideals but the algorithm can be applied to any ideal over a rational function field. Let $\bg = (g_1,\ldots,g_m) \subset k(\bx)$ with $\bx = (x_1,\ldots,x_n)$ and $\eOMS_{\bg} \subset k(\bx)[\by]$ be its OMS ideal. Recall that $\eOMS_{\bg}(\ba)$ denotes the specialization of the OMS ideal under $\bx \to \ba$, where $\ba\in k^n$. Fix a monomial ordering on $\by$. 
Let $\ba_1,\ldots,\ba_\ell \in k^n$. 
For $i = 1,\ldots,\ell$, we wish to compute
the reduced \grobner{} basis of $\eOMS_{\bg}(\ba_i)$.
With the use of tracing, the computation is organized in two steps:
\begin{enumerate}
    \item[{\it(Learn)}] Run a modified Buchberger's algorithm to compute the \grobner{} basis of $\eOMS_{\bg}({\ba}_1)$. The modified algorithm additionally records the information about which S-polynomials reduced to zero; this information could be stored as a collection of indices of the corresponding iterations of the algorithm.
    \item[{\it(Apply)}] For $i=2,\ldots,\ell$, run a (differently) modified Buchberger's algorithm to compute the \grobner{} basis of $\eOMS_{\bg}({\ba}_i)$.
    This version runs as usual but does not reduce the S-polynomials that correspond to the ones that reduced to zero during the {\it(Learn)} stage,
    since those are expected to reduce to zero.
\end{enumerate}
Correctness of this method
can be analyzed using \Cref{lem:GB_spec}. 
Indeed, there exists a nonzero polynomial $g \in k(\bx)$ such that for every $\bb_1, \bb_2 \in k^n$ 
if
$g(\bb_1) \neq 0$ and $g(\bb_2) \neq 0$ then
the Buchberger's algorithm applied to the generators of $\eOMS_{\bg}({\bb}_1)$ and $\eOMS_{\bg}({\bb}_2)$ produces the same S-polynomials and the same reductions of S-polynomials (that is, with the same support but possibly different values of coefficients). Thus, when for every $i=1,\ldots,\ell$ we have $g(\ba_i) \neq 0$, the method correctly computes all \grobner{} bases. 

\grobner{} tracing fits naturally in~\Cref{alg:gb}. Of course, we might not be able to ensure that for every $i = 1,\ldots,\ell$ we have $g(\ba_i) \neq 0$, but the probabilistic statement in \Cref{prop:propa-gb-over-rat-funcs} holds.
In our implementation, $k$ is the field of integers modulo a prime, where the prime fits in $64$ bits.
We use the F4 algorithm~\cite{F4} instead of the Buchberger's algorithm for more
efficiency. 
Tracing in the F4 algorithm can be arranged similarly, with a difference that instead of S-polynomials we have rows of a Macaulay matrix.
We use the package \code{Groebner.jl}~\cite{groebnerjl2023}, which implements the F4 algorithm and tracing as described above.

In~\Cref{table:benchmarks:trace}, we illustrate the performance benefit of using \grobner{} tracing to compute a single \grobner{} basis in degrevlex over integers modulo a prime with our implementation. 
We report the running time of a classic F4 algorithm and of the F4 algorithm modified as described in the {\it(Apply)} bullet point (where we assume that the {\it(Learn)} part was already ran and we have access to its output). The running times are short so we compute them as averages of many runs.
We observe a consistent speedup with the factor ranging from $1.8$ to $6.1$ across different problem sizes. This translates into a similar speedup for the whole \Cref{alg:gb}, whose running time is typically dominated by many runs of {\it(Apply)}.

\begin{remark}
    A different modification of the F4 algorithm that also exploits the low-rank structure of the matrices relies on probabilistic linear algebra over finite fields~\cite[Section 2]{splitting}\footnote{We thank Roman Pearce for pointing out this observation to us.}.
    However, unexpectedly, for many problems considered in this paper (in particular, from \Cref{table:benchmarks:trace}) our implementation of this method does not provide a significant speedup over the classic F4 algorithm.
\end{remark}

\begin{remark}
    A variant of the F4 algorithm that uses \grobner{} tracing for computations over rational function fields has also been discussed in~\cite{f4-variant-tracing}. Their application is different, but we may compare the speedups obtained from tracing for a common example. For the \exname{Katsura-11} and \exname{Katsura-12} problems, over integers modulo a 32-bit prime, the quotients of the running times $\frac{\text{F4 algorithm}}{\text{F4 with tracing}}$ are $3.9$ and $5.1$, respectively, for the implementation in~\cite[Fig.3.]{f4-variant-tracing}; for our implementation, they are $10.7$ and $11.7$, respectively. 
    These appear to be of the similar order of magnitude; we also observe that tracing seems to be more beneficial for larger problems.
\end{remark}

\subsection{Comparison with other methods for computing \grobner{} bases}\label{sec:other_gb}

While computing the full \grobner{} basis with rational function coefficients is not the main focus of the present paper, \Cref{alg:gb} allows doing this.
Therefore, it may be interesting to compare this interpolation-based algorithm with other available methods. 

\begin{table}[!htbp]
\centering
    \begin{tabular}{l|rrrrrr}
    Example &  F4 (direct) & F4 (flat ring) & slimgb~\cite{slimgb} & ffmodStd~\cite{ffmodstd} & \Cref{alg:gb}
    \\
    \hline
    \exname{Simson-3}~\cite{symbolicdata} & \SI{1}{\second} & \SI{1}{\second} & \SI{0}{\second} & \SI{0}{\second} & \SI{0}{\second}\\
    \exname{Boku-I4}~\cite{Boku2016} & OOM & OOM & \SI{141}{\second} & \SI{108}{\minute} & \SI{36}{\second}\\
    \exname{Param-1} & OOM & \SI{3}{\second} & \SI{3}{\second} & 
    \SI{20}{\minute}
    & OOM\\
    \exname{Param-2} & OOM & \SI{62}{\second} & \SI{0}{\second} & 
    \SI{774}{\second}
    & 
    \SI{6}{\second}
    \\
    \exname{Moments-3}~\cite{MingKueiHu1962} & OOM & OOM & 
    4 hours 
    & Timeout & \SI{81}{\second}\\
    \exname{MAPK-5}~\cite{MANRAI20085533} & \SI{14}{\second} & OOM & \SI{3}{\second} & Timeout & \SI{16}{\minute}\\
    \exname{Pharm}~\cite{Demignot1987Pharm} & \SI{10}{\minute} & OOM & \SI{15}{\second} & Timeout & \SI{100}{\second}\\
    \exname{Fujita}~\cite{Fujita2010} & \SI{0}{\second} & \SI{0}{\second} & \SI{0}{\second} & \SI{52}{\second} & \SI{0}{\second}\\
    \exname{Bilirubin}~\cite{combos} & \SI{1}{\second} & \SI{5}{\second} & \SI{0}{\second} & 113 s & \SI{0}{\second}\\
    \exname{EAIHRD}~\cite{Fokas2020} & OOM & OOM & Timeout & Timeout & Timeout\\
    \exname{SLIQR}~\cite{Dankwa2021} & \SI{0}{\second} & OOM & \SI{0}{\second} & \SI{127}{\second} & \SI{0}{\second}\\
    \exname{LinComp2}~\cite{ahmed2025identifiabilitydirectedcyclecatenarylinear} & OOM & OOM & OOM & Timeout & Timeout
    \end{tabular}
\caption{The running time of several different methods of computing \grobner{} bases in $\QQ(\bx)[\by]$ in degrevlex. Timeout means the computation was aborted after
24 hours.
OOM means the computation was aborted after using
more than $50$ GB of RAM.}
\label{table:benchmarks1}
\end{table}

We will be computing \grobner{} bases of ideals in $\QQ(\bx)[\by]$ in degrevlex ordering using the following methods:
\begin{enumerate}
    \item The F4 algorithm applied directly in $\mathbb{Q}(\bx)[\by]$. We use the implementation in \code{Groebner.jl}~\cite{groebnerjl2023}, which is generic and has no special optimizations in this case.
    \item The F4 algorithm applied in the flat ring $\mathbb{Q}[\bx,\by]$ with a suitable monomial order $<_{\bx,\by}$. For $i=1,2$ for monomials $\bx^{\balpha_i}\by^{\bbeta_i}$ with 
    $\balpha_i \in \Zplus^{n}$ and $\bbeta_i \in \Zplus^{t}$, we have: $\bx^{\balpha_1}\by^{\bbeta_1} <_{\bx,\by} \bx^{\balpha_2}\by^{\bbeta_2}$ if $\by^{\bbeta_1} <_{\operatorname{drl}} \by^{\bbeta_2}$ or $\by^{\bbeta_1} = \by^{\bbeta_2}$ and $\bx^{\balpha_1} <_{\operatorname{drl}} \bx^{\balpha_2}$, where $<_{\operatorname{drl}}$ denotes the degrevlex monomial ordering. 
    We use the implementation in \code{Groebner.jl}, which homogenizes the input generators in this case.
    \item The slimgb algorithm~\cite{slimgb} applied directly in $\mathbb{Q}(\bx)[\by]$. We use the implementation in Singular.
    \item The ffmodStd algorithm~\cite{Boku2016,ffmodstd} implemented in Singular. This is an interpolation-based algorithm similar to~\Cref{alg:gb}. ffmodStd reconstructs intermediate results to the rationals and performs interpolation over the rationals using the Cuyt-Lee's algorithm~\cite{cuyt-lee}, while we carry out all computations (including interpolation) modulo a prime and reconstruct to the rationals only at the end. ffmodStd uses the built-in modStd command~\cite{modstd} for computing \grobner{} bases in $\QQ[\by]$ and does not use \grobner{} tracing.
    \item \Cref{alg:gb}, which we use to compute the full \grobner{} basis.
\end{enumerate}

In~\Cref{table:benchmarks1}, we report the running time for each method for several examples. Each example is given by a list of generators of the ideal in $\mathbb{Q}(\bx)[\by]$. All implementations 
are run using only one CPU.
Note that the methods 2., 4., and 5. are Monte-Carlo probabilistic. 
We made sure that the outputs of the implementations agree. Examples roughly fall into groups based on their provenance:
\begin{itemize}
    \item \exname{Simson-3}, distributed in the SymbolicData project~\cite{symbolicdata}\footnote{We downloaded the systems' definitions from \url{https://github.com/symbolicdata/data}}. In the original slimgb publication, this system took the most time to compute among benchmark systems in $\QQ(\bx)[\by]$ in degrevlex ordering. Now the computation is near instant with all methods.
    \item \exname{Boku-I4}, from~\cite[Section 5.3.3]{Boku2016}. This system appeared in the benchmarks of ffmodStd, where it took the longest time to run among benchmark systems in degrevlex.
    We modify the example by introducing a fictional variable in the ground ring (the system itself does not change) to disable the optimization that speeds up the computation over univariate rational function fields in ffmodStd\footnote{With the optimization enabled, the running time of ffmodStd is $110$ seconds}. In this example, the quotient of the running times $\frac{\text{ffmodStd}}{\text{\Cref{alg:gb}}}$ is around $180$. 
    \item \exname{Param-1}, \exname{Param-2}. These were constructed from \exname{Katsura-5} by replacing some numerical coefficients with transcendental parameters. For \exname{Param-1}, our implementation of \Cref{alg:gb} has no chance to succeed, because the output contains a coefficient of total degree $46$ in $3$ variables, which is not possible to interpolate using our version of the Ben-Or and Tiwari algorithm modulo a 64-bit prime. For \exname{Param-2}, the coefficients of the \grobner{} basis contain large rationals (the numerators are 550 bits in size and the denominators are small), and our implementation uses 18 primes of size 64-bit to reconstruct the result.
    \item \exname{Moments-3}, \Cref{example:moments} from~\Cref{sec:app_invariants}.
    \item Examples arising from the problem of structural identifiability (see~\Cref{sec:app_ode}). These have the structure of $\eOMS$ ideals (\Cref{notation:eOMS}). \Cref{table:benchmarks:coeffs,table:benchmarks:trace} give an idea about the hardness of these examples, in particular we note that the \grobner{} bases for these examples are typically almost linear and have very sparse coefficients. 
\end{itemize}

Overall, computations tend to either finish quickly or not finish due to time and space constraints. As expected, straightforward
direct methods (1. and 2.) require an unpredictable amount of memory and are thus difficult to use. The slimgb algorithm fares very well. 
\Cref{alg:gb} is slower than slimgb for several examples, but is quicker for some others (e.g., \exname{Moments-3}). 
Compared to ffmodStd, our implementation appears to be a hundred or more times more efficient; however, ffmodStd handles the \exname{Param-1} example, and our implementation does not.
Finally, although none of the methods can compute the full \grobner{} basis for the \exname{EAIHRD} and \exname{LinComp2} examples, our \Cref{alg:gb} readily tackles these examples using the approach from \Cref{subsec:small-degree}. 


\section{Experimental results}
\label{sec:experiment}

\subsection{Assessing the quality of simplification}
\label{sec:experiment-simplify}

In this section, we use our implementation of the main algorithm (\Cref{alg:main}) to simplify generators of subfields of rational function fields that arise in the problem of structural identifiability (see \Cref{sec:app_ode}).

{
\small
\begin{table}[H]
\setlength{\tabcolsep}{8pt}
\renewcommand{\arraystretch}{1.0}
\centering
    \begin{tabular}{lrr}
    Example & Original generators & Result of~\Cref{alg:main}
    \\
    \midrule
    \midrule
    \Cref{ex:simple-OMS-simplify} & $x_1^2 + x_2^2, x_1^3 + x_2^3, x_1^4 + x_2^4$ & $x_1+x_2, x_1x_2$
    \\
    \midrule
    \exname{genLV}~\cite{Remien2021GLVIdentifiability} & 
    {\small
    \setlength{\tabcolsep}{0pt}
\renewcommand{\arraystretch}{1}\setlength{\extrarowheight}{0pt}
\begin{tabular}{r}$(\beta_{12} r_1 r_2 - \beta_{22} r_1^2)/\beta_{12},$\\$(-\beta_{11}^2 \beta_{22} + \beta_{11} \beta_{12} \beta_{21})/\beta_{12},$\\$(\beta_{12} r_2 - 2 \beta_{22} r_1)/\beta_{12}, (\beta_{12} + \beta_{22})/\beta_{12}, $\\$(\beta_{11} \beta_{12} r_2 - 2 \beta_{11} \beta_{22} r_1 + \beta_{12} \beta_{21} r_1)/\beta_{12}, $\\$ (-\beta_{11} \beta_{12} + 2 \beta_{11} \beta_{22} - \beta_{12} \beta_{21})/\beta_{12}$\end{tabular}
     } & $r_1, r_2, \beta_{11}, \beta_{21}, \beta_{12}/\beta_{22}$\\
    \midrule
    \exname{Pharm}~\cite{Demignot1987Pharm}  & 
    {
    \setlength{\tabcolsep}{0pt}
\renewcommand{\arraystretch}{1}\setlength{\extrarowheight}{0pt}\begin{tabular}{r}
        $3533$ functions, max. degree $31$,\\
        $2210$ max. terms,
        size on disk $35$ MB\\
     \end{tabular}
     } & $n, k_a, k_c, b_1, b_2, a_1, a_2$\\
    \midrule
    \exname{CGV1990}~\cite{Chappell1990Global} & {
    \setlength{\tabcolsep}{0pt}
\renewcommand{\arraystretch}{1}\setlength{\extrarowheight}{0pt}\begin{tabular}{r}
        $412$ functions;\\
        indeterminates $R$ and $S$ appear\\
        in functions of degree at least $14$
     \end{tabular}
     } & {
    \setlength{\tabcolsep}{0pt}
\renewcommand{\arraystretch}{1}\setlength{\extrarowheight}{0pt}\begin{tabular}{r}
        $k_3, k_4, k_6, k_7, R V_3 + S V_{36}$,\\ 
        $R V_{36} + S V_3 / 25,$\\
        $ k_5 (R V_{36} + S V_{36}) / 5,$\\$k_5 (V_3 + 5 V_{36}) / V_3$
     \end{tabular}
     }\\
     \midrule
     \exname{Fujita}~\cite{Fujita2010} & 
     {
    \setlength{\tabcolsep}{0pt}
\renewcommand{\arraystretch}{1}\setlength{\extrarowheight}{0pt}\begin{tabular}{r}
        $217$ functions;\\
        indeterminates $a_1,a_2,a_3$ appear\\
        in functions of degrees $5,3,2$, resp.
     \end{tabular}
     }
     & 
     {
    \setlength{\tabcolsep}{0pt}
\renewcommand{\arraystretch}{1}\setlength{\extrarowheight}{0pt}\begin{tabular}{r}
$r_{22}, r_{31}, r_{41}, r_{52}, r_{61}, r_{71},$\\{$r_{81}, r_{11} - r_{12} - r_{91}, a_1/r_{51}, $}\\{$a_3/r_{51}, a_2/r_{51}, r_{21}/r_{51}$}
\end{tabular}
     }\\
     \midrule
    \exname{JAK-STAT}~\cite{Raia2011} & 
    {\setlength{\tabcolsep}{0pt}
\renewcommand{\arraystretch}{1}\setlength{\extrarowheight}{0pt}\begin{tabular}{r}
        $1300$ functions;\\
        indeterminates $t_{15}$ and $t_{21}$ appear\\
        in functions of degree at least $4$
     \end{tabular}
     }
     & 
    {
    \setlength{\tabcolsep}{0pt}
\renewcommand{\arraystretch}{1}\setlength{\extrarowheight}{0pt}\begin{tabular}{r}$t_1, t_2, t_3, t_4, t_5, t_6, t_7, t_8, t_9, $\\$t_{10}, t_{12}, t_{13}, t_{14}, t_{16}, t_{18}, t_{19}, $\\$t_{20}, t_{17} t_{22}, t_{15} t_{21}, t_{11} t_{21}$
\end{tabular}
     }\\
    \end{tabular}
    \caption{Results of our simplification procedure: field generators before and after applying~\Cref{alg:main}}
  \label{table:simplify}
\end{table}
}

Our benchmark suite consists of 53 examples, which correspond to problems found in literature. Many of these examples have been collected by Barreiro and Villaverde in~\cite{on-the-origins,ReyBarreiro2023} and in the works referenced therein. For each example, we construct a list of rational functions, the generating set to be simplified, and pass it to our main algorithm. We run the algorithm with the same parameters for all examples: \grobner{} bases are computed in degrevlex ordering, \Cref{alg:polynomials} uses the degree bound $\delta \coloneqq 3$, and \ref{main:step:kind-of-enforce-minimality} 
orders the generators using the strategy from \Cref{remark:onsimplicity} and does not enforce minimality.

In \Cref{table:simplify}, we list results for a selection of
examples (the rest are given in \Cref{app:simplify_everyone}) and also include \Cref{ex:simple-OMS-simplify} to clarify the format. 
In this section, for brevity, the degree of a rational function means the sum of the degrees of its numerator and denominator.
Since we did not state the simplification problem in a fully formal way, we have no single qualitative measure to assess the quality of simplification.
Therefore, we use some natural related measures (such as the number of generators, their degrees, and further criteria from \Cref{remark:onsimplicity}) in combination with manual inspection.
Analyzing the overall results, we make the following observations, which we will illustrate using the examples from the table.
\begin{itemize}
    \item \Cref{alg:main} drastically reduces the size 
    of the generating sets.
    Note that this does not amount to simply removing excessive generators: we observe that even the \emph{minimal} degrees of generators for some of the variables may be reduced significantly.

    \item In particular, we see instances of nontrivial rewriting. In the \exname{genLV} example, the original generating set has a common denominator $\beta_{12}$, and the variables $r_1, r_2$ always appear together and in fractions with the numerator of degree at least two. 
    In the simplified generating set there is only one fraction, and $r_1$ and $r_2$ are separated. 
    In the \exname{Pharm} example, every variable appears in the original generating set in an element of degree at least two, but the result of simplification is of degree one. 
    \item Simplification reveals that some subfields can be generated by polynomials. In the \exname{JAK-STAT} example, the original generating set contains rational functions with denominators of degree up to six, but the subfield can be generated by simple polynomials only. The sorting criteria from \Cref{remark:onsimplicity} facilitate such simplifications by favoring elements with small denominators. Overall, in 26 out of 53 examples, the original generating set contained rational functions with non-constant denominators, but the simplification revealed that the subfield 
    can
    be generated by polynomials only.
    \item In every but one example (\exname{Bilirubin}), the resulting generating set is algebraically independent over~$\mathbb{C}$. Independence implies that the generating set contains the fewest elements possible~\cite[Ch.~VIII, Theorem 1.1.]{Lang}.
\end{itemize}

Each example was run with a 1-hour timeout and a 50 GB RAM limit. The median running time was $20$ seconds. The longest running example was \exname{Ovarian\_follicle} with the running time of $23$ minutes. One example, \exname{Covid-2}, did not finish getting stuck in \ref{main:step:gb} and \ref{main:step:check-after-gb} of \Cref{alg:main}. 
This example stands apart because the coefficients of the OMS ideal are considerably larger than the input, making interpolation infeasible.
Even in the case of succesfull computation of these coefficients, it is not clear whether they would be useful for simplification.
More precisely, the largest degree in the input is 29, while the coefficients of the \grobner{} basis of the OMS ideal that generate the subfield are of degree 229. This is not typical for examples coming from this domain, and we do not what makes this example special.

From this experiment, we conclude that our method effectively deals with most of the real-world examples in the benchmark suite. 
In \Cref{sec:app}, we will discuss the usefulness of such simplifications in applications, including but not limited to structural identifiability for ODEs.
Additionally, given that we handle most of the benchmark problems, a natural direction would be to apply our algorithm to harder examples from the same area. One current limitation is that producing generating sets for such examples using the approach from~\cite{stident} (see \Cref{sec:app_ode}) is itself computationally demanding.

\subsection{Comparison with other methods for simplification}\label{sec:comparison_simplification}

We will compare our method with the simplification algorithm from~\cite{allident} implemented in Maple\footnote{We used the implementation from \url{https://github.com/pogudingleb/AllIdentifiableFunctions}}. This method also uses the coefficients of OMS ideals, but computes them directly instead of using interpolation. To pick the simplest elements from a generating set, it orders the functions by the length of their string representation. It does not implement \Cref{alg:polynomials}. This method has been developed with the applications in mind that are similar to ours.

{\small\begin{table}[H]
\setlength{\tabcolsep}{4pt}
\renewcommand{\arraystretch}{1.0}
\setlength\extrarowheight{0pt}
\centering
\begin{adjustbox}{center}
    \begin{tabular}{p{3.2cm}rr}
    Example & Result of~\cite{allident} & Result of~\Cref{alg:main}
    \\
    \midrule
    \midrule
    {\exname{Biohydrogenation}}~\cite{MOATE2008731}
    & 
    {\setlength{\tabcolsep}{0pt}
\renewcommand{\arraystretch}{1}\setlength{\extrarowheight}{0pt}\begin{tabular}{r}
$k_5, k_6, k_7, k_9^2, k_{10}/k_9$,\\$(k_{10} k_9 + 2 k_8 k_9) / k_{10}$
\end{tabular}}
& 
{\setlength{\tabcolsep}{0pt}
\renewcommand{\arraystretch}{1}\setlength{\extrarowheight}{0pt}\begin{tabular}{r}$k_5, k_6, k_7, k_9^2, k_9 k_{10},$\\$2 k_8 + k_{10}$
\end{tabular}}
\\
\midrule
\exname{SEIR1}~\cite{Zha2020} & 
{\setlength{\tabcolsep}{0pt}
\renewcommand{\arraystretch}{1}\setlength{\extrarowheight}{0pt}\begin{tabular}{r}
$\gamma, \beta/\psi, \gamma \psi - \psi - v,$\\
$\beta^2(\gamma- v)/(\gamma \psi - \psi)$
\end{tabular}}
&
{\setlength{\tabcolsep}{0pt}
\renewcommand{\arraystretch}{1}\setlength{\extrarowheight}{0pt}\begin{tabular}{r}
$\gamma, \beta/\psi, \psi v - \psi - v,$\\
$\gamma \psi - \psi v$
\end{tabular}}
\\
\midrule
\exname{SLIQR}~\cite{Dankwa2021} &
{\setlength{\tabcolsep}{0pt}
\renewcommand{\arraystretch}{1}\setlength{\extrarowheight}{0pt}\begin{tabular}{r}
$\sigma, \beta, N, \alpha + \gamma, \frac{(\eta \sigma + \alpha - \sigma)}{\alpha \eta},$\\  {\footnotesize $\frac{\eta^2 \gamma \sigma^2+\alpha^2 \eta \sigma+3 \alpha \eta \gamma \sigma - \alpha \eta \sigma^2 - 2 \eta \gamma \sigma^2 + \alpha^2 \gamma - 2 \alpha \gamma \sigma + \gamma \sigma^2}{\eta \sigma + \alpha - \sigma}$}
\end{tabular}}
&
{\setlength{\tabcolsep}{0pt}
\renewcommand{\arraystretch}{1}\setlength{\extrarowheight}{0pt}\begin{tabular}{r}
$\sigma, \beta, N, \alpha + \gamma, \alpha \eta \gamma,$\\$\alpha \gamma + \eta \gamma \sigma - \gamma \sigma$
\end{tabular}}
\\
\midrule
\exname{Bilirubin}~\cite{combos}
& 
{\footnotesize\setlength{\tabcolsep}{0pt}
\renewcommand{\arraystretch}{1}\setlength{\extrarowheight}{8pt}\begin{tabular}{r}
$k_{01}, k_{12}+k_{13}+k_{14}, k_{21}+k_{31}+k_{41}$,\\
$\frac{k_{12}^2 k_{13}-k_{12}^2 k_{14}-k_{12} k_{13}^2+k_{12} k_{14}^2+k_{13}^2 k_{14}-k_{13} k_{14}^2}{k_{12} k_{31}-k_{12} k_{41}-k_{13} k_{21}+k_{13} k_{41}+k_{14} k_{21}-k_{14} k_{31}}$,\\$\frac{k_{21}^2 k_{31}-k_{21}^2 k_{41}-k_{21} k_{31}^2+k_{21} k_{41}^2+k_{31}^2 k_{41}-k_{31} k_{41}^2}{k_{12} k_{31}-k_{12} k_{41}-k_{13} k_{21}+k_{13} k_{41}+k_{14} k_{21}-k_{14} k_{31}},$\\$\frac{-k_{12} k_{31}^2+k_{12} k_{41}^2+k_{13} k_{21}^2-k_{13} k_{41}^2-k_{14} k_{21}^2+k_{14} k_{31}^2}{k_{12} k_{31}-k_{12} k_{41}-k_{13} k_{21}+k_{13} k_{41}+k_{14} k_{21}-k_{14} k_{31}}, $\\$\frac{-k_{12}^2 k_{31}+k_{12}^2 k_{41}+k_{13}^2 k_{21}-k_{13}^2 k_{41}-k_{14}^2 k_{21}+k_{14}^2 k_{31}}{k_{12} k_{31}-k_{12} k_{41}-k_{13} k_{21}+k_{13} k_{41}+k_{14} k_{21}-k_{14} k_{31}}$
\end{tabular}}
&
{\footnotesize\setlength{\tabcolsep}{0pt}
\renewcommand{\arraystretch}{1}\setlength{\extrarowheight}{3pt}\begin{tabular}{r}
$k_{01}, k_{12} k_{13} k_{14}, k_{21} k_{31} k_{41}$,\\$k_{12} + k_{13} + k_{14},k_{21} + k_{31} + k_{41},$\\$k_{21} k_{31} + k_{21} k_{41} + k_{31} k_{41}, $\\$k_{12} k_{13} + k_{12} k_{14} + k_{13} k_{14}, $\\$k_{12} k_{31} + k_{12} k_{41} + k_{13} k_{21} + k_{13} k_{41} $\\$+ k_{14} k_{21} + k_{14} k_{31}, k_{21} + k_{31} + k_{41}$
\end{tabular}}
\\
    \end{tabular}
    \end{adjustbox}
    \caption{The differences of results obtained by the method from \cite{allident} and our implementation of \Cref{alg:main} on some cherry-picked examples}
  \label{table:simplify-compare-maple}
\end{table}}

We run the method from~\cite{allident} on our benchmark suite, with the same time and memory constraints as in \Cref{sec:experiment-simplify}. Overall, this method does not finish for 8 examples\footnote{The examples where the method does not finish also do not finish with the time limit increased ten-fold}, obtains a result similar to ours for 36 examples, and in the remaining 9 examples our method's results seem better. To corroborate the latter claim, we present the differences for some of the results in \Cref{table:simplify-compare-maple}. For \exname{Biohydrogenation}, \exname{SEIR1}, and \exname{SLIQR}, the methods produce generating sets with the same number of elements; however, our method produces elements of smaller degrees. In the \exname{Bilirubin} example, our output is visually simpler but contains more elements (8 vs. 7); this also means that the result of~\cite{allident} is algebraically independent, while ours is not. Of course, in some use cases, the result of \cite{allident} may be preferable.

We conclude that our method appears to be an improvement over \cite{allident} both in terms of better performance and for producing simpler generating sets.

\begin{remark}[Computing the algebraic closure]
  In some applications, such as structural parameter identifiability, it may be of interest to consider not only a subfield 
  $\mathcal{E}$ of $\QQ(\bx)$
  but also its relative algebraic closure $\mathcal{E}^{\operatorname{rel}}$ in $\QQ(\bx)$.
  An approach for finding a simple transcendence basis for $\mathcal{E}^{\operatorname{rel}}$ based on a construction reminiscent to OMS ideals was proposed in a series of papers~\cite{combinations_combos_algo, Meshkat2011, combos} and implemented in the COMBOS web-application~\cite{combos}.
  A~completely different approach 
  for finding simple elements in $\mathcal{E}^{\operatorname{rel}}$,
  based on linear algebra and avoiding Gr\"obner basis computations, was proposed in~\cite[Section~5]{MRS2018}.
  A direct comparison of these methods to ours is complicated: for example, all the fields in~\Cref{table:simplify-compare-maple} have the full $\mathbb{Q}(\bx)$ as the algebraic closure, so $\bx$ itself could be taken as a simple transcendence basis of $\mathcal{E}^{\operatorname{rel}}$. 
\end{remark}

\begin{example}
    Some simplified generators appearing in the literature can be sometimes simplified even further using our algorithm. Consider, for example, the following field generators produced by COMBOS in~\cite[Test-Model 2, $n=3$]{combos}:
    \[
    \begin{aligned}
    &p_{11}, p_{21}, p_{22} + p_{33}, p_{23}/p_{13}, p_{22} - p_{12} p_{23} / p_{13},\\
    &p_{12} p_{23} p_{33} / p_{13} - p_{23} p_{32} + p_{12} p_{22} p_{23} / p_{13} - p_{22}^2,\\
    &p_{11} p_{23} p_{33} / p_{13} - p_{23} p_{31} + p_{11} p_{22} p_{23} / p_{13} - p_{21} p_{22}.
    \end{aligned}
    \]
    We can either feed these generators to our simplification algorithm, or use the approach from \Cref{sec:app_ode}. Both produce the same result, which is visually simpler and contains elements of lower degrees:
    \[
     p_{11}, p_{21}, p_{22} + p_{33}, p_{13}/p_{23}, p_{22} p_{33} - p_{23} p_{32}, p_{21} p_{33} - p_{23} p_{31}, p_{12} p_{21} + p_{13} p_{31}.
    \]
\end{example}

\section{Applications}
\label{sec:app}

In this section, we present three sets of examples that showcase the simplifications achieved by our algorithm. 
We start with examples arising in the study of structural identifiability of ODE models in~\Cref{sec:app_ode}.
This application domain has been our original motivation for this work.
Then we show that our algorithm 
produces
interesting results applied in other contexts such as identifiability and observability of discrete-time models (\Cref{sec:app_dde}) and rational invariant theory (\Cref{sec:app_invariants}). 

In addition to the situations where one is interested in simpler expressions for generators (such as the case studies below), we expect that having more concise generators can improve the performance of other algorithms on rational function fields (such as membership testing, see also \Cref{remark:reuse}).


\subsection{Structural identifiability of ODE models}\label{sec:app_ode}

The main motivation for the algorithmic developments presented in this paper comes from the problem of computing \emph{identifiable functions} of dynamical models.
We start the section with a high-level explanation of the way generator simplification question appears in modeling and its importance.
For further technical details, we refer to~\cite{combinations_combos_algo,OPT,allident}.
Then we will showcase simplifications achieved by our algorithm using several systems from modeling literature and demonstrate that the simplified generating set is better suited to derive insights about the structure of the model.
However, since mathematical modeling is not a central topic in this paper, we will not get deep into the domain-specific details.

We consider a parametric system of differential equations
\begin{equation}\label{eq:ode}
\bx' = \bg(\bmu, \bx),
\end{equation}
where $\bx$ is a vector of unknown functions of time, $\bmu$ is a vector of scalar parameters, and $\bg$ is a vector of rational functions from $\CC(\bx, \bmu)$ defining the dynamics.
The system may also involve external inputs, we will omit them to simplify the presentation.
In a typical experimental setup, it is not feasible to measure the values of all the coordinates of $\bx$.
Thus, data is available only for some functions of $\bx$ and $\bmu$ typically referred to as \emph{outputs} of the model.
For the sake of simplicity, we will consider the case of a single output $y = f(\bmu, \bx)$.

A parameter $\mu_i$ or, more generally, a function $h(\bmu)$ of parameters is \emph{(structurally) identifiable} if its value can be uniquely reconstructed from the output data assuming continuous and noise-free measurements (see~\cite[Section~2]{HOPY} for more details).
Since identifiability is a prerequisite for meaningful parameter estimation, it is an important and desirable property.
The set of identifiable functions can be shown to be closed under addition, multiplication, and division, so they form \emph{the field of identifiable functions}.
The elements of this field (also called identifiable combinations) are exactly the quantities whose values can be in principle inferred from data, so computing a convenient representation of this field is a natural question.

\begin{example}\label{ex:toy_ident}
    Consider a model
    \[
    x_1' = \mu_1 x_2,\quad x_2' = -\mu_2x_1,
    \]
    and assume that the observation is $y = x_1$, the first coordinate.
    First, we observe that the scaling $x_2 \to \lambda x_2, \mu_2 \to \lambda \mu_2, \mu_1 \to \frac{1}{\lambda}\mu_1$ preserves the output function for any nonzero constant $\lambda$.
    Thus, neither of $\mu_1$ and $\mu_2$ can be identified even from perfect knowledge of $y$.
    On the other hand, $y'' = -\mu_1\mu_2 y$, so the product $\mu_1\mu_2$ can be found as long as $y$ is not identically zero.
\end{example}

The classical approach for finding a generating set of the field of identifiable functions going back to~\cite{OllivierPhD} is the following.
First, we compute the minimal polynomial differential equation (first with respect to the order, then degree) with coefficients in $\CC(\bmu)$ satisfied by $y$ (called \emph{the input-output} equation).
By normalizing one of the coefficients to one, we can write it as
\begin{equation}\label{eq:ioeq}
m_0(y, \ldots, y^{(h)}) + \sum\limits_{i = 1}^M c_i(\bmu) m_i(y, \ldots, y^{(h)}) = 0,
\end{equation}
where $m_0, \ldots, m_M$ are distinct monomials in $y, y', \ldots, y^{(h)}$ and $c_1, \ldots, c_M \in \CC(\bmu)$.
Then, under certain mild assumptions, the functions $c_1, \ldots, c_M$ generate the whole field of identifiable functions~\cite[Corollary~1]{OPT}.

\begin{example}[\Cref{ex:toy_ident} continued]
    The minimal differential equation satisfied by $y$ in this example is $y'' + \mu_1\mu_2y = 0$.
    Therefore, \cite[Corollary~1]{OPT} implies that identifiable functions for this model are precisely rational functions in $\mu_1\mu_2$.
\end{example}

Unlike the toy example above, in more realistic models, the functions $c_1, \ldots, c_M$ are often large and complicated (see~\Cref{app:simplify_everyone}); this makes such generating set practically useless.
Thus, the challenge is, given a generating set $c_1, \ldots, c_M$, to produce a \emph{simpler} one which can later be understood and interpreted by a modeler.
Below we will demonstrate our simplification algorithm on several fields coming from the structural identifiability  problem.

\begin{example}[Compartmental model]
\label{ex:compmodel}
Consider the following linear model (which was employed, for example, for modeling bilirubin kinetics, see~\cite[Figure~1]{Berk1969} and~\cite[Example~4]{combos}):
\begin{equation}
    \vcenter{\hbox{\begin{tikzpicture}[
    node distance=3cm,
    state/.style={rectangle, draw, minimum width=1cm, minimum height=1cm, thick},
    arrow/.style={->,>=Stealth,thick}
]
\node[state] (X1) at (0,0) {$x_1$};
\node[state] (X2) at (3,0) {$x_2$};
\node[state] (X3) at (0,-2.5) {$x_3$};

\draw[arrow, transform canvas={yshift=0.5em}] (X1.east) -- node[above] {$\mu_{21}$} (X2.west);
\draw[arrow, transform canvas={yshift=-0.5em}] (X2.west) -- node[below] {$\mu_{12}$} (X1.east);
\draw[arrow, transform canvas={xshift=-0.5em}] (X1.south) -- node[left] {$\mu_{31}$} (X3.north);
\draw[arrow, transform canvas={xshift=0.5em}] (X3.north) -- node[right] {$\mu_{13}$} (X1.south);
\draw[arrow] (X1.west) -- node[above] {$\mu_{01}$} ++(-1.2,0);
\draw[arrow] (0,1.5) -- node[right] {$u$} (X1.north);

\end{tikzpicture}}}\qquad
    \begin{cases}
        x_1'(t) = -(\mu_{21} + \mu_{31} + \mu_{01}) x_1(t) + \\
        \hspace{15mm}\mu_{12} x_2(t) + \mu_{13} x_3(t)  + u(t),\\
       x_2'(t) = \mu_{21} x_1(t) - \mu_{12} x_2(t),\\
       x_3'(t) = \mu_{31} x_1(t) - \mu_{13} x_3(t).
    \end{cases}\quad 
\end{equation}
In this model, we have
\[
 \bx  = [x_1,\; x_2,\; x_3] \quad \text{ and } \quad \bmu  = [\mu_{01}, \mu_{12}, \mu_{13}, \mu_{21}, \mu_{31}],
\]
observations are taken to be $y(t) = x_1(t)$, and $u(t)$ is an external input.
Due to the linearity of the model, there will be not too many coefficients in the input-output equation~\eqref{eq:ioeq}, and taking them directly would yield the following
generators of the field of identifiable functions:
\[
\mu_{12} + \mu_{13},\; \mu_{12} \mu_{13},\; \mu_{01} \mu_{12} \mu_{13},\; \mu_{01} + \mu_{12} + \mu_{13} + \mu_{21} + \mu_{31}, \;   \mu_{01} \mu_{12} + \mu_{01} \mu_{13} + \mu_{12} \mu_{13} + \mu_{12} \mu_{31} + \mu_{13} \mu_{21}.
\]
The simplified generating set produced by our~\Cref{alg:main} is
\begin{equation}\label{eq:lincomp_simplified}
 \mu_{01},\; \mu_{21} + \mu_{31},\; \mu_{21} \mu_{31}, \; \mu_{12} + \mu_{13}, \; \mu_{12} \mu_{13}, \; \mu_{12}\mu_{31} + \mu_{13} \mu_{21}.
\end{equation}
This generating set reveals several important properties of the model:
\begin{itemize}
    \item identifiability of $\mu_{01}$ is made explicit (in the original generating set, $\mu_{01}$ could be obtained as the quotient of the second and third generators);
    \item the symmetry with respect to the exchange of the second and third compartments ($x_2$ and $x_3$) also becomes explicit; it is, furthermore, clear that the field is precisely the field of invariants with respect to the simultaneous exchange $(\mu_{12},\mu_{21}) \leftrightarrow (\mu_{13}, \mu_{31})$.
\end{itemize}
We note, however, that the simplified
generating set contains one more element than the original one (6 vs. 5), so the generators are not independent.
The minimality and independence can be restored by applying~\Cref{alg:minimization}, the resulting set will be~\eqref{eq:lincomp_simplified} without $\mu_{12} \mu_{13}$.
While this set is smaller, it does not reveal the symmetry $x_2 \leftrightarrow x_3$ as explicitly, so, depending on the goals, one of these two outputs may be preferable.
\end{example}

\begin{example}[\exname{SIRT} epidemiological model]
We consider an epidemiological model describing a spread of infection which takes into account that the individuals receiving treatment are less infectious~\cite[Eq. (2.3)]{Tuncer2018}:
\begin{equation}\label{eq:SIRT_ode}
\begin{cases}
    S'(t) = -\frac{\beta S(t) I(t)}{N} - \frac{\delta \beta S(t) T(t)}{N},\\
    I'(t) = \frac{\beta S(t) I(t)}{N} + \frac{\delta \beta S(t) T(t)}{N} - (\alpha + \gamma) I(t),\\
    T'(t) = \gamma I(t) - \nu T(t).
\end{cases}\quad \text{where} \quad \begin{matrix}
            \bx & = & [S,\; I,\; T]\\
            \bmu & = & [\alpha, \beta, \gamma, \delta, \nu, N]
        \end{matrix}
\end{equation}
In this model, $S$ and $I$ denote the number of susceptible and infectious individuals, respectively, while $T$ stands for the number of individuals receiving treatment.
The observation is taken to be $y(t) = T(t)$.

The generating set of the field of identifiable functions extracted directly from the input-output equation~\eqref{eq:ioeq} consists of twelve rational elements: polynomials up to degree six and rational functions with linear denominator and numerator up to degree five.
\Cref{alg:main} produces the following generating set for the same field:
\begin{equation}\label{eq:SIRT}
 \alpha + \gamma + \nu, \quad \nu (\alpha + \gamma),\quad \nu + \delta \gamma,\quad \frac{N \gamma}{\beta}.
\end{equation}
The first two generators are the elementary symmetric functions in $\alpha + \gamma$ and $\nu$ hinting at a symmetry between the outflow rates of the $I$ and $T$ compartments of the model.
Indeed, such a symmetry is a recurring motif (see, e.g., \cite[A.3, Eq. (9)]{Hong2019} and~\cite[Example 4.1]{glebhdr}).

To facilitate the interpretation of the third generator in~\eqref{eq:SIRT}, we observe that the nonlinear terms in the first two equations~\eqref{eq:SIRT_ode} can be factored as $\frac{\beta S(t)}{N} (I(t) + \delta T(t))$ suggesting $\widetilde{I}(t) := I(t) + \delta T(t)$ (``total infectiousness'') as a natural quantity to consider.
Rewriting the equation for $T$ in terms of $T$ and $\widetilde{I}(t)$, we get
\[
T'(t) = \gamma \widetilde{I}(t) - (\dashuline{\nu + \gamma\delta}) T(t), 
\]
where the third generator from~\eqref{eq:SIRT} appears as an outflow rate.
\end{example}

\begin{example}[\exname{SLIQR} epidemiological model]\label{ex:SLIQR}
    Consider an epidemiological model describing a spread of infection which takes into account the latent period and possible remissions (\cite[Equation~(4)]{Dankwa2022}):
    \begin{equation}
        \begin{cases}
            S'(t) = -\frac{\beta S(t)I(t)}{N} - v(t) \frac{S(t)}{N},\\
            L'(t) = \frac{\beta S(t)I(t)}{N} - \alpha L(t),\\
            I'(t) = \alpha L(t) + \sigma Q(t) -\gamma I(t),\\
            Q'(t) = \gamma (1 - \eta) I(t) - \sigma Q(t)
        \end{cases}\quad \text{where} \quad \begin{matrix}
            \bx & = & [S,\; L,\; I,\; Q]\\
            \bmu & = & [\alpha,\beta, \gamma, \eta,\sigma, N]
        \end{matrix}
    \end{equation}
    In this model, $S$ and $I$ denote the number of susceptible and infectious individuals, respectively.
    $L$ and $Q$ stand for the number of individuals in the latent and remission stages, respectively.
    Furthermore, $v(t)$ is an external 
    input
    (the proportion of vaccinated individuals) and the observation is taken to be the prevalence (the proportion of infectious individuals), that is, $y(t) = \frac{I(t)}{N}$.

    The generating set of the field of identifiable functions extracted directly from the input-output equation~\eqref{eq:ioeq} consists of 40 polynomials
    with the degree reaching eight.
    \Cref{alg:main} produces the following set of generators for the same field: 
    \[
      \beta,\quad \sigma,\quad N, \quad \alpha + \gamma, \quad \alpha\eta\gamma,\quad \alpha \gamma + (\eta - 1) \gamma \sigma.
    \]
    First, this generating set explicitly features identifiable parameters $\beta, \sigma$, and $N$.
    Two of the remaining generators appear naturally if we introduce a new quantity, $C(t) := I'(t)$, the number of new infected cases, a quantity frequently used in epidemiology (also called incidence).
    A direct computation shows that the differential
    equation for
    $C(t)$ will be
    \[
    C'(t) = \frac{\alpha \beta S(t) I(t)}{N} - \dashuline{(\alpha + \gamma)} C(t) - \dashuline{(\alpha \gamma + (\eta - 1)\gamma \sigma )} I(t) + \sigma (\alpha - \sigma)Q(t).
    \]
    This equation explicitly involves the forth and sixth generators as the flow rates.
\end{example}


\subsection{Identifiability and observability for discrete-time systems}\label{sec:app_dde}

In this section we will apply our algorithm for the problems of similar flavor as in the previous subsection but for models in discrete time.
The general setup is very similar although the computation will be quite different.
Specifically, we consider a parametric system of difference equations (cf.~\eqref{eq:ode}):
\begin{equation}\label{eq:discr_ode}
    \bx(t + 1) = \bg(\bmu, \bx(t)),
\end{equation}
where $\bx(t)$ is a vector of time-series (for $t = 0, 1, 2, \ldots$), $\bmu$ is a vector of unknown parameters, and $\bg$ is a vector of rational functions from $\CC(\bx, \bmu)$.
We assume that we observe an output time series $y(t) = f(\bmu, \bx(t))$, and are interested in identifiable functions in this context.
In other words, we are given a sequence $y(0), y(1), y(2), \ldots$ or a finite part of it, and we want to find out which functions of parameters $\bmu$ and initial conditions $\bx(0)$ we can reconstruct.
For generic values of $\bmu$ and $\bx(0)$ (so that none of the relevant denominators vanish), these functions form a field which we will call the field of identifiable functions (in the literature, for initial conditions $\bx(0)$, term \emph{observable functions} is also often employed).

Despite of the similarity of the general question to the one in Section~\ref{sec:app_ode}, the approach via input-output equations does not work in the difference case as well as it does in the differential (see~\cite{LGEL2011} for a discussion).
Therefore, we will use the following more direct method (cf.~\cite{Kawano2011}).
We define rational functions $\bg^{\circ j}(\bmu, \bx(0))$ by
\[
  \bg^{\circ 0}(\bmu, \bx(0)) := \bx(0)\quad \text{and} \quad \bg^{\circ (j + 1)}(\bmu, \bx(0)) := \bg(\bmu, \bg^{\circ j}(\bmu, \bx(0))) \text{ for }j \geqslant 0.
\]
Comparing with~\eqref{eq:discr_ode}, we observe that $\bx(j) = \bg^{(j)}(\bmu, \bx(0))$.
Thus, $y(j)$ can be written as a rational function $f(\bmu, \bg^{(j)}(\bmu, \bx(0)))$, and the field of functions identifiable from $y(0), \ldots, y(\ell)$ is generated by these rational functions for $j = 0, \ldots, \ell$.
However, as we will see below, these rational functions become complicated very quickly, so simplification is a key step in finding an interpretable set of generators.

\begin{example}[General fractional-linear model]
    We will consider a one-dimensional model with dynamics defined by a general fractional-linear function:
    \[
    x(t + 1) = \frac{\mu_1 x(t) + \mu_2}{\mu_3 x(t) + \mu_4}, \quad y(t) = x(t).
    \]
    We start with computing simple generators for the field generated by the values of the output at $t = 0, 1, 2, 3$.
    These values can be written as rational functions in $\bmu, x(0)$ as follows:
    \begin{align*}
    y(0) &= x(0), \;\; y(1) = \frac{\mu_1 x(0) + \mu_2}{\mu_3 x(0) + \mu_4},\;\; y(2) = \frac{(\mu_1^2 + \mu_2 \mu_3) x(0) + \mu_1 \mu_2 + \mu_2 \mu_4}{(\mu_1 \mu_3 + \mu_3 \mu_4)x(0) + \mu_2 \mu_3 + \mu_4^2} \\
    y(3) &= \frac{(\mu_1^3 + 2  \mu_1 \mu_2 \mu_3 + \mu_2 \mu_3 \mu_4) x(0) + \mu_1^2 \mu_2 + \mu_1 \mu_2 \mu_4 + \mu_2^2 \mu_3 + \mu_2 \mu_4^2}{ (\mu_1^2 \mu_3 + \mu_1 \mu_3 \mu_4 + \mu_2 \mu_3^2 +  \mu_3 \mu_4^2) x(0) + \mu_1 \mu_2 \mu_3 + 2 \mu_2 \mu_3 \mu_4 + \mu_4^3}
    \end{align*}
    Our simplification algorithm finds the following much simpler generators for the same field: 
    \[
    x(0),\;\; \frac{\mu_1}{\mu_4}, \;\;\frac{\mu_2}{\mu_4}, \;\;\frac{\mu_3}{\mu_4}.
    \]
    One can show that taking further time points will not add any new information since the fraction in the right-hand side is invariant under multiplying all the $\mu$'s by the same number.
    On the other hand, 
    the output of our algorithm suggests
    that the fields obtained by taking fewer time points are strictly smaller.
    Therefore, the field of identifiable functions is generated by $x(0), \frac{\mu_1}{\mu_4}, \frac{\mu_2}{\mu_4}, \frac{\mu_3}{\mu_4}$ and four time points are enough to find these values.
\end{example}

\begin{example}[Logistic growth]
    The following logistic growth model is used, for example, in population modeling~\cite[p. 12]{Allman2003}:
    \[
    x(t + 1) = x(t) \left(1 + \mu_1\left(1 - \frac{x(t)}{\mu_2} \right) \right).
    \]
    We will further assume that what is observed is an unknown but fixed fraction of the population, that is, $y(t) = \mu_3 x(t)$.
    In order to find identifiable functions, we will consider the field generated by $y(0), y(1)$, and $y(2)$, where the latter is of the form $\frac{P(x(0), \bmu)}{\mu_2^3}$ with $\deg P = 8$.
    In contrast to this, our algorithm finds the following generators for the same field
    \[
    \mu_3 x(0), \;\; \mu_1, \;\; \mu_2 \mu_3.
    \]
    By repeating this computation with $y(4)$, one can show that adding new time points does not change the generators. 
    Since the field is smaller if only one or two time points are taken, we conclude that the three functions above generate the field of identifiable functions and knowledge of only three data points is generically enough to reconstruct them.
\end{example}

\begin{example}[SIS/SIR epidemiological model]
    The following model combines the dynamics of discrete-time SIS and SIR models~\cite[Sections~3 and 4]{Allen1994}:
        \begin{equation}
        \begin{cases}
            S(t + 1) = S(t) (1 - \beta I(t)) + \alpha I(t),\\
            I(t + 1) = I(t) (1 - \alpha - \gamma + \beta S(t)),\\
            R(t + 1) = R(t) + \gamma I(t),\\
            y(t) = \eta I(t),
        \end{cases}\quad \text{where} \quad \begin{matrix}
            \bx & = & [S,\; I,\; R]\\
            \bmu & = & [\alpha, \beta, \gamma,   \eta]
        \end{matrix}
    \end{equation}
    As in the previous examples, we compute the field generated by the values $y(0), y(1), \ldots$ and observe that after $y(0), \ldots, y(3)$ the field stabilizes.
    Simplifying $y(0), \ldots, y(3)$, where the latter is a polynomial of degree $16$ with $102$ terms, we obtain a much simpler generating set of identifiable functions:
    \[
    \gamma,\;\; \eta I(0),\;\; \beta I(0),\;\; \beta S(0) - \alpha.
    \]
    These simpler expressions also admit biological interpretations: $\eta I(0)$ is the observed number of infections at $t = 0$, $\beta I(0)$ is essentially the total infection rate at $t = 0$, and $\alpha - \beta S(0)$ is the total rate of the flow 
     from the $I$-compartment to the $S$-compartment.
\end{example}


\subsection{Fields of invariants}\label{sec:app_invariants}

Let $G$ be a group and consider its action on an $n$-dimensional $k$-vector space $V$ (in other words, a homomorphism $G \to \operatorname{GL}(V)$).
This action naturally defines the action of $G$ on the ring of polynomials $k[V]$ and the field of rational fractions $k(V)$.
\emph{Invariant theory} studies the invariants of this action: the ring of polynomial invariants $k[V]^G$ and the field of rational invariants $k(V)^G$.
Historically, the invariant theory was focused on the polynomial invariants (see, e.g.~\cite{Weyl}), but in the recent years rational invariants have attracted attention (see, e.g, ~\cite{Bandeira2023,Hubert2025,Kogan2023,blumsmith2025genericorbitsnormalbases,Hubert26Jalard,Hubert25JalardPz,breloer2025rationalinvariantsdegreepolynomials}).
From the applications point of view, invariants are useful to characterize data (signals or images) which is defined up to a group action~\cite{Bandeira2023}.
The degrees of the generators are related to the sampling complexity of the reconstruction problem for such data~\cite{Bandeira2023}, and the generators themselves can be used to distinguish different orbits with respect to the action~\cite{Flusser2000}. 

The first algorithm for computing a generating set of $k(V)^G$ was proposed by M\"uller-Quade and Beth in~\cite{Muller-Quade99b}.
It was further extended by Hubert and Kogan in~\cite{Hubert2007}, in particular, by allowing to provide
a cross-section~\cite[Section~3.1]{Hubert2007} of the group action as an extra input (for some other generalizations, see~\cite{Kemper2007}).
Providing a cross-section typically reduces the computation time and results in a simpler generating set.
In the following examples we show that the methods and software described in this paper may simplify further the generating sets produced by the state of the art algorithms.

For the computations below, we used an implementation of the Hubert-Kogan algorithm from the AIDA~\cite{Aida} Maple package.

\begin{example}[$S_3$-action]
Consider the action of the symmetric group $S_3$ on $6$-dimensional space (inspired by~\cite[Section~4.1]{Gorlach19}) such that a permutation $\sigma \in S_3$ acts on a vector $(a_1, a_2, a_3, b_1, b_2, b_3) \in \mathbb{R}^6$ as follows:
\[
\sigma(a_1, a_2, a_3, b_1, b_2, b_3) = (a_{\sigma(1)},\; a_{\sigma(2)},\; a_{\sigma(3)},\; (-1)^\sigma b_{\sigma(1)},\;(-1)^\sigma b_{\sigma(2)},\; (-1)^\sigma b_{\sigma(3)} ),
\]
where $(-1)^\sigma$ denotes the sign of the permutation.
Finite groups do not admit cross-sections, so we run the Hubert-Kogan algorithm without providing any additional information.
The resulting generating set contains a natural symmetric polynomial $a_1 + a_2 + a_3$ and $33$ rational functions with total degrees ranging from five to seven.
We apply our simplification algorithm to this set and obtain a
simpler generating set consisting of seven polynomials only:
\begin{align*}
&a_1 + a_2 + a_3,\quad a_1 a_2 + a_1 a_3 + a_2 a_3,\quad a_1 a_2 a_3,\quad b_1 b_2 + b_1 b_3 + b_2 b_3,\\
&b_1^2 + b_2^2 + b_3^2,\quad a_1 b_2 - a_1 b_3 - a_2 b_1 + a_2 b_3 + a_3 b_1 - a_3 b_2,\quad a_1 b_2 b_3 + a_2 b_1 b_3 + a_3 b_1 b_2.
\end{align*}
\end{example}

\begin{example}[Triangular action with swap]\label{ex:triang}
    Consider a space $V$ of at most quadratic bivariate polynomials without constant term $a_{01}x + a_{10}y + a_{02} x^2 + 2 a_{11} x y + a_{20} y^2$.
    Group $T_2$ of nonsingular upper-triangular $2 \times 2$ matrices acts on $V$ via its action on the $(x, y)$-space.
    We define the action of $T_2 \times (\mathbb{Z} / 2\mathbb{Z})$ on $V \oplus V \cong \mathbb{R}^{10}$ by $T_2$ acting on the summands and $(\mathbb{Z} / 2\mathbb{Z})$ swapping the two copies of $V$.
    The coordinates on the second copy of $V$ will be denoted by $b_{01}, b_{10}, b_{02}, b_{11}, b_{20}$.

    We apply the Hubert-Kogan algorithm with the cross-section $a_{11} = 0, a_{02} = 1$.
    This cross-section is chosen as the one with the simplest output among all cross-sections consisting of the equations of the form $x = 0$ or $x = 1$, where $x$ is a variable\footnote{We decided to use this class of cross-sections based on the existing literature and private communication with Evelyne Hubert.}, for which the computation finishes within 10 minutes.
    The resulting generating set contains $\frac{a_{11} b_{02} - a_{02} b_{11}}{a_{02} - b_{02}}$ and $\frac{a_{02}^2 + b_{02}^2}{a_{02} b_{02}}$ as well as 18 rational function of total degrees ranging from 5 to 11.
    Our simplification algorithm produces the following generating set consisting of five polynomials and four rational functions (including the two simple elements of the original generating set) with degrees ranging from two to four:
    \begin{align*}
        &a_{20} b_{02} - 2 a_{11} b_{11} + a_{02} b_{20}, \quad a_{20} a_{02} - a_{11}^2 + b_{20} b_{02} - b_{11}^2, \\
        &a_{10}^2 a_{02} - 2 a_{10} a_{01} a_{11} + a_{01}^2 a_{20} + b_{10}^2 b_{02} - 2 b_{10} b_{01} b_{11} + b_{01}^2 b_{20},\\
        &a_{10}^2 b_{02} - 2 a_{10} a_{01} b_{11} + a_{01}^2 b_{20} + a_{20} b_{01}^2 - 2 a_{11} b_{10} b_{01} + a_{02} b_{10}^2,\\
        &\frac{a_{01} b_{01}}{a_{02} + b_{02}}, \quad \frac{a_{11} b_{02} - a_{02} b_{11}}{a_{02} - b_{02}},\quad \frac{a_{10} b_{01} - a_{01} b_{10}}{a_{20} a_{02} - a_{11}^2 - b_{20} b_{02} + b_{11}^2}, \quad \frac{a_{02}^2 + b_{02}^2}{a_{02} b_{02}},\\
        &a_{10} a_{11} b_{01} - a_{10} a_{02} b_{10} - a_{10} b_{10} b_{02} + a_{10} b_{01} b_{11} - a_{01} a_{20} b_{01} + a_{01} a_{11} b_{10} + a_{01} b_{10} b_{11} - a_{01} b_{01} b_{20}.
    \end{align*}
\end{example}

\begin{example}[Moment invariants for 2D images]
\label{example:moments}
    In this example, we will focus on \emph{moment invariants} which is a popular tool in computer vision and patter recognition~\cite{MingKueiHu1962, Flusser2000, Flusser2009, Hickman2011}.
    For our purposes, for any fixed integer $d$, the space of moments $M_d$ will simply be the space of nonconstant monomials of degree at most $d$ in variables $x$ and $y$.
    For example, $M_2$ is spanned by $x, y, x^2, xy, y^2$.
    The monomial $x^i y^j$ will be denoted by $m_{i, j}$.
    We will consider the action of the planar rotation group $\operatorname{SO}(2)$ induced by its action on the $(x, y)$-plane. 
    More precisely:
    \[
    \text{for } R = \begin{pmatrix}
        a & b \\
        c & d
    \end{pmatrix} \in \operatorname{SO}(2) \text{ we define } R \cdot m(x, y) := m(ax + by, cx + dy).
    \]
    For example, $m_{2,0} + m_{0, 2} = x^2 + y^2$ is an invariant of this action, since rotations preserve the distance to the origin.
    The same is true for $m_{1, 0}^2 + m_{0, 1}^2$ which represents the same polynomial in $x$ and $y$ but is different as a function of the moments and, thus, is considered a different invariant.

    The problem of computing invariants of this action goes back to the work of Hu~\cite{MingKueiHu1962} who proposed a set of seven invariants in the $d = 3$ case.
    This set turned out to be redundant~\cite{Flusser2000} and was not complete (which is a desirable property, see~\cite{Flusser2021}).
    A new irredundant set of invariants was proposed by Flusser~\cite{Flusser2000}, this set was not generating the whole field $k(M_3)^{\operatorname{SO}(2)}$ but formed its transcendence basis.
    Another set was proposed in~\cite{Hickman2011} which was still not a generating set of $k(M_3)^{\operatorname{SO}(2)}$.
    In the same paper, the Hubert-Kogan algorithm was applied to the problem~\cite[p.~231]{Hickman2011} but the resulting invariants were deemed too complicated.
    We computed generators of $k(M_3)^{\operatorname{SO}(2)}$ using the Hubert-Kogan algorithm with the cross-section $m_{1, 1} = 0$ (the cross-section chosen using the same strategy as in~\Cref{ex:triang}).
    The generating set contained 
    polynomials $m_{2, 0} + m_{0, 2}$ and
    $m_{2, 0} m_{0, 2} - m_{1, 1}^2$, and 13 rational functions with degrees ranging from 8 to 10.
    After applying our simplification algorithm (with the result minimized using~\Cref{alg:minimization}) to this generating set, we arrive at the following set of seven polynomial generators:
\begin{align*}
f_1 =& ~m_{2, 0} + m_{0, 2},\quad f_2 = m_{2, 0} m_{0, 2} - m_{1, 1}^2,\\
f_3 =& ~m_{3, 0} m_{1, 2} - m_{2, 1}^2 + m_{2, 1} m_{0, 3} - m_{1, 2}^2, \quad f_4 = m_{3, 0}^2 + 3 m_{3, 0} m_{1, 2} + 3 m_{2,1} m_{0, 3} + m_{0, 3}^2,\\
f_5 =& ~m_{2, 0} m_{2, 1} m_{0, 3} - m_{2, 0} m_{1, 2}^2 - m_{1, 1} m_{3, 0} m_{0, 3} + m_{1, 1} m_{2, 1} m_{1, 2} + m_{0, 2} m_{3, 0} m_{1, 2} - m_{0, 2} m_{2, 1}^2,\\
f_6 =& ~m_{2, 0} m_{3, 0} m_{0, 3} - m_{2, 0} m_{2, 1} m_{1, 2} - 2 m_{1, 1} m_{3, 0} m_{1, 2} + 2 m_{1, 1} m_{2, 1}^2 + 2 m_{1, 1} m_{2, 1}m_{0, 3} - 2 m_{1, 1} m_{1, 2}^2 \\
&- m_{0, 2} m_{3, 0} m_{0, 3} + m_{0, 2} m_{2, 1} m_{1, 2},\\
f_7 =& ~m_{2, 0} m_{3, 0} m_{1, 2} + 2 m_{2, 0} m_{2, 1} m_{0, 3} + m_{2, 0} m_{0, 3}^2 - 2 m_{1, 1} m_{3, 0} m_{2, 1} - m_{1, 1} m_{3, 0} m_{0, 3} - 3 m_{1, 1} m_{2, 1} m_{1, 2}\\
&- 2 m_{1, 1} m_{1, 2} m_{0, 3} + m_{0, 2} m_{3, 0}^2 + 2 m_{0, 2} m_{3, 0} m_{1, 2} + m_{0, 2} m_{2, 1} m_{0,3}.    
\end{align*}
    This generating set has lower degrees than the list from~\cite{Flusser2000} (3 instead of 4) and smaller maximal monomial count than~\cite[p. 231]{Hickman2011} (10 instead of 15).
    On the other hand, our list recovers several classical invariants: $f_1$ and $f_2$ are equivalent to $\phi_1$ and $\phi_2$ in the Hu's list~\cite{MingKueiHu1962} following the notation from~\cite[p.~1405]{Flusser2000}; the same holds for $f_3$ and $f_4$ above and $\phi_3$ and $\phi_4$ ($\phi_3 = f_4 - 9f_3$ and $\phi_4 = f_4 - f_3$).
    This indicates that our automatic simplification procedure can produce simple generators coherent with the results in the pattern recognition literature.
    However, we would like to point out that, in this case, since $\operatorname{SO}_2(\mathbb{C}) \cong \mathbb{C}^*$,
    more refined methods like~\cite{Hubert2013} could be used instead for invariant computation (in the spirit of~\cite{Flusser2000})\footnote{We thank Evelyne Hubert for this remark}.
\end{example}

\bibliographystyle{abbrvnat}
\bibliography{bib}

@inproceedings{MRS2018,
  title = {Algebraic tools for the analysis of state space models},
  ISSN = {0920-1971},
  url = {http://dx.doi.org/10.2969/aspm/07710171},
  booktitle = {The 50th Anniversary of Gr\"{o}bner Bases},
  publisher = {Mathematical Society of Japan},
  author = {Meshkat,  Nicolette and Rosen,  Zvi and Sullivant,  Seth},
  pages = {171–205}
}

@incollection{Muller-Quade99b,
    author = {M{\"u}ller-Quade, J. and Beth, T.},
    booktitle = {Applied algebra, algebraic algorithms and error-correcting codes (Honolulu, HI, 1999)},
    publisher = {Springer},
    series = {Lecture Notes in Computer Science},
    title = {Calculating generators for invariant fields of linear algebraic groups},
    volume = {1719},
    year = {1999},
    url = {https://doi.org/10.1007/3-540-46796-3_37}}

@article{Hubert26Jalard,
    author = {Hubert, Evelyne and Jalard, Martin},
    journal = {to appear in SIAM Journal on Applied Algebra and Geometry},
    url = {https://inria.hal.science/hal-04604969},
    number = {1},
    title = {Algebraically independent generators for the invariant field of $SO_3$ and $O_3$ representations $R^3 \oplus H$},
    volume = {},
    year = {2026}}

@misc{breloer2025rationalinvariantsdegreepolynomials,
      title={Rational invariants of even degree polynomials under the orthogonal group}, 
      author={Henri Breloer},
      year={2025},
      eprint={2501.17504},
      archivePrefix={arXiv},
      primaryClass={math.AC},
      url={https://arxiv.org/abs/2501.17504}, 
}

@article{Hubert25JalardPz,
    author = {Hubert, Evelyne and Jalard, Martin},
    journal = {Journal of Pure and Applied Algebra},
    url = {https://doi.org/10.1016/j.jpaa.2025.108034},
    number = {9},
    pages = {108034},
    title = {Orbit separation and stratification by isotropy classes of piezoelectricity tensors},
    volume = {229},
    year = {2025}}

@article{Gorlach19,
    author = {G{\"o}rlach, P. and Hubert, E. and Papadopoulo, T.},
    journal = {Foundations of Computational Mathematics},
    note = {https://doi.org/10.1007/s10208-018-9404-1},
    pages = {1315-1361},
    title = {{Rational invariants of even ternary forms under the orthogonal group}},
    volume = {19},
    year = {2019}}

@article{HOPY,
  url = {https://doi.org/10.1002/cpa.21921},
  year = {2020},
  volume = {73},
  number = {9},
  pages = {1831--1879},
  author = {Hong, Hoon and Ovchinnikov, Alexey and Pogudin, Gleb and Yap, Chee},
  title = {Global Identifiability of Differential Models},
  journal = {Communications on Pure and Applied Mathematics}
}

@phdthesis{dissertation,
	author = {Binder, Anna Katharina},
	title = {Algorithms for Fields and an Application to a Problem in Computer Vision},
	year = {2009},
	school = {Technische Universität München},
	pages = {179},
	language = {en},
	note = {},
	url = {https://mediatum.ub.tum.de/685465},
}

@article{Prony,
    author = {Prony, R.},
    title = {Essai experimental et analytique sur les lois de la dilatabilite de fluides elastiques et sur celles da la force expansion de la vapeur de l'alcool, a differentes temperatures},
    journal = {J. de l'École Polytechnique Floréal et Plairial, an III},
    volume = 22,
    pages = {24--76},
    year = {1795}
}

@article{demillo-lipton,
title = {A probabilistic remark on algebraic program testing},
journal = {Information Processing Letters},
volume = {7},
number = {4},
pages = {193-195},
year = {1978},
url = {https://doi.org/10.1016/0020-0190(78)90067-4},
author = {Richard A. Demillo and Richard J. Lipton}
}

@article{Schwartz,
author = {Schwartz, Jacob Theodore},
title = {Fast Probabilistic Algorithms for Verification of Polynomial Identities},
year = {1980},
issue_date = {Oct. 1980},
publisher = {Association for Computing Machinery},
address = {New York, NY, USA},
volume = {27},
number = {4},
url = {https://doi.org/10.1145/322217.322225},
journal = {J. ACM},
month = oct,
pages = {701–717},
numpages = {17}
}

@article{demin-vdhl-factoring,
title = {Factoring sparse polynomials fast},
journal = {Journal of Complexity},
volume = {88},
pages = {101934},
year = {2025},
url = {https://doi.org/10.1016/j.jco.2025.101934},
author = {Alexander Demin and Joris {van der Hoeven}}
}

@inproceedings{diversification,
author = {Giesbrecht, Mark and Roche, Daniel S.},
title = {Diversification improves interpolation},
year = {2011},
isbn = {9781450306751},
url = {https://doi.org/10.1145/1993886.1993909},
booktitle = {Proceedings of the 36th International Symposium on Symbolic and Algebraic Computation},
pages = {123–130},
numpages = {8},
location = {San Jose, California, USA},
series = {ISSAC '11}
}

@inproceedings{mqrfr,
author = {Khodadad, Sara and Monagan, Michael},
title = {Fast rational function reconstruction},
year = {2006},
url = {https://doi.org/10.1145/1145768.1145801},
booktitle = {Proceedings of the 2006 International Symposium on Symbolic and Algebraic Computation},
pages = {184–190},
numpages = {7},
location = {Genoa, Italy},
series = {ISSAC '06}
}

@inproceedings{gcd-interpolation,
author = {Hu, Jiaxiong and Monagan, Michael},
title = {A Fast Parallel Sparse Polynomial GCD Algorithm},
year = {2016},
isbn = {9781450343800},
url = {https://doi.org/10.1145/2930889.2930903},
booktitle = {Proceedings of the 2016 ACM International Symposium on Symbolic and Algebraic Computation},
pages = {271–278},
numpages = {8},
location = {Waterloo, ON, Canada},
series = {ISSAC '16}
}

@phdthesis{Boku2016,
  author      = {Dereje Kifle Boku},
  title       = {Gr{\"o}bner Bases over Extention Fields of \(\mathbb{Q}\)},
  pages       = {xi, 150},
  school      = {Technische Universit{\"a}t Kaiserslautern},
  url  = {https://nbn-resolving.de/urn:nbn:de:hbz:386-kluedo-44289},
  year        = {2016},
}

@phdthesis{OllivierPhD,
     author = {Ollivier, François},
     title = {Le probl{\`e}me de l’identifiabilit{\'e} structurelle globale: approche th {\'e}orique, m{\'e}thodes effectives et bornes de complexit{\'e}},
     school = {{\'E}cole polytechnique},
     year = {1990},
     url="https://www.theses.fr/1990EPXX0009"
}

@article{Dankwa2021,
  title = {Estimating vaccination threshold and impact in the 2017–2019 hepatitis {A} virus outbreak among persons experiencing homelessness or who use drugs in {L}ouisville,  {K}entucky,  {U}nited {S}tates},
  volume = {39},
  url = {http://dx.doi.org/10.1016/j.vaccine.2021.10.001},
  number = {49},
  journal = {Vaccine},
  author = {Dankwa,  Emmanuelle A. and Donnelly,  Christl A. and Brouwer,  Andrew F. and Zhao,  Rui and Montgomery,  Martha P. and Weng,  Mark K. and Martin,  Natasha K.},
  year = {2021},
  month = dec,
  pages = {7182–7190}
}

@article{Dankwa2022,
  title = {Structural identifiability of compartmental models for infectious disease transmission is influenced by data type},
  volume = {41},
  url = {http://dx.doi.org/10.1016/j.epidem.2022.100643},
  journal = {Epidemics},  
  author = {Dankwa,  Emmanuelle A. and Brouwer,  Andrew F. and Donnelly,  Christl A.},
  year = {2022},
  month = dec,
  pages = {100643}
}

@article{Berk1969,
  title = {Studies of bilirubin kinetics in normal adults},
  volume = {48},
  url = {http://dx.doi.org/10.1172/JCI106184},
  number = {11},
  journal = {Journal of Clinical Investigation},
  author = {Berk,  Paul D. and Howe,  Robert B. and Bloomer,  Joseph R. and Berlin,  Nathaniel I.},
  year = {1969},
  month = nov,
  pages = {2176–2190}
}

@article{OPT,
  url = {https://doi.org/10.1007/s00200-021-00486-8},
  year = {2021},
  volume = {34},
  number = {2},
  pages = {165--182},
  author = {Ovchinnikov, Alexey and Pogudin, Gleb and Thompson, Peter},
  title = {Parameter identifiability and input-output equations},
  journal = {Applicable Algebra in Engineering,  Communication and Computing}
}

@unpublished{corniquel:hal-05487354,
  TITLE = {{Solving parametric polynomial systems using Generic Rational Univariate Representation}},
  AUTHOR = {Corniquel, Florent},
  URL = {https://hal.science/hal-05487354},
  NOTE = {working paper or preprint},
  YEAR = {2026},
  MONTH = Jan,
  HAL_ID = {hal-05487354},
  HAL_VERSION = {v1},
}

@article{allident,
  url = {https://doi.org/10.1016/j.sysconle.2021.105030},
  year = {2021},
  volume = {157},
  pages = {105030},
  author = {Ovchinnikov, Alexey and Pillay, Anand and Pogudin, Gleb and Scanlon, Thomas},
  title = {Computing all identifiable functions of parameters for {ODE} models},
  journal = {Systems {$\&$} Control Letters}
}

@article{LGEL2011,
  title = "Difference algebra and system identification",
  journal = "Automatica",
  volume = "47",
  number = "9",
  pages = "1896 - 1904",
  year = "2011",
  url = "https://doi.org/10.1016/j.automatica.2011.06.013",
  author = "Lyzell, Christian and Glad, Torkel and Enqvist, Martin and Ljung, Lennart"
}

@article{Kawano2011,
  title = {An Algebraic Approach to Local Observability at an Initial State for Discrete-Time Polynomial Systems},
  volume = {44},
  url = {http://dx.doi.org/10.3182/20110828-6-IT-1002.00336},
  number = {1},
  journal = {IFAC Proceedings Volumes},
  author = {Kawano,  Yu and Ohtsuka,  Toshiyuki},
  year = {2011},
  month = jan,
  pages = {6449–6453}
}

@book{Allman2003,
  title = {Mathematical Models in Biology: An Introduction},
  ISBN = {9780511790911},
  url = {http://dx.doi.org/10.1017/CBO9780511790911},
  publisher = {Cambridge University Press},
  author = {Allman,  Elizabeth S. and Rhodes,  John A.},
  year = {2003},
  month = oct 
}

@article{Allen1994,
  title = {Some discrete-time {SI}, {SIR}, and {SIS} epidemic models},
  volume = {124},
  url = {http://dx.doi.org/10.1016/0025-5564(94)90025-6},
  number = {1},
  journal = {Mathematical Biosciences},
  author = {Allen,  Linda J.S.},
  year = {1994},
  month = nov,
  pages = {83–105}
}

@article{modstd,
title = {Parallelization of Modular Algorithms},
journal = {Journal of Symbolic Computation},
volume = {46},
number = {6},
pages = {672-684},
year = {2011},
url = {https://doi.org/10.1016/j.jsc.2011.01.003},
author = {Nazeran Idrees and Gerhard Pfister and Stefan Steidel},
}

@article{ReyBarreiro2023,
  title = {Benchmarking tools for a priori identifiability analysis},
  volume = {39},
  url = {http://dx.doi.org/10.1093/bioinformatics/btad065},
  number = {2},
  journal = {Bioinformatics},
  author = {Rey Barreiro,  Xabier and Villaverde,  Alejandro F},
  editor = {Wren,  Jonathan},
  year = {2023},
  month = jan 
}

@misc{ffmodstd,
    author={Boku, D. K. and Decker, W. and Fieker, C.},
    year={2016},
    title = {ffmodstd.lib: A {S}ingular version 4-0-3 library for computing {G}roebner bases of ideals in polynomial rings over algebraic function fields}
}

@inproceedings{symbolicdata,
    author = {Gr\"abe, Hans-Gert},
    title = {{T}he {S}ymbolic{D}ata {B}enchmark {P}roblems {C}ollection of {P}olynomial {S}ystems},
    booktitle = {Proceedings of the Workshop on Under- and Overdetermined Systems of Algebraic or Differential Equations},
    pages = {57-76},
    publisher = {IAS, Univ. Karlsruhe},
    url = {https://symbolicdata.github.io/Papers/karlsruhe-02.pdf},
    year = {2002}
}

@article{Hubert2013,
  title = {Scaling Invariants and Symmetry Reduction of Dynamical Systems},
  volume = {13},
  url = {http://dx.doi.org/10.1007/s10208-013-9165-9},
  number = {4},
  journal = {Foundations of Computational Mathematics},
  author = {Hubert,  Evelyne and Labahn,  George},
  year = {2013},
  month = aug,
  pages = {479–516}
}

@article{ratrec82,
author = {Wang, Paul S. and Guy, M. J. T. and Davenport, J. H.},
title = {P-adic reconstruction of rational numbers},
year = {1982},
issue_date = {May 1982},
publisher = {Association for Computing Machinery},
address = {New York, NY, USA},
volume = {16},
number = {2},
url = {https://doi.org/10.1145/1089292.1089293},
journal = {SIGSAM Bull.},
month = may,
pages = {2–3},
numpages = {2}
}

@misc{ilmer2022efficientidentifiabilityverificationode,
      title={More Efficient Identifiability Verification in {ODE} Models by Reducing Non-Identifiability}, 
      author={Ilmer, Ilia and Ovchinnikov, Alexey and Pogudin, Gleb and Soto, Pedro},
      year={2022},
      url={https://arxiv.org/abs/2204.01623}, 
}

@inbook{Ilmer2021,
  title = {Web-Based Structural Identifiability Analyzer},
  ISBN = {9783030856335},
  ISSN = {1611-3349},
  url = {http://dx.doi.org/10.1007/978-3-030-85633-5_17},
  booktitle = {Computational Methods in Systems Biology},
  publisher = {Springer International Publishing},
  author = {Ilmer,  Ilia and Ovchinnikov,  Alexey and Pogudin,  Gleb},
  year = {2021},
  pages = {254–265}
}

@article{ehrenborg1993apolarity,
  title={Apolarity and canonical forms for homogeneous polynomials},
  author={Ehrenborg, Richard and Rota, Gian-Carlo},
  journal={European Journal of Combinatorics},
  volume={14},
  number={3},
  pages={157--181},
  year={1993},
  url={https://doi.org/10.1006/eujc.1993.1022}
}

@article{stident,
  title = {Differential Elimination for Dynamical Models via Projections with Applications to Structural Identifiability},
  volume = {7},
  url = {http://dx.doi.org/10.1137/22M1469067},
  number = {1},
  journal = {SIAM Journal on Applied Algebra and Geometry},
  publisher = {Society for Industrial & Applied Mathematics (SIAM)},
  author = {Dong,  Ruiwen and Goodbrake,  Christian and Harrington,  Heather A. and Pogudin,  Gleb},
  year = {2023},
  month = mar,
  pages = {194–235}
}

@misc{smith1997rationalnonrationalalgebraicvarieties,
      title={Rational and Non-Rational Algebraic Varieties: Lectures of {J\'a}nos {K}oll{\'a}r}, 
      author={Smith, Karen E.  and Rosenberg, Joel},
      year={1997},
      eprint={alg-geom/9707013},
      archivePrefix={arXiv},
      primaryClass={alg-geom},
      url={https://arxiv.org/abs/alg-geom/9707013}, 
}

@book{MCA,
	Author = {von~zur~Garthen, Joachim and Gerhard, Jürgen},
	Publisher = {Cambridge University Press},
	Title = {Modern Computer Algebra},
	Year = {2013},
    edition={3rd}
}

@article{F4,
title = {{A} new efficient algorithm for computing {G}r{\"o}bner bases ({F}4)},
journal = {Journal of Pure and Applied Algebra},
volume = {139},
number = {1},
pages = {61-88},
year = {1999},
url = {https://doi.org/10.1016/S0022-4049(99)00005-5},
author = {Faugére, Jean-Charles}
}

@article{modular,
title = {Modular algorithms for computing {G}röbner bases},
journal = {Journal of Symbolic Computation},
volume = {35},
number = {4},
pages = {403-419},
year = {2003},
url = {https://doi.org/10.1016/S0747-7171(02)00140-2},
author = {Arnold, Elizabeth A.}
}

@inproceedings{zippel-lemma,
author = {Zippel, Richard},
title = {Probabilistic algorithms for sparse polynomials},
year = {1979},
isbn = {3540095195},
publisher = {Springer-Verlag},
address = {Berlin, Heidelberg},
booktitle = {Proceedings of the International Symposiumon on Symbolic and Algebraic Computation},
pages = {216–226},
numpages = {11},
series = {EUROSAM '79}
}

@ARTICLE{on-the-origins,
  author={Barreiro, Xabier Rey and Villaverde, Alejandro F.},
  journal={IEEE Access}, 
  title={On the Origins and Rarity of Locally but Not Globally Identifiable Parameters in Biological Modeling}, 
  year={2023},
  volume={11},
  pages={65457-65467},
  url = {https://doi.org/10.1109/ACCESS.2023.3288998}
}

@inproceedings{monagan-parametric-linsys,
author = {Jinadu, Ayoola and Monagan, Michael},
title = {Solving Parametric Linear Systems Using Sparse Rational Function Interpolation},
year = {2023},
isbn = {978-3-031-41723-8},
publisher = {Springer-Verlag},
address = {Berlin, Heidelberg},
url = {https://doi.org/10.1007/978-3-031-41724-5_13},
booktitle = {Computer Algebra in Scientific Computing, CASC 2023},
pages = {233–254},
numpages = {22},
location = {Havana, Cuba}
}

@article{Hong2019,
  title = {{SIAN}: software for structural identifiability analysis of {ODE} models},
  volume = {35},
  url = {http://dx.doi.org/10.1093/bioinformatics/bty1069},
  number = {16},
  journal = {Bioinformatics},
  author = {Hong,  Hoon and Ovchinnikov,  Alexey and Pogudin,  Gleb and Yap,  Chee},
  year = {2019},
  pages = {2873–2874}
}

@phdthesis{glebhdr,
  TITLE = {Symbolic Transformations of Dynamical Models},
  AUTHOR = {Pogudin, Gleb},
  URL = {https://hal.science/tel-05041736},
  SCHOOL = {{Institute Polytechnique de Paris}},
  YEAR = {2024},
  TYPE = {Habilitation {\`a} diriger des recherches},
  HAL_ID = {tel-05041736},
  HAL_VERSION = {v1},
}

@article{vdhl-interpol,
  title = {Sparse polynomial interpolation: faster strategies over finite fields},
  author = {van der Hoeven, Joris and Lecerf, Gr{\'e}goire},
  url = {https://doi.org/10.1007/s00200-024-00655-5},
  journal = {Applicable Algebra in Engineering, Communication and Computing},
  publisher = {Springer Verlag},
  year = {2024},
  month = {apr}
}

@article{gvdhl-det-roots,
author = {Grenet, Bruno and {van der Hoeven}, Joris and Lecerf, Grégoire},
year = {2016},
month = {06},
pages = {237-257},
title = {Deterministic root finding over finite fields using {G}raeffe transforms},
volume = {27},
journal = {Applicable Algebra in Engineering, Communication and Computing},
url = {https://doi.org/10.1007/s00200-015-0280-5}
}

@article{cantor-zassenhaus-81,
  title={A new algorithm for factoring polynomials over finite fields},
  author={David Geoffrey Cantor and Hans Zassenhaus},
  journal={Mathematics of Computation},
  year={1981},
  volume={36},
  pages={587-592},
  url = {https://doi.org/10.2307/2007663}
}

@article{vdhl,
author = {van der Hoeven, Joris and Lecerf, Gr\'{e}goire},
title = {On Sparse Interpolation of Rational Functions and Gcds},
year = {2021},
publisher = {Association for Computing Machinery},
address = {New York, NY, USA},
volume = {55},
number = {1},
url = {https://doi.org/10.1145/3466895.3466896},
journal = {ACM Commun. Comput. Algebra},
month = may,
pages = {1–12},
numpages = {12}
}

@article{JimnezPastor2022,
  title = {Computing exact nonlinear reductions of dynamical models},
  volume = {56},
  url = {http://dx.doi.org/10.1145/3572867.3572869},
  number = {2},
  journal = {ACM Communications in Computer Algebra},
  author = {Jiménez-Pastor,  Antonio and Pogudin,  Gleb},
  year = {2022},
  month = jun,
  pages = {25–31}
}

@inproceedings{ben-or-tiwari,
author = {Ben-Or, Michael and Tiwari, Prasoon},
title = {A deterministic algorithm for sparse multivariate polynomial interpolation},
year = {1988},
url = {https://doi.org/10.1145/62212.62241},
booktitle = {Proceedings of the Twentieth Annual ACM Symposium on Theory of Computing},
pages = {301–309},
numpages = {9},
location = {Chicago, Illinois, USA},
series = {STOC '88}
}

@article{slimgb,
author = {Brickenstein, Michael},
year = {2010},
month = {07},
pages = {453-466},
title = {{S}limgb: {G}r{\"o}bner {B}ases with {S}lim {P}olynomials},
volume = {23},
journal = {Revista Matem{\'a}tica Complutense},
url = {https://doi.org/10.1007/s13163-009-0020-0}
}

@inproceedings{splitting,
author = {Monagan, Michael and Pearce, Roman},
title = {An Algorithm For Splitting Polynomial Systems Based On {F}4},
year = {2017},
url = {https://doi.org/10.1145/3115936.3115948},
booktitle = {Proceedings of the International Workshop on Parallel Symbolic Computation},
articleno = {12},
numpages = {5},
location = {Kaiserslautern, Germany},
series = {PASCO 2017}
}

@InProceedings{f4-variant-tracing,
author="Joux, Antoine
and Vitse, Vanessa",
editor="Kiayias, Aggelos",
title="A Variant of the {F}4 Algorithm",
booktitle="Topics in Cryptology -- CT-RSA 2011",
year="2011",
publisher="Springer Berlin Heidelberg",
address="Berlin, Heidelberg",
pages="356--375",
isbn="978-3-642-19074-2",
url = {https://doi.org/10.1007/978-3-642-19074-2_23}
}

@InProceedings{transposed_vandermonde,
author="Kaltofen, Erich
and Yagati, Lakshman",
editor="Gianni, P.",
title="Improved sparse multivariate polynomial interpolation algorithms",
booktitle="Symbolic and Algebraic Computation",
year="1989",
publisher="Springer Berlin Heidelberg",
address="Berlin, Heidelberg",
pages="467--474",
url = {https://doi.org/10.1007/3-540-51084-2_44}
}

@article{early-termination-lee,
title = {Early termination in sparse interpolation algorithms},
journal = {Journal of Symbolic Computation},
volume = {36},
number = {3},
pages = {365-400},
year = {2003},
url = {https://doi.org/10.1016/S0747-7171(03)00088-9},
author = {Kaltofen, Erich and Lee, Wen-shin},
}

@inproceedings{adaptive-step,
author = {Kaltofen, Erich and Lee, Wen-shin and Lobo, Austin A.},
title = {Early termination in {B}en-{O}r/{T}iwari sparse interpolation and a hybrid of {Z}ippel's algorithm},
year = {2000},
url = {https://doi.org/10.1145/345542.345629},
booktitle = {Proceedings of the 2000 International Symposium on Symbolic and Algebraic Computation},
pages = {192–201},
numpages = {10},
location = {St. Andrews, Scotland}
}

@article{cuyt-lee,
title = {Sparse interpolation of multivariate rational functions},
journal = {Theoretical Computer Science},
volume = {412},
number = {16},
pages = {1445-1456},
year = {2011},
url = {https://doi.org/10.1016/j.tcs.2010.11.050},
author = {Cuyt, Annie and Lee, Wen-shin}
}

@InProceedings{tracingtrav,
author="Traverso, Carlo",
editor="Gianni, P.",
title="Gr{\"o}bner trace algorithms",
booktitle="Symbolic and Algebraic Computation",
year="1989",
publisher="Springer Berlin Heidelberg",
address="Berlin, Heidelberg",
pages="125--138",
url="https://doi.org/10.1007/3-540-51084-2\_12"
}

@misc{ahmed2025identifiabilitydirectedcyclecatenarylinear,
  title = {Identifiability of Directed-Cycle and Catenary Linear Compartmental Models},
  volume = {25},
  url = {http://dx.doi.org/10.1137/25M1728636},
  number = {1},
  journal = {SIAM Journal on Applied Dynamical Systems},
  author = {Ahmed,  Saber and Crepeau,  Natasha and Dessauer,  Paul R. and Edozie,  Alexis and Garcia-Lopez,  Odalys and Grimsley,  Tanisha and Garcia,  Jordy Lopez and Neri,  Viridiana and Shiu,  Anne},
  year = {2026},
  pages = {304–350}
}

@article{bezanson2017julia,
  title={Julia: A fresh approach to numerical computing},
  author={Bezanson, Jeff and Edelman, Alan and Karpinski, Stefan and Shah, Viral B},
  journal={SIAM review},
  volume={59},
  number={1},
  pages={65--98},
  year={2017},
  publisher={SIAM},
  url={https://doi.org/10.1137/141000671}
}

@inproceedings{AbstractAlgebra.jl-2017,
  author = {Fieker, Claus and Hart, William and Hofmann, Tommy and Johansson, Fredrik},
  title = {Nemo/{H}ecke: Computer Algebra and Number Theory Packages for the {J}ulia Programming Language},
  booktitle = {Proceedings of the 2017 ACM on International Symposium on Symbolic and Algebraic Computation},
  series = {ISSAC '17},
  year = {2017},
  pages = {157--164},
  numpages = {8},
  url = {https://doi.acm.org/10.1145/3087604.3087611}
}

@misc{groebnerjl2023,
  title = {Groebner.jl: A package for {G}r\"obner bases computations in {J}ulia}, 
  author = {Alexander Demin and Shashi Gowda},
  year = {2023},
  eprint = {2304.06935},
  url = {https://arxiv.org/abs/2304.06935}
}

@inbook{Beauville2016,
  title = {The L\"{u}roth Problem},
  ISBN = {9783319462097},
  url = {http://dx.doi.org/10.1007/978-3-319-46209-7_1},
  booktitle = {Rationality Problems in Algebraic Geometry},
  publisher = {Springer International Publishing},
  author = {Beauville,  Arnaud},
  year = {2016},
  pages = {1–27}
}

@article{combinations_combos_algo,
title = {An algorithm for finding globally identifiable parameter combinations of nonlinear ODE models using {G}röbner Bases},
journal = {Mathematical Biosciences},
volume = {222},
number = {2},
pages = {61-72},
year = {2009},
url = {https://doi.org/10.1016/j.mbs.2009.08.010},
author = {Nicolette Meshkat and Marisa Eisenberg and Joseph J. {DiStefano III}}
}

@article{Meshkat2011,
  title = {Finding identifiable parameter combinations in nonlinear {ODE} models and the rational reparameterization of their input–output equations},
  volume = {233},
  url = {http://dx.doi.org/10.1016/j.mbs.2011.06.001},
  number = {1},
  journal = {Mathematical Biosciences},
  author = {Meshkat,  Nicolette and Anderson,  Chris and Joseph J. {DiStefano III}},
  year = {2011},
  pages = {19–31}
}

@article{combos,
author = {Meshkat, Nicolette and Kuo, Christine and Joseph J. {DiStefano III}},
year = {2014},
month = {10},
pages = {e110261},
title = {On Finding and Using Identifiable Parameter Combinations in Nonlinear Dynamic Systems Biology Models and {COMBOS}: A Novel Web Implementation},
volume = {9},
journal = {PloS one},
url = {https://doi.org/10.1371/journal.pone.0110261}
}

@misc{mukhina2025projectingdynamicalsystemssupport,
      title={Projecting dynamical systems via a support bound}, 
      author={Mukhina, Yulia and Pogudin, Gleb},
      year={2025},
      url={https://arxiv.org/abs/2501.13680}, 
}

@article{MULLERQUADE1999143,
title = {Basic Algorithms for Rational Function Fields},
journal = {Journal of Symbolic Computation},
volume = {27},
number = {2},
pages = {143-170},
year = {1999},
url = {https://doi.org/10.1006/jsco.1998.0246},
author = {J. Müller-Quade and R. Steinwandt}
}

@article{MllerQuade2000,
  title = {Gr\"{o}bner Bases Applied to Finitely Generated Field Extensions},
  volume = {30},
  url = {http://dx.doi.org/10.1006/jsco.1999.0417},
  number = {4},
  journal = {Journal of Symbolic Computation},
  author = {M\"{u}ller-Quade,  J\"{o}rn and Steinwandt,  Rainer},
  year = {2000},
  month = oct,
  pages = {469–490}
}

@inproceedings{Gutierrez2001,
  series = {ISSAC01},
  title = {Unirational fields of transcendence degree one and functional decomposition},
  url = {http://dx.doi.org/10.1145/384101.384124},
  booktitle = {Proceedings of the 2001 international symposium on Symbolic and algebraic computation},
  publisher = {ACM},
  author = {Gutierrez,  Jamie and Rubio,  Rosario and Sevilla,  David},
  year = {2001},
  month = jul,
  pages = {167–174},
  collection = {ISSAC'01}
}

@book{Lang,
  author = {Lang, Serge},
title = {Algebra},
edition = {3rd},
year = {2002},
publisher = {Springer},
url = {https://doi.org/10.1007/978-1-4613-0041-0}
}

@inbook{Robbiano1990,
  title = {Subalgebra bases},
  ISBN = {9783540471363},
  url = {http://dx.doi.org/10.1007/BFb0085537},
  booktitle = {Commutative Algebra},
  publisher = {Springer Berlin Heidelberg},
  author = {Robbiano,  Lorenzo and Sweedler,  Moss},
  year = {1990},
  pages = {61–87}
}

@inbook{Sweedler1993,
  title = {Using {G}r{\"o}bner bases to determine the algebraic and transcendental nature of field extensions: Return of the killer tag variables},
  ISBN = {9783540476306},
  url = {http://dx.doi.org/10.1007/3-540-56686-4_34},
  booktitle = {Applied Algebra,  Algebraic Algorithms and Error-Correcting Codes},
  publisher = {Springer Berlin Heidelberg},
  author = {Sweedler,  Moss},
  year = {1993},
  pages = {66–75}
}

@article{Kemper1996,
  title = {A constructive approach to {N}oether’s problem},
  volume = {90},
  url = {http://dx.doi.org/10.1007/BF02568311},
  number = {1},
  journal = {Manuscripta Mathematica},
  author = {Kemper,  Gregor},
  year = {1996},
  month = dec,
  pages = {343–363}
}

@inproceedings{Binder1996,
  series = {ISSAC ’96},
  title = {Fast computations in the lattice of polynomial rational function fields},
  url = {http://dx.doi.org/10.1145/236869.236895},
  booktitle = {Proceedings of the 1996 international symposium on Symbolic and algebraic computation},
  publisher = {ACM Press},
  author = {Binder,  Franz},
  year = {1996},
  pages = {43–48},
  collection = {ISSAC ’96}
}

@article{Kuroda2005,
  title = {A counterexample to the fourteenth problem of {H}ilbert in dimension three},
  volume = {53},
  url = {http://dx.doi.org/10.1307/MMJ/1114021089},
  number = {1},
  journal = {Michigan Mathematical Journal},
  author = {Kuroda,  Shigeru},
  year = {2005}, 
}

@article{Nagata1959,
  title = {On the 14-th Problem of {H}ilbert},
  volume = {81},
  url = {http://dx.doi.org/10.2307/2372927},
  number = {3},
  journal = {American Journal of Mathematics},
  author = {Nagata,  Masayoshi},
  year = {1959},
  pages = {766}
}

@book{milne2022,
  author={Milne, J. S.},
  title={Fields and Galois Theory},
  year={2022},
  publisher={Kea Books},
  address={Ann Arbor, MI}
}

@misc{chen2024practicalidentifiabilityparameterestimation,
      title={Practical identifiability and parameter estimation of compartmental epidemiological models}, 
      author={Chen, Q. Y. and Rapti, Z. and Drossinos, Y.  and Cuevas-Maraver, J.  and Kevrekidis, G. A.  and Kevrekidis, P. G.},
      year={2024},
      url={https://arxiv.org/abs/2406.17827} 
}

@article{MASSONIS2021441,
title = {Structural identifiability and observability of compartmental models of the {COVID}-19 pandemic},
journal = {Annual Reviews in Control},
volume = {51},
pages = {441-459},
year = {2021},
url = {https://doi.org/10.1016/j.arcontrol.2020.12.001},
author = {Gemma Massonis and Julio R. Banga and Alejandro F. Villaverde}
}

@article{Hubert2025,
  title = {Rationality of the invariant field for a class of representations of the real orthogonal groups},
  volume = {682},
  url = {http://dx.doi.org/10.1016/j.jalgebra.2025.05.033},
  journal = {Journal of Algebra},
  author = {Hubert,  Evelyne and Jalard,  Martin},
  year = {2025},
  pages = {109–130}
}

@article{Bandeira2023,
  title = {Estimation under group actions: Recovering orbits from invariants},
  volume = {66},
  url = {http://dx.doi.org/10.1016/j.acha.2023.06.001},
  journal = {Applied and Computational Harmonic Analysis},
  author = {Bandeira,  Afonso S. and Blum-Smith,  Ben and Kileel,  Joe and Niles-Weed,  Jonathan and Perry,  Amelia and Wein,  Alexander S.},
  year = {2023},
  month = sep,
  pages = {236–319}
}

@book{Weyl,
    author = {Weyl, Hermann},
    title = {The Classical Groups: Their Invariants and Representations},
    publisher = {Princeton University Press},
    year = {1997},
    edition = {2nd}
}

@inproceedings{Kogan2023,
  series = {ISSAC 2023},
  title = {Invariants: Computation and Applications},
  url = {http://dx.doi.org/10.1145/3597066.3597149},
  booktitle = {Proceedings of the 2023 International Symposium on Symbolic and Algebraic Computation},
  publisher = {ACM},
  author = {Kogan,  Irina A.},
  year = {2023},
  pages = {31–40},
  collection = {ISSAC 2023}
}

@article{Flusser2000,
  title = {On the independence of rotation moment invariants},
  volume = {33},
  url = {http://dx.doi.org/10.1016/S0031-3203(99)00127-2},
  number = {9},
  journal = {Pattern Recognition},
  author = {Flusser,  Jan},
  year = {2000},
  month = sep,
  pages = {1405–1410}
}

@article{Hubert2007,
  title = {Rational invariants of a group action. {C}onstruction and rewriting},
  volume = {42},
  url = {http://dx.doi.org/10.1016/j.jsc.2006.03.005},
  number = {1–2},
  journal = {Journal of Symbolic Computation},
  author = {Hubert,  Evelyne and Kogan,  Irina A.},
  year = {2007},
  month = jan,
  pages = {203–217}
}

@article{Hickman2011,
  title = {Geometric Moments and Their Invariants},
  volume = {44},
  url = {http://dx.doi.org/10.1007/s10851-011-0323-x},
  number = {3},
  journal = {Journal of Mathematical Imaging and Vision},
  author = {Hickman,  Mark S.},
  year = {2011},
  month = dec,
  pages = {223–235}
}

@article{MingKueiHu1962,
  title = {Visual pattern recognition by moment invariants},
  volume = {8},
  ISSN = {0018-9448},
  url = {http://dx.doi.org/10.1109/TIT.1962.1057692},
  number = {2},
  journal = {IEEE Transactions on Information Theory},
  author = {Ming-Kuei Hu},
  year = {1962},
  month = feb,
  pages = {179–187}
}

@misc{Aida,
    url = {https://www-sop.inria.fr/members/Evelyne.Hubert/aida/},
    author = {Hubert, Evelyne},
    note = {accessed on November 3, 2025},
    title = {The {AIDA} {M}aple package }
}

@book{Flusser2009,
  title = {Moments and Moment Invariants in Pattern Recognition},
  ISBN = {9780470684757},
  url = {http://dx.doi.org/10.1002/9780470684757},
  publisher = {Wiley},
  author = {Flusser,  Jan and Suk,  Tomáš and Zitová,  Barbara},
  year = {2009},
  month = oct 
}

@article{Flusser2021,
  title = {Complete and Incomplete Sets of Invariants},
  volume = {63},
  url = {http://dx.doi.org/10.1007/s10851-021-01039-x},
  number = {7},
  journal = {Journal of Mathematical Imaging and Vision},
  author = {Flusser,  Jan and Suk,  Tomáš and Zitová,  Barbara},
  year = {2021},
  month = may,
  pages = {917–922}
}

@article{Kemper2007,
  title = {The Computation of Invariant Fields and a Constructive Version of a Theorem by {R}osenlicht},
  volume = {12},
  url = {http://dx.doi.org/10.1007/s00031-007-0056-5},
  number = {4},
  journal = {Transformation Groups},
  author = {Kemper,  Gregor},
  year = {2007},
  month = nov,
  pages = {657–670}
}

@misc{blumsmith2025genericorbitsnormalbases,
      title={Generic orbits, normal bases, and generation degree for fields of rational invariants}, 
      author={Blum-Smith, Ben and Derksen, Harm},
      year={2025},
      url={https://arxiv.org/abs/2506.05650}, 
}

@article{Tuncer2018,
  title = {Structural and practical identifiability analysis of outbreak models},
  volume = {299},
  url = {http://dx.doi.org/10.1016/j.mbs.2018.02.004},
  journal = {Mathematical Biosciences},
  author = {Tuncer,  Necibe and Le,  Trang T.},
  year = {2018},
  pages = {1–18}
}

@article{Remien2021GLVIdentifiability,
  title = {Structural identifiability of the generalized {L}otka--{V}olterra model for microbiome studies},
  author = {Remien, Christopher H. and Eckwright, Mariah J. and Ridenhour, Benjamin J.},
  journal = {Royal Society Open Science},
  volume = {8},
  number = {7},
  pages = {20378},
  year = {2021},
  url = {https://doi.org/10.1098/rsos.201378}
}

@article{Demignot1987Pharm,
	title   = {Effect of Prosthetic Sugar Groups on the Pharmacokinetics of Glucose-Oxidase},
	author  = {Demignot, S. and Domurado, D.},
	journal = {Drug Design and Delivery},
	year    = {1987},
	volume  = {1},
	number  = {4},
	pages   = {333--348},
	month   = {May},
	url     = {https://pubmed.ncbi.nlm.nih.gov/2855567/}
}

@article{Chappell1990Global,
  title = {Global identifiability of the parameters of nonlinear systems with specified inputs: A comparison of methods},
  author = {Chappell, Michael J. and Godfrey, Keith R. and Vajda, Sandor},
  journal = {Mathematical Biosciences},
  volume = {102},
  number = {1},
  pages = {41--73},
  year = {1990},
  url = {https://doi.org/10.1016/0025-5564(90)90055-4}
}

@article{Fujita2010,
    title   = {Decoupling of Receptor and Downstream Signals in the {A}kt Pathway by Its Low-Pass Filter Characteristics},
    author  = {Fujita, K. A. and Toyoshima, Y. and Uda, S. and Ozaki, Y. I. and Kubota, H. and Kuroda, S.},
    journal = {Science Signaling},
    year    = {2010},
    volume  = {3},
    number  = {132},
    pages   = {ra56--ra56},
    url     = {https://doi.org/10.1126/scisignal.2000810}
}

@article{Raia2011,
    title   = {Dynamic Mathematical Modeling of {IL13}-Induced Signaling in Hodgkin and Primary Mediastinal {B}-Cell Lymphoma Allows Prediction of Therapeutic Targets},
    author  = {Raia, Valentina and Schilling, Marcel and B{\"o}hm, Martin and Hahn, Bettina and Kowarsch, Andreas and Raue, Andreas and Sticht, Carsten and Bohl, Sebastian and Saile, Maria and M{\"o}ller, Peter and Gretz, Norbert and Timmer, Jens and Theis, Fabian and Lehmann, Wolf-Dieter and Lichter, Peter and Klingm{\"u}ller, Ursula},
    journal = {Cancer Research},
    year    = {2011},
    volume  = {71},
    number  = {3},
    pages   = {693--704},
    url     = {https://doi.org/10.1158/0008-5472.CAN-10-2987}
}

@article{MANRAI20085533,
title = {The Geometry of Multisite Phosphorylation},
journal = {Biophysical Journal},
volume = {95},
number = {12},
pages = {5533-5543},
year = {2008},
url = {https://doi.org/10.1529/biophysj.108.140632},
author = {Arjun Kumar Manrai and Jeremy Gunawardena},
}

@article{MOATE2008731,
  title = {Kinetics of Ruminal Lipolysis of Triacylglycerol and Biohydrogenation of Long-Chain Fatty Acids: New Insights from Old Data},
  journal = {Journal of Dairy Science},
  volume = {91},
  number = {2},
  pages = {731-742},
  year = {2008},
  url = {https://doi.org/10.3168/jds.2007-0398},
  author = {P.J. Moate and R.C. Boston and T.C. Jenkins and I.J. Lean},
}

@article{Zha2020,
    title   = {Research about the optimal strategies for prevention and control of varicella outbreak in a school in a central city of {C}hina: based on an {SEIR} dynamic model},
    author  = {Zha, Wen-ting and Pang, Fen-rui and Zhou, Nan and Wu, Bin and Liu, Ying and Du, Yan-bing and Hong, Xiu-qin and Lv, Yuan},
    journal = {Epidemiology and Infection},
    year    = {2020},
    volume  = {148},
    pages   = {e56},
    url     = {https://doi.org/10.1017/S0950268819002188}
}

@inproceedings{kronecker,
author = {Arnold, Andrew and Roche, Daniel S.},
title = {Multivariate sparse interpolation using randomized {K}ronecker substitutions},
year = {2014},
url = {https://doi.org/10.1145/2608628.2608674},
booktitle = {Proceedings of the 39th International Symposium on Symbolic and Algebraic Computation},
pages = {35–42},
numpages = {8},
location = {Kobe, Japan},
series = {ISSAC '14}
}

@article{Bruno2016,
    author = {Bruno, Mark and Koschmieder, Julian and Wuest, Florian and Schaub, Patrick and Fehling-Kaschek, Mirjam and Timmer, Jens and Beyer, Peter and Al-Babili, Salim},
    title = {Enzymatic study on {AtCCD4} and {AtCCD7} and their potential to form acyclic regulatory metabolites},
    journal = {Journal of Experimental Botany},
    volume = {67},
    number = {21},
    pages = {5993-6005},
    year = {2016},
    month = {10},
    url = {https://doi.org/10.1093/jxb/erw356},
    eprint = {https://academic.oup.com/jxb/article-pdf/67/21/5993/20831700/erw356.pdf},
}

@article{SedoglavicLocal,
title = {A Probabilistic Algorithm to Test Local Algebraic Observability in Polynomial Time},
journal = {Journal of Symbolic Computation},
volume = {33},
number = {5},
pages = {735-755},
year = {2002},
url = {https://doi.org/10.1006/jsco.2002.0532},
author = {Alexandre Sedoglavic},
}

@article{CRN,
title = {Dynamics of Posttranslational Modification Systems: Recent Progress and Future Directions},
journal = {Biophysical Journal},
volume = {114},
number = {3},
pages = {507-515},
year = {2018},
url = {https://doi.org/10.1016/j.bpj.2017.11.3787},
author = {Carsten Conradi and Anne Shiu}
}

@article{Covid1,
title = {A novel {COVID}-19 epidemiological model with explicit susceptible and asymptomatic isolation compartments reveals unexpected consequences of timing social distancing},
journal = {Journal of Theoretical Biology},
volume = {510},
pages = {110539},
year = {2021},
url = {https://doi.org/10.1016/j.jtbi.2020.110539},
author = {Jana L. Gevertz and James M. Greene and Cynthia H. Sanchez-Tapia and Eduardo D. Sontag}
}

@misc{IntroToSeirSlides,
      title={Introduction to {SEIR} models}, 
      author={Chitnis Nakul},
      year={2017},
      url={https://indico.ictp.it/event/7960/session/3/contribution/19/material/slides/0.pdf}, 
}

@article{Crauste,
title = {Identification of Nascent Memory {CD8} {T} Cells and Modeling of Their Ontogeny},
journal = {Cell Systems},
volume = {4},
issue = {3},
year = {2017},
issn = {0022-5193},
pages = {306-317},
url = {https://doi.org/10.1016/j.cels.2017.01.014},
author = {Crauste, Fabien and Mafille, Julien and Boucinha, Lilia and Djebali, Sophia and Gandrillon, Olivier and Marvel, Jacqueline and Arpin, Christophe}
}

@article{Fokas2020,
  title = {A quantitative framework for exploring exit strategies from the {COVID}-19 lockdown},
  volume = {140},
  url = {http://dx.doi.org/10.1016/j.chaos.2020.110244},
  journal = {Chaos,  Solitons \& Fractals},
  author = {Fokas,  A.S. and Cuevas-Maraver,  J. and Kevrekidis,  P.G.},
  year = {2020},
  pages = {110244}
}

@article{GOODWIN1965425,
title = {Oscillatory behavior in enzymatic control processes},
journal = {Advances in Enzyme Regulation},
volume = {3},
pages = {425-437},
year = {1965},
url = {https://doi.org/10.1016/0065-2571(65)90067-1},
author = {Brian C. Goodwin}
}

@article{Saccomani,
title = {Examples of testing global identifiability of biological and biomedical models with the {DAISY} software},
journal = {Computers in Biology and Medicine},
volume = {40},
number = {4},
pages = {402-407},
year = {2010},
url = {https://doi.org/10.1016/j.compbiomed.2010.02.004},
author = {Maria Pia Saccomani and Stefania Audoly and Giuseppina Bellu and Leontina D’Angiò}
}

@article{
Wodarz-HIV,
author = {Dominik Wodarz  and Martin A. Nowak},
title = {Specific therapy regimes could lead to long-term immunological control of {HIV}},
journal = {Proceedings of the National Academy of Sciences},
volume = {96},
number = {25},
pages = {14464-14469},
year = {1999},
URL = {https://www.pnas.org/doi/abs/10.1073/pnas.96.25.14464},
eprint = {https://www.pnas.org/doi/pdf/10.1073/pnas.96.25.14464}
}

@article{Influenza, 
title = {Modeling the {CD}8+ {T} cell immune response to influenza infection in adult and aged mice}, 
journal = {Journal of Theoretical Biology}, 
volume = {593}, 
pages = {111898}, 
year = {2024}, 
url = {https://doi.org/10.1016/j.jtbi.2024.111898}, 
author = {Benjamin Whipple and Tanya A. Miura and Esteban A. Hernandez-Vargas} }

@inproceedings{LLW,
  author={Lecourtier, Yves and Lamnabhi-Lagarrigue, Francoise and Walter, Eric},
  booktitle={26th IEEE Conference on Decision and Control}, 
  title={A method to prove that nonlinear models can be unidentifiable}, 
  year={1987},
  volume={26},
  number={},
  pages={2144-2145},
  url={https://doi.org/10.1109/CDC.1987.272467}
}

@article{ovarian,
  title = {Nonlinear compartmental modeling to monitor ovarian follicle population dynamics on the whole lifespan},
  author = {Ballif, Guillaume and Cl{\'e}ment, Fr{\'e}d{\'e}rique and Yvinec, Romain},
  journal = {{Journal of Mathematical Biology}},
  publisher = {{Springer}},
  volume = {89},
  number = {9},
  pages = {43},
  year = {2024},
  url = {https://doi.org/10.1007/s00285-024-02108-6}
}

@article{pitavastatin,
title = {Structural identifiability analyses of candidate models for in vitro {P}itavastatin hepatic uptake},
journal = {Computer Methods and Programs in Biomedicine},
volume = {114},
number = {3},
pages = {e60-e69},
year = {2014},
url = {https://doi.org/10.1016/j.cmpb.2013.06.013},
author = {Grandjean, Thomas R.B. and Chappell, Michael J. and Yates, James W.T. and Evans, Neil D.}
}

@article{seir2t,
author = {Roosa, Kimberlyn and Chowell, Gerardo},
year = {2019},
month = {01},
title = {Assessing parameter identifiability in compartmental dynamic models using a computational approach: application to infectious disease transmission models},
volume = {16},
journal = {Theoretical Biology and Medical Modelling},
url = {https://doi.org/10.1186/s12976-018-0097-6}
}

@article{Sauer2021,
  title = {Identifiability of Infection Model Parameters Early in an Epidemic},
  volume = {60},
  url = {http://dx.doi.org/10.1137/20M1353289},
  number = {2},
  journal = {SIAM Journal on Control and Optimization},
  author = {Sauer,  Timothy and Berry,  Tyrus and Ebeigbe,  Donald and Norton,  Michael M. and Whalen,  Andrew J. and Schiff,  Steven J.},
year = {2021},
  month = nov,
  pages = {S27–S48}
}

@article{SIR24,
title = {A mathematical model of epidemics with screening and variable infectivity},
journal = {Mathematical and Computer Modelling},
volume = {21},
number = {7},
pages = {29-42},
year = {1995},
url = {https://doi.org/10.1016/0895-7177(95)00029-2},
author = {M.Y. Kim and F.A. Milner}
}

@misc{zheng2020totalvariationregularizationcompartmental,
      title={Total Variation Regularization for Compartmental Epidemic Models with Time-Varying Dynamics}, 
      author={Wenjie Zheng},
      year={2020},
      eprint={2004.00412},
      archivePrefix={arXiv},
      primaryClass={stat.ML},
      url={https://arxiv.org/abs/2004.00412}, 
}

@article{sirc-forced,
author = {Capistran, Marcos and Moreles, Miguel and Lara, Bruno},
year = {2009},
month = {08},
pages = {1890-901},
title = {Parameter Estimation of Some Epidemic Models. {T}he Case of Recurrent Epidemics Caused by Respiratory Syncytial Virus},
volume = {71},
journal = {Bulletin of mathematical biology},
url = {https://doi.org/10.1007/s11538-009-9429-3}
}

@article{siwr,
title = {Model distinguishability and inference robustness in mechanisms of cholera transmission and loss of immunity},
journal = {Journal of Theoretical Biology},
volume = {420},
pages = {68-81},
year = {2017},
url = {https://doi.org/10.1016/j.jtbi.2017.01.032},
author = {Lee, Elizabeth C. and Kelly, Michael R. and Ochocki, Brad M. and Akinwumi, Segun M. and Hamre, Karen E.S. and Tien, Joseph H. and Eisenberg, Marisa C.}
}

@thesis{Transfection-masters,
     author = {Fink, Laura},
    type    = {mastersthesis},
     title = {Model selection in deterministic models of m{RNA} transfection, {M}aster's thesis},
     school = {Ludwig-Maximilians-Universitaet},
     year = {2015}
}

@article{genssi-2-0,
    author = {Ligon, Thomas S and Fröhlich, Fabian and Chiş, Oana T and Banga, Julio R and Balsa-Canto, Eva and Hasenauer, Jan},
    title = {{GenSSI} 2.0: multi-experiment structural identifiability analysis of {SBML} models},
    journal = {Bioinformatics},
    volume = {34},
    number = {8},
    pages = {1421-1423},
    year = {2017},
    month = {11},
    url = {https://doi.org/10.1093/bioinformatics/btx735}
}

@article{cLV,
    author = {Joseph, Tyler A. AND Shenhav, Liat AND Xavier, Joao B. AND Halperin, Eran AND Pe’er, Itsik},
    journal = {PLOS Computational Biology},
    title = {Compositional {L}otka-{V}olterra describes microbial dynamics in the simplex},
    year = {2020},
    month = {05},
    volume = {16},
    url = {https://doi.org/10.1371/journal.pcbi.1007917},
    pages = {1-22},
    number = {5}
}

@INPROCEEDINGS{daisy,
  author={Saccomani, M. P. and Bellu, G.},
  booktitle={2008 16th Mediterranean Conference on Control and Automation}, 
  title={{DAISY}: An efficient tool to test global identifiability. Some case studies}, 
  year={2008},
  pages={1723-1728},
  url={https://doi.org/10.1109/MED.2008.4602152}
}

@article{kd1999,
author =	 {Kumar, Aditya and Daoutidis, Prodomos},
title =	 {Control of Nonlinear Differential Algebraic Equation Systems : An Overview},
year =	 {1998},
journal = {Nonlinear Model Based Process Control},
number =	 {397},
volume = {353},
doi={10.1007/978-94-011-5094-1_11},
url ={https://doi.org/10.1007/978-94-011-5094-1_11}
}

@book{DiStefano2015DynamicSB,
  title={Dynamic Systems Biology Modeling and Simulation},
  author={Joseph J. {DiStefano III}},
  year={2015},
publisher = {Academic Press Inc},
  isbn = {0124104118}
}

@BOOK{Kumar1999-dl,
  title     = "Control of nonlinear differential algebraic equation systems
               with applications to chemical processes",
  author    = "Kumar, Aditya and Daoutidis, Prodromos",
  publisher = "CRC Press",
  series    = "Chapman \& Hall/CRC Research Notes in Mathematics Series",
  month     =  feb,
  year      =  1999,
  address   = "Boca Raton, FL",
  language  = "en"
}

@misc{Sin,
    author = {Sin, Celine},
    title = {Algorithmic Parameter Space Reduction of a Systems Biology Model: A Case Study},
    year = {2012},
    note = {Master thesis},
    url = {https://escholarship.org/uc/item/4r25f875}
}

@article{Castro2020,
  title = {The turning point and end of an expanding epidemic cannot be precisely forecast},
  volume = {117},
  url = {http://dx.doi.org/10.1073/pnas.2007868117},
  number = {42},
  journal = {Proceedings of the National Academy of Sciences},
  publisher = {Proceedings of the National Academy of Sciences},
  author = {Castro,  Mario and Ares,  Saúl and Cuesta,  José A. and Manrubia,  Susanna},
  year = {2020},
  month = oct,
  pages = {26190–26196}
}

\appendix

\newpage
\section{Appendix}
\label{app:simplify_everyone}

{
This section lists our benchmark examples that arise from the problem of structural identifiability of ODE models\footnote{The sources of the ODE models are available at \url{https://github.com/SciML/StructuralIdentifiability.jl/blob/v0.5.18/benchmarking/benchmarks.jl}} (see \Cref{sec:app_ode}). Many of these examples have been collected in \cite{on-the-origins,Hong2019,stident,genssi-2-0,SedoglavicLocal}.
For each example, we report:
\begin{itemize}
    \item information about the original generating set;
    \item the names of the indeterminates;
    \item the original generating set (when it is not too large);
    \item the result of our algorithm, that is, a simplified generating set.
\end{itemize}
We also provide the generating sets in plain text format, together with the code to produce them\footnote{The data in plain text format is available at \url{https://github.com/pogudingleb/RationalFunctionFields.jl/tree/f7d4943aa7432bd10c206b59f63909aca52c909d/paper/data}}. The files that exceed 1 MB in size are not provided, but code in Julia is provided to reproduce the files.

\begin{enumerate}
    
\item \exname{Bilirubin} \cite[Eq. (14)]{combos}.

Original generating set information: 7 indeterminates; 7 non-constant functions; maximal total degrees of numerator and denominator are~$(4, 0)$; 390~bytes total in string representation.

Indeterminates: $k_{01}$, $k_{12}$, $k_{13}$, $k_{14}$, $k_{21}$, $k_{31}$, $k_{41}$.

Original generating set: 

{\footnotesize$\setlength{\extrarowheight}{0pt}\begin{tabular}{l}$-k_{12} k_{13} k_{14}$, $k_{01} k_{12} k_{13} k_{14}$, $-k_{12}-k_{13}-k_{14}$, \\$-k_{12} k_{13}-k_{12} k_{14}-k_{13} k_{14}$, $k_{01}+k_{12}+k_{13}+k_{14}+k_{21}+k_{31}+k_{41}$, \\$k_{01} k_{12} k_{13}+k_{01} k_{12} k_{14}+k_{01} k_{13} k_{14}+k_{12} k_{13} k_{14}+k_{12} k_{13} k_{41}+k_{12} k_{14} k_{31}+k_{13} k_{14} k_{21}$, \\$k_{01} k_{12}+k_{01} k_{13}+k_{01} k_{14}+k_{12} k_{13}+k_{12} k_{14}+k_{12} k_{31}+k_{12} k_{41}+k_{13} k_{14}+k_{13} k_{21}+k_{13} k_{41}+k_{14} k_{21}+k_{14} k_{31}$.\end{tabular}$}

Result of our algorithm: 

{\footnotesize$\setlength{\extrarowheight}{0pt}\begin{tabular}{l}$k_{01}$, $k_{12} k_{13} k_{14}$, $k_{21} k_{31} k_{41}$, $k_{12}+k_{13}+k_{14}$, $k_{21}+k_{31}+k_{41}$, $k_{12} k_{13}+k_{12} k_{14}+k_{13} k_{14}$, \\$k_{21} k_{31}+k_{21} k_{41}+k_{31} k_{41}$, $k_{12} k_{31}+k_{12} k_{41}+k_{13} k_{21}+k_{13} k_{41}+k_{14} k_{21}+k_{14} k_{31}$.\end{tabular}$}

Result is algebraically independent over $\mathbb{C}$: no.

\item \exname{Biohydrogenation} \cite[Eq. (6)-(9)]{MOATE2008731}.

Original generating set information: 6 indeterminates; 37 non-constant functions; maximal total degrees of numerator and denominator are~$(6, 1)$; 1.2~KB total in string representation.

Indeterminates: $k_{5}$, $k_{6}$, $k_{7}$, $k_{8}$, $k_{9}$, $k_{10}$.

Original generating set: too large to be listed.

Result of our algorithm: 

{\footnotesize$\setlength{\extrarowheight}{0pt}\begin{tabular}{l}$k_{5}$, $k_{6}$, $k_{7}$, $k_{9}^{2}$, $k_{9}k_{10}$, $2 k_{8} + k_{10}$.\end{tabular}$}

Result is algebraically independent over $\mathbb{C}$: yes.

\item \exname{Bruno2016} \cite{Bruno2016}.

Original generating set information: 6 indeterminates; 4 non-constant functions; maximal total degrees of numerator and denominator are~$(2, 0)$; 166~bytes total in string representation.

Indeterminates: $k_{\operatorname{zea}}$, $k_{\operatorname{\beta}}$, $k_{\operatorname{cry-OH}}$, $k_{\operatorname{\beta-{10}}}$, $k_{\operatorname{cry-\beta}}$, $k_{\operatorname{OH-\beta-{10}}}$.

Original generating set: 

{\footnotesize$\setlength{\extrarowheight}{0pt}\begin{tabular}{l}$k_{\beta}$, $k_{\operatorname{\beta-{10}}}+k_{\operatorname{cry-OH}} + k_{\operatorname{cry-\beta}}$, $k_{\operatorname{\beta-{10}}}\mspace{2mu} k_{\operatorname{cry-OH}} + k_{\operatorname{\beta-10}}\mspace{2mu} k_{\operatorname{cry-\beta}}$, $k_{\beta}^{2} - k_{\beta}\mspace{2mu} k_{\operatorname{cry-OH}} - k_{\beta}\mspace{2mu} k_{\operatorname{cry-\beta}}$.\end{tabular}$}

Result of our algorithm: 

{\footnotesize$\setlength{\extrarowheight}{0pt}\begin{tabular}{l}$k_{\beta}$, $k_{\operatorname{\beta-{10}}}$, $k_{\operatorname{cry-OH}}+k_{\operatorname{cry-\beta}}$.\end{tabular}$}

Result is algebraically independent over $\mathbb{C}$: yes.

%







\item \exname{CGV1990} \cite{Chappell1990Global,SedoglavicLocal}.

Original generating set information: 9 indeterminates; 412 non-constant functions; maximal total degrees of numerator and denominator are~$(18, 6)$; 90.0~KB total in string representation.

Indeterminates: $R$, $S$, $V_{3}$, $k_{3}$, $k_{4}$, $k_{5}$, $k_{6}$, $k_{7}$, $V_{36}$.

Original generating set: too large to be listed.

Result of our algorithm: 

{\footnotesize$\setlength{\extrarowheight}{8pt}\begin{tabular}{l}$k_{3}$, $k_{4}$, $k_{6}$, $k_{7}$, $R V_{3}+S V_{36}$, $R V_{36}+\dfrac{1}{25} S V_{3}$, $\dfrac{V_{3} k_{5}+5 V_{36} k_{5}}{V_{3}}$, $R V_{36} k_{5}+\dfrac{1}{5} S V_{36} k_{5}$.\end{tabular}$}

Result is algebraically independent over $\mathbb{C}$: yes.

\item \exname{CRN} \cite[Eq. (2)]{CRN}.

Original generating set information: 6 indeterminates; 69 non-constant functions; maximal total degrees of numerator and denominator are~$(5, 2)$; 2.0~KB total in string representation.

Indeterminates: $k_{1}$, $k_{2}$, $k_{3}$, $k_{4}$, $k_{5}$, $k_{6}$.

Original generating set: too large to be listed.

Result of our algorithm: 

{\footnotesize$\setlength{\extrarowheight}{0pt}\begin{tabular}{l}$k_{1}$, $k_{2}$, $k_{3}$, $k_{4}$, $k_{5}$, $k_{6}$.\end{tabular}$}

Result is algebraically independent over $\mathbb{C}$: yes.

\item \exname{Covid1} \cite[Eq. (8)-(13)]{Covid1}.

Original generating set information: 10 indeterminates; 302 non-constant functions; maximal total degrees of numerator and denominator are~$(16, 4)$; 214.5~KB total in string representation.

Indeterminates: $f$, $\beta_A$, $\beta_I$, $\epsilon_A$, $\epsilon_S$, $h_{1}$, $h_{2}$, $\gamma_{\operatorname{AI}}$, $\gamma_{\operatorname{IR}}$, $\delta$.

Original generating set: too large to be listed.

Result of our algorithm: 

{\footnotesize$\setlength{\extrarowheight}{0pt}\begin{tabular}{l}$f$, $\beta_A$, $\beta_I$, $\epsilon_A$, $\epsilon_S$, $h_{1}$, $h_{2}$, $\gamma_{\operatorname{AI}}$, $\delta + \gamma_{\operatorname{IR}}$.\end{tabular}$}

Result is algebraically independent over $\mathbb{C}$: yes.

\item \exname{Covid3} \cite[Slide 35]{IntroToSeirSlides}.

Original generating set information: 8 indeterminates; 7 non-constant functions; maximal total degrees of numerator and denominator are~$(4, 2)$; 247~bytes total in string representation.

Indeterminates: $k$, $\Lambda$, $N$, $b$, $e$, $g$, $m$, $r$.

Original generating set: 

{\footnotesize$\setlength{\extrarowheight}{8pt}\begin{tabular}{l}
$\dfrac{b\mspace{2mu} r}{k\mspace{2mu} N}$, $e+g+3\mspace{2mu} m$, $-e-g-2\mspace{2mu} m$, $e\mspace{2mu} m+g\mspace{2mu} m+2\mspace{2mu} m^{2}$, \\$\dfrac{b\mspace{2mu} e\mspace{2mu} r+b\mspace{2mu} g\mspace{2mu} r+2\mspace{2mu} b\mspace{2mu} m\mspace{2mu} r}{k\mspace{2mu} N}$, $\dfrac{b\mspace{2mu} e\mspace{2mu} g\mspace{2mu} r+b\mspace{2mu} e\mspace{2mu} m\mspace{2mu} r+b\mspace{2mu} g\mspace{2mu} m\mspace{2mu} r+b\mspace{2mu} m^{2}\mspace{2mu} r}{k\mspace{2mu} N}$, $\dfrac{-\Lambda\mspace{2mu} b\mspace{2mu} e+N\mspace{2mu} e\mspace{2mu} g\mspace{2mu} m+N\mspace{2mu} e\mspace{2mu} m^{2}+N\mspace{2mu} g\mspace{2mu} m^{2}+N\mspace{2mu} m^{3}}{N}$.\end{tabular}$}

Result of our algorithm: 

{\footnotesize$\setlength{\extrarowheight}{8pt}\begin{tabular}{l}$m$, $e\mspace{2mu} g$, $e+g$, $\dfrac{k\mspace{2mu} \Lambda}{g\mspace{2mu} r}$, $\dfrac{k\mspace{2mu} N}{b\mspace{2mu} r}$.\end{tabular}$}

Result is algebraically independent over $\mathbb{C}$: yes.

\item \exname{Crauste} \cite[Figure 3C]{Crauste}.

Original generating set information: 13 indeterminates; 30 non-constant functions; maximal total degrees of numerator and denominator are~$(4, 2)$; 1.2~KB total in string representation.

Indeterminates: $\mu_{\operatorname{M}}$, $\mu_{\operatorname{N}}$, $\mu_{\operatorname{P}}$, $\mu_{\operatorname{EE}}$, $\mu_{\operatorname{LE}}$, $\mu_{\operatorname{LL}}$, $\mu_{\operatorname{PE}}$, $\mu_{\operatorname{PL}}$, $\rho_{\operatorname{E}}$, $\rho_{\operatorname{P}}$, $\delta_{\operatorname{EL}}$, $\delta_{\operatorname{LM}}$, $\delta_{\operatorname{NE}}$.

Original generating set: too large to be listed.

Result of our algorithm: 

{\footnotesize$\setlength{\extrarowheight}{8pt}\begin{tabular}{l}$\mu_{\operatorname{M}}$, $\mu_{\operatorname{N}}$, $\mu_{\operatorname{P}}$, $\mu_{\operatorname{EE}}$, $\mu_{\operatorname{LE}}$, $\mu_{\operatorname{LL}}$, $\mu_{\operatorname{PE}}$, $\mu_{\operatorname{PL}}$, $\delta_{\operatorname{EL}}$, $\delta_{\operatorname{LM}}$, $\dfrac{\rho_{\operatorname{E}}}{\rho_{\operatorname{P}}}$, $\dfrac{\delta_{\operatorname{NE}}}{\rho_{\operatorname{P}}}$.\end{tabular}$}

Result is algebraically independent over $\mathbb{C}$: yes.

\item \exname{EAIHRD} \cite[Eq. (1)-(6)]{Fokas2020}, equations copied from \cite[Eq. (6a)-(6f)]{chen2024practicalidentifiabilityparameterestimation}.

Original generating set information: 10 indeterminates; 316 non-constant functions; maximal total degrees of numerator and denominator are~$(22, 6)$; 11.0~MB total in string representation.

Indeterminates: $N$, $a$, $d$, $h$, $s$, $c_{1}$, $c_{2}$, $r_{1}$, $r_{2}$, $r_{3}$.

Original generating set: too large to be listed.

Result of our algorithm: 

{\footnotesize$\setlength{\extrarowheight}{8pt}\begin{tabular}{l}$r_{1}$, $d+r_{3}$, $a+h+r_{2}+s$, 
$a\mspace{2mu} h+a\mspace{2mu} r_{2}+h\mspace{2mu} s+r_{2}\mspace{2mu} s$, 
$\dfrac{d\mspace{2mu} h\mspace{2mu} s}{a\mspace{2mu} c_{1}+c_{2}\mspace{2mu} s}$,  $\dfrac{a\mspace{2mu} c_{1}\mspace{2mu} h+a\mspace{2mu} c_{1}\mspace{2mu} r_{2}+c_{2}\mspace{2mu} r_{1}\mspace{2mu} s}{a\mspace{2mu} c_{1}+c_{2}\mspace{2mu} s}$.\end{tabular}$}

Result is algebraically independent over $\mathbb{C}$: yes.

\item \exname{Fujita} \cite{Fujita2010}, \cite[Example Akt pathway]{stident}.

Original generating set information: 16 indeterminates; 217 non-constant functions; maximal total degrees of numerator and denominator are~$(9, 5)$; 24.9~KB total in string representation.

Indeterminates: $a_{1}$, $a_{2}$, $a_{3}$, 
$r_{11}$, $r_{12}$, $r_{21}$, $r_{22}$, $r_{31}$, $r_{41}$, $r_{51}$, $r_{52}$, $r_{61}$, $r_{71}$, $r_{81}$, $r_{91}$.

Original generating set: too large to be listed.

Result of our algorithm: 

{\footnotesize$\setlength{\extrarowheight}{8pt}\begin{tabular}{l}$r_{22}$, $r_{31}$, $r_{41}$, $r_{52}$, $r_{61}$, $r_{71}$, $r_{81}$, $r_{11}-r_{12}-r_{91}$, $\dfrac{a_{1}}{r_{51}}$, $\dfrac{a_{2}}{r_{51}}$, $\dfrac{a_{3}}{r_{51}}$, $\dfrac{r_{21}}{r_{51}}$.\end{tabular}$}

Result is algebraically independent over $\mathbb{C}$: yes.

\item \exname{Goodwin} \cite{GOODWIN1965425}, \cite[Example Goodwin oscillator in Supplementary materials]{Hong2019}.

Original generating set information: 7 indeterminates; 89 non-constant functions; maximal total degrees of numerator and denominator are~$(12, 0)$; 4.6~KB total in string representation.

Indeterminates: $b$, $c$, $\beta$, $\gamma$, $\alpha$, $\delta$, $\sigma$.

Original generating set: too large to be listed.

Result of our algorithm: 

{\footnotesize$\setlength{\extrarowheight}{0pt}\begin{tabular}{l}$b$, $c$, $\sigma$, $\beta \delta$, $\beta+\delta$.\end{tabular}$}

Result is algebraically independent over $\mathbb{C}$: yes.

\item \exname{HighDimNonLin} \cite[Section 6]{Saccomani}.

Original generating set information: 22 indeterminates; 42 non-constant functions; maximal total degrees of numerator and denominator are~$(2, 0)$; 271~bytes total in string representation.

Indeterminates: {\footnotesize$k_m$, $p_{1}$, $p_{2}$, $p_{3}$, $p_{4}$, $p_{5}$, $p_{6}$, $p_{7}$, $p_{8}$, $p_{9}$, $v_m$, $p_{10}$, $p_{11}$, $p_{12}$, $p_{13}$, $p_{14}$, $p_{15}$, $p_{16}$, $p_{17}$, $p_{18}$, $p_{19}$, $p_{20}$}.

Original generating set: 

{\footnotesize$\setlength{\extrarowheight}{0pt}\begin{tabular}{l} $k_m$, $p_{1}$, $p_{2}$, $p_{3}$, $p_{4}$, $p_{5}$, $p_{6}$, $p_{7}$, $p_{8}$, $p_{9}$, $-k_m$, $-p_{1}$, $-p_{2}$, $-p_{3}$, $-p_{4}$, $-p_{5}$, $-p_{6}$, \\$-p_{7}$, $-p_{8}$, $-p_{9}$, $p_{10}$, $p_{11}$, $p_{12}$, $p_{13}$, $p_{14}$, $p_{15}$, $p_{16}$, $p_{17}$, $p_{18}$, $p_{19}$, $p_{20}$, $-p_{10}$, \\$-p_{11}$, $-p_{12}$, $-p_{13}$, $-p_{14}$, $-p_{15}$, $-p_{16}$, $-p_{17}$, $-p_{18}$, $-p_{19}$, $k_m\mspace{2mu} p_{1}+v_m$.\end{tabular}$}

Result of our algorithm: 

{\footnotesize$\setlength{\extrarowheight}{0pt}\begin{tabular}{l}$k_m$, $p_{1}$, $p_{2}$, $p_{3}$, $p_{4}$, $p_{5}$, $p_{6}$, $p_{7}$, $p_{8}$, $p_{9}$, $v_m$, $p_{10}$, $p_{11}$, $p_{12}$, $p_{13}$, $p_{14}$, $p_{15}$, $p_{16}$, $p_{17}$, $p_{18}$, $p_{19}$, $p_{20}$.\end{tabular}$}

Result is algebraically independent over $\mathbb{C}$: yes.

\item \exname{HIV} \cite[Page 1]{Wodarz-HIV}.

Original generating set information: 10 indeterminates; 77 non-constant functions; maximal total degrees of numerator and denominator are~$(8, 0)$; 2.3~KB total in string representation.

Indeterminates: $a$, $b$, $c$, $d$, $h$, $k$, $q$, $u$, $\lambda$, $\beta$.

Original generating set: too large to be listed.

Result of our algorithm: 

{\footnotesize$\setlength{\extrarowheight}{8pt}\begin{tabular}{l}$a$, $b$, $d$, $h$, $u$, $c\mspace{2mu} q^{2}$, $\dfrac{\lambda}{q}$, $\beta\mspace{2mu} k\mspace{2mu} q$.\end{tabular}$}

Result is algebraically independent over $\mathbb{C}$: yes.

\item \exname{HIV2} \cite[Eq. (9)]{daisy}.

Original generating set information: 10 indeterminates; 9 non-constant functions; maximal total degrees of numerator and denominator are~$(4, 0)$; 221~bytes total in string representation.

Indeterminates: $b$, $c$, $d$, $s$, $k_{1}$, $k_{2}$, $q_{1}$, $q_{2}$, $w_{1}$, $w_{2}$.

Original generating set: 

{\footnotesize$\setlength{\extrarowheight}{0pt}\begin{tabular}{l}$b$, $d$, $-s$, $-b\mspace{2mu} k_{2}\mspace{2mu} q_{2}$, $b^{2}\mspace{2mu} k_{2}\mspace{2mu} q_{2}$, $c+k_{1}+w_{1}+w_{2}$, $-b\mspace{2mu} k_{2}\mspace{2mu} q_{2}\mspace{2mu} s+c\mspace{2mu} k_{1}\mspace{2mu} w_{2}+c\mspace{2mu} w_{1}\mspace{2mu} w_{2}$,\\ $c\mspace{2mu} k_{1}+c\mspace{2mu} w_{1}+c\mspace{2mu} w_{2}+k_{1}\mspace{2mu} w_{2}+w_{1}\mspace{2mu} w_{2}$, $b\mspace{2mu} d\mspace{2mu} k_{2}\mspace{2mu} q_{2}-b\mspace{2mu} k_{1}\mspace{2mu} k_{2}\mspace{2mu} q_{1}-b\mspace{2mu} k_{1}\mspace{2mu} k_{2}\mspace{2mu} q_{2}-b\mspace{2mu} k_{2}\mspace{2mu} q_{2}\mspace{2mu} w_{1}$.\end{tabular}$}

Result of our algorithm: 

{\footnotesize$\setlength{\extrarowheight}{0pt}\begin{tabular}{l}$b$, $d$, $s$, $k_{2}\mspace{2mu} q_{2}$, $c+k_{1}+w_{1}+w_{2}$, $c\mspace{2mu} k_{1}\mspace{2mu} w_{2}+c\mspace{2mu} w_{1}\mspace{2mu} w_{2}$, \\$k_{1}\mspace{2mu} k_{2}\mspace{2mu} q_{1}+k_{1}\mspace{2mu} k_{2}\mspace{2mu} q_{2}+k_{2}\mspace{2mu} q_{2}\mspace{2mu} w_{1}$, $c\mspace{2mu} k_{1}+c\mspace{2mu} w_{1}+c\mspace{2mu} w_{2}+k_{1}\mspace{2mu} w_{2}+w_{1}\mspace{2mu} w_{2}$.\end{tabular}$}

Result is algebraically independent over $\mathbb{C}$: yes.

\item \exname{Influenza\_MB1} \cite[Figure 3 and Appendix A]{Influenza}.

Original generating set information: 7 indeterminates; 27 non-constant functions; maximal total degrees of numerator and denominator are~$(4, 0)$; 403~bytes total in string representation.

Indeterminates: $c$, $p$, $r$, $d_{I}$, $k_{T}$, $s_{T}$, $\beta$.

Original generating set: 

{\footnotesize$\setlength{\extrarowheight}{8pt}\begin{tabular}{l}$c$, $d_{I}$, $k_{T}$, $k_{T}$, $-d_{I}$, $-k_{T}$, $-s_{T}$, $-\beta$, $c\mspace{2mu} k_{T}$, $-c\mspace{2mu} k_{T}$, $-\beta\mspace{2mu} c$, $d_{I}\mspace{2mu} k_{T}$, $-d_{I}\mspace{2mu} k_{T}$, $-k_{T}\mspace{2mu} s_{T}$, $-\beta\mspace{2mu} d_{I}$, $-c\mspace{2mu} d_{I}\mspace{2mu} s_{T}$,\\ $\dfrac{11}{1000}\mspace{2mu} k_{T}$, $-d_{I}\mspace{2mu} k_{T}\mspace{2mu} s_{T}$, $-r+\dfrac{11}{1000}$, $-\beta\mspace{2mu} k_{T}-c$, $-c\mspace{2mu} d_{I}\mspace{2mu} k_{T}\mspace{2mu} s_{T}$, $\dfrac{11}{1000}\mspace{2mu} d_{I}\mspace{2mu} k_{T}$, $\dfrac{11}{1000}\mspace{2mu} c\mspace{2mu} d_{I}\mspace{2mu} k_{T}$,\\ $-\beta\mspace{2mu} c\mspace{2mu} k_{T}-d_{I}\mspace{2mu} s_{T}$, $-\beta\mspace{2mu} d_{I}\mspace{2mu} k_{T}-d_{I}\mspace{2mu} r+\dfrac{11}{1000}\mspace{2mu} d_{I}$, $-\beta\mspace{2mu} c\mspace{2mu} d_{I}\mspace{2mu} k_{T}-c\mspace{2mu} d_{I}\mspace{2mu} r+\dfrac{11}{1000}\mspace{2mu} c\mspace{2mu} d_{I}$, $-\beta\mspace{2mu} c\mspace{2mu} d_{I}$.\end{tabular}$}

Result of our algorithm: 

{\footnotesize$\setlength{\extrarowheight}{0pt}\begin{tabular}{l}$c$, $r$, $d_{I}$, $k_{T}$, $s_{T}$, $\beta$.\end{tabular}$}

Result is algebraically independent over $\mathbb{C}$: yes.

\item \exname{Influenza\_MB2} \cite[Figure 3 and Appendix A]{Influenza}.

Original generating set information: 8 indeterminates; 50 non-constant functions; maximal total degrees of numerator and denominator are~$(4, 0)$; 817~bytes total in string representation.

Indeterminates: $c$, $p$, $r$, $c_{T}$, $d_{I}$, $k_{T}$, $s_{T}$, $\beta$.

Original generating set: too large to be listed.

Result of our algorithm: 

{\footnotesize$\setlength{\extrarowheight}{0pt}\begin{tabular}{l}$c$, $r$, $c_{T}$, $d_{I}$, $k_{T}$, $s_{T}$, $\beta$.\end{tabular}$}

Result is algebraically independent over $\mathbb{C}$: yes.

\item \exname{Influenza\_MB3} \cite[Figure 3 and Appendix A]{Influenza}.

Original generating set information: 8 indeterminates; 27 non-constant functions; maximal total degrees of numerator and denominator are~$(4, 0)$; 372~bytes total in string representation.

Indeterminates: $c$, $p$, $r$, $d_{I}$, $d_{T}$, $k_{T}$, $s_{T}$, $\beta$.

Original generating set: 

{\footnotesize$\setlength{\extrarowheight}{0pt}\begin{tabular}{l}$c$, $d_{I}$, $k_{T}$, $k_{T}$, $-d_{I}$, $-k_{T}$, $-s_{T}$, $-\beta$, $c\mspace{2mu} k_{T}$, $-c\mspace{2mu} k_{T}$, $-\beta\mspace{2mu} c$, $d_{I}\mspace{2mu} k_{T}$, $d_{T}-r$, $d_{T}\mspace{2mu} k_{T}$, $-d_{I}\mspace{2mu} k_{T}$, \\$-k_{T}\mspace{2mu} s_{T}$, $-\beta\mspace{2mu} d_{I}$, $-c\mspace{2mu} d_{I}\mspace{2mu} s_{T}$, $-\beta\mspace{2mu} c\mspace{2mu} d_{I}$, $d_{I}\mspace{2mu} d_{T}\mspace{2mu} k_{T}$, $-d_{I}\mspace{2mu} k_{T}\mspace{2mu} s_{T}$, $-\beta\mspace{2mu} k_{T}-c$, $c\mspace{2mu} d_{I}\mspace{2mu} d_{T}\mspace{2mu} k_{T}$, $-c\mspace{2mu} d_{I}\mspace{2mu} k_{T}\mspace{2mu} s_{T}$, \\$-\beta\mspace{2mu} c\mspace{2mu} k_{T}-d_{I}\mspace{2mu} s_{T}$, $-\beta\mspace{2mu} d_{I}\mspace{2mu} k_{T}+d_{I}\mspace{2mu} d_{T}-d_{I}\mspace{2mu} r$, $-\beta\mspace{2mu} c\mspace{2mu} d_{I}\mspace{2mu} k_{T}+c\mspace{2mu} d_{I}\mspace{2mu} d_{T}-c\mspace{2mu} d_{I}\mspace{2mu} r$.\end{tabular}$}

Result of our algorithm: 

{\footnotesize$\setlength{\extrarowheight}{0pt}\begin{tabular}{l}$c$, $r$, $d_{I}$, $d_{T}$, $k_{T}$, $s_{T}$, $\beta$.\end{tabular}$}

Result is algebraically independent over $\mathbb{C}$: yes.

\item \exname{Influenza\_MB4} \cite[Figure 3 and Appendix A]{Influenza}.

Original generating set information: 9 indeterminates; 50 non-constant functions; maximal total degrees of numerator and denominator are~$(4, 0)$; 755~bytes total in string representation.

Indeterminates: $c$, $p$, $r$, $c_{T}$, $d_{I}$, $d_{T}$, $k_{T}$, $s_{T}$, $\beta$.

Original generating set: too large to be listed.

Result of our algorithm: 

{\footnotesize$\setlength{\extrarowheight}{0pt}\begin{tabular}{l}$c$, $r$, $c_{T}$, $d_{I}$, $d_{T}$, $k_{T}$, $s_{T}$, $\beta$.\end{tabular}$}

Result is algebraically independent over $\mathbb{C}$: yes.

\item \exname{Influenza\_MD1} \cite[Figure 3 and Appendix A]{Influenza}.

Original generating set information: 6 indeterminates; 14 non-constant functions; maximal total degrees of numerator and denominator are~$(3, 0)$; 206~bytes total in string representation.

Indeterminates: $c$, $p$, $r$, $d_{I}$, $s_{T}$, $\beta$.

Original generating set: 

{\footnotesize$\setlength{\extrarowheight}{8pt}\begin{tabular}{l}$c$, $-c$, $d_{I}$, $-d_{I}$, $\beta$, $\beta\mspace{2mu} c$, $d_{I}\mspace{2mu} s_{T}$, $c\mspace{2mu} d_{I}\mspace{2mu} s_{T}$, $\dfrac{-1000}{11}\mspace{2mu} r$, $\dfrac{-1000}{11}\mspace{2mu} s_{T}$, $\dfrac{-11}{1000}\mspace{2mu} d_{I}$, $\dfrac{-11}{1000}\mspace{2mu} c\mspace{2mu} d_{I}$, $\beta\mspace{2mu} d_{I}+d_{I}\mspace{2mu} r$, $\beta\mspace{2mu} c\mspace{2mu} d_{I}+c\mspace{2mu} d_{I}\mspace{2mu} r$.\end{tabular}$}

Result of our algorithm: 

{\footnotesize$\setlength{\extrarowheight}{0pt}\begin{tabular}{l}$c$, $r$, $d_{I}$, $s_{T}$, $\beta$.\end{tabular}$}

Result is algebraically independent over $\mathbb{C}$: yes.

\item \exname{Influenza\_MD2} \cite[Figure 3 and Appendix A]{Influenza}.

Original generating set information: 7 indeterminates; 28 non-constant functions; maximal total degrees of numerator and denominator are~$(3, 0)$; 396~bytes total in string representation.

Indeterminates: $c$, $p$, $r$, $c_{T}$, $d_{I}$, $s_{T}$, $\beta$.

Original generating set: 

{\footnotesize$\setlength{\extrarowheight}{8pt}\begin{tabular}{l}$c$, $c$, $-c$, $-c$, $d_{I}$, $d_{I}$, $-d_{I}$, $-d_{I}$, $\beta$, $\beta$, $\beta\mspace{2mu} c$, $\beta\mspace{2mu} c$, $\dfrac{1000}{11}$, $\dfrac{1000}{11}$, $d_{I}\mspace{2mu} s_{T}$, $c\mspace{2mu} d_{I}\mspace{2mu} s_{T}$, $\dfrac{-1000}{11}\mspace{2mu} r$, $\dfrac{-1000}{11}\mspace{2mu} r$, \\$\dfrac{-11}{1000}\mspace{2mu} d_{I}$, $\dfrac{-11}{1000}\mspace{2mu} d_{I}$, $\dfrac{-11}{1000}\mspace{2mu} c\mspace{2mu} d_{I}$, $\dfrac{-11}{1000}\mspace{2mu} c\mspace{2mu} d_{I}$, $\beta\mspace{2mu} d_{I}+d_{I}\mspace{2mu} r$, $\beta\mspace{2mu} d_{I}+d_{I}\mspace{2mu} r$, $-c_{T}\mspace{2mu} d_{I}+d_{I}\mspace{2mu} s_{T}$, \\$\beta\mspace{2mu} c\mspace{2mu} d_{I}+c\mspace{2mu} d_{I}\mspace{2mu} r$, $-c\mspace{2mu} c_{T}\mspace{2mu} d_{I}+c\mspace{2mu} d_{I}\mspace{2mu} s_{T}$, $\dfrac{1000}{11}\mspace{2mu} c_{T}-\dfrac{1000}{11}\mspace{2mu} s_{T}$, $\dfrac{-1000}{11}\mspace{2mu} s_{T}$, $\beta\mspace{2mu} c\mspace{2mu} d_{I}+c\mspace{2mu} d_{I}\mspace{2mu} r$.\end{tabular}$}

Result of our algorithm: 

{\footnotesize$\setlength{\extrarowheight}{0pt}\begin{tabular}{l}$c$, $r$, $c_{T}$, $d_{I}$, $s_{T}$, $\beta$.\end{tabular}$}

Result is algebraically independent over $\mathbb{C}$: yes.

\item \exname{Influenza\_MD3} \cite[Figure 3 and Appendix A]{Influenza}.

Original generating set information: 7 indeterminates; 15 non-constant functions; maximal total degrees of numerator and denominator are~$(3, 0)$; 183~bytes total in string representation.

Indeterminates: $c$, $p$, $r$, $d_{I}$, $d_{T}$, $s_{T}$, $\beta$.

Original generating set: 

{\footnotesize$\setlength{\extrarowheight}{0pt}\begin{tabular}{l}$c$, $-c$, $-r$, $d_{I}$, $d_{T}$, $-d_{I}$, $-s_{T}$, $\beta$, $\beta\mspace{2mu} c$, $d_{I}\mspace{2mu} s_{T}$, $-d_{I}\mspace{2mu} d_{T}$, $c\mspace{2mu} d_{I}\mspace{2mu} s_{T}$, $-c\mspace{2mu} d_{I}\mspace{2mu} d_{T}$, $\beta\mspace{2mu} d_{I}+d_{I}\mspace{2mu} r$, $\beta\mspace{2mu} c\mspace{2mu} d_{I}+c\mspace{2mu} d_{I}\mspace{2mu} r$.\end{tabular}$}

Result of our algorithm: 

{\footnotesize$\setlength{\extrarowheight}{0pt}\begin{tabular}{l}$c$, $r$, $d_{I}$, $d_{T}$, $s_{T}$, $\beta$.\end{tabular}$}

Result is algebraically independent over $\mathbb{C}$: yes.

\item \exname{Influenza\_MD4} \cite[Figure 3 and Appendix A]{Influenza}.

Original generating set information: 8 indeterminates; 30 non-constant functions; maximal total degrees of numerator and denominator are~$(3, 0)$; 341~bytes total in string representation.

Indeterminates: $c$, $p$, $r$, $c_{T}$, $d_{I}$, $d_{T}$, $s_{T}$, $\beta$.

Original generating set: 

{\footnotesize$\setlength{\extrarowheight}{0pt}\begin{tabular}{l}$c$, $c$, $-1$, $-1$, $-c$, $-c$, $-r$, $-r$, $d_{I}$, $d_{I}$, $d_{T}$, $d_{T}$, $-d_{I}$, $-d_{I}$, $-s_{T}$, $\beta$, $\beta$, $\beta\mspace{2mu} c$, \\$\beta\mspace{2mu} c$, $d_{I}\mspace{2mu} s_{T}$, $-d_{I}\mspace{2mu} d_{T}$, $-d_{I}\mspace{2mu} d_{T}$, $c\mspace{2mu} d_{I}\mspace{2mu} s_{T}$, $c_{T}-s_{T}$, $-c\mspace{2mu} d_{I}\mspace{2mu} d_{T}$, $-c\mspace{2mu} d_{I}\mspace{2mu} d_{T}$, $\beta\mspace{2mu} d_{I}+d_{I}\mspace{2mu} r$, \\$\beta\mspace{2mu} d_{I}+d_{I}\mspace{2mu} r$, $-c_{T}\mspace{2mu} d_{I}+d_{I}\mspace{2mu} s_{T}$, $\beta\mspace{2mu} c\mspace{2mu} d_{I}+c\mspace{2mu} d_{I}\mspace{2mu} r$, $\beta\mspace{2mu} c\mspace{2mu} d_{I}+c\mspace{2mu} d_{I}\mspace{2mu} r$, $-c\mspace{2mu} c_{T}\mspace{2mu} d_{I}+c\mspace{2mu} d_{I}\mspace{2mu} s_{T}$.\end{tabular}$}

Result of our algorithm: 

{\footnotesize$\setlength{\extrarowheight}{0pt}\begin{tabular}{l}$c$, $r$, $c_{T}$, $d_{I}$, $d_{T}$, $s_{T}$, $\beta$.\end{tabular}$}

Result is algebraically independent over $\mathbb{C}$: yes.

\item \exname{JAK-STAT} \cite[Model L1236 in Supplementary materials]{Raia2011} with modifications introduced in~\cite{ReyBarreiro2023}.


Original generating set information: 22 indeterminates; 1300 non-constant functions; maximal total degrees of numerator and denominator are~$(14, 6)$; 141.8~KB total in string representation.

Indeterminates: $t_{1}$, $t_{2}$, $t_{3}$, $t_{4}$, $t_{5}$, $t_{6}$, $t_{7}$, $t_{8}$, $t_{9}$, $t_{10}$, $t_{11}$, $t_{12}$, $t_{13}$, $t_{14}$, $t_{15}$, $t_{16}$, $t_{17}$, $t_{18}$, $t_{19}$, $t_{20}$, $t_{21}$, $t_{22}$.

Original generating set: too large to be listed.

Result of our algorithm: 

{\footnotesize$\setlength{\extrarowheight}{0pt}\begin{tabular}{l}$t_{1}$, $t_{2}$, $t_{3}$, $t_{4}$, $t_{5}$, $t_{6}$, $t_{7}$, $t_{8}$, $t_{9}$, $t_{10}$, $t_{12}$, $t_{13}$, $t_{14}$, $t_{16}$, $t_{18}$, $t_{19}$, $t_{20}$, $t_{11} t_{21}$, $t_{15} t_{21}$, $t_{17} t_{22}$.\end{tabular}$}

Result is algebraically independent over $\mathbb{C}$: yes.

\item \exname{KD1999} \cite{SedoglavicLocal}, \cite[Eq. (7.1)]{Kumar1999-dl}.


Original generating set information: 14 indeterminates; 129 non-constant functions; maximal total degrees of numerator and denominator are~$(9, 3)$; 3.8~KB total in string representation.

Indeterminates: $E$, $R$, $V$, $\Delta$, $T_a$, $T_h$, $U$, $V_h$, $c_p$, $k_{0}$, $\rho$, $C{a_{0}}$, $c_{ph}$, $\rho_h$.

Original generating set: too large to be listed.

Result of our algorithm: 

{\footnotesize$\setlength{\extrarowheight}{8pt}\begin{tabular}{l}$V$, $T_a$, $T_h$, $V_h$, $C_{a_{0}}$, $\dfrac{E}{R}$, $\dfrac{\Delta}{U}$, $\dfrac{c_p \rho}{U}$, $\dfrac{c_{ph} \rho_h}{U}$.\end{tabular}$}

Result is algebraically independent over $\mathbb{C}$: yes.

\item \exname{LLW} \cite[Example 1]{LLW}.

Original generating set information: 4 indeterminates; 9 non-constant functions; maximal total degrees of numerator and denominator are~$(3, 0)$; 248~bytes total in string representation.

Indeterminates: $p_{1}$, $p_{2}$, $p_{3}$, $p_{4}$.

Original generating set: 

{\footnotesize$\setlength{\extrarowheight}{8pt}\begin{tabular}{l}$-1$, $\dfrac{1}{2}$, $\dfrac{-1}{2}$, $p_{2} p_{4}$, $-p_{1}-p_{3}$, $-p_{1}-p_{3}$, $\dfrac{1}{2} p_{1}+\dfrac{1}{2} p_{3}$, $\dfrac{3}{2} p_{1}+\dfrac{3}{2} p_{3}$, $\dfrac{-1}{2} p_{1}^{2} p_{3}-\dfrac{1}{2} p_{1} p_{3}^{2}$, \\$\dfrac{1}{2} p_{1} p_{2} p_{4}+\dfrac{1}{2} p_{2} p_{3} p_{4}$, $\dfrac{1}{2} p_{1}^{2}+p_{1} p_{3}+\dfrac{1}{2} p_{3}^{2}$, $\dfrac{-1}{2} p_{1}^{2}-\dfrac{3}{2} p_{1} p_{3}-\dfrac{1}{2} p_{3}^{2}$.\end{tabular}$}

Result of our algorithm: 

{\footnotesize$\setlength{\extrarowheight}{0pt}\begin{tabular}{l}$p_{1} p_{3}$, $p_{2} p_{4}$, $p_{1}+p_{3}$.\end{tabular}$}

Result is algebraically independent over $\mathbb{C}$: yes.

\item \exname{Lotka-Volterra}.

Original generating set information: 4 indeterminates; 4 non-constant functions; maximal total degrees of numerator and denominator are~$(3, 0)$; 90~bytes total in string representation.

Indeterminates: $a$, $b$, $c$, $d$.

Original generating set: 

{\footnotesize$\setlength{\extrarowheight}{0pt}\begin{tabular}{l}$d$, $-1$, $-a b$, $-a d-b d$, $a^{2} b+a b^{2}$.\end{tabular}$}

Result of our algorithm: 

{\footnotesize$\setlength{\extrarowheight}{0pt}\begin{tabular}{l}$d$, $a b$, $a+b$.\end{tabular}$}

Result is algebraically independent over $\mathbb{C}$: yes.

\item \exname{Lincomp1} \cite[Appendix A]{ahmed2025identifiabilitydirectedcyclecatenarylinear}.

Original generating set information: 7 indeterminates; 6 non-constant functions; maximal total degrees of numerator and denominator are~$(5, 0)$; 1.4~KB total in string representation.

Indeterminates: $a_{02}$, $a_{03}$, $a_{15}$, $a_{21}$, $a_{32}$, $a_{43}$, $a_{54}$.

Original generating set: too large to be listed.

Result of our algorithm: 

{\footnotesize$\setlength{\extrarowheight}{0pt}\begin{tabular}{l}$a_{21} a_{32} a_{43} a_{54}$, $a_{21}+a_{32}+a_{43}+a_{54}+a_{15}+a_{02}+a_{03}$, \\$a_{21} a_{32} a_{54} a_{15} a_{03}+a_{21} a_{43} a_{54} a_{15} a_{02}+a_{21} a_{54} a_{15} a_{02} a_{03}$, \\$a_{21} a_{32}+a_{21} a_{43}+a_{21} a_{54}+a_{21} a_{15}+a_{21} a_{02} + \text{14 more terms}$, \\$a_{21} a_{32} a_{43} a_{15}+a_{21} a_{32} a_{54} a_{15}+a_{21} a_{32} a_{54} a_{03}+a_{21} a_{32} a_{15} a_{03}+a_{21} a_{43} a_{54} a_{15} + \text{10 more terms}$, \\$a_{21} a_{32} a_{43}+a_{21} a_{32} a_{54}+a_{21} a_{32} a_{15}+a_{21} a_{32} a_{03}+a_{21} a_{43} a_{54} + \text{20 more terms}$.\end{tabular}$}

Result is algebraically independent over $\mathbb{C}$: yes.

\item \exname{Lincomp2} \cite[Appendix A]{ahmed2025identifiabilitydirectedcyclecatenarylinear}.

Original generating set information: 7 indeterminates; 6 non-constant functions; maximal total degrees of numerator and denominator are~$(5, 0)$; 1.4~KB total in string representation.

Indeterminates: $a_{02}$, $a_{04}$, $a_{15}$, $a_{21}$, $a_{32}$, $a_{43}$, $a_{54}$.

Original generating set: too large to be listed.

Result of our algorithm: 

{\footnotesize$\setlength{\extrarowheight}{0pt}\begin{tabular}{l}$a_{21} a_{32} a_{43} a_{54}$, $a_{21}+a_{32}+a_{43}+a_{54}+a_{15}+a_{02}+a_{04}$, \\$a_{21} a_{32} a_{43} a_{15} a_{04}+a_{21} a_{43} a_{54} a_{15} a_{02}+a_{21} a_{43} a_{15} a_{02} a_{04}$, \\$a_{21}^{2}+a_{32}^{2}+2 a_{32} a_{02}+a_{43}^{2}+a_{54}^{2}+2 a_{54} a_{04}+a_{15}^{2}+a_{02}^{2}+a_{04}^{2}$, \\$a_{21} a_{32} a_{43} a_{15}+a_{21} a_{32} a_{43} a_{04}+a_{21} a_{32} a_{54} a_{15}+a_{21} a_{32} a_{15} a_{04}+a_{21} a_{43} a_{54} a_{15} + \text{10 more terms}$, \\$a_{21} a_{32} a_{43}+a_{21} a_{32} a_{54}+a_{21} a_{32} a_{15}+a_{21} a_{32} a_{04}+a_{21} a_{43} a_{54} + \text{20 more terms}$.\end{tabular}$}

Result is algebraically independent over $\mathbb{C}$: yes.

\item \exname{Lipolysis} \cite[Eq. (1)-(5)]{MOATE2008731}.

Original generating set information: 3 indeterminates; 12 non-constant functions; maximal total degrees of numerator and denominator are~$(3, 0)$; 155~bytes total in string representation.

Indeterminates: $k_{2}$, $k_{3}$, $k_{4}$.

Original generating set: 

{\footnotesize$\setlength{\extrarowheight}{8pt}\begin{tabular}{l}$k_{2}$, $k_{2}$, $k_{3}$, $-k_{2}$, $2 k_{4}$, $-k_{4}^{2}$, $k_{2} k_{3}$, $\dfrac{1}{3} k_{2}$, $2 k_{2} k_{4}$, $-k_{2} k_{4}^{2}$, $\dfrac{-1}{3} k_{3}+k_{4}$, $\dfrac{-1}{3} k_{2} k_{3}+k_{2} k_{4}$.\end{tabular}$}

Result of our algorithm: 

{\footnotesize$\setlength{\extrarowheight}{0pt}\begin{tabular}{l}$k_{2}$, $k_{3}$, $k_{4}$.\end{tabular}$}

Result is algebraically independent over $\mathbb{C}$: yes.

\item \exname{MAPK-5} \cite{MANRAI20085533}, \cite[Example MAPK pathway and Eq. (6.7) in Appendix]{stident}.

Original generating set information: 22 indeterminates; 510 non-constant functions; maximal total degrees of numerator and denominator are~$(8, 6)$; 633.8~KB total in string representation.

Indeterminates: $a_{00}$, $a_{01}$, $a_{10}$, $b_{00}$, $b_{01}$, $b_{10}$, $c_{0001}$, $c_{0010}$, $c_{0011}$, $c_{0111}$, $c_{1011}$, $\beta_{01}$, $\beta_{10}$, $\beta_{11}$, $\alpha_{01}$, $\alpha_{10}$, $\alpha_{11}$, $\gamma_{0100}$, $\gamma_{1000}$, $\gamma_{1100}$, $\gamma_{1101}$, $\gamma_{1110}$.

Original generating set: too large to be listed.

Result of our algorithm: 

{\footnotesize$\setlength{\extrarowheight}{0pt}\begin{tabular}{l}$a_{00}$, $a_{01}$, $a_{10}$, $b_{00}$, $b_{01}$, $b_{10}$, $c_{0001}$, $c_{0010}$, $c_{0011}$, $c_{0111}$, $c_{1011}$, $\beta_{01}$, \\$\beta_{10}$, $\beta_{11}$, $\alpha_{01}$, $\alpha_{10}$, $\alpha_{11}$, $\gamma_{0100}$, $\gamma_{1000}$, $\gamma_{1100}$, $\gamma_{1101}$, $\gamma_{1110}$.\end{tabular}$}

Result is algebraically independent over $\mathbb{C}$: yes.

\item \exname{MAPK-6} \cite{MANRAI20085533}, \cite[Example MAPK pathway and Eq. (6.6) in Appendix]{stident}.

Original generating set information: 22 indeterminates; 97 non-constant functions; maximal total degrees of numerator and denominator are~$(5, 3)$; 17.8~KB total in string representation.

Indeterminates: $a_{00}$, $a_{01}$, $a_{10}$, $b_{00}$, $b_{01}$, $b_{10}$, $c_{0001}$, $c_{0010}$, $c_{0011}$, $c_{0111}$, $c_{1011}$, $\beta_{01}$, $\beta_{10}$, $\beta_{11}$, $\alpha_{01}$, $\alpha_{10}$, $\alpha_{11}$, $\gamma_{0100}$, $\gamma_{1000}$, $\gamma_{1100}$, $\gamma_{1101}$, $\gamma_{1110}$.

Original generating set: too large to be listed.

Result of our algorithm: 

{\footnotesize$\setlength{\extrarowheight}{0pt}\begin{tabular}{l}$a_{00}$, $a_{01}$, $a_{10}$, $b_{00}$, $b_{01}$, $b_{10}$, $c_{0001}$, $c_{0010}$, $c_{0011}$, $c_{0111}$, $c_{1011}$, $\beta_{01}$, \\$\beta_{10}$, $\beta_{11}$, $\alpha_{01}$, $\alpha_{10}$, $\alpha_{11}$, $\gamma_{0100}$, $\gamma_{1000}$, $\gamma_{1100}$, $\gamma_{1101}$, $\gamma_{1110}$.\end{tabular}$}

Result is algebraically independent over $\mathbb{C}$: yes.

\item \exname{Ovarian\_follicle} \cite[Section 3.2]{ovarian}.

Original generating set information: 22 indeterminates; 1684 non-constant functions; maximal total degrees of numerator and denominator are~$(21, 10)$; 44.2~MB total in string representation.

Indeterminates: {\footnotesize$n$, $D_{0}$, $D_{1}$, $D_{2}$, $D_{3}$, $D_{4}$, $E_{0}$, $E_{1}$, $E_{2}$, $E_{3}$, $E_{4}$, $J_{1}$, $J_{2}$, $J_{3}$, $f_{1}$, $m_{1}$, $m_{2}$, $m_{3}$, $n_{1}$, $n_{2}$, $n_{3}$, $\tau$}.

Original generating set: too large to be listed.

Result of our algorithm: 

{\footnotesize$\setlength{\extrarowheight}{0pt}\begin{tabular}{l}$D_{2}$, $D_{3}$, $D_{4}$, $E_{0}$, $E_{1}$, $E_{2}$, $E_{3}$, $E_{4}$, $J_{1}$, $J_{2}$, $J_{3}$, $m_{1}$, $m_{2}$, $m_{3}$, $n_{2}$, $n_{3}$, $\tau$, $n^{2}$, $f_{1} n$, $D_{0} D_{1}$, $D_{0}+D_{1}$, $f_{1}+2 n_{1}$.\end{tabular}$}

Result is algebraically independent over $\mathbb{C}$: yes.

\item \exname{Pharm} \cite{Demignot1987Pharm}, \cite[Example 6.4]{HOPY}.

Original generating set information: 7 indeterminates; 3533 non-constant functions; maximal total degrees of numerator and denominator are~$(26, 5)$; 35.4~MB total in string representation.

Indeterminates: $n$, $a_{1}$, $a_{2}$, $b_{1}$, $b_{2}$, $k_a$, $k_c$.

Original generating set: too large to be listed.

Result of our algorithm: 

{\footnotesize$\setlength{\extrarowheight}{0pt}\begin{tabular}{l}$n$, $a_{1}$, $a_{2}$, $b_{1}$, $b_{2}$, $k_a$, $k_c$.\end{tabular}$}

Result is algebraically independent over $\mathbb{C}$: yes.

\item \exname{Pitavastatin} \cite[Eq. (22)-(24)]{pitavastatin}.

Original generating set information: 8 indeterminates; 249 non-constant functions; maximal total degrees of numerator and denominator are~$(29, 5)$; 2.1~MB total in string representation.

Indeterminates: $k$, $T_{0}$, $k_{1}$, $k_{2}$, $k_{3}$, $k_{4}$, $r_{1}$, $r_{3}$.

Original generating set: too large to be listed.

Result of our algorithm: 

{\footnotesize$\setlength{\extrarowheight}{0pt}\begin{tabular}{l}$k_{2}$, $k_{3}$, $k_{4}$, $r_{1}$, $r_{3}$, $T_{0} k$, $T_{0} k_{1}$.\end{tabular}$}

Result is algebraically independent over $\mathbb{C}$: yes.

\item \exname{SEAIJRC} \cite[Example SEAIJRC in Appendix]{stident}.

Original generating set information: 7 indeterminates; 1456 non-constant functions; maximal total degrees of numerator and denominator are~$(31, 8)$; 26.1~MB total in string representation.

Indeterminates: $b$, $k$, $q$, $r$, $g_{1}$, $g_{2}$, $\alpha$.

Original generating set: too large to be listed.

Result of our algorithm: 

{\footnotesize$\setlength{\extrarowheight}{8pt}\begin{tabular}{l}$b$, $k$, $g_{1}$, $g_{2}$, $\alpha$, $\dfrac{q\mspace{2mu} r-q}{r}$.\end{tabular}$}

Result is algebraically independent over $\mathbb{C}$: yes.

\item \exname{SEIR1} \cite[Eq. (2)]{Zha2020}.

Original generating set information: 4 indeterminates; 17 non-constant functions; maximal total degrees of numerator and denominator are~$(7, 1)$; 1015~bytes total in string representation.

Indeterminates: $v$, $\psi$, $\beta$, $\gamma$.

Original generating set: too large to be listed.

Result of our algorithm: 

{\footnotesize$\setlength{\extrarowheight}{8pt}\begin{tabular}{l}$\gamma$, $\dfrac{\beta}{\psi}$, $\psi v-\psi-v$, $\gamma \psi-\psi v$.\end{tabular}$}

Result is algebraically independent over $\mathbb{C}$: yes.

\item \exname{SEIR2T} \cite[Example Simple SEIR]{seir2t}.

Original generating set information: 4 indeterminates; 37 non-constant functions; maximal total degrees of numerator and denominator are~$(5, 2)$; 743~bytes total in string representation.

Indeterminates: $N$, $a$, $b$, $\nu$.

Original generating set: too large to be listed.

Result of our algorithm: 

{\footnotesize$\setlength{\extrarowheight}{0pt}\begin{tabular}{l}$N$, $a$, $b$, $\nu$.\end{tabular}$}

Result is algebraically independent over $\mathbb{C}$: yes.

\item \exname{SEIR34} \cite[Table 2, ID 34]{MASSONIS2021441}.

Original generating set information: 7 indeterminates; 9 non-constant functions; maximal total degrees of numerator and denominator are~$(4, 2)$; 368~bytes total in string representation.

Indeterminates: $k$, $N$, $r$, $\mu$, $\beta$, $\gamma$, $\epsilon$.

Original generating set: 

{\footnotesize$\setlength{\extrarowheight}{8pt}\begin{tabular}{l}$-1$, $-N$, $\dfrac{\beta r}{k N}$, $\dfrac{-\beta \epsilon r}{N}$, $\epsilon+\gamma+3 \mu$, $-\epsilon-\gamma-2 \mu$, $\epsilon \mu+\gamma \mu+2 \mu^{2}$, $\epsilon \gamma \mu+\epsilon \mu^{2}+\gamma \mu^{2}+\mu^{3}$, \\$\dfrac{\beta \epsilon r+\beta \gamma r+2 \beta \mu r}{k N}$, $\dfrac{\beta \epsilon \gamma r+\beta \epsilon \mu r+\beta \gamma \mu r+\beta \mu^{2} r}{k N}$.\end{tabular}$}

Result of our algorithm: 

{\footnotesize$\setlength{\extrarowheight}{0pt}\begin{tabular}{l}$N$, $\mu$, $K \epsilon$, $\epsilon \gamma$, $\beta \epsilon r$, $\epsilon+\gamma$.\end{tabular}$}

Result is algebraically independent over $\mathbb{C}$: yes.

\item \exname{SEIR36\_ref} \cite[Table 2, ID 51]{MASSONIS2021441}.

Original generating set information: 13 indeterminates; 58 non-constant functions; maximal total degrees of numerator and denominator are~$(7, 2)$; 6.9~KB total in string representation.

Indeterminates: $N$, $s$, $\nu$, $\phi$, $s_{d}$, $\beta$, $\gamma$, $\mu_{0}$, $\mu_{d}$, $\mu_{i}$, $\phi_{e}$, $\beta_{d}$, $\gamma_{d}$.

Original generating set: too large to be listed.

Result of our algorithm: 

{\footnotesize$\setlength{\extrarowheight}{0pt}\begin{tabular}{l}$N$, $s$, $\nu$, $\phi$, $s_{d}$, $\beta$, $\gamma$, $\mu_{0}$, $\mu_{d}$, $\mu_{i}$, $\phi_{e}$, $\beta_{d}$, $\gamma_{d}$.\end{tabular}$}

Result is algebraically independent over $\mathbb{C}$: yes.

\item \exname{SEIRT} \cite[Eq. (2.1)]{Sauer2021}.

Original generating set information: 4 indeterminates; 6 non-constant functions; maximal total degrees of numerator and denominator are~$(3, 1)$; 155~bytes total in string representation.

Indeterminates: $N$, $\beta$, $\alpha$, $\lambda$.

Original generating set: 

{\footnotesize$\setlength{\extrarowheight}{8pt}\begin{tabular}{l}$-1$, $-N$, $\dfrac{\beta}{N}$, $\alpha+\lambda$, $-\alpha-\lambda$, $\dfrac{\alpha \beta \lambda}{N}$, $\dfrac{\alpha \beta+\beta \lambda}{N}$.\end{tabular}$}

Result of our algorithm: 

{\footnotesize$\setlength{\extrarowheight}{0pt}\begin{tabular}{l}$N$, $\beta$, $\alpha \lambda$, $\alpha+\lambda$.\end{tabular}$}

Result is algebraically independent over $\mathbb{C}$: yes.

\item \exname{SEUIR} \cite[Eq. (5.2)]{Sauer2021}.

Original generating set information: 5 indeterminates; 126 non-constant functions; maximal total degrees of numerator and denominator are~$(11, 2)$; 5.1~KB total in string representation.

Indeterminates: $N$, $d$, $w$, $z$, $\beta$.

Original generating set: too large to be listed.

Result of our algorithm: 

{\footnotesize$\setlength{\extrarowheight}{8pt}\begin{tabular}{l}$d$, $z$, $\dfrac{N w}{\beta}$.\end{tabular}$}

Result is algebraically independent over $\mathbb{C}$: yes.

\item \exname{SIR21} \cite[Figure 1]{Castro2020}.

Original generating set information: 6 indeterminates; 7 non-constant functions; maximal total degrees of numerator and denominator are~$(3, 1)$; 146~bytes total in string representation.

Indeterminates: $N$, $q$, $r$, $\mu$, $p$, $\beta$.

Original generating set: 

{\footnotesize$\setlength{\extrarowheight}{8pt}\begin{tabular}{l}$q$, $q$, $-N$, $\dfrac{N\mspace{2mu} \mu}{\beta}$, $\dfrac{-N\mspace{2mu} p}{\beta}$, $\dfrac{N\mspace{2mu} \mu\mspace{2mu} p+N\mspace{2mu} \mu\mspace{2mu} q}{\beta}$, $\dfrac{-N\mspace{2mu} \mu\mspace{2mu} p-N\mspace{2mu} p\mspace{2mu} r}{\beta}$.\end{tabular}$}

Result of our algorithm: 

{\footnotesize$\setlength{\extrarowheight}{0pt}\begin{tabular}{l}$N$, $q$, $r$, $\mu$, $p$, $\beta$.\end{tabular}$}

Result is algebraically independent over $\mathbb{C}$: yes.









\item \exname{SIR24} \cite[Eq. (2.1)]{SIR24}.

Original generating set information: 5 indeterminates; 68 non-constant functions; maximal total degrees of numerator and denominator are~$(11, 2)$; 5.7~KB total in string representation.

Indeterminates: $K$, $c$, $\mu$, $\phi$, $\gamma$.

Original generating set: too large to be listed.

Result of our algorithm: 

{\footnotesize$\setlength{\extrarowheight}{0pt}\begin{tabular}{l}$K$, $c \phi$, $\gamma+\mu$.\end{tabular}$}

Result is algebraically independent over $\mathbb{C}$: yes.

\item \exname{SIR6} \cite{zheng2020totalvariationregularizationcompartmental}.

Original generating set information: 4 indeterminates; 4 non-constant functions; maximal total degrees of numerator and denominator are~$(2, 1)$; 92~bytes total in string representation.

Indeterminates: $k$, $N$, $\beta$, $\gamma$.

Original generating set: 

{\footnotesize$\setlength{\extrarowheight}{8pt}\begin{tabular}{l}$-N$, $\gamma$, $\dfrac{k N}{\beta}$, $\dfrac{-k N}{\beta}$.\end{tabular}$}

Result of our algorithm: 

{\footnotesize$\setlength{\extrarowheight}{8pt}\begin{tabular}{l}$N$, $\gamma$, $\dfrac{k}{\beta}$.\end{tabular}$}

Result is algebraically independent over $\mathbb{C}$: yes.

\item \exname{SIRC-forced} \cite[Eq. (7)-(11)]{sirc-forced}.

Original generating set information: 6 indeterminates; 2544 non-constant functions; maximal total degrees of numerator and denominator are~$(15, 0)$; 895.4~KB total in string representation.

Indeterminates: $M$, $g$, $b_{0}$, $b_{1}$, $\mu$, $\nu$.

Original generating set: too large to be listed.

Result of our algorithm: 

{\footnotesize$\setlength{\extrarowheight}{0pt}\begin{tabular}{l}$g$, $b_{0}$, $\mu$, $\nu$, $M^{2}$.\end{tabular}$}

Result is algebraically independent over $\mathbb{C}$: yes.

\item \exname{SIWR-multiout.} \cite[Eq. (3) with extra output]{siwr}.

Original generating set information: 7 indeterminates; 134 non-constant functions; maximal total degrees of numerator and denominator are~$(12, 3)$; 107.5~KB total in string representation.

Indeterminates: $a$, $k$, $\beta_i$, $\beta_w$, $\mu$, $\xi$, $\gamma$.

Original generating set: too large to be listed.

Result of our algorithm: 

{\footnotesize$\setlength{\extrarowheight}{0pt}\begin{tabular}{l}$a$, $k$, $\beta_i$, $\beta_w$, $\mu$, $\xi$, $\gamma$.\end{tabular}$}

Result is algebraically independent over $\mathbb{C}$: yes.

\item \exname{SIWR-orig.} \cite[Eq. (3)]{siwr}.

Original generating set information: 7 indeterminates; 761 non-constant functions; maximal total degrees of numerator and denominator are~$(20, 6)$; 5.7~MB total in string representation.

Indeterminates: $a$, $k$, $\beta_i$, $\beta_w$, $\mu$, $\xi$, $\gamma$.

Original generating set: too large to be listed.

Result of our algorithm: 

{\footnotesize$\setlength{\extrarowheight}{0pt}\begin{tabular}{l}$a$, $k$, $\beta_i$, $\beta_w$, $\mu$, $\xi$, $\gamma$.\end{tabular}$}

Result is algebraically independent over $\mathbb{C}$: yes.

\item \exname{SLIQR} \cite[Eq. (4)]{Dankwa2022}.

Original generating set information: 6 indeterminates; 39 non-constant functions; maximal total degrees of numerator and denominator are~$(6, 0)$; 943~bytes total in string representation.

Indeterminates: $\alpha$, $\beta$, $\eta$, $\gamma$, $\sigma$, $N$.

Original generating set: too large to be listed.

Result of our algorithm: 

{\footnotesize$\setlength{\extrarowheight}{0pt}\begin{tabular}{l}$\beta$, $\sigma$, $N$, $\alpha+\gamma$, $\alpha\mspace{2mu} \eta\mspace{2mu} \gamma$, $\alpha\mspace{2mu} \gamma+\eta\mspace{2mu} \gamma\mspace{2mu} \sigma-\gamma\mspace{2mu} \sigma$.\end{tabular}$}

Result is algebraically independent over $\mathbb{C}$: yes.

\item \exname{Transfection} \cite{Transfection-masters}, equations copied from \cite{genssi-2-0}.


Original generating set information: 5 indeterminates; 25 non-constant functions; maximal total degrees of numerator and denominator are~$(6, 1)$; 524~bytes total in string representation.

Indeterminates: $b$, $d_{1}$, $d_{2}$, $d_{3}$, $k_{\operatorname{TL}}$.

Original generating set: 

{\footnotesize$\setlength{\extrarowheight}{17pt}\begin{tabular}{l}$b$, $b^{3}$, $\dfrac{-1}{3}$, $\dfrac{-2}{3}$, $\dfrac{-2}{3}\mspace{2mu} b$, $\dfrac{-1}{3}\mspace{2mu} b^{2}$, $\dfrac{1}{3}\mspace{2mu} b^{3}\mspace{2mu} d_{3}$, $\dfrac{-b\mspace{2mu} d_{2}}{k_{\operatorname{TL}}}$, $\dfrac{-1}{3}\mspace{2mu} b^{3}\mspace{2mu} d_{3}$, $-b+\dfrac{1}{3}\mspace{2mu} d_{3}$, $\dfrac{\dfrac{-1}{3}\mspace{2mu} d_{2}}{k_{\operatorname{TL}}}$, $\dfrac{-b^{2}\mspace{2mu} d_{2}}{k_{\operatorname{TL}}}$, $\dfrac{2}{3}\mspace{2mu} b-\dfrac{1}{3}\mspace{2mu} d_{3}$, $b^{2}+\dfrac{1}{3}\mspace{2mu} b\mspace{2mu} d_{3}$, \\$-b^{2}+\dfrac{1}{3}\mspace{2mu} b\mspace{2mu} d_{3}$, $\dfrac{-b^{2}\mspace{2mu} d_{1}\mspace{2mu} d_{2}}{k_{\operatorname{TL}}}$, $\dfrac{-b^{3}\mspace{2mu} d_{1}\mspace{2mu} d_{2}}{k_{\operatorname{TL}}}$, $\dfrac{\dfrac{-1}{3}\mspace{2mu} b^{3}\mspace{2mu} d_{2}}{k_{\operatorname{TL}}}$, $\dfrac{\dfrac{-1}{3}\mspace{2mu} b\mspace{2mu} d_{1}\mspace{2mu} d_{2}}{k_{\operatorname{TL}}}$, $\dfrac{1}{3}\mspace{2mu} b^{2}-\dfrac{2}{3}\mspace{2mu} b\mspace{2mu} d_{3}$, $\dfrac{\dfrac{-1}{3}\mspace{2mu} b^{4}\mspace{2mu} d_{1}\mspace{2mu} d_{2}}{k_{\operatorname{TL}}}$, $\dfrac{-1}{3}\mspace{2mu} b^{3}-\dfrac{1}{3}\mspace{2mu} b^{2}\mspace{2mu} d_{3}$, \\$\dfrac{-2}{3}\mspace{2mu} b^{3}+\dfrac{1}{3}\mspace{2mu} b^{2}\mspace{2mu} d_{3}$, $\dfrac{-b^{2}\mspace{2mu} d_{2}-b\mspace{2mu} d_{1}\mspace{2mu} d_{2}}{k_{\operatorname{TL}}}$, $\dfrac{-b^{3}\mspace{2mu} d_{2}-b^{2}\mspace{2mu} d_{1}\mspace{2mu} d_{2}}{k_{\operatorname{TL}}}$, $\dfrac{\dfrac{-1}{3}\mspace{2mu} b\mspace{2mu} d_{2}-\dfrac{1}{3}\mspace{2mu} d_{1}\mspace{2mu} d_{2}}{k_{\operatorname{TL}}}$, $\dfrac{\dfrac{-1}{3}\mspace{2mu} b^{4}\mspace{2mu} d_{2}-\dfrac{1}{3}\mspace{2mu} b^{3}\mspace{2mu} d_{1}\mspace{2mu} d_{2}}{k_{\operatorname{TL}}}$.\end{tabular}$}

Result of our algorithm: 

{\footnotesize$\setlength{\extrarowheight}{8pt}\begin{tabular}{l}$b$, $d_{1}$, $d_{3}$, $\dfrac{d_{2}}{k_{\operatorname{TL}}}$.\end{tabular}$}

Result is algebraically independent over $\mathbb{C}$: yes.

\item \exname{SIRT} \cite[Eq. (2.3)]{Tuncer2018}.


Original generating set information: 5 indeterminates; 11 non-constant functions; maximal total degrees of numerator and denominator are~$(7, 1)$; 519~bytes total in string representation.

Indeterminates: $a$, $b$, $d$, $g$, $\nu$.

Original generating set: 

{\footnotesize$\setlength{\extrarowheight}{8pt}\begin{tabular}{l}$-1$, $\dfrac{b}{g}$, $d\mspace{2mu} g+\nu$, $-d\mspace{2mu} g-\nu$, $\dfrac{2\mspace{2mu} b\mspace{2mu} d\mspace{2mu} g+2\mspace{2mu} b\mspace{2mu} \nu}{g}$, $\dfrac{a\mspace{2mu} b+b\mspace{2mu} g+b\mspace{2mu} \nu}{g}$, $a\mspace{2mu} d\mspace{2mu} g+d\mspace{2mu} g^{2}+d\mspace{2mu} g\mspace{2mu} \nu+\nu^{2}$, $-a\mspace{2mu} d\mspace{2mu} g-d\mspace{2mu} g^{2}-d\mspace{2mu} g\mspace{2mu} \nu-\nu^{2}$, \\$\dfrac{b\mspace{2mu} d^{2}\mspace{2mu} g^{2}+2\mspace{2mu} b\mspace{2mu} d\mspace{2mu} g\mspace{2mu} \nu+b\mspace{2mu} \nu^{2}}{g}$, $\dfrac{2\mspace{2mu} a\mspace{2mu} b\mspace{2mu} d\mspace{2mu} g+3\mspace{2mu} a\mspace{2mu} b\mspace{2mu} \nu+2\mspace{2mu} b\mspace{2mu} d\mspace{2mu} g^{2}+2\mspace{2mu} b\mspace{2mu} d\mspace{2mu} g\mspace{2mu} \nu+3\mspace{2mu} b\mspace{2mu} g\mspace{2mu} \nu+2\mspace{2mu} b\mspace{2mu} \nu^{2}}{g}$, \\$\dfrac{a\mspace{2mu} b\mspace{2mu} d^{2}\mspace{2mu} g^{2}\mspace{2mu} \nu+2\mspace{2mu} a\mspace{2mu} b\mspace{2mu} d\mspace{2mu} g\mspace{2mu} \nu^{2}+a\mspace{2mu} b\mspace{2mu} \nu^{3}+b\mspace{2mu} d^{2}\mspace{2mu} g^{3}\mspace{2mu} \nu+2\mspace{2mu} b\mspace{2mu} d\mspace{2mu} g^{2}\mspace{2mu} \nu^{2}+b\mspace{2mu} g\mspace{2mu} \nu^{3}}{g}$, \\$\dfrac{a\mspace{2mu} b\mspace{2mu} d^{2}\mspace{2mu} g^{2}+4\mspace{2mu} a\mspace{2mu} b\mspace{2mu} d\mspace{2mu} g\mspace{2mu} \nu+3\mspace{2mu} a\mspace{2mu} b\mspace{2mu} \nu^{2}+b\mspace{2mu} d^{2}\mspace{2mu} g^{3}+b\mspace{2mu} d^{2}\mspace{2mu} g^{2}\mspace{2mu} \nu+4\mspace{2mu} b\mspace{2mu} d\mspace{2mu} g^{2}\mspace{2mu} \nu+2\mspace{2mu} b\mspace{2mu} d\mspace{2mu} g\mspace{2mu} \nu^{2}+3\mspace{2mu} b\mspace{2mu} g\mspace{2mu} \nu^{2}+b\mspace{2mu} \nu^{3}}{g}$.\end{tabular}$}

Result of our algorithm: 

{\footnotesize$\setlength{\extrarowheight}{8pt}\begin{tabular}{l}$\dfrac{b}{g}$, $d\mspace{2mu} g+\nu$, $a+g+\nu$, $a\mspace{2mu} \nu+g\mspace{2mu} \nu$.\end{tabular}$}

Result is algebraically independent over $\mathbb{C}$: yes.

\item \exname{cLV1} \cite{cLV}, equations copied from \cite{on-the-origins}.

Original generating set information: 15 indeterminates; 98 non-constant functions; maximal total degrees of numerator and denominator are~$(4, 2)$; 17.9~KB total in string representation.

Indeterminates: $g_{1}$, $g_{2}$, $g_{3}$, $A_{11}$, $A_{12}$, $A_{13}$, $A_{21}$, $A_{22}$, $A_{23}$, $A_{31}$, $A_{32}$, $A_{33}$, $B_{11}$, $B_{21}$, $B_{31}$.

Original generating set: too large to be listed.

Result of our algorithm: 

{\footnotesize$\setlength{\extrarowheight}{8pt}\begin{tabular}{l}$g_{1}$, $g_{2}$, $g_{3}$, $A_{11}$, $A_{12}$, $A_{21}$, $A_{22}$, $A_{31}$, $A_{32}$, $B_{11}$, $B_{21}$, $B_{31}$, $\dfrac{A_{13}}{A_{33}}$, $\dfrac{A_{23}}{A_{33}}$.\end{tabular}$}

Result is algebraically independent over $\mathbb{C}$: yes.

\item \exname{genLV} \cite[Eq. (2.3)]{Remien2021GLVIdentifiability}.

Original generating set information: 6 indeterminates; 6 non-constant functions; maximal total degrees of numerator and denominator are~$(3, 1)$; 335~bytes total in string representation.

Indeterminates: $r_{1}$, $r_{2}$, $\beta_{11}$, $\beta_{12}$, $\beta_{21}$, $\beta_{22}$.

Original generating set: 

{\footnotesize$\setlength{\extrarowheight}{8pt}\begin{tabular}{l}$\dfrac{-\beta_{12}-\beta_{22}}{\beta_{12}}$, $\dfrac{-\beta_{12} r_{2}+2 \beta_{22} r_{1}}{\beta_{12}}$, $\dfrac{\beta_{12} r_{1} r_{2}-\beta_{22} r_{1}^{2}}{\beta_{12}}$, $\dfrac{-\beta_{11}^{2} \beta_{22}+\beta_{11} \beta_{12} \beta_{21}}{\beta_{12}}$, \\$\dfrac{-\beta_{11} \beta_{12}+2 \beta_{11} \beta_{22}-\beta_{12} \beta_{21}}{\beta_{12}}$, $\dfrac{\beta_{11} \beta_{12} r_{2}-2 \beta_{11} \beta_{22} r_{1}+\beta_{12} \beta_{21} r_{1}}{\beta_{12}}$.\end{tabular}$}

Result of our algorithm: 

{\footnotesize$\setlength{\extrarowheight}{8pt}\begin{tabular}{l}$r_{1}$, $r_{2}$, $\beta_{11}$, $\beta_{21}$, $\dfrac{\beta_{12}}{\beta_{22}}$.\end{tabular}$}

Result is algebraically independent over $\mathbb{C}$: yes.

\item \exname{p53} \cite{DiStefano2015DynamicSB} (see also~\cite[Eq. (1)-(4)]{Sin}), equations copied from \cite{ReyBarreiro2023}.


Original generating set information: 23 indeterminates; 41 non-constant functions; maximal total degrees of numerator and denominator are~$(9, 4)$; 611~bytes total in string representation.

Indeterminates: {\footnotesize$p_{1}$, $p_{3}$, $p_{4}$, $p_{5}$, $p_{6}$, $p_{7}$, $p_{8}$, $p_{9}$, $p_{10}$, $p_{11}$, $p_{12}$, $p_{13}$, $p_{14}$, $p_{15}$, $p_{16}$, $p_{17}$, $p_{18}$, $p_{20}$, $p_{21}$, $p_{22}$, $p_{23}$, $p_{24}$, $p_{25}$}.

Original generating set: too large to be listed.

Result of our algorithm: 

{\footnotesize$\setlength{\extrarowheight}{0pt}\begin{tabular}{l}$p_{1}$, $p_{3}$, $p_{4}$, $p_{5}$, $p_{6}$, $p_{7}$, $p_{8}$, $p_{9}$, $p_{10}$, $p_{11}$, $p_{12}$, $p_{13}$, $p_{14}$, $p_{15}$, $p_{16}$, $p_{17}$, $p_{18}$, $p_{20}$, $p_{21}$, $p_{23}$, $p_{24}$, $p_{25}$, $p_{22}^{4}$.\end{tabular}$}

Result is algebraically independent over $\mathbb{C}$: yes.

\end{enumerate}

}

\end{document}